\newtheorem{theorem}{Theorem}[chapter]
\newtheorem{definition}{Definition}[chapter]
\newtheorem{proposition}[theorem]{Proposition}
\newtheorem{corollary}[theorem]{Corollary}
\newtheorem{claim}{Claim}
\newtheorem{lemma}[theorem]{Lemma}
\newcommand{\VDP}{\mathsf{VDP}}
\newcommand{\Rev}{\mathsf{Rev}}
\newcommand{\EDP}{\mathsf{EDP}}
\newcommand{\WPPP}{\mathsf{WPPP}}
\newcommand{\BPP}{\mathsf{BPP}}
\newcommand{\PP}{\mathsf{PP}}
\newcommand{\BPL}{\mathsf{BPL}}
\newcommand{\Post}{\mathsf{Post}}
\renewcommand{\P}[0]{\ensuremath{\mathsf{P}}}
\renewcommand{\L}[0]{\ensuremath{\mathsf{L}}}
\newcommand{\TIME}{\mathsf{TIME}}
\newcommand{\Trace}{\mathsf{Trace}}
\newcommand{\DQC}{\mathsf{DQC}}
\newcommand{\PH}{\mathsf{PH}}
\newcommand{\SPACE}{\mathsf{SPACE}}
\newcommand{\NTIME}{\mathsf{NTIME}}
\newcommand{\NSPACE}{\mathsf{NSPACE}}
\newcommand{\PSPACE}{\mathsf{PSPACE}}
\newcommand{\LOGSPACE}{\mathsf{LOGSPACE}}
\newcommand{\NC}{\mathsf{NC}}
\newcommand{\NL}{\mathsf{NL}}
\newcommand{\SAT}{\mathsf{SAT}}
\newcommand{\Almost}{\mathsf{Almost}}
\newcommand{\AC}{\mathsf{AC}}
\newcommand{\NP}{\mathsf{NP}}
\newcommand{\Ball}{\mathsf{Ball}}
\newcommand{\BALL}{\mathsf{BALL}}
\newcommand{\RBALL}{\mathsf{RBALL}}
\newcommand{\RBall}{\mathsf{RBall}}
\newcommand{\DBALL}{\mathsf{DBALL}}
\newcommand{\DBall}{\mathsf{DBall}}
\newcommand{\NBALL}{\mathsf{NBALL}}
\newcommand{\NBall}{\mathsf{NBall}}
\newcommand{\HQBALL}{\mathsf{HQBALL}}
\newcommand{\XQBALL}{\mathsf{XQBALL}}
\newcommand{\YQBALL}{\mathsf{YQBALL}}
\newcommand{\ZQBALL}{\mathsf{ZQBALL}}
\newcommand{\BQP}{\mathsf{BQP}}
\newcommand{\Poly}{\mathsf{poly}}
\newcommand{\C}{\mathbb{C}}
\newcommand{\tensor}{\otimes}
\newcommand{\su}{\mathfrak{su}}
\begin{document}

\begin{titlepage}

\begin{center}
{\bf 
{\LARGE
Computational Complexity of Some Quantum Theories in $1+1$ Dimensions}
\\[2mm]
}
\end{center}
\vspace{4mm}
\begin{center}
{\bf Saeed Mehraban}\\
\vspace{0.7cm}
{\it Computer Science and Artificial Intelligence Laboratory}\\
{\it Massachusetts Institute of Technology, USA}\\

{\small \tt  mehraban@mit.edu}
\end{center}
\vspace{8mm}

While physical theories attempt to break down the observed structure and behavior of possibly large and complex systems to short descriptive axioms, the perspective of a computer scientist is to start with simple and believable set of rules to discover their large scale behaviors. Computer science and physics, however, can be combined into a new framework, wherein structures can be compared with each other according to observables like mass and temperature, and also complexity at the same time. For example, similar to saying that one object is heavier than the other, we can discuss which system is more complex. According to this point of view, a more complex system can be interpreted as the one which can be programmed to simulate the behavior of the others.

The aim of this thesis is to exemplify this point of view through an analysis of certain quantum theories in two dimensional space-time. In simple words, these models are quantum analogues of elastic scattering of colored balls moving on a line. Physical examples that motivate this are the factorized scattering matrix of quantum field theory, and the repulsive delta like collisions in $1+1$ dimensions. 

Classical intuition suggests that when two hard balls collide, they bounce off and remain in the same order. However, in the quantum setting, during a collision, either the balls bounce off, or otherwise they tunnel through each other and exchange their configurations. As a result, moving balls are put into a superposition of being in different configurations. Thereby, considering $n$ distinguishable balls, the Hilbert space is generated by orthonormal basis marked with the $n!$ possible permutations of an $n$-element set, and collisions act similar to local permuting quantum gates. We therefore study the space of unitary operators generated by these local permuting gates. 

First, quantum ball permuting model is defined as a generalized unitary model which simulates the discussed scattering models as its special case, and then the class of problems that are efficiently solvable by this model is partially pinned down within known complexity classes. We find that the complexity class essentially depends on the initial superposition of the balls. More precisely, if the balls start out from the identity permutation, additive approximation of the amplitudes in this model can be efficiently computed within $\DQC 1$, which is believed to be strictly weaker than the standard model of quantum computing. Similar result also applies to the integrable models of scattering, if no initial superposition is provided and the particles are considered to be distinguishable. On the other hand, if special initial superpositions are allowed, the result is that the quantum ball permuting model can efficiently sample from the output distribution of standard quantum computers. Then, we show how to use intermediate demolition measurements in the particle label basis to simulate the quantum ball permuting model with scattering amplitudes of repulsive delta interactions, nondeterministically. According to this result, using post-selection on the possibly exponentially small outcomes of these measurements, one obtains the original ball permuting model. Therefore, the post-selected analogue of repulsive delta interactions model can efficiently simulate standard quantum computers, when arbitrary initial superpositions are allowed. Using this observation, we formalize a scattering quantum computer based on delta-repulsive collisions and intermediate detections, and then we prove that the possibility of an efficient classical simulation for this model is ruled out, unless the polynomial hierarchy collapses to its third level.

A classical analogue of ball permutation is also defined as a model of computation, and its computational power is pinned down within the complexity classes below $\BPP$ and $\NP$. More specifically, two models are considered, deterministic and randomized ball permutation, both defined with $\AC^0$ pre-processing. An equivalence between deterministic ball permutation and $\LOGSPACE$ computation is demonstrated. For the randomized ball permutation, it is proved that the class of problems that are efficiently solvable by this model lies between $\BPL$ and $\Almost \L$. Moreover, we discuss a nondeterministic model of ball permutation, and show that with polynomial time pre-processing, the class of languages that are decidable by this model is equivalent to the class $\NP$. However, we demonstrate that if ball permutation is restricted to adjacent swaps only, then the class is contained in $\P$.

\vspace{50mm}
{\footnotesize
The material presented here is based on the author's Master's thesis, advised by Scott Aaronson, submitted to the department of electrical engineering and computer science at MIT on August 28, 2015. Editions and modifications has been made to the original thesis, also a new chapter, chapter \ref{ch4} is added. Chapter \ref{ch4} is the result of collaboration with Scott Aaronson. Sections \ref{trace} and \ref{classification} are the result of collaboration and discussions with Greg Kuperberg.}

\end{titlepage}

\vspace{4mm}

\vspace{-6mm} 
\tableofcontents 
%%%%%%%%%%%%%%%%%%%%%%%%%%%%%%% 
\newpage

\chapter{Introduction}

\section{Motivating Lines}

\noindent What happens when computer science meets physics? The Church-Turing thesis states that \textit {all that is computable in the physical universe is also computable on a Turing machine} ~\cite{AarDem, Sipser}. More than a mathematical statement, this is a conjecture about theoretical physics. An outstanding discovery of computability theory was the existence of undecidable problems ~\cite{Turing}; problems that are not decidable by Turing machines. Therefore, the Church-Turing thesis can be falsified if there exists a physical system that can be programmed to decide an undecidable problem. The Church-Turing thesis was then further extended to another conjecture: \textit{all that is \textit{efficiently} computable in the physical universe is also \textit{efficiently} computable by a probabilistic Turing machine}. An efficient process is defined to be the one which answers a question reliably after time polynomial in the number of bits that specify the question. The extended Church-Turing thesis looks likely to be defeated by the laws of quantum physics, as the problem of factoring large numbers is efficiently solvable on a quantum computer \cite{Shor}, while yet no polynomial time probabilistic algorithm is not known for it. If this is true, then the revised thesis is that \textit{all that is efficiently computable in the physical universe, is also efficiently computable on a quantum computer}. As the preceding discussion illustrates, a natural approach is to classify the available physical theories with their computational power, both according to the notion of complexity and computability \cite{AarPhReality}. There are many examples that are known to be equivalent to classical Turing machines \cite{yao2003classical, AarPhReality, valiant2005classical, terhal2002classical}, and also other equivalents of the standard quantum computers exist \cite{lloyd1995almost, freedman2003topological}. Among the available computing models are some that are believed to be intermediate between classical polynomial time and quantum polynomial time \cite{knill1998power, aaronson2011computational, jordan2009permutational}. These are models that are probably strictly weaker than standard quantum computers, but they still can solve problems that are believed to be intractable on a classical computer.

In this thesis, we try to apply these ideas to some physical theories, involving scattering of particles. The goal is to figure out which problems these models can solve. More specifically, the aim is to find out if these models are equivalents of standard quantum computers, intermediate between quantum and classic computing, or if they can be efficiently simulated on computers. Scattering amplitudes are central to quantum field theory \cite{arkani2012scattering}. They relate the asymptotic initial states of a quantum system to the final states, and therefore they can be viewed as notable observables of quantum theory. While in general scattering amplitudes are sophisticated objects, in integrable theories of $1+1$ dimensions \cite{staudacher2012review, ahn2012review} they take simple forms, and can each be described by a single diagram with four-particle vertices. These diagrams encode the overall scattering matrix, whose effect can be placed in one-to-one correspondence with a permutation of a finite set \cite{yang1967some}. A crucial element of these integrable theories is a factorized scattering matrix\cite{zamolodchikov1978relativistic}. In this case, the scattering matrix can be decomposed as a product of local unitary scattering matrices. These local matrices satisfy the well-known Yang-Baxter \cite{yang1967some,baxter1972partition} relations, and it can be demonstrated that Yang-Baxter relations impose special symmetries on the diagrams in such a way that each diagram can be consistently assigned to a permutation. Such drastic simplification is directly related to the existence of an infinite family of conservation rules, first noticed by Zamolodchikov et. al. \cite{zamolodchikov1979factorized}.  In general, the set of unitary scattering matrices can form a manifold of dimension exponential in the number particles. However, it can be shown that the infinite family of conservation rules shrinks the number of degrees of freedom drastically, to linear in the number of particles.

\section{Methods and Summary of the Results}

Given these amazing features of the integrable quantum models, it is interesting to use tools from complexity theory to understand how hard these models are to simulate. Specifically, we analyze the situation where all the particles are distinguishable. In the language of quantum field theory, this is the situation where infinite colors are allowed.

The standard model of language in quantum computation is the class $\BQP$, which is the set of problems that are efficiently solvable by local quantum circuits, and a quantum model is called $\BQP$-universal if it can efficiently recognize the same set of languages. Bits of quantum information are called qubits, the states of a system which can take two states in a superposition. Therefore, in order to have a reference of comparison, we will try to relate the state space of the integrable quantum theory of scattering to bits and qubits. My approach is to define different variations of the scattering model, as new models, and demonstrate reductions between them one by one.

We define the ball permuting model as a quantum model with a Hilbert space consisting of permutations of a finite element set as its orthogonal basis. Then, the gates of ball permuting model act by permuting states like $|x,y\rangle $ according to $|x,y\rangle \rightarrow c |x,y\rangle + i s |y,x\rangle$, where $x$ and $y$ are members of a finite element set, and $c$ and $s$ are real numbers with $c^2+s^2=1$.  We prove that if the ball permuting model starts out with an initial state of $|123\ldots n \rangle$, then approximation of single amplitudes in this model, within additive error, can be obtaine within the so-called one-clean-qubit model, also known as the complexity class $\DQC 1$ \cite{knill1998power}. On the other hand, we demonstrate that if the model is allowed to start out from arbitrary initial states, then there is a way to simulate $\BQP$ within the ball permuting model. we also consider a variant of the ball permuting model, wherein the action of the gates are according to $|x,y\rangle \rightarrow c_{x,y} |x,y\rangle + i s_{x,y} |y,x\rangle$. Here $c_{x,y}$ and $s_{x,y}$ are real numbers that depend on the labels $x$ and $y$ only, and also $c^2_{x,y}+s^2_{x,y}=1$. We demonstrate that this model can directly simulate $\BQP$ on any initial state, including the identity $|123\dots n\rangle$. After that, we do a partial classification on the power of ball permuting model on different initial states. The classification is according to the Young-Yamanouchi orthonormal basis \cite{james1981representation}, which form the irreducible representations of the symmetric group. 

We provide evidence that although scattering matrices generated within the discussed $1+1$ dimensional integrable models correspond to unitary manifolds with linear dimensionality in the number of particles, it is hard to simulate them on a classical computer if we equip these models with arbitrary initial states and intermediate demolition measurements. For this purpose, we show that by postselecting \cite{aaronson2005quantum} on possibly exponentially-unlikely measurement outcomes, the model can efficiently solve any problem in the complexity class $\Post \BQP$. Then using same line of reasoning as in ~\cite{bremner2010classical}, one can infer that the existence of an efficient procedure to sample from the distribution of outcomes in the proposed model within multiplicative error implies the collapse of polynomial hierarchy to the third level.

In order to obtain a point of reference with classical computation, a model of ball permutation with classical balls is formalized. In this model, access to ball permutation is provided for an $\AC^0$ machine as an oracle, and the machine can make polynomially-long queries to the oracle. Ball permuting oracles are defined in two different ways; deterministic and randomized ones. Inputs to a deterministic ball permuting oracle are lists of swaps, and outputs are the permutations that are resulted from the application of swaps in order. A randomized ball permuting oracle also takes a list of probabilities as input, applies the swaps probabilistically and outputs the final permutation. The model corresponding to the deterministic ball permutation is proved to be equivalent to $\LOGSPACE$ Turing machines. The randomized ball permutation, on the other hand, can simulate $\BPL$ machines efficiently. However, it is proved that a machine from the class $\Almost \L$ can efficiently simulate randomized ball permutation. Further pinning down of the randomized ball permuting model between $\BPL$ and $\Almost\L$ is left as an open problem. Also, the relationship between the randomized ball permutation and polynomial-time computation is unknown. Other than deterministic and randomized models, a nondeterministic ball permuting model is defined to be the class of problems that are polynomial-time reducible to the problem of deciding if a target permutations in the randomized ball permutation can ever be generated. This class is proved to be equivalent to $\NP$, however, if the all of the queried swaps of the randomized computation are adjacent ones, a polynomial-time simulation is demonstrated for the model. 

\section{Summary of the Chapters}

The thesis consists of five chapters. In chapter \ref{ch2}, we review the essential background about computability and complexity. We start by defining alphabets, and proceed to the Turing machine as the well-accepted model of computation. After discussing some ingredients of computability theory, we talk about complexity theory, and bring the definitions for well-known complexity classes that are related to this thesis. Then we discuss circuits, which are essential ingredients of quantum computing.

In chapter \ref{ch3}, we quickly review quantum mechanics, and end my discussion with scattering amplitudes and quantum field theory. We then talk about quantum complexity theory. Finally, we review relevant integrable models in two dimensional space-time, both in quantum field theory, and quantum mechanics. 

Chapter \ref{ch4} and \ref{ch5} are dedicated to the results. The results of chapter $4$ are obtained with joint collaboration with Scott Aaronson. In this chapter a classical analogue of the ball permuting model is formalized and its computational power is pinned down within standard complexity classes. Three major complexity classes are defined. The first of these is the deterministic ball permuting model $\DBALL$, where an $\AC^0$ machine has access to a deterministic ball permuting oracle. Such an oracle takes as input a polynomially-long list of swaps and returns the permutation obtained by applying those swaps in order to the identity permutation. The result is an equivalence between $\DBALL$ and $\L$ ($\LOGSPACE$). The second model, $\RBALL$, is a randomized ball permuting model, where an $\AC^0$ machine has oracle access to a randomized ball permuting oracle. Such an oracle, along with the list of swaps, inputs a list of probabilities, and applies the swaps probabilistically. The major result is the containment of $\BPL$ (bounded error probabilistic $\LOGSPACE$) in $\RBALL$, and the containment of $\RBALL$ in $\Almost \L$ ($\LOGSPACE$ with access to a random oracle). The third model is $\NBALL$, which is the class of problems that are polynomial time reducible to the following problem: given a list of probabilistic swaps, decide if a target permutation can ever be generated. It turns out that $\NBALL=\NP$, and if the queried swaps are all adjacent ones, then $\NBALL \subseteq \P$. Also in order to show the relevance with the problem of ball (particle) scattering, a classical analogue of the Yang-Baxter equation is described.

The results of chapter \ref{ch4} are obtained with join collaboration with Greg Kuperberg. In this chapter, we formally define the languages and the variants of the quantum ball permuting model, and pin them down within the known complexity classes. Specifically, for the ball permuting model with the initial state $|123\ldots n\rangle$ we prove that single (permutation) amplitudes of this model can be approximated with rounds of $\DQC 1$ computation. Then we introduce a ball scattering computer based on the repulsive delta interactions model with intermediate demolition measurements. In order to partially classify ball permuting model on different initial states, we borrow tools from the decoherence free subspaces theory and representation theory of the symmetric group, which are reviewed when needed. After this classification, we demonstrate explicitly how to program the ball permuting model to simulate $\BQP$, if we are allowed to initialize the ball permuting model with any superposition that we want. Then in the end, we put everything together to demonstrate that the output distribution of the ball scattering computer cannot be simulated efficiently, unless the polynomial hierarchy collapses to the third level.

\section{Open Problems}

A detailed list of open problems and further directions is provided at the end of chapter \ref{ch4} and chapter \ref{ch5}. Here we include the major open problems and possible directions for further research:

\begin{itemize}
\item[] 1. As discussed above if the quantum ball permuting model starts out of the identity permutation quantum state, then the additive approximation of target permutation amplitudes of this model can be obtained within $\DQC 1$. However, it is tempting to see if there is a similar efficient sampling algorithm within $\DQC 1$ or any class that is believed to be below $\BQP$. Moreover, we do not know a lower-bound for the quantum ball permuting model in this case. Is there an efficient classical algorithm to approximate single amplitudes or to approximately sample from the output distribution? Also, it is left open to see if the additive approximation to the amplitudes of the ball permuting model is a reasonable one. A possible direction is to check if a similar approximation scheme exists for quantum models based on arbitrary group algebras.

\item[] 2. In chater \ref{ch5} it is proved that if the quantum ball permuting model has access to arbitrary initial states, then there is a way to efficiently sample from standard quantum circuits. The construction is based on encoding of qubits using superpositions over permutation states. More precisely $n$ qubits can be encoded using a superposition over permutations of $3n$ labels. However, a drawback of this construction is that it is not scalable, in the sense that the encoding for the tensor product of two qubit quantum states is not a tensor product of two permutation states.

\item[] 2. There is evidence suggesting that if the quantum gates in the ball permuting model satisfy the Yang-Baxter equation, the model generates a sparse subset of unitary operators acting on the Hilbert space of permutations. Moreover, it is argued that the simulation of the model is hard for a classical computer if intermediate measurements are done in the particle color basis. However, it is unknown if there an efficient classical simulation, for the model without intermediate measurements.

\item[] 3. For the randomized ball permuting model it is proved that $\BPL\subseteq \RBALL \subseteq \Almost \L$. Can $\RBALL$ be further pinned down within these classes? Moreover, with two adaptive queries to the randomized ball permuting oracle, $\RBALL$ can simulate $\Almost \L$. Can $\RBALL$ still simulate $\Almost \L$ with only one query?

\item[] 4. For the classical ball permuting model there is a restriction on the probability of swaps according to the Yang-Baxter equation. Is there a $\P$ simulation in this case?

\item[] 5. We do a partial classification on the computational power of this model on arbitrary initial states. The classification is based on the irreducible representations of the symmetric group. We prove that the unitary group generated by this model is as large as possible if the model starts from the initial states corresponding to Young diagrams with two rows or two columns. We conjecture that this result can be extended to arbitrary irreducible representations. However, we leave this for future work. 

\end{itemize}

\chapter{Computability Theory and Complexity Theory}
\label{ch2}

\section{Alphabets}

An alphabet $\Sigma$ is a finite set of symbols. Alphabets can be combined to create strings (sentences). For example, if $\Sigma = \{0,1\}$, then any combination like $01110$ is a string over this alphabet. The set of finite strings (sentences) over an alphabet $\Sigma$ is denoted by $\Sigma^\star := \{w_1 w_2 \ldots w_k : w_j \in \Sigma, k\geq 0\}$. The length of a string $w= w_1 w_2 \ldots w_k$  is simply the number of alphabets which construct the string, denoted by $|w|:= k$. Notice that in the definition of $\Sigma^\star$ the length of a string can be $0$ ($k=0$), which by definition corresponds to an empty string. An alphabet of length zero is called an empty alphabet, and the set of strings over this alphabet by definition only contains an empty string. The set of strings over alphabet $\Sigma$ of length $n$ is denoted by $\Sigma^n$. Clearly $|\Sigma^n|= |\Sigma|^n$. An alphabet of unit size is called a unary alphabet, and size $2$ alphabets are called binary. All nonempty alphabets are equivalent, in the sense that their corresponding sets of strings can count each other.  Also for nonzero $k$, $\Sigma^\star_k$ is isomorphic to natural numbers, $\Sigma_k^\star \cong \mathbb{N}$; where $\Sigma_k$ is any alphabet of size $k$. In order to see this, notice that unary sentences can trivially be placed in one-to-one correspondence with natural numbers. If $\Sigma_k$ is an alphabet of size $k >1$, then a bijection with natural numbers is obtains by just assigning a distinct number in $\{0,1,2,\ldots , k-1\}$ to each member of the alphabet, and viewing the members as base-$k$ representations of natural numbers. From this, for any $k>1$, $\Sigma_k^\star\cong \cup_{n\geq 0}0^n . \mathbb{N} \cong \mathbb{N}\times \mathbb{N} \cong \mathbb{N}$.

\section{Turing Machines}

A language $L$ over the alphabet $\Sigma$ is defined as a subset $L \subseteq \Sigma^\star$ \i.e. a language is a selection of sentences. A selection is in some sense a mechanical process. Given a language and any string, the aim is to distinguish if the string is contained in the language or not. A machine is thereby an abstract object which is responsible for this mechanical selection. A formal definition of such a computing machine was proposed by Alan Turing \cite{Turing}, when he introduced the Turing machine. A Turing machine is in some sense the model of a mathematician who is writing down a mathematical proof on a piece of paper. 

More formally, a Turing machine (TM) has a finite set of states (control) $Q$, a one dimensional unbounded tape, and a head (pointer) on the tape which can read/write from/on the tape. We can assume that the tape has a left-most cell and it is unbounded from the right. Initially, the machine is in a specific initial state $q_0\in Q$, and the tape head is in the left most cell, and the input is written on the tape and the rest of the tape is blank. An alphabet $\Sigma$ is specified for the input. However a different alphabet $\Gamma$ can be used for the tape. Clearly $\Gamma$ contains $\Sigma$. The machine evolves step by step according to a transition function $\delta$. A transition function inputs the current state ($p\in Q$) of the machine, and the content of the current tape cell $x$, and based on these, outputs $q \in Q$ as the next state, $y\in \Gamma$ as the content to be written on the current cell, and a direction $Left$ or $Right$ as the next direction of the head. For example $\delta(p,x)=(q,y,L)$, means that if the TM reads $x$ in state $p$, it will write $y$ instead of $x$ goes to the state $q$ and the head goes to the left on the tape. If at some point the machine enters a special state $q_y$, then it will halt with a yes (accepting) answer. There is another state $q_n$ to which if the machine enters, it will halt with a no (rejecting) answer. 

\begin{definition}
A (deterministic) standard Turing machine is a $7$-tuple $(\Sigma, \Gamma, Q, q_0, q_y, q_n, \delta, D:=\{L,R\})$. Where $\Sigma$ is the input alphabet, and $\Gamma\supseteq \Sigma$ is the tape alphabet. $Q$ is the finite set of states of the machine, $q_0\in Q$ is the unique starting state, $q_y$ and $q_n\in Q$ are the accepting and rejecting halting states, respectively. $\delta: Q\times \Gamma\rightarrow Q \times \Gamma \times D$ is the transition function.
\end{definition}

Any Turing machine corresponds to a languages, and informally an accessible (enumerable)  language is considered to be the one which has a corresponding Turing machine. In computer science this statement is recalled after \textit{Church} and \textit{Turing}:

\begin{center}
\textit{The Church Turing Thesis: "All that is computable in the Physical Universe is Computable by a Turing machine."}
\end{center}

Such a statement is a concern of scientific research in the sense that it is falsifiable, and can be rejected if one comes up with architecture of a physical computing device whose language corresponds to a language that is undecidable by Turing machines. Yet, still there is no counterexample to the \textit{Church-Turing Thesis}.

A language $L \subseteq \Sigma^\star$ over alphabet $\Sigma$ is called \textit{Turing recognizable} if there is a single tape standard Turing machine $M$ such that for every $x\in \Sigma^\star$, $x\in L$ if and only if $M$ accepts $x$. In this case we say that $M$ recognizes $L$. The language $L$ is called \textit{decidable} if there is a standard Turing machine $M$ which recognizes $L$ and moreover, for any $x\in \Sigma^\star$, $M$ halts. A function $f: \Sigma^\star \rightarrow \Sigma^\star$ is called computable if there is a Turing machine $M$ such that $M$ run on $x$ halts with $y$ on its tape iff $f(x)=y$.

We say a Turing machine halts on the input $x\in \Sigma^\star$, if after finite transitions the machine ends up in either state $q_y$ or $q_n$. We say the Turing machine accepts the input if it ends up in $q_y$, otherwise, if it does not halt or if it ends at $q_n$ the input is said to be rejected. We can make several conventions for the transition and the structure of a Turing machine. For example, we can think of a Turing machine with multiple finite tapes. Moreover, in the defined version of the Turing machine, we assumed that at each step the tape head either moves left or right. We can think of a Turing machine wherein the head can either move to left or right or stay put. Thereby, we define a transition \textit{stationary} if the tape head stays put, and define it \textit{moving} if it moves. Also the geometry of the tape itself can differ. Although these models each can give rise to different complexity classes, in terms of computability, we can show that all of these cases are equivalent.

One can immediately prove that there exists at least one language that is not decidable by Turing machines. The space of languages is according to $\{L\subseteq \Sigma^\star: \Sigma \text{  is finite}\}$. For any nonempty alphabet $\Sigma$, $\Sigma^\star$ counts the natural numbers; the following asserts that the set of languages cannot be counted by natural numbers:

\begin{proposition}
The set of subsets of any nonempty set cannot be counted by the original set.
\end{proposition}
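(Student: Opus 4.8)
The statement is the abstract form of Cantor's theorem, so the plan is to run the diagonal argument. Let $S$ be any (nonempty) set and let $2^S$ denote its collection of subsets. ``Counted by $S$'' should be read as: there exists a surjection $f\colon S \onto 2^S$. So I would begin by assuming, toward a contradiction, that such a surjective $f$ exists, with the goal of exhibiting a subset of $S$ that is not in the image of $f$.

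The heart of the argument is the construction of the diagonal set. I would define
\[
 D := \{\, s \in S : s \notin f(s) \,\} \subseteq S,
\]
which is a perfectly well-defined element of $2^S$ once $f$ is fixed. Since $f$ is assumed surjective, there must be some $d \in S$ with $f(d) = D$. The next step is the case analysis on the membership question ``is $d \in D$?'': if $d \in D$ then, by the defining property of $D$, we get $d \notin f(d) = D$, a contradiction; and if $d \notin D$, then $d \notin f(d)$, so $d$ satisfies the membership criterion for $D$, giving $d \in D$, again a contradiction. Either way we reach an absurdity, so no surjection $f\colon S \onto 2^S$ can exist, and in particular no bijection can, which is exactly the claimed impossibility of counting $2^S$ by $S$.

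I would close by remarking on the role of nonemptiness: for a surjection the hypothesis is actually not needed, but it is the relevant case here since the power set is always nonempty while the only set with no surjection onto a nonempty set is $\emptyset$ itself; stating it for nonempty $S$ keeps the statement uniform and matches the intended application to $\Sigma^\star$ for finite nonempty $\Sigma$. The one subtlety worth flagging — more a point of rigor than a genuine obstacle — is simply making sure $D$ is a legitimate set (it is, by separation/comprehension applied to $S$) so that asking for its preimage under $f$ is meaningful; beyond that the proof is entirely self-contained and requires nothing from the earlier sections except the notion of a string set being in bijection with $\mathbb{N}$, which is used only to phrase the corollary that follows.
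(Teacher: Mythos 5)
Your proof is correct and is essentially the paper's own argument: both are Cantor's diagonal argument, constructing the set $D=\{s\in S: s\notin f(s)\}$ and deriving a contradiction from the question of whether its preimage belongs to it. The only cosmetic differences are that the paper needlessly splits into finite and infinite cases (the diagonal argument already handles both uniformly, as you note) and assumes a bijection rather than the weaker and more natural hypothesis of a surjection.
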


\begin{proof}
This true for finite sets, since given any set of $n$ elements the set of subsets has $2^n$ elements. Suppose that $A$ is an infinite set, and suppose as a way of contradiction that $2^A$ (the set of subsets) can be counted by $A$. Then there is a bijection $f: A \rightarrow 2^A$. Given the existence of $f$ consider the subset of $A$, $P=\{x\in A: x \notin f(x)\}$, and the claim is that this subset cannot be counted by $A$, which is a contradiction. Suppose that $P$ has a pre-image, so $\exists a \in A, f(a)=P$. Then $a \in P$ if and only if $a\notin f(a)=P$.
\end{proof}

It suffices to prove that the space of Turing machines is countable by $\mathbb{N}$, and this implies that at least there exists a language that is not captured by Turing machines. For this purpose define the set of finite tuples by $\mathbb{N}^\mathbb{N} := \{(x_1, x_2,\ldots, x_n) : n\in \mathbb{N}, x_j \in \mathbb{N}, j \in [n] \}$. Then:

\begin{proposition}
(G\"odel\cite{godel1931formal}) $\mathbb{N}\cong \mathbb{N}^\mathbb{N}$, with a computable map.
\end{proposition}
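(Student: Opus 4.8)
The plan is to construct the computable bijection $\mathbb{N}\cong\mathbb{N}^{\mathbb{N}}$ by hand, in two stages: first a computable pairing bijection for ordered pairs, then a bootstrap to tuples of arbitrary finite length.

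\textit{Step 1 (pairs).} I would begin with the Cantor pairing function $\pi(x,y)=\tfrac{(x+y)(x+y+1)}{2}+y$ and verify directly, by the usual anti-diagonal counting argument, that $\pi\colon\mathbb{N}\times\mathbb{N}\to\mathbb{N}$ is a bijection: $\pi(x,y)$ is precisely the position of $(x,y)$ when the lattice points are listed diagonal by diagonal. Both $\pi$ and $\pi^{-1}$ are given by elementary arithmetic on $\mathbb{N}$ (the inverse extracts an integer square root), hence are computable by a Turing machine as set up in the preceding sections; that is all the ``computable map'' clause requires.

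\textit{Step 2 (tuples).} Define $E\colon\mathbb{N}^{\mathbb{N}}\to\mathbb{N}$ recursively by $E(())=0$ on the empty tuple and $E\big((x_1,\dots,x_n)\big)=1+\pi\!\big(x_1,\,E((x_2,\dots,x_n))\big)$ for $n\ge1$, with the evident inverse: on input $m$, output $()$ if $m=0$; otherwise write $m-1=\pi(a,r)$, decode $r$ recursively to a tuple $\vec y$, and output $(a)$ followed by $\vec y$. The recursion terminates because $\pi(a,r)\ge r$, so whenever $m\ge1$ the recursive call is made on $r\le m-1<m$, a strictly smaller nonnegative integer. Bijectivity of $E$ then follows by induction on $m$: the value $0$ has the unique preimage $()$, and for $m\ge1$ the decomposition $m-1=\pi(a,r)$ is unique since $\pi$ is a bijection, so the preimage of $m$ is $(a)$ prepended to the (inductively unique) preimage of $r$. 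Since $E$ and $E^{-1}$ are assembled from $\pi$, $\pi^{-1}$, successor, predecessor, and this bounded recursion, both are computable.

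The step I expect to be the genuine obstacle is \textit{surjectivity}, and it is exactly where the classical ``G\"odel numbering'' $\vec a\mapsto\prod_i p_i^{a_i+1}$ fails to finish the job on its own: that map is injective but misses infinitely many integers (for instance $3$), so to promote it to a bijection one would additionally have to observe that every infinite decidable subset of $\mathbb{N}$ is computably order-isomorphic to $\mathbb{N}$ and compose with such an isomorphism. The pairing-function recursion above sidesteps this by being onto by construction. The only other point needing care is the empty tuple: $\mathbb{N}^0$ is a single element, which is why it gets absorbed by the ``$+1$'' shift; equivalently one can phrase the argument as $\mathbb{N}^{\mathbb{N}}\cong\{\ast\}\sqcup\bigsqcup_{n\ge1}\mathbb{N}^n\cong\{\ast\}\sqcup(\mathbb{N}\times\mathbb{N})\cong\mathbb{N}$, with every map in sight computable.
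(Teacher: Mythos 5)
Your proof is correct and takes a genuinely different route from the paper's. The paper encodes $(x_1,\dots,x_n)$ as $p_1^{x_1}\cdots p_n^{x_n}$ and declares this a bijection by unique factorization, but that map is not even injective---both $()$ and $(0)$ go to $1$, and both $(1)$ and $(1,0)$ go to $2$---and it never reaches $0$; the $+1$-shifted variant you discuss is injective but, as you observe, misses $3$ and every integer whose prime support is not an initial segment of the primes. Either way, what the prime encoding actually delivers is a computable injection $\mathbb{N}^{\mathbb{N}}\hookrightarrow\mathbb{N}$, which is enough for the paper's real purpose (showing the set of Turing machines is countable) but falls short of the isomorphism the proposition asserts. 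Your construction, by contrast, is an honest bijection: the Cantor pairing $\pi$ is onto by the anti-diagonal enumeration, the recursion $E(x_1,\dots,x_n)=1+\pi\bigl(x_1,E(x_2,\dots,x_n)\bigr)$ has a terminating decoder because $\pi(a,r)\ge r$ forces each recursive call onto a strictly smaller argument, and bijectivity follows by strong induction on $m$ from the uniqueness of the decomposition $m-1=\pi(a,r)$. The cost is having to set up $\pi$ and verify its monotonicity; the benefit is that you prove the statement as actually written.
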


\begin{proof}
A bijection $f: \mathbb{N}^\mathbb{N}\rightarrow \mathbb{N}$ is constructed. Given any $X=(x_1, x_2,\ldots, x_n)\in \mathbb{N}^\mathbb{N}$, construct $f(X)= p_1^{x_1} p_2^{x_2}\ldots p_n^{x_n}$. Clearly, non-equal tuples are mapped to non-equal natural numbers. Also the map is invertible since any natural number is uniquely decomposed into a prime factorization. The map is computable, since given any $n$-tuple one finds the first $n$ primes and constructs the image as multiplications.
\end{proof}

\begin{proposition}
The set of Turing machines ($TM$) can be counted by natural numbers.
\end{proposition}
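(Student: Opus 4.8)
The plan is to encode each Turing machine by a single finite tuple of natural numbers and then invoke the preceding proposition, $\mathbb{N}\cong\mathbb{N}^{\mathbb{N}}$. First I would observe that, up to renaming of symbols, the alphabets and state set of a Turing machine carry no information beyond their sizes: if we relabel the elements of $Q$, $\Gamma$, $\Sigma$ by natural numbers and transport $q_0,q_y,q_n,\delta$ along the relabelling, we obtain a machine with identical computational behaviour. So without loss of generality we may assume $Q=\{0,1,\ldots,|Q|-1\}$, $\Gamma=\{0,1,\ldots,|\Gamma|-1\}$, $\Sigma=\{0,1,\ldots,|\Sigma|-1\}$ with $|\Sigma|\le|\Gamma|$, and $D=\{L,R\}$ identified with $\{0,1\}$. (Equivalently, one fixes once and for all that every symbol and state is drawn from the fixed countable universe $\mathbb{N}$, so that "the set of Turing machines" is genuinely a set.)

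Next I would read off the finite data of a machine $M$ as one tuple $e(M)$. The domain $Q\times\Gamma$ of $\delta$ is finite, of size $|Q|\cdot|\Gamma|$; fixing its lexicographic order, $\delta$ is determined by the list of its $|Q|\cdot|\Gamma|$ output triples $(q,y,d)\in Q\times\Gamma\times D$, hence by $3|Q||\Gamma|$ natural numbers $t_1,\ldots,t_{3|Q||\Gamma|}$. Concatenating the cardinalities, the distinguished states, and this transition table yields
\[
e(M)=\big(\,|Q|,\ |\Gamma|,\ |\Sigma|,\ q_0,\ q_y,\ q_n,\ t_1,\ t_2,\ \ldots,\ t_{3|Q||\Gamma|}\,\big)\in\mathbb{N}^{\mathbb{N}}.
\]
Since $e(M)$ records all seven components of the defining tuple of $M$, distinct machines (with alphabets normalized as above) are sent to distinct tuples, so $M\mapsto e(M)$ is injective.

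Finally, composing $e$ with the computable bijection $\mathbb{N}^{\mathbb{N}}\to\mathbb{N}$ supplied by the previous proposition gives an injection from the set of Turing machines into $\mathbb{N}$; as there are evidently infinitely many Turing machines, this injection is in fact a bijection, which is the claim (and it is moreover computable from a description of $M$). I do not anticipate a real obstacle: the only step that needs care is the "up to renaming" normalization at the start — without confining symbols to a fixed countable universe the collection of all Turing machines is not even a set — and once that is pinned down the rest is routine bookkeeping plus the Gödel encoding. An alternative route, bypassing tuples entirely, is to write a plain binary description string for each machine and invoke $\Sigma_2^\star\cong\mathbb{N}$ from the section on alphabets.
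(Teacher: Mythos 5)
Your proof is correct and follows essentially the same route as the paper: normalize the symbols and states to natural numbers, pack the sizes, distinguished states, and transition table into one finite tuple, and apply the G\"odel encoding $\mathbb{N}^{\mathbb{N}}\cong\mathbb{N}$ from the preceding proposition. The paper writes out the prime factorization directly rather than first exhibiting the tuple and then invoking the previous proposition, but the underlying encoding is identical; your version is if anything a bit more careful about why the normalization step is needed.
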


\begin{proof}
It suffices to find a computable one-to-one embedding of $TM$ into $\mathbb{N}^\mathbb{N}$. Each Turing machine $M$ is a finite tuple of symbols. Give each symbol a natural number. These symbols correspond to the input and tape alphabets, name of the machine states, and the left/right (or possibly stay put) symbols. The following is one possible embedding:

\begin{eqnarray*}
M&\mapsto& 2^{|\Sigma|} 3^{|\Gamma|} 5^{|Q|} p^{\Sigma}\ldots p^{\Sigma} p^{\Gamma}\ldots  p^{\Gamma}\\
&& p^{q_0} p^{q_y}p^{q_n} p^D p^{q_1, \Gamma_1 , q_2, \Gamma_2, D}\ldots p^{q_1, \Gamma_1 , q_2, \Gamma_2, D}.
\end{eqnarray*}

we show the sequence of primes multiplied together with some encoding of the symbols for a set $A$ as multiplication powers $p^A \ldots p^A$ of consecutive primes. In order to avoid adding a new symbol for a delimiter, the sizes of $\Sigma, \Gamma$ and $Q$ are specified with the first three primes.
\end{proof}

\noindent As a corollary, the following statements should be true:

\begin{itemize}
\item[]- There is a language that is not recognizable by any Turing machine.
\item[]- There is a real number that is not computable.
\end{itemize}

%The best object that can capture all of these discrete tape geometries is a countable graph with finite heads on them. We need to introduce the general model of a tape Turing machine whose class language is equivalent to Turing Recognizable languages. 

\section{The Complexity Theory of Decidable Languages}

By definition, for any decidable language there exists a Turing machine which always halts in certain amount of time and space. One way of classifying these languages is based on the minimum amount of time (space) of a Turing machine that decides the language. In order to classify the languages, then we need a partial (or total order). For this purpose we use the inclusion $\subseteq$ as a partial order.

\begin{definition}
For any function $f: \mathbb{N}\rightarrow \mathbb{N}$, a language $L$ is in time $\TIME(f)$ ($\SPACE(f)$) if there is a standard Turing machine $M$ which on any input $x$  it halts using $O(f(|x|))$ time steps (tape cells of space) and $x \in L$ if and only if $M$ accepts $L$. We therefore define $\P:= \TIME (n^{O(1)})$ and $\PSPACE := \SPACE(n^{O(1)})$.
\end{definition}

\noindent In the above definition the so-called big-O notation is used: for two functions $f$ and $g : \mathbb{N}\rightarrow \mathbb{N}$, $f = O(g)$, means that there exists $n_0 \in \mathbb{N}$ and a constant $c>0$, such that for all $n \geq n_0$, $f(n)\leq g(n)$.

Next, we mention the concept of nondeterminism: 

\begin{definition}
For any $f: \mathbb{N}\rightarrow \mathbb{N}$ the language $L \in \NTIME(f)$ ($\NSPACE(f)$) if there is a Turing machine $M$ with the property that $x\in L$ if and only if there exists a string $y$ such that $M(x,y)=1$ (accepts), and that $M(x,y)$ halts in $O(f(|x|))$ for any string $y$. Define $\NP:=\NTIME (n^{O(1)})$ $\PSPACE:=\NSPACE(n^{O(1)})$, and $\NL:=\NSPACE(O(\log n))$.
\end{definition}

One can think of a nondeterministic version of a Turing machine in which the machine starts out of a unique initial state $q_0$ on some input $x\in \Sigma^\star$, and at each step the computation can branch according to a nondeterministic transition function. In other words, such a nondeterministic Turing machine can guess a transition, and the computation accepts, if among the guessed computations, at least one of them leads to an accepting state, and the computations rejects otherwise. The running time of a nondeterministic Turing machine is the greatest running time among the guessed computations (including rejecting paths).  Therefore, $\NP$ can be alternatively defined by the set of languages for which there is a polynomial-time nondeterministic Turing machine which accepts its inputs whenever they are contained in the language.

In computer science, sometimes we are interested in the class of languages that can be decided efficiently, if a certain language can be decided immediately by a black box. Such a black box access to a language can be formalized by an oracle. An oracle is the interface of  a language with a machine. More precisely, an oracle for language $A$, is a single tape machine which takes a string $x \in \{0,1\}^\star$ as its input, and the in one step of computation, clears its tape and writes a $1$ if $x\in A$ and otherwise writes a $0$. Oracles can also compute arbitrary functions $: \{0,1\}^\star \rightarrow \{0,1\}^\star$ in one step of computation. Given a class of computing machines $M$ which can make queries to an oracle $A$, define $M^A$ to be the class of languages that are decidable by these machines with query access to $A$. 

If we define a uniform probability distribution over the set of oracles then:

\begin{definition}
Let $M$ be a class of computing machines. $\Almost M$ is defined as the class of languages that are decidable with with bounded probability of error by a machine in the class of machines $M$ with access to a random oracle.
\end{definition}

By bounded probability of error it is meant that there are constants $1\geq c, c'>0$, such that if an input $x$ is in the language, then with probability greater than $c$ over the set of oracles, the machine accepts $x$, and if $x$ is not in the language, then with probability greater than $c'$, the machine rejects $x$.

Next, we mention reduction as a crucial element of theory of computing:

\begin{definition}
Given a class of machines $Q$, and two languages $L_1, L_2$, we say that $L_1 \leq^{m}_Q L_2$, or $L_1$ is mapping reducible to $L_2$ with $M$, if there is a function computable in $M$, such that $x \in L_1$ if and only if $f(x) \in L_2$. Also, alternatively, say $L_1$ is oracle reducible to $L_2$, $L_1 \leq^o_M {L_2}$, if $L_1 \subseteq M^{L_2}$.
\end{definition}

A reduction is a partial order on the set of languages, and when a language $A$ is reducible to another language $B$, intuitively, this means that $B$ is at least as hard as $A$. If $A$ is mapping reducible to $B$, then also $A$ is oracle reducible to $B$, but the converse is not necessarily true.

\begin{definition}
A language is called $\NP$ hard if all languages in $\NP$ are $\P$ reducible to it. A language is called $\NP$ complete if it is $\NP$ hard and is also contained in $\NP$.
\end{definition}

\begin{theorem}
(Cook-Levin \cite{cook1971complexity}) $\NP$ has a complete language.
\end{theorem}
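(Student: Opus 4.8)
The plan is to exhibit an explicit $\NP$-complete language and carry out the standard Cook--Levin argument. I would take $\SAT$ --- the set of satisfiable Boolean formulas in conjunctive normal form --- as the candidate complete language, and prove two things: first that $\SAT \in \NP$, and second that every language in $\NP$ is $\P$-reducible to $\SAT$.

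The first part is easy and I would dispatch it quickly: given a CNF formula $\varphi$ on $m$ variables, a nondeterministic machine guesses an assignment $y \in \{0,1\}^m$ and then checks in polynomial time whether $y$ satisfies every clause. This matches the verifier definition of $\NP$ given above, so $\SAT \in \NP$.

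The heart of the proof is the second part. Let $L \in \NP$ be arbitrary, witnessed by a polynomial-time nondeterministic Turing machine $M$ running in time $p(n)$, or equivalently by a polynomial-time verifier $V$ with $x \in L \iff \exists y,\ |y| \le p(|x|),\ V(x,y)=1$. Given an input $x$ of length $n$, I would build in polynomial time a CNF formula $\varphi_x$ that is satisfiable if and only if $M$ has an accepting computation on $x$. The construction introduces Boolean variables encoding the entire computation tableau: a variable for the content of each tape cell at each of the $p(n)$ time steps, a variable for the head position at each step, and a variable for the machine's state at each step. The clauses of $\varphi_x$ then enforce (i) that the initial row of the tableau correctly encodes $x$ together with the starting state $q_0$ and head at the leftmost cell; (ii) that each configuration is well-formed (exactly one symbol per cell, exactly one head position, exactly one state); (iii) that each row follows from the previous one according to the transition relation $\delta$ of $M$ --- this is the local-consistency window check, where only cells near the head can change and the change must be licensed by some nondeterministic choice of $\delta$; and (iv) that the final row is in the accepting state $q_y$. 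Each of these conditions is expressible by polynomially many clauses of constant width, and the entire map $x \mapsto \varphi_x$ is computable by an $\AC^0$ circuit, hence certainly in $\P$. A satisfying assignment of $\varphi_x$ is exactly a valid accepting computation of $M$ on $x$, so $x \in L \iff \varphi_x \in \SAT$, giving $L \le^m_{\P} \SAT$. Since $L$ was arbitrary, $\SAT$ is $\NP$-hard, and combined with $\SAT \in \NP$ it is $\NP$-complete.

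The main obstacle --- really the only delicate point --- is encoding the transition function faithfully in clauses of bounded size: one must argue that local consistency of overlapping windows of the tableau, together with correct boundary rows, is equivalent to the existence of a genuine accepting run, and that nondeterministic branching is captured by leaving the window-consistency clauses satisfiable by more than one choice. Getting the window size right (it suffices to look at a cell and its two neighbors between consecutive steps) and checking that no spurious ``illegal'' tableau can satisfy all clauses is the crux; everything else is bookkeeping. I would also remark that the resulting formula can be brought into $3$-CNF by the usual trick of splitting wide clauses with fresh variables, which is occasionally convenient but not needed for the statement as phrased.
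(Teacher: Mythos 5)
Your proof is correct and is the standard Cook--Levin tableau construction; the paper itself does not reproduce a proof here but simply cites \cite{cook1971complexity}, which uses essentially this argument. The one thing worth noting is that your observation that the reduction $x \mapsto \varphi_x$ is in fact $\AC^0$-computable is a slight but harmless strengthening of what the statement requires (only $\P$-computability), and is consistent with the level of reduction the paper later uses elsewhere.
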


\noindent A specific example of such a complete language is the boolean satisfiability problem, $\SAT$: given a boolean formula, decide if there is a satisfying assignment. The following is then immediate.

\begin{lemma}
A language is $\NP$ complete if there is a polynomial time reduction from $\SAT$ or any other $\NP$ complete language to it.  

An $\NP$ language is in $\P$ if there is a reduction from the language to a language in $\P$.

$\P=\NP$ if and only if $\SAT\in \P$ (also true for any other $\NP$ complete language)
\end{lemma}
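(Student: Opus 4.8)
The lemma collects three standard facts about $\NP$-completeness, and I would prove each in turn using only the Cook--Levin theorem (that $\SAT$ is $\NP$-complete) and the transitivity of polynomial-time mapping reductions established earlier.

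\textbf{First claim.} Suppose $L$ is a language and there is a polynomial-time reduction from some $\NP$-complete language $C$ to $L$. To show $L$ is $\NP$-hard, take any $A \in \NP$. Since $C$ is $\NP$-hard, $A \leq^m_{\P} C$, and by hypothesis $C \leq^m_{\P} L$. By transitivity of $\P$-reductions (the composition of two polynomial-time computable functions is polynomial-time computable), $A \leq^m_{\P} L$. Hence $L$ is $\NP$-hard; if moreover $L \in \NP$ it is $\NP$-complete. The special case $C = \SAT$ is licensed by Cook--Levin.

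\textbf{Second claim.} Suppose $L \in \NP$ and $L \leq^m_{\P} L'$ for some $L' \in \P$. To decide $x \in L$, compute $f(x)$ in polynomial time and then test $f(x) \in L'$ in polynomial time; since $|f(x)|$ is polynomially bounded in $|x|$, the whole procedure runs in polynomial time. By the defining property of a reduction, $x \in L \iff f(x) \in L'$, so $L \in \P$.

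\textbf{Third claim.} For the equivalence $\P = \NP \iff \SAT \in \P$: the forward direction is immediate since $\SAT \in \NP$. For the converse, assume $\SAT \in \P$ and let $A \in \NP$ be arbitrary; by Cook--Levin, $A \leq^m_{\P} \SAT$, and by the second claim $A \in \P$. Thus $\NP \subseteq \P$, and since $\P \subseteq \NP$ trivially, $\P = \NP$. The same argument works verbatim with any other $\NP$-complete language in place of $\SAT$.

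None of the three parts presents a genuine obstacle; the only point requiring a word of care is that a mapping reduction must not blow up the input superpolynomially, but this is already built into the definition of $\leq^m_{\P}$ (the reduction function is computed by a polynomial-time machine, hence its output has polynomially bounded length), so composition and the ``plug into a $\P$ algorithm'' step both stay within $\P$.
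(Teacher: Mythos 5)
Your proof is correct and is exactly the standard argument; the paper itself gives no proof, stating the lemma as ``immediate'' after Cook--Levin, and what you have written is precisely the routine verification (transitivity of $\leq^m_{\P}$, composition of polynomial-time computable functions, and the forward/backward directions of the equivalence) that the paper leaves to the reader.
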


Next a formal definition of counting classes is given:

\begin{definition}
$\#\P$ is the class of functions $f: \Sigma^\star\rightarrow \mathbb{N}$ which count the number of accepting branches of an $\NP$ machine. In other words, given a nondeterministic Turing machine, $M$, $f(x)$ is the number of accepting branches of $M$ when run on $x$. 

(Probabilistic polynomial time) $\PP$ is the class of problems $L$ for which there exists an $\NP$ machine $M$ such that $x\in L$ iff most of the branches of $M$ run on $x$ accept.
\end{definition}

Other than deterministic and nondeterministic models, an alternative model is randomized computation. In such scheme of computing a Turing machine has access to an unbounded read-once tape consisting of independent true random bits. The transition function can thereby depend on a random bit. Based on such machines we can define new complexity classes. 

\begin{definition}
A probabilistic standard Turing machine is defined similar to the deterministic version, with an extra unbounded read-once tape of random bits, as an $8$-tuple $(\Sigma, \Gamma, R, Q, q_0, q_y, q_n, \delta, D:=\{L,R\})$. Here $R$ is a finite alphabet of random bits, and each element of the alphabet is repeated with equal frequency (probability). The transition function $\delta: Q\times \Gamma \times R \rightarrow Q\times \Gamma \times D$.
\end{definition}

Therefore, the following two complexity classes are naturally defined as:

\begin{itemize}

\item (bounded error probabilistic polynomial time) $\BPP$ is the class of languages $L$ for which there is a probabilistic polynomial time Turing machine $M$ such that if $x\in L$, $\operatorname*{Pr}[M(x)=1]\geq 2/3$ and otherwise $\operatorname*{Pr} [M(x)=0]\geq 2/3$.

\item (probabilistic polynomial time) $\PP$ is the class of languages $L$ for which there is a probabilistic polynomial time Turing machine $M$ such that if $x\in L$ then $\operatorname*{Pr}[x=1]>1/2$ and otherwise $\operatorname*{Pr}[x=0]>1/2$.

\end{itemize}

The class $\PP$ is related to the counting classes by the following theorem:

\begin{theorem}
$\P^{\#\P}= \P^{\PP}$ \cite{arora2009computational}.
\end{theorem}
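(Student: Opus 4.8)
The plan is to prove the two inclusions $\P^{\PP} \subseteq \P^{\#\P}$ and $\P^{\#\P} \subseteq \P^{\PP}$ separately. The first is essentially immediate and the second rests on a binary search together with a padding argument showing that a suitable comparison language lies in $\PP$.

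For $\P^{\PP} \subseteq \P^{\#\P}$: a single query to a $\PP$ oracle asks, for some polynomial-time nondeterministic machine $N$ that uses $m = m(|y|)$ nondeterministic bits on input $y$, whether strictly more than $2^{m-1}$ of its $2^m$ branches accept. Given a $\#\P$ oracle we can obtain the exact number $a$ of accepting branches of $N$ on $y$, and the polynomial-time base machine simply compares $a$ with $2^{m-1}$; since $m$ is computable in polynomial time from $y$ this faithfully answers the query, and an adaptive sequence of such queries is handled in the same way.

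For $\P^{\#\P} \subseteq \P^{\PP}$: fix a $\#\P$ function $f$ computed by a nondeterministic machine $M$ that uses $m = m(|x|)$ nondeterministic bits, so $0 \le f(x) \le 2^m$. The key step is to show that the comparison language
\[
C = \{ (x,k) : 0 \le k \le 2^{m(|x|)} \text{ and } f(x) \ge k \}
\]
belongs to $\PP$. For $1 \le k \le 2^m$ build a nondeterministic machine $M'$ that on input $(x,k)$ first branches on one bit: on the first branch it simulates $M$ on $x$ and accepts iff $M$ accepts; on the second branch it guesses $j \in \{0,1\}^m$ and accepts iff $j$, read as an integer, is smaller than $2^m - k + 1$, so that this branch has exactly $2^m - k + 1$ accepting sub-branches (which is computable in polynomial time, as $k$ is given in binary). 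Then $M'$ has $2^{m+1}$ branches in total, of which $f(x) + (2^m - k + 1)$ accept, and a short calculation shows this exceeds $2^m$ precisely when $f(x) \ge k$; the boundary case $k = 0$ (always in $C$) is hard-wired. Hence $C \in \PP$.

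Given $C \in \PP$, a polynomial-time machine with a $\PP$ oracle recovers the value $f(x)$ by binary search over $k \in \{0,1,\dots,2^m\}$, using $m+1 = \mathrm{poly}(|x|)$ oracle queries; it then runs the original $\P^{\#\P}$ computation, answering each of its (adaptive) oracle calls by one such binary search. Since there are polynomially many calls and each costs polynomially many $\PP$ queries, the whole simulation is a $\P^{\PP}$ computation, completing the proof. I expect the main obstacle to be the padding construction witnessing $C \in \PP$: one must design $M'$ so that a single threshold on its number of accepting branches encodes the inequality $f(x) \ge k$ for the value $k$ supplied in binary, and one must get the off-by-one between strict and non-strict majority and the boundary cases $k = 0$ and $k = 2^m$ exactly right.
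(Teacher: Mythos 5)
The paper cites this equality from Arora--Barak and does not supply its own proof, so there is nothing in the text to compare against; judged on its own, your argument is the standard textbook one and it is correct. The direction $\P^{\PP}\subseteq\P^{\#\P}$ is immediate as you say, and for $\P^{\#\P}\subseteq\P^{\PP}$ your padding construction is exactly right: with $f(x)+2^m-k+1$ accepting paths out of $2^{m+1}$, strict majority holds iff $f(x)+2^m-k+1>2^m$ iff $f(x)\ge k$ (integers), and rejection ($\le 2^m$ accepting paths) corresponds precisely to $f(x)<k$, so there is no ambiguous middle case once you hard-wire $k=0$. The binary search then recovers $f$ exactly in $m+1$ queries and the whole $\P^{\#\P}$ computation is simulated with polynomially many $\PP$ queries. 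The one cosmetic slip is the sentence claiming the second branch ``has exactly $2^m-k+1$ accepting sub-branches'': for $k=0$ that number is $2^m+1$, which exceeds the $2^m$ available sub-branches, but you already flag and hard-wire that boundary case, so the argument as a whole is sound.
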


\section{The Polynomial Hierarchy}

The $\NP$ language can be equivalently formulated as the set of languages $L \subseteq \{0,1\}^\star$, for which there is a polynomial time Turing machine $M(\cdot, \cdot)$ and a polynomial $p:\mathbb{N} \rightarrow \mathbb{N}$, such that $x \in L$ if and only if there exists $y \in \{0,1\}^{p(|x|)}$ such that $M(x,y)$ accepts. We can just write:

$$
x \in L \leftrightarrow \exists y\hspace{2mm} M(x,y)=1
$$

The complement of $\NP$ is called $co\NP$ and is defined as the set of languages $L \subseteq \{0,1\}^\star$ for which there is a polynomial time Turing machine $M$ such that:

$$
x \in L \leftrightarrow \forall y \hspace{2mm}M(x,y)=1
$$

\noindent The relationship between $\NP$ and $co\NP$ is unknown, but we believe that they are in comparable as set of languages,\i.e. none of them is properly contained in the other. Define the notation $\Sigma^0_P = \Pi^0_P = \P$, and $\Sigma^1_\P =\NP$ and $\Pi^1_\P = co\NP$, then we can inductively extend the definitions to a hierarchy of complexity classes. Define $\Sigma^j_P$ to be the class of languages $L \subseteq \{0,1\}^\star$, for which there is a polynomial time Turing machine $M$ such that:

$$
x \in L \leftrightarrow \exists y_1 \hspace{2mm} \forall y_2 \hspace{2mm}\exists y_3 \hspace{1mm}\ldots\hspace{1mm} Q_i y_i\hspace{2mm} M(x,y_1, y_2, \ldots, y_i)=1.
$$

\noindent Here $Q_i$ is either a $\exists$ or $\forall$ quantifier depending on the parity of $i$. The complexity class $\Pi^i_\P$ is similarly defined as the class of languages for which there exists a polynomial time Turing machine $M$ such that:

$$
x \in L \leftrightarrow \forall y_1\hspace{2mm} \exists y_2 \hspace{2mm}\forall y_3 \hspace{1mm}\ldots \hspace{1mm}Q_i y_i \hspace{2mm} M(x,y_1, y_2, \ldots, y_i)=1.
$$

\noindent The complexity class polynomial hierarchy is then defined as the union $\PH:= \cup_{i\geq 0} \Sigma^i_\P$.

The hierarchy is conjectured to be infinite.  The relationship between the $\Pi^i$ and $\Sigma^i$ and also the different levels within $\PH$ is unknown. However we know that if $\Sigma^i_{\P}=\Pi^i{\P}$ or $\Sigma^i_{\P}=\Sigma^{i+1}_{\P}$ for $i>0$, then $\PH$ collapses to the $i$'th level, and as a result, the hierarchy will consist of finitely many levels \cite{arora2009computational}. Another interesting direction is the relationship between $\BPP$ and $\PH$. The relationship between $\BPP$ and $\NP$ is unknown, however according to Sipser et al. $\BPP \in \Sigma^2_\P$.

Infinite $\PH$ conjecture is sometimes used to make inference about the containments of complexity classes. For example, consider the following: the class $\BPP^{\NP}$ is known with approximate counting; in comparison, $\P^{\#\P}$ corresponds to exact counting. According to a theorem by Toda \cite{toda1991pp}, $\PH$ is contained in $\P^{\#\P}=\P^{\PP}$. $\BPP^{\NP}$ is contained in the third level of $\PH$. However, because of Toda's theorem $\P^{\#\P}\subseteq \BPP^{\NP}$ which implies $\PH \subseteq \P^{\#\P}\subseteq \Sigma^3_{\P}$, and then a collapse of $\PH$ to the third level. Therefore, we say that there are counting problems that are hard to even approximate within a multiplicative constant, unless $\PH$ collapses to the third level \cite{arora2009computational}.

\section{Reversible Turing Machines}

A Turing machine is called reversible if nodes of its infinite configuration space as a graph have in-degree and out-degree at most $1$. The following is crucial for the result of section \ref{ballpermutingoracles}.

\begin{theorem}
\begin{itemize}
\item[]
\item[] (Lange-McKenzie-Tapp \cite{lange2000reversible}) Any function that is computable in space (multi-head) $S(n)$, can be computed in the same space reversibly, (with possible exponential overhead in time).

\item[] Any language in $\L$ can be recognized by a reversible $\LOGSPACE$ ($\Rev\L$).
\end{itemize}
\label{Mck}
\end{theorem}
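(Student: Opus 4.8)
The plan is to derive both statements from a single construction, since a decision procedure is the special case of computing a $\{0,1\}$-valued function and $\L$ is the special case $S(n)=O(\log n)$: I would show how to simulate an arbitrary deterministic (multi-head) space-$S(n)$ machine $M$ reversibly in space $O(S(n))$. First I would normalize $M$ so that on every input $x$ it halts and has a unique accepting configuration $c_{\mathrm{acc}}$ (and, as always, a unique start configuration $c_0$) --- arranged by having $M$ erase its work tapes and return all heads to home cells just before accepting --- noting that this normalization can itself be carried out reversibly with $O(S)$ extra space. Because $M$ is deterministic and halting, every configuration has out-degree at most $1$ in the configuration graph, no configuration lying on a cycle can reach $c_{\mathrm{acc}}$, and hence the set $T$ of configurations from which $c_{\mathrm{acc}}$ is reachable, with edges reversed, is a finite rooted tree.

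The reversible machine $R$ performs a depth-first Euler tour of $T$ from the root. The two local primitives that make this possible without an auxiliary stack are: (i) from any configuration $c$ one computes its parent $\mathrm{next}(c)$, the unique $M$-successor, in space $O(S)$; and (ii) from any configuration $c$ one enumerates, in a fixed canonical order, all of its children --- the configurations $c'$ with $\mathrm{next}(c')=c$ --- in space $O(S)$, because a Turing transition is local, so a predecessor of $c$ is obtained by undoing one of finitely many transition rules in the neighbourhood of a head, and the finitely many candidates can be tested lexicographically. The state of $R$ is a pair $(c,\mathit{phase})$ with $\mathit{phase}\in\{\mathrm{descend},\mathrm{ascend}\}$ plus $O(1)$ bookkeeping bits, of total size $O(S)$. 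In state $(c,\mathrm{descend})$, $R$ moves to the first child of $c$, or switches to $(c,\mathrm{ascend})$ if $c$ is a leaf; in state $(c',\mathrm{ascend})$, $R$ sets $c:=\mathrm{next}(c')$, recomputes the rank $i$ of $c'$ among the canonically ordered children of $c$, and moves to child $i+1$ in state $\mathrm{descend}$, or to $(c,\mathrm{ascend})$ if $c'$ was the last child. Each elementary step is locally invertible, so the transition function of $R$ is injective and $R$ is a reversible Turing machine running in space $O(S)$.

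To produce the output, observe that $x$ is accepted by $M$ iff $c_0\in T$, \ie\ iff the Euler tour visits $c_0$. So $R$ runs the tour; the instant it reaches $c_0$ it halts accepting after a reversible clean-up to a canonical halting configuration, and if the tour returns to the root $c_{\mathrm{acc}}$ with all children exhausted and $c_0$ never seen, it halts rejecting. Since each of the at most $2^{O(S)}$ edges of $T$ is traversed exactly twice, the time is $2^{O(S)}$, which gives the general statement; for $S(n)=O(\log n)$ the space is $O(\log n)$ (and the time is in fact polynomial), giving $\L\subseteq\Rev\L$.

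The main obstacle I expect lies not in the idea but in verifying \emph{bona fide} reversibility of $R$ step by step: one must fix the canonical predecessor ordering so that, on returning to a node, the identity (hence rank) of the child just visited is recoverable purely from local data --- it is, since that child is a specific predecessor of $\mathrm{next}$ of the current node --- and one must realize the normalization of $M$ and the final clean-up of $R$ while preserving injectivity of the transition function. A secondary point to confirm is that $T$ is genuinely a finite tree, which is exactly where halting of the deterministic machine is used, so that the tour is guaranteed to terminate.
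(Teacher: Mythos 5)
Your proposal is essentially the same Euler-tour argument that the paper sketches (and that Lange--McKenzie--Tapp actually use), but worked out with considerably more care: the paper's one-line sketch simply says ``take an Euler tour over the doubled configuration graph,'' whereas you correctly restrict to the tree $T$ of configurations from which the canonical accepting configuration is reachable, root the tour there, make the descend/ascend bookkeeping explicit, and verify step-by-step local invertibility. Two small points to tighten. First, the parenthetical that the \emph{normalization} of $M$ (unique accepting configuration, erased tapes, heads home) ``can itself be carried out reversibly'' is a red herring: you are free to preprocess the \emph{irreversible} machine $M$ however you like before building the reversible simulator $R$, so no reversibility claim is needed or wanted there. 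Second, as written you only establish the decision (language) form of the statement: checking whether $c_0\in T$ yields one bit, which suffices for the second bullet and for the application in section~\ref{ballpermutingoracles}, but the first bullet asserts reversible computation of an arbitrary space-$S$ \emph{function}. For that you need the dual tour starting at $c_0$, walking forward to the unique halting configuration $c_h$ that holds the output, copying the output off to the side, and then Euler-touring the tree rooted at $c_h$ back down to $c_0$ so the work space is restored; this is the ``input-saving'' structure the paper's sketch is gesturing at when it speaks of computing the value and then enumerating pre-images. The two tours are mirror images of one another and use the same primitives you already set up, so this is a presentation gap rather than a missing idea. A final sanity check you implicitly rely on and should make explicit: $T$ is finite (at most $2^{O(S)}$ configurations) independently of whether $M$ halts on $x$, so the tour always terminates; halting is only needed so that ``$c_0\notin T$'' actually means ``$M$ rejects $x$'' rather than ``$M$ loops on $x$,'' which for $\L$ can be guaranteed by the usual step-counting modification.
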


\begin{proof}
(Sketch) consider the directed configuration space for the computation of any function in space $S$, then the reversible algorithm is to take an Euler tour over an undirected graph constructed by doubling each edge of the configuration space. Thereby, the reversible machine is able to first compute the value of the function and then enumerate the possible pre-images of the function including the original input. In order to see $\Rev\L=\L$, just consider the action of $\LOGSPACE$ computation as an input saving computing mode, where there are two separate tapes one holding the original input and the other computes the function. From the first part of the theorem, any such computation can be captured by a symmetric computation in the same space, and sine the definition of $\L$ does not impose any constraint on time, the same class is equal to the symmetric (reversible) version.
\end{proof}

\section{Circuits}

Consider the set of functions $F$ of the form $\{0,1\}^\star\rightarrow \{0,1\}$, also known by Boolean functions. Given any language $L$, one can construct a function with $f(x)=1$ if and only if $x\in L$. In other words each such function represents a language. Most of the languages are undecidable, thereby most of these functions are not computable. We can think of a class of functions $F_n$ as the set of functions of the form $\{0,1\}^n\rightarrow \{0,1\}$. Represent each of these strings with an integer between $1$ and $2^n$. Given this ordering any function $F_n$ can be specified with a string of $2^n$ bits, and thereby $|F_n|=2^{2^n}$. Such an encoding of a function is formalized by a truth table: a truth table $tt_f$ of a function $f: \{0,1\}^n\rightarrow \{0,1\}$ is a subset of $\{0,1\}^n\times \{0,1\}$ for which $(x,s)\in tt_f$ if and only if $f(x)=s$.

Boolean can be described by Boolean variables ranging in $\{0,1\}$, and binary operations (AND) $.$ and  (OR) $+ : \Sigma \times \Sigma  \rightarrow \Sigma$ between them, and a single-bit operation called negation (NOT) $': \Sigma \rightarrow \Sigma$. Given $x,y \in \Sigma$, $x.y=1$ if $x=1$ and $y=1$, otherwise $x . y = 0$, and $x+y =0$ only if $x=0$ and $y=0$ and otherwise $x+y=1$. And $x'=0$ if $x=1$ and otherwise $x=1$. We can alternatively use the symbols $\wedge$, $\vee$ and $\neg$ for AND, OR, and NOT operations, respectively.

We can think of these operations as gates and variables as input (wires) to the gates, and the collection of these forms a model of computation called circuits. $AND, OR$ and $NOT$ gate-set is an alternative set of operations to capture Boolean functions. In general, circuits are compositions of local gates, where a local gate represents a Boolean function from constant number of inputs to constant number of outputs.

Circuits can be related to Turing machines by:

\begin{definition}
A family of circuits $\{C_{m,n}\}$, each with $m$ inputs and $n$ outputs, is called uniform if there is a Turing machine which on input $m,n$ outputs the description of $C_{m,n}$. The family is otherwise called nonuniform.
\end{definition}

Also the following definition is used in section \ref{ballpermutingoracles}:

\begin{definition}
$\AC^i$ is the class of decision problems that are solvable by a (possibly) nonuniform family of circuits composed of unbounded fanin $AND, OR$ and $NOT$ gates, and have depth growing like $O(\log^i n)$.
\end{definition}

An unbounded AND (OR) gate is a gate with unbounded input wires and unbounded output wires such that all output wires are a copy of the other, and their value is $1 (0)$ if and only if all the input wires are $1 (0)$, and otherwise the output wires take $0 (1)$. 

Any boolean function $f: \{0,1\}^n \rightarrow \{0,1\}$, can be constructed by a sequence of AND, OR, NOT and COPY. We then call the collection these operations a universal gate set. Therefore, any gate set which can simulate these operations is also universal for Boolean computing. Among these universal gate sets is the gate set consisting of NAND operation only. A NAND operation is the composition of NOT and AND from left to right. We are also interested in universal gate sets which are reversible. That is the gate sets that can generate subsets of invertible functions $f: \{0,1\}^\star \rightarrow \{0,1\}^\star$ only. A necessary condition is that each element of the gate set has equal number of inputs and outputs. Examples of reversible gates are controlled not $CNOT: \{0,1\}^2\rightarrow \{0,1\}^2$ which maps $(x, y) \mapsto (x, x\oplus y)$. That is $C$ flips the second bit if the first bit is a $1$. Here $\oplus$ is the addition of bits mod $2$. Notice that $CNOT$ is its own inverse. Circuits based on $CNOT$ can generate linear functions only and thereby, CNOT is not universal in this sense. However, if a gate operates as NOT controlled by two input bits, then we can come up with gates that are both reversible and universal. More precisely, let $T: \{0,1\}^3\rightarrow \{0,1\}^3$, be a Boolean gate with the map $(x, y, z) \mapsto (x, y, x.y \oplus z)$. Then $T$ is also its own inverse, and one can confirm that composition of $T$ gates can simulate a NAND gate. Notice that we need extra input and output bits to mediate the simulation. Such extra bits are called ancilla. The $T$ gate is also known as the Toffoli gate. As another example let $F: \{0,1\}^3\rightarrow \{0,1\}^3$, be a Boolean gate with the maps $(0, x, y ) \mapsto (0, x,y)$ and $(1, x, y) \mapsto (1, y, x)$. $F$ is also its own inverse and moreover it can be proved that $F$ is also universal. $F$ is also known as the Fredkin gate.

\chapter{Quantum Theory and Quantum Complexity Theory}
\label{ch3}

In this chapter, we go over some background in quantum mechanics, quantum computing, and quantum complexity theory complexity theory. After a short introduction to quantum mechanics, we discuss the integrable models in $1+1$ dimensions. Quantum computing and quantum complexity theory are discussed later in the second half of the chapter.

\section{Quantum Mechanics}

There are various interpretations and formulations of quantum mechanics. The following views quantum mechanics as a generalization of classical probability theory and classical mechanics. In that, a system is described as a quantum state, which is a complex vector in a vector space. These states encode the probability distribution over the possible outcomes of observables. Observables are Hermitian operators on the vector space. Like in classical probability theory, the states of the vector space should be normalized with respect to some norm, and the set of operators that map normalized states to normalized state are the legitimate evolution operators.

A vector space is called a Hilbert space $\mathcal{H}$, if it is complete and has an inner-product. A Hilbert space can have finite or infinite dimension. An inner-product is a function $\langle \cdot, \cdot\rangle: \mathcal{H} \times \mathcal{H} \rightarrow \C$, with conjugate symmetry, \i.e $\langle \phi_1| \phi_2\rangle^\star= \langle \phi_2| \phi_1\rangle, \forall \phi_1, \phi_2 \in \mathcal{H}$, positive definiteness, that is for all $\phi \in \mathcal{H}$, $\langle\phi|\phi \rangle\geq 0$, with equality if and only if $\phi =0$, and bilinearity $\langle \phi| a \phi_1+ b \phi_2 \rangle=a \langle \phi|\phi_1\rangle + b \langle \phi| \phi_2 \rangle$. Here $\bullet^\star$ is the complex conjugation of the $\C$-numbers. Complete means that any Cauchy sequence is convergent with respect to the norm inherent from inner product. We represent vectors $\phi \in \mathcal{H}$ with a ket notation $|\phi\rangle$. If $\mathcal{H}$ is finite dimensional with dimension $n$, then $\mathcal{H}\cong \C^n$, as a vector space. Otherwise, we denote an infinite dimensional Hilbert space with $\C^\infty$. We call $\{|e_j\rangle: j \in [n]\}$ an orthonormal basis of $\C^n$, if $\langle e_i|e_j\rangle=\delta_{ij}$. $\delta_{ij}$ is the Kronecker, which takes the value $1$ if $i=j$ and otherwise $0$. Let $|\phi\rangle = \sum_{j \in [n]} \phi_j |e_j\rangle$, and $|\psi\rangle = \sum_{j \in [n]} \psi_j |e_j\rangle$, be vectors in $\C^n$, we use the inner product:

$$
\langle \phi|\psi\rangle = \sum_{j\in [n]} \phi^\star_j \psi_j.
$$

\noindent Here $\langle \phi|:= \sum_{j\in [n]} \phi^\star_j \langle j |$, is the bra notation for the dual vectors. Where, $\langle e_j |$ act as $\langle e_j| e_i \rangle= \delta_{ij}$. More precisely, we call $\mathcal{H}^\star$ the dual of the Hilbert space $\mathcal{H}$, as the set of linear functions $: \mathcal{H}\rightarrow \C$. $\mathcal{H}^\star$ is also a vector space isomorphic to $\mathcal{H}$, and thereby has the same dimension as $\mathcal{H}$, and is spanned by $\langle e_j |$. We will not delve into the foundations of infinite dimensional Hilbert spaces. In simple words, such a Hilbert space corresponds to the space of square integrable functions $\phi: \mathbb{R}^m \rightarrow \C$, and we call this function square integrable if:

$$
\int_{z\in \mathbb{R}^n} d^n z |\phi(z)|^2
$$

\noindent exists, and the inner product is defined as:

$$
(\phi,\psi)=\int_{z\in \mathbb{R}^n} d^n z \phi^\star (z) \psi (z)
$$

\noindent A vector $|\phi\rangle$ in this Hilbert space is decomposed as $\int_{z\in \mathbb{R}^n} d^n z \phi(z)|z\rangle$, and a normalized state is the one which $\int_{z\in \mathbb{R}^n} d^n z |\phi(z)|^2=1$. 

Consider the Hilbert space $\C^n$. A quantum state $|\psi\rangle \in \C^n$ is therefore a normalized vector. Any orthonormal basis $|f_j\rangle, j \in [n]$ corresponds to a set of non-intersecting events. The amplitude of measuring the state $|\psi\rangle$ in the state $|f_j\rangle$ is the complex number $\langle f_j | \psi \rangle$, and is related to a probability with $P_j=|\langle f_j | \psi\rangle|^2$, where $\sum_{j\in [n]} P_j=1$.

An operator on the Hilbert space is any function $: \mathcal{H}\rightarrow \mathcal{H}$. Observables are therefore the linear Hermitian operators. Given a quantum state $|\psi\rangle$, and an observable $O$ with spectrum $\{a, |a\rangle\}$, measuring $|\psi\rangle$ with observable $O$ corresponds to observing the real value $a$ with probability $|\langle a | \psi \rangle|^2$, therefore the expected value of $O$ is $\langle \psi | O | \psi \rangle$. A Hamiltonian is the observable having the allowed energies of the system as its eigenvalues. A Hamiltonian encodes the dynamics of  a system.

A legitimate time evolution of a quantum system corresponds to an operator which maps the normalized states to normalized states. A linear operator $U$ with this property is called a unitary operator, and satisfies $U^\dagger U= I$. Each physical system can be described by a Hamiltonian $H$. The Hamiltonian is responsible for the unitary evolution in time. Such an evolution is described by a Schr\"odinger equation:

$$
i\dfrac{\partial}{\partial t} |\psi(t)\rangle = H|\psi(t)\rangle.
$$

\noindent Where $|\psi(t)\rangle$ is the state of the system at time $t$. If $H$ is time-independent, then the unitary evolution is $|\psi(t)\rangle=\exp(-i H t) |\psi(0)\rangle$.

\section{Some Quantum Models in $1+1$ Dimensions}
\label{models}

In this section, we review scattering the structure of amplitudes in some quantum models of two dimensional space-time. As it turns out, both relativistic and non-relativistic regimes pose similar structures. For the analysis of computational complexity it is sufficient to focus on one of these, and the same results immediately apply to the others. More specifically, these are integrable quantum models of $1+1$ dimensions \cite{faddeev1981two, ghoshal1994boundary, staudacher2012review}. Integrability is translated as a model which has an exact solution, that is, perturbation terms in the expression of scattering amplitudes amount to an expressible shape. For a brief review of scattering amplitudes see appendix \ref{scattering}. In order to understand this point, view each perturbation term as a piece among the pieces of a broken vase. While these pieces look unstructured and unrelated, in an integrable world, they can be glued together and integrated in a way that the whole thing amounts to a vase. However, the solution has a combinatorial structure in it, and the goal is to find out hardness for computation of these amplitudes.

The situation is that in far past, a number of free particles are initialized on a line, moving towards each other, and in far future, an experimenter measures the asymptotic wave-function that is resulted from scattering. In the following, we first review the factorized scattering matrix of quantum field theory. The structure of the interactions is described, and it is explained how entries of the scattering matrix are obtained. Next, we review the repulsive delta interactions model, as a non-relativistic model of scattering of free particles. In chapter \ref{ch5}, without loss of generality, we will focus on the second model throughout.

Zamolodchikov and Zamolodchikov \cite{zamolodchikov1978relativistic, zamolodchikov1979factorized} studied models of two dimensional quantum field theory that give rise to factorized scattering matrices. A scattering matrix is called factorized, if it is decomposable into the product of $2\rightarrow 2$ scattering matrices. They found that the factorization property is related to an infinite family of conservation rules for these theories. More specifically, suppose that the initial quantum state of $n$ particles with momenta  $p_1, p_2, \ldots, p_n$, and masses $m_1, m_2, \ldots m_n$ is related to an output state of $l$ particles with momenta $p'_1, p'_2, \ldots, p'_l$, and masses $m'_1, m'_2, \ldots, m'_l$, then an example of these conservation rules is according to:

$$
\sum_{j\in [n]}p^{2 N +1}_j=\sum_{j\in [l]} p'^{2 N +1}_j\hspace{1cm} N= 0, 1, 2, 3,\ldots
$$

\noindent and,

$$
\sum_{j\in [n]} p^{2 N}_j \sqrt{p^{2}_j + m^2_j}=\sum_{j\in [n]}  p'^{2 N}_j \sqrt{p'^{2}_j + m'^2_j}\hspace{1cm} N= 0,1, 2, 3, \ldots
$$

\noindent These equations directly impose selection rules on the scattering process. According to these selection rules, $n=l$, $\{m_1, m_2,\ldots, m_n\}=\{m_1, m_2,\ldots, m_l \}$, and that the particles of different mass do not interact, and the output momenta among the particles of the same mass are permutations of the input momenta. In this case, the conservation rules put drastic constraints on the structure of the scattering amplitudes, and this directly imply factorization of the scattering matrix, and thereby integrability of the scattering matrix. Indeed, particles do not actually interact, and instead they only exchange their internal degrees of freedom and their momenta. Thereby, the process resembles pairwise elastic collisions quantum hard balls. Candidates for factorized scattering matrix include the quantum sine-Gordon \cite{schroer1976towards}, the massive Thirring model, and quantum chiral field \cite{zamolodchikov1979factorized}. All of these models pose an $O(n)$ isotopic symmetry.

In the following, we sketch the general structure of a factorized relativistic scattering matrix. Suppose that $n$ particles of the same mass $m$ are placed on a line, where each one is initialized with a two momentum $(p^0, p^1)=: m (\cosh \theta, \sinh \theta)$, and an internal degree of freedom with a label in $[n]$.Where $\theta$is a real parameter, called rapidity, which is related to $p$ as $p=(p^0,p^1)= m (\cosh \theta, \sinh \theta)$. We can mark the entries of the scattering matrix, $S$, by $n$ discrete labels $i_1, i_2, i_3, \ldots, i_n$ each ranging in $[n]$, and rapidities $\theta_1, \theta_2, \ldots, \theta_n$. Let $\pi$ be the permutation for which $\theta_{\pi(1)}\geq \theta_{\pi(2)}\geq \ldots \geq \theta_{\pi(n)}$. Given the conservation rules, the entries corresponding to $I:=i_1, i_2, i_3, \ldots, i_n\rightarrow J:= j_1, j_2, j_3, \ldots, j_n$ and $\tilde{\theta}:= \theta_1, \theta_2, \theta_3, \ldots, \theta_n\rightarrow \tilde{\theta'}:= \theta'_1, \theta'_2, \theta'_3, \ldots, \theta'_n$ of $S$ has the following form:

$$
S^{\tilde{\theta},\tilde{\theta'}}_{I,J}=\delta(\tilde{\theta'}-\pi(\tilde{\theta})) \mathcal{A}_{I,J}.
$$

\noindent Where $\pi (\theta_1, \theta_2, \ldots, \theta_n)=(\theta_{\pi(1)}, \theta_{\pi(2)}, \ldots, \theta_{\pi(n)})$. That is, the only nonzero entries are the ones where the rapidities are reordered in a non-ascending order.

The goal is to compute the amplitudes $\mathcal{A}_{I,J}$. For this purpose, Zamalodchikov et. al. invented an algebra to compute the amplitudes of a factorized model. This is now known as the Zamalodchikov algebra. The algebra is generated by non-commutative symbols that encode initial rapidities and labels of the particles before scattering. Suppose that $n$ particles are initialized with rapidities $\theta_1, \theta_2, \ldots, \theta_n$, and labels $i_1, i_2, \ldots, i_n \in [n]$.

$\mathcal{A}_{i_1 \ldots, i_n\rightarrow j_1 \ldots, j_n}$ can be computed by the following: define a symbol $A_{i_j} (\theta_j)$ for each particle $j$. Here, the labels $i_j$ can be possibly repeated. The multiplication rules between these symbols are according to:

$$
A_i(\theta) A_j (\phi)=\alpha(\theta, \phi)A_i(\phi) A_j (\theta)+\beta (\theta, \phi)A_j(\phi) A_i (\theta)
$$

\noindent for $i\neq j$. This case corresponds to the scatterings $i+j \rightarrow i+j$ and $i+j \rightarrow j+i$, where intuitively either the particles bounce off or pass through each other. Here $\alpha$ and $\beta$ are complex numbers that depend on the rapidities, and the number of particles. For $i=j$ the replacement rule is an annihilation-creation type $i+i \rightarrow j+j$:

$$
A_i(\theta) A_i (\phi)=e^{i\phi(\theta, \phi)}\sum_{j\in [n]}A_j(\phi) A_j (\theta),
$$

\noindent and the overall process obtains a global phase. Now in order to compute the amplitude $\mathcal{A}_{i_1 \ldots, i_n\rightarrow j_1 \ldots, j_n}$, multiply the symbols according to:

$$
A_{i_1} (\theta_1) A_{i_2} (\theta_2)\ldots A_{i_n} (\theta_n),
$$

\noindent and read the coefficient of $A_{j_1} (\theta_{\pi(1)}) A_{j_2} (\theta_{\pi(2)})\ldots A_{j_n} (\theta_{\pi(n)})$ as the desired amplitude. In order for the calculation to make senes the final amplitude should be independent of the order the symbols are multiplied. This can be also translated as the associativity of the algebra:

$$
\ldots A_{i_1} (\theta_1)\Big (A_{i_2} (\theta_2) A_{i_3} (\theta_3)\Big )\ldots=\ldots\Big (A_{i_1} (\theta_1) A_{i_2} (\theta_2)\Big ) A_{i_3} (\theta_3)\ldots
$$

\noindent This is also known as the factorization condition, also known as the Yang-Baxter equation \cite{yang1967some, baxter1972partition}. We are going to describe this point in detail in the context of the non-relativistic repulsive model, and later in section ~\ref{YangBaxter} of chapter \ref{ch5}. See Figure \ref{fig21}: the total scattering process is decomposed into pairwise four-particle interactions, as demonstrated with the red circles. The reconfiguration of the rapidities is represented by lines flowing upwards. In this case, the scattering matrix is the product of smaller matrices, each corresponding to one of the red circles.

The scattering matrix $S$ has a separate block corresponding to the matrix entries indexed by all distinct labels as permutations of $1,2,3,\ldots, n$. In this case, scattering processes like $1+1\rightarrow 2+2$ do not occur. This separate block has dimension $n!$. We are specifically interested in computational complexity of finding matrix elements of this block.

%%%%%%%%%%%%%%%%%%%%%%  Fig.21  %%%%%%%%%%%%%%%%%%%%%%%
\begin{figure}[tp]
\begin{center}
\includegraphics[height=4.5in]{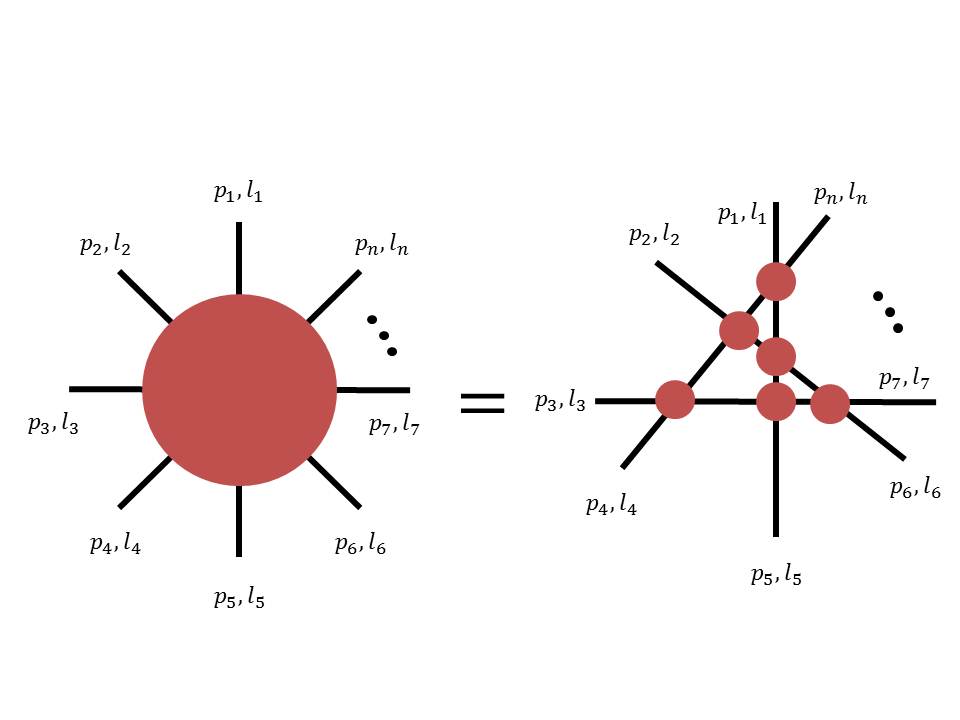}
\caption[Factorization of the scattering matrix]{Factorization of the S-Matrix into two-particle interactions. Any such nonzero amplitude diagram has even number of legs, and for $2 n$ legs, the factorization is according to intersections of $n$ straight lines.
}
\label{fig21}
\end{center}
\end{figure}
%%%%%%%%%%%%%%%%%%%%%%%%%%%%%%%%%%%%%%%%%%%%%%%%%%%%%%%%

%%%%%%%%%%%%%%%%%%%%%%  Fig.22  %%%%%%%%%%%%%%%%%%%%%%%
\begin{figure}[tp]
\begin{center}
\includegraphics[height=4.0in]{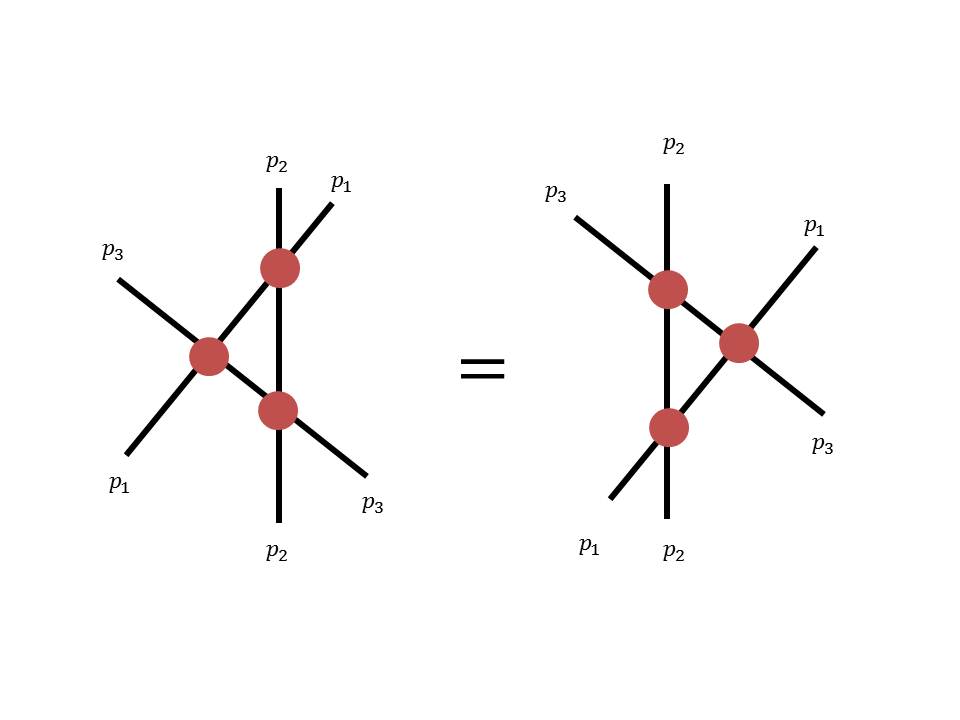}
\caption
[Line representation of the Yang-Baxter equation]
{Three particles satisfying the Yang-Baxter equation.}
\label{fig22}
\end{center}
\end{figure}
%%%%%%%%%%%%%%%%%%%%%%%%%%%%%%%%%%%%%%%%%%%%%%%%%%%%%%%%

%%%%%%%%%%%%%%%%%%%%%%  Fig.23 %%%%%%%%%%%%%%%%%%%%%%%
\begin{figure}[tp]
\centering
\includegraphics[height=3.5in]{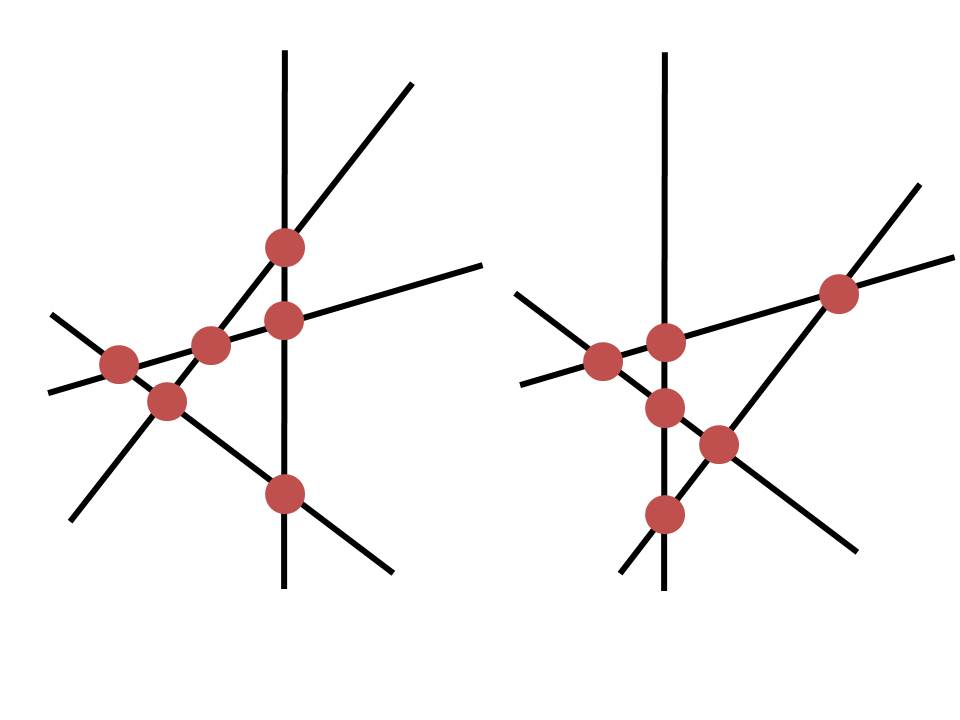}
\caption[Examples of equivalent factorized diagrams]{Two equivalent diagrams with amplitudes satisfying the Yang-Baxter equation. The right hand diagram can be obtained from the left hand diagram by moving some of the straigth lines parallel to themselves. 
}
\label{fig23}
\end{figure}
%%%%%%%%%%%%%%%%%%%%%%%%%%%%%%%%%%%%%%%%%%%%%%%%%%%%%%%%

The non-relativistic analogue of the factorized scattering is given by the repulsive delta interactions model ~\cite{yang1967some} of quantum mechanics. In this model, also elastic hard balls with known momenta are scattered from each other. The set of conserved rules are closely related to the relativistic models. If we denote the initial momenta of the balls with $p_1, p_2, \ldots, p_n$, then the conserved quantities are $\sum_j p^{2 k+1}_j$, and $\sum_j p_j^2/m_j$, for $k\geq 0$. Where $m_j$ are the mass of the balls. Again, the selection rules assert that balls with different mass do not interact with each other, and the final momenta among the balls of same mass are permutation of the initial momenta.

In the repulsive delta interactions model, $n$ asymptotically free balls interact and scatter on a line. Asymptotic freedom means that except for a trivially small spatial range of interactions between each two balls, they move freely and do not interact until reaching to the short range of contact. Denote the position of these balls by $x_1, x_2, \ldots, x_n$ and the range of interaction as $r_0$, then for the asymptotic free regime, we assume $|x_i-x_j|\gg r_0$. The interaction consists of at most $\dfrac{n(n-1)}{2}$ terms, one for each pair of balls. For each pair of balls, the interaction is modeled by the delta function of the relative distance between them. If no balls are in contact, then the action of such Hamiltonian is just a free Hamiltonian, and each contact is penalized by a delta function. The functional form of the Schr\"odinger's equation is written as:

$$
i\hbar\dfrac{\partial}{\partial t} \psi(x_1, x_2, \ldots, x_n, t) = \Big [\sum_{j=1}^n-\dfrac{\partial ^2}{\partial x_j^2}+2 c\sum_{i<j}\delta(x_i-x_j)\Big ]\psi(x_1, x_2, \ldots, x_n, t)
$$

\noindent Here $c>0$ is the strength of the interactions. As the species of unequal mass do not interact, only balls of same mass are considered. The Hilbert space is indeed $(\mathbb{C}^\infty)^{\otimes n} \times \mathbb{R}$. Using the Bethe ansatz ~\cite{staudacher2012review} for spin chain models, a solution for the eigenfunction with the following form is considered:

$$
\psi(x_1,\ldots,x_n)=\sum_{\tau,\pi \in S_n} \mathcal{A}^\tau_\pi \theta_\tau (x_1, \ldots, x_n) \exp{[i(x_{\tau_1} p_{\pi_1}+\ldots+x_{\tau_n} p_{\pi_n})]}
$$

\noindent $\theta_\pi : \mathbb{R}^n \rightarrow \{0,1\}$ is an indicator function which is set to $1$ whenever its input $x$ satisfies $x_{\pi(1)}<x_{\pi(2)}<\ldots <x_{\pi(n)}$, and otherwise zero. $p_j$ for $j\in [n]$ are constant parameters, and can be viewed as the momenta. The proposed solution must be a continuous function of the positions and also one can impose a boundary condition for $x_{\pi(t)}=x_{\pi(t+1)}$ for $t\in [n]$ on the derivative of the wave-function. Applying these boundary conditions, one can get linear relations between the amplitudes:

$$
\mathcal{A}^{t \circ \tau}_{\pi}=\dfrac{-ic \mathcal{A}^\tau_\pi + V_{\tau, t} \mathcal{A}^\tau_{t \circ \pi}}{ic + V_{\tau, t}}
$$

\noindent Here $t \circ \pi$ is a new permutation resulted from the swapping of the $t$ and $t+1$'th labels in the permutation $\pi$. $V_{\tau, t}=p_{\tau(t)}-p_{\tau(t+1)}$. The above linear map has a simple interpretation: two balls with relative velocity $V$ collide with each other with amplitude $\dfrac{-ic}{ic+V}$, they reflect from each other, or otherwise, with amplitude $\dfrac{V}{ic+V}$ they tunnel through without any interaction. In any case, the higher momentum passes through the lower momentum and starting from a configuration $x_1 < x_2 < \ldots <x_n$ for $n$ balls with momenta are in a decreasing order $p_1>p_2>\ldots > p_n$, the wave-function will end up in a configuration with momenta in the increasing order $p_n< p_{n-1}< \ldots < p_1$. 

Each of these pairwise scatterings can be viewed as a local quantum gate, and the collection of scatterings as a quantum circuit. In order to see this, consider an $n!$ dimensional Hilbert space for $n$ particles with orthonormal basis $\{ |\sigma \rangle : \sigma \in S_n \}$. Assume an initial state of $|1,2,3,\ldots, n\rangle$, with defined momenta $p_1, p_2, \ldots, p_n$. These momenta and the initial distance between the particles specify in what order the particles will collide. It is instructive to view the trajectory of the particles as $n$ straight lines for each of these particles in an $x-t$ plane. Time goes upwards and the intersection between each two lines is a collision. In each collision, either the label of the two colliding balls is swapped or otherwise left unchanged. The tangent of each line with the time axis is proportional to the momentum of the ball that the line is assigned to in the first place. Balls with zero relative velocity do not interact, as lines with equal slope do not intersect. Suppose that the first collision corresponds to the intersection of line $t$ with line $t+1$. Such a collision occurs when $p_t>p_{t+1}$. Then the initial state is mapped to:

$$
\Big |1,2, \ldots, n\Big \rangle \rightarrow \dfrac{-ic}{ic + V_{t,t+1}}\Big |1,2, \ldots t, t+1, \ldots, n\Big \rangle + \dfrac{V_{t,t+1}}{ic + V_{t,t+1}}  \Big |1,2, \ldots t+1, t, \ldots, n\Big \rangle
$$

Where, $V_t= p_t - p_{t+1}$. This map can be viewed as a $n!\times n!$ unitary matrix:

$$
H(p_t- p_{t+1}, t):=H(p_t, p_{t+1},t):=\dfrac{-ic}{ic + V_{t, t+1}} I + \dfrac {V_{t,t+1}}{ic + V_{t, t+1}} L_{(t, t+1)}
$$

\noindent Here $I$ is the $n! \times n!$ identity matrix, $L_{(t,t+1)}$ is the $n!\times n!$ matrix which transposes the $t$ and the $t+1$'th labels of the basis states. $H(u,t )$ acts only on the $t$ and the $t+1$'th balls only. $u$ is the velocity of the $t$'th ball relative to the $t+1$'th balls. From now on we refer to these unitary gates as the ball permuting gates. One can check that these gates are unitary:

$$
H(u,t ) H^\dagger(u,t)=H (u,t) H(-u,t)=I.
$$

Given $n$ particles, with labels $i_1, i_2, \ldots, i_n$, and momenta $p_1, p_2, \ldots, p_n$, we can obtain a quantum circuit with gates $H(u_1,t_1), H(u_2,t_2),\ldots, H(u_m,t_m)$, one for each intersection of the straight lines. The scattering matrix in this theory is then given by the product $S=H(u_m,t_m), H(u_{m-1},t_{m-1}),\ldots, H(u_1,t_1)$. 

In general, the label of the balls can be repeated, and the matrix $S$ has a block diagonal form. For each tuple $I=i_1, i_2, \ldots, i_n$, assign a vector $X_I=(x_1, x_2, \ldots, x_n)$, where $x_j$ is the number of times that the index $j$ appears in $I$. Clearly, $\sum_j x_j=n$. Given this description, the blocks of $S$ are marked by vectors $X$, that is, the block $X= (x_1, x_2, \ldots, x_n)$, consists of basis entries for which the index $1$ appears for $x_1$ times, the index $2$ for $x_2$ times and so on. $S$ is an $n^n \times n^n$ matrix, and the dimension of the block $X$ is given by $\dfrac{n!}{x_1 ! x_2 ! \ldots x_n !}$. For the purpose of this thesis, We are interested in the block $(1,1,1,\ldots, 1)$, where the entries of the $S$ matrix are marked by permutations of the numbers $\{1,2,3,\ldots, n\}$. The product of symbols with distinct labels $A_1(\theta_1) A_1(\theta_1)\ldots A_n(\theta_n)$ can be formulated similarly using product of two-local unitary gates.

An important ingredient of these quantum gates is the so-called Yang-Baxter equation ~\cite{yang1967some, baxter1972partition}, which is essentially the factorization condition, and is the analogue of the associativity of Zamalodchikov algebra. The Yang-Baxter equation is a three ball condition, and is according to:

$$
H(u,t) H(u+v, t+1) H(v,t)= H(v, t+1) H(u+v, t) H(u, t+1).
$$

Basically, the Yang-Baxter equation asserts that the continuous degrees of freedom like the initial position of the particles does not change the outcome of a quantum process, and all that matters is the relative configuration of them. In order to see the line representation of the Yang-Baxter equation, see Figure \ref{fig22}. Also, the Yang-Baxter equation imposes overall symmetries on the larger diagrams, see Figure \ref{fig23} for an example. Consider three balls with labels $1, 2, 3$, initialized with velocities $+u$, $0$ and $-u$, respectively. If we place the middle ball very close to the left one, the order of collisions would be $1-2\rightarrow 2-3 \rightarrow 1-2$. However, if the middle one is placed very close to the third ball, the order would be $2-3\rightarrow 1-2 \rightarrow 2-3$. The Yang-Baxter equation asserts that the output of the collisions is the same for the two cases. Therefore, the only defining parameters are the relative configurations, and the relationships between the initial velocities. The Yang-Baxter equation has an important role in many disciplines ~\cite{YBrev}, these range from star-triangle relations in analog circuits to lattice models of statistical mechanics. Also it can be related to the braiding of $n$ tangles, in the sense that a collision corresponds to the braiding of two adjacent tangles. Braid~\cite{Braid} group is defined by $B_n$ generated by elements $b_j$ for $j\in [n-1]$. The defining feature of the braid group is the two conditions: $[b_i, b_j]=0$ for $|i-j|\geq 2$, and $b_i b_{i+1} b_i=b_{i+1}b_{i}b_{i+1}$ for $i\in[n-2]$. The first property is readily satisfied for the ball permuting gates, and the second property corresponds somehow to the Yang-Baxter equation.

\subsection{Semi-classical Model}

It is not conventional to do a measurement at the middle of a scattering process. However, in the discussed integrable models, it sounds that the two particle interactions occur independently from each other, and the scattering matrix is a product of smaller scattering matrices. Moreover, in the regime that we are going to consider, no particle creation or annihilation occurs. Therefore, it sounds reasonable to assume that at the middle of interactions the particles (balls) are independent from each other, and no interactions occur, unless two particles collide. So, in the following sections, we assume that balls start out from far distances and the nondeterminism in the momentum variable is small. Therefore, a semi-classical model is considered, where the balls move according to actual trajectories, and it is possible to track and measure them in between and stop the process whenever we want at the middle of collisions. According to this assumption quantum effects occur only at the collisions and measurements.

%%%%%%%%%%%%%%%%%%%%%%%%%%%%%%%%%%%%%%%%%%%%%%%%%%%%%%%%%%

\section{Quantum Complexity Theory}

As discussed we compare the complexity of the models using reductions; this is translated in the question of which system can efficiently simulate the other ones. Therefore, in this subsection we review $\BQP$, the standard complexity class for quantum computing, and will use this model and its variations as the point of reference in reductions.

A qubit as the extension of a bit to quantum systems, is a quantum state in $\C^2$. Let $|0\rangle$ and $|1\rangle$ be an orthonormal basis state for $\C^2$, and we assume that an experimenter can measure the qubit in these basis. Such a basis state is called the computational basis. The extension of strings to quantum computing is given by quantum superpositions over $(\C^2)^{\tensor n}$, for some $n\geq 1$. Therefore, a quantum computer can create quantum probability distributions over strings of qubits, $\sum_{x\in \{0,1\}^n} \alpha_x |x\rangle$. $\alpha_x$ are complex numbers, amounting to $\sum_{x \in \{0,1\}^n} |\alpha_x|^2=1$. A quantum algorithm then is a way of preparing a quantum superposition from which a measurement reveals nontrivial information about the output of a computing task. Therefore, we use a quantum circuit to produce such a superposition.

To compare unitary operators with each other, there are variety of definitions for the state norms and operator norms; however, in the context of this research, they all give similar results. More specifically, if $|\psi\rangle$ is a vector, the $L_2$ norm of this vector is defined as:

$$
\| |\psi\rangle \|_2 :=\left( \sum_j |\psi_j|^2 \right)^{1/2}
$$
\noindent A valid distance between two operators $U$ and $V$ is then defined as \cite{ nielsen2010quantum}:

$$
d(U,V) := sup_{\| |\psi\rangle \| =1 } \| (U-V)|\psi\rangle \|_2
$$

A local quantum gate set is a set of unitary operators $G$, each of which affecting a constant number of qubits at a time. A quantum circuit on $n$ qubits is then a way of composing the gates in $G$ on $n$ qubits. $G$ as gate-set is called dense or $\BQP$-universal if for any $n>0$, for any unitary operator $U$ on $n$ qubits and any $\epsilon >0$, there is a quantum circuit in $G$ which amounts to a unitary that is $\epsilon$-close to $U$.

\begin{definition}
A group $(G,\cdot)$ is a set $G$ and a binary operation $\cdot$ with the associative map map $(g_1, g_2) \mapsto g_1 \cdot g_2$, with the following structures: $1) G$ is closed under $\cdot$, $2)$ there is an element $e \in G$ with $e\cdot g= g\cdot e =g, \forall g \in G$, and $3)$ for all $g \in G$ there exists $g^{-1}$ such that $g^{-1}\cdot g = g \cdot g^{-1} =e$. 
\end{definition}

The set of $n\times n$ real and invertible matrices with the matrix multiplication create a group, which we call it the general linear group, $GL(n, \mathbb{R})$. Let $O(n)$ be the set of orthogonal matrices as a subset of $GL(n, \mathbb{R})$. These are matrices with orthonormal columns and rows. The determinant of an orthogonal matrix is either $1$ or $-1$. Determinant of a matrix is a homomorphism with respect to matrix multiplication, thereby the subset of $O(n)$ corresponding to determinant $1$ is a subgroup called the special orthogonal group $SO(n)$. Similarly, we can define the same groups with matrices over the field of complex numbers. These are $GL(n,\C)$, $U(n)$ and $SU(n)$. The determinant of a unitary matrix is a phase, \i.e., a complex number of the form $e^{i\phi}, \phi \in \mathbb{R}$. $SU(n)$ is thereby the (connected) proper subgroup of $U(n)$ with determinant $1$. See the containment relations in Figure \ref{groups}.

From a computing perspective, we are interested in programming a quantum gate set into the unitary group. This can be achieved by compositions of gates which approximate every element of $U(n)$. Denseness of the gate set in $U(n)$ is then a sufficient condition. However, this is not a necessary condition for universal computing. As a first observation, the overall phase of a unitary matrix is not an observable in the output probability distribution. Therefore, a gate set that is dense in $SU(n)$ would suffice for universal computation.

Let $\phi : GL(n, \C) \rightarrow GL(2n,\mathbb{R})$, be the map which replace each entry $M_{ij}= m e^{i\theta}$ of $M \in GL(n,C)$ with a $2\times 2$ real matrix:

$$
m
\begin{pmatrix} 
\cos\theta & -\sin \theta \\
\sin \theta & \cos \theta
\end{pmatrix}.
$$

$\phi$ is a homomorphism and respects the group action. Let $\{ |j\rangle : j \in [n]\}$ and $\{|j_1\rangle, |j_2\rangle : j \in [n]\}$ be basis for $GL(n,\C)$, and $GL(2n, \mathbb{R})$, respectively. Then if $M \in GL(n,\C)$ maps $\sum_{j\in [n]}\alpha_j |j\rangle$ to $\sum_{j\in [n]}\beta_j |j\rangle$, then $\phi(M)$ maps $\sum_{j\in [n]}\Re {\alpha_j} |j_1\rangle+\Im {\alpha_j} |j_2\rangle$ to $\sum_{j\in [n]}\Re {\beta_j} |j_1\rangle+\Im {\beta_j} |j_2\rangle$. If $M$ is a unitary matrix, $\phi(M)$ is an orthogonal matrix. Moreover, the determinant of $\phi(M)$ is $1$. In order to see this write $M=V D V^{\dagger}$, where $D$ is a diagonal matrix consisting of phases only, and $V$ is a unitary matrix. Then $\phi(M)= \phi (V) \phi (D) \phi (V^\dagger)=\phi (V) \phi (D) \phi (V)^T$. Thereby, $\det (M)=\det(\phi (V))^2 \det(\phi (D))=\det (\phi(D))$. $\phi(D)$ has a block diagonal structure, and the determinant of each block is individually a $1$. Therefore $\phi$ sends $U(n)$ to a subset of $SO(2n)$. From this we conclude that denseness in $SO(n)$ is a more relaxed sufficient condition for universal quantum computing. See Figure \ref{groups} for the relationship between these.

\begin{figure}[tp]
\centering
\includegraphics[height=3.0in]{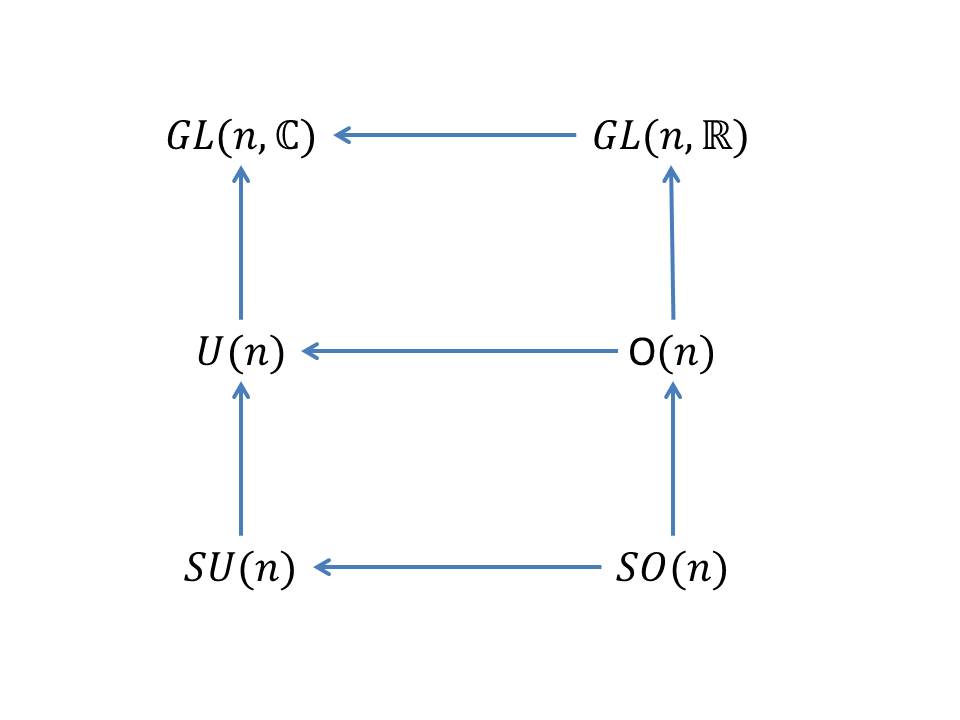}
\caption[Relationships between general linear, unitary and orthogonal groups]{Containment relations between general linear, unitary, orthogonal, special unitary and special orthogonal groups groups.
}
\label{groups}
\end{figure}

As a first step we need universal gate-sets that act on $Q=\C^2$. A qubit is a normal vector in $\C^2$, therefore, any such complex vector can be specified using three real parameters:

$$
|(\psi, \theta, \phi)\rangle = \exp(i \psi)( \cos \theta/2 |0\rangle + \sin \theta_2 e^{i\phi} |1\rangle)
$$

The overall phase $\psi$ is unobservable and we can drop it, and qubits can be represented by $|(\theta, \phi)\rangle$. In other words, we take two states equivalent if and only if they are equal modulo a global phase. $|(\theta+ 2\pi, \phi)\rangle = - |(\theta, \phi)\rangle$ are projectively equivalent. The choice of $\theta/2$ is important, and with this choice the space of qubits (modulo overall phase) is isomorphic to points $(\theta, \phi)$ on a unit $2$-sphere, corresponding to the surface $(\cos \theta \cos \phi, \cos \theta \sin \phi, \cos \theta)$. This sphere is referred to as the Bloch Sphere. Therefore, a qubit is programmable if given any two points on the Bloch sphere there is a way to output a unitary operator which maps one point to the other.

Define the Pauli operators on the Hilbert space $\C^2$ as:

\begin{equation}
\sigma_x :=
\begin{pmatrix}
0&1\\
1&0
\end{pmatrix},
\hspace{1cm}
\sigma_y :=
\begin{pmatrix}
0&-i\\
i&0
\end{pmatrix},
\hspace{1cm}
\sigma_z :=
\begin{pmatrix}
1&0\\
0&-1
\end{pmatrix}
\end{equation}

\vspace{1cm}

\noindent
These operators are Hermitian, unitary, traceless and have determinant equal to $-1$. They anti-commute with each other and each of them squares to the identity operator. We say that two operators $A, B$ anti-commute if $\{A,B\}=A B + B A=0$, or in other words $A B= - B A$. Moreover they satisfy the commutation relation:

$$
[\dfrac{1}{2}\sigma_{i}, \dfrac{1}{2}\sigma_{j}]=i \dfrac{1}{2} \sum_{k\in \{x,y,z\}}\varepsilon_{ijk}\sigma_k, \hspace{0.1cm} \forall. i, j \in \{x,y,z\}
$$

\noindent
$[\cdot, \cdot]$, is the commutator operator and maps $A, B \mapsto A B- B A$. $\varepsilon_{ijk}$ is the Levi Civita symbol, amounts to zero if any pair in $i, j, k$ are equal, otherwise gives a $1$ if the order of $(i,j,k)$ is right-handed, and otherwise takes the value $-1$. A triple $(i,j,k)$ is called right-handed if it is equal to $(1,2,3)$, and is called left-handed if the order is $(2,1,3)$, modulo cyclic rotation. We usually drop the summation for simplicity. In short the Pauli operators satisfy $\sigma_i \sigma_j = \delta_{ij} + i \varepsilon_{ijk} \sigma_k$. 

Any $2\times 2$ unitary operator with unit determinant, can be decomposed as $R(v)=v_0+ i (v_1 \sigma_x + v_2 \sigma_y + v_3 \sigma_z)$, where $v:=(v_0, v_1, v_2, v_3)=: (v_0,\bf{v}) \in \mathbb{R}^4$, and pose the structure $v^2_0+v^2_1+v^2_2+v^2_3=1$. We can thereby use equivalent parameterization $(v_0, v_1, v_2, v_3)= (\cos \theta , \sin \theta \bf{n})$, where $\bf {n} \in \mathbb{R}^3$ is a unit vector.

\noindent
If we define the exponential map as the limit of $M \mapsto \exp(M):= \sum_{j=0}^\infty \dfrac{M^j}{j!}$, then $R(v)=\exp(i\theta \bf{n} . \bf{\sigma})$. Where, $\bf{\sigma}= (\sigma_x, \sigma_y, \sigma_z)$, and $\bf{n}. \sigma$ is the usual inner product of the two objects. The object sitting in the argument of the exponential map has the structure of a vector space, with $\sigma_x, \sigma_y, \sigma_z$ as its linearly independent basis. Along with the commutation relation as the vector-vector action it has the structure of an algebra. This algebra is called the Lie algebra $\su(2)$. The exponential map is an isomorphism between $SU(2)$ and $\su(2)$. The element $R_j(\theta):=\exp(i \theta \sigma_j)$ is called the single qubit rotation along the $j$ axis, for $j\in \{x,y,z\}$. Indeed, any element in $SU(2)$, can be decomposed as a composition of two rotations $R_x, R_y$, only.

We can extend this to larger dimensions. Any unitary matrix $A$ with unit determinant can is related to a traceless Hermitian operator $H$ with the exponential map $\exp(i H)$. Let $\su(n)$ the vector space over $\mathbb{R}$, with $n\times n$ traceless Hermitian matrices as its linearly independent basis. Again the exponential map is an isomorphism between $SU(n)$ and $\su(n)$. $\su(n)$ as a vector space has dimension $n^2-1$. Also, we know that elements of the set of $n\times n$ unitary matrices with unit determinant can be specified with $n^2-1$ real parameters. 

Other well known qubit operations are Hadamard $H$ and $\pi/8$ gate P:

\begin{equation}
H :=
\dfrac{1}{\sqrt{2}}\begin{pmatrix}
1&1\\
1&-1
\end{pmatrix},
\hspace{1cm}
P :=
\begin{pmatrix}
e^{-i\dfrac{\pi}{8}}&0\\
0&e^{i\dfrac{\pi}{8}}
\end{pmatrix}
\end{equation}

\vspace{1cm}

The importance of a Hadamard gate is that its action $H^{\tensor n}$ in parallel maps $|0\rangle^{\tensor n}$ to an equal superposition over $n$ bit strings, \i.e, $\dfrac{1}{2^{n/2}}\sum_{x \in \{0,1\}^n}|x\rangle$.

Let $\C^d$ be a Hilbert space, with orthonormal basis $\{|e_j\rangle\}_{j\in[d]}$. Consider the Lie algebra $g_{ij}$ generated by the operators:

$$
|e_i\rangle \langle e_j |+|e_j\rangle \langle e_i|, \hspace{0.5cm}- i |e_i\rangle \langle e_j |+ i|e_j\rangle \langle e_i|, \hspace{0.5cm} |e_i\rangle \langle e_i |-|e_j\rangle \langle e_j|,
$$

\noindent
for $i < j$. Clearly, $g_{ij}$ is closed under Lie commutation, and is isomorphic to $\su(2)$. Its image under the exponential map, $G_{ij}$, is isomorphic to $SU(2)$, and corresponds to quantum operators (with unit determinant) that impose rotations on the subspace spanned by $|e_i\rangle$ and $|e_j\rangle$, and acts as identity of the rest of the Hilbert space. Such set of operation is called a two level gate. If we allow operations from $G_{ij}$ for all $i<j$, then the corresponding gate set is called a two-level system. A two level system is universal, and is dense in $SU(d)$:

\begin{theorem}
Let $g$ be the vector space generated by $\cup_{i<j} g_{ij}$ then $g= \su(d)$.
\end{theorem}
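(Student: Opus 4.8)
\medskip
\noindent\emph{Proof proposal.} The plan is to avoid iterated commutators altogether: the union $\bigcup_{i<j} g_{ij}$ already spans $\su(d)$ as a real vector space, so the Lie algebra $g$ it generates is at least $\su(d)$, while it is at most $\su(d)$ because every generator is a traceless Hermitian matrix. First I would fix the standard real basis of $\su(d)$. Writing $E_{ij}:=\ket{e_i}\bra{e_j}$, the traceless Hermitian matrices
$$
S_{ij}:=E_{ij}+E_{ji},\qquad A_{ij}:=iE_{ji}-iE_{ij}\qquad(1\le i<j\le d),
$$
together with the $d-1$ diagonal matrices $D_k:=E_{kk}-E_{k+1,k+1}$, are linearly independent and number $2\binom{d}{2}+(d-1)=d^2-1=\dim\su(d)$, hence form a basis.

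\noindent Next I would match these against the generators of the subalgebras $g_{ij}$. By definition $S_{ij}$ is the first off-diagonal generator of $g_{ij}$, $A_{ij}$ is literally the second ($-i\ket{e_i}\bra{e_j}+i\ket{e_j}\bra{e_i}$), and $D_k=E_{kk}-E_{k+1,k+1}$ is the diagonal generator of $g_{k,k+1}$. Thus all $d^2-1$ basis elements lie in the real span of $\bigcup_{i<j}g_{ij}$, which is contained in $g$, giving $\su(d)\subseteq g$. For the reverse inclusion, observe that each generator listed in the theorem statement is traceless and Hermitian, hence an element of $\su(d)$; since $\su(d)$ is closed under the bracket $(A,B)\mapsto -i[A,B]$ that makes it a real Lie algebra (the convention consistent with the Pauli relations recorded above, for which $-i[\tfrac12\sigma_i,\tfrac12\sigma_j]=\tfrac12\varepsilon_{ijk}\sigma_k$ is again a real combination of Hermitian matrices), the Lie algebra generated from elements of $\su(d)$ stays inside $\su(d)$, so $g\subseteq\su(d)$. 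Combining the two inclusions yields $g=\su(d)$.

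\noindent There is no genuine obstacle here: the content is just the dimension count $2\binom{d}{2}+(d-1)=d^2-1$ together with being careful that ``Lie algebra generated by'' means closure under $-i[\cdot,\cdot]$ (the bracket $[\cdot,\cdot]$ alone would leave the space of Hermitian matrices, which is what the slightly informal discussion preceding the theorem glosses over). It is worth recording the sharper statement actually used in the universality arguments that follow, which does require a short commutator computation: from $S_{12}=E_{12}+E_{21}$ and $S_{23}=E_{23}+E_{32}$ one gets $[S_{12},S_{23}]=E_{13}-E_{31}$, hence $-i[S_{12},S_{23}]=A_{13}$, and analogous brackets (also involving the diagonal generators) produce every $S_{ij}$, $A_{ij}$ and $D_k$ from the nearest-neighbour pieces; so in fact $\bigcup_{i}g_{i,i+1}$ already generates $\su(d)$.
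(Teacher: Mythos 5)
Your proof is correct and takes essentially the same approach as the paper: both arguments show that $\bigcup_{i<j} g_{ij}$ already spans $\su(d)$ as a real vector space, so no Lie-bracket closure is required. The only cosmetic difference is in handling the diagonal part — you use the nearest-neighbour differences $E_{kk}-E_{k+1,k+1}\in g_{k,k+1}$ together with a dimension count on an explicit basis, whereas the paper anchors at the last index, rewriting the diagonal of an arbitrary traceless Hermitian $M$ as $\sum_{i<d} m_{ii}\bigl(\ket{e_i}\bra{e_i}-\ket{e_d}\bra{e_d}\bigr)$ with each summand in $g_{id}$.
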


\begin{proof}
$g \subseteq \su(d)$, since elements of $g_{ij}$ are traceless and Hermitian. Pick any Hermitian matrix $M$ with vanishing trace. Then:

\begin{eqnarray*}
M&=&\sum_{i<j\in [d]} m_{ij} |e_i\rangle \langle e_j |+m^\star_{ij} |e_j\rangle \langle e_i |+ \sum_{i\in [d]} m_{ii}|e_i\rangle \langle e_i|\\
&=&\sum_{i<j\in [d]} \Re m_{ij}( |e_i\rangle \langle e_j | + |e_j\rangle \langle e_i |) + \Im m_{ij}( i |e_i\rangle \langle e_j | - i |e_j\rangle \langle e_i |)\\
&&+ \sum_{i\in [d]} m_{ii}|e_i\rangle \langle e_i|\\
\end{eqnarray*}

\noindent with $\sum_{i\in [d]} m_{ii}=0$. The off-diagonal terms are manifestly constructible with $g_{ij}$ basis. The last term is also constructible with $g$ basis:

\begin{eqnarray*}
\sum_{i \in [d]} m_{ii} |e_i\rangle \langle e_i|&=&\sum_{i \in [d-1]} m_{ii} |e_i\rangle \langle e_i|+ m_{dd}|e_d\rangle \langle e_d|\\
&=&\sum_{i \in [d-1]} m_{ii} (|e_i\rangle \langle e_i|-|e_d\rangle \langle e_d|)
\end{eqnarray*}
\end{proof}

\begin{corollary}
A two level system on $\C^d$ can generate elements of $SU(d)$.
\end{corollary}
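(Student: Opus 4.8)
The plan is to lift the Lie-algebra identity $g=\su(d)$ proved in the preceding theorem up to the group level, using the exponential map together with the Lie product formula (Trotterization). Since $SU(d)$ is a compact connected Lie group, the exponential map $\exp\colon \su(d)\to SU(d)$ is surjective, so it suffices to show that every operator of the form $\exp(iM)$ with $M$ traceless Hermitian can be approximated to arbitrary precision, in the operator distance $d(\cdot,\cdot)$ defined earlier, by finite products of the two-level gates drawn from the $G_{ij}$.

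The key observation is that the explicit decomposition used in the proof of the theorem already writes any traceless Hermitian $M$ as a \emph{real} linear combination $M=\sum_k r_k B_k$ in which each basis operator $B_k$ — whether an off-diagonal generator $|e_i\rangle\langle e_j|+|e_j\rangle\langle e_i|$, $i|e_i\rangle\langle e_j|-i|e_j\rangle\langle e_i|$, or a diagonal generator $|e_i\rangle\langle e_i|-|e_d\rangle\langle e_d|$ — lies in a single $g_{ij}$. Hence for every real $t$ the operator $\exp(it B_k)$ lies in the corresponding $SU(2)$-block $G_{ij}$, i.e.\ it is a legal two-level gate. Applying the Lie product formula
$$
\exp(iM)=\lim_{n\to\infty}\Big(\prod_k \exp\!\big(\tfrac{i r_k B_k}{n}\big)\Big)^{n},
$$
whose right-hand side is for each finite $n$ a product of $G_{ij}$-gates, and choosing $n$ large enough to make this product $\epsilon$-close to $\exp(iM)$, shows that the group generated by $\bigcup_{i<j}G_{ij}$ is dense in $SU(d)$, which is the assertion.

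Alternatively, and more slickly, one can argue purely from the Lie correspondence: the topological closure of the subgroup generated by the connected subgroups $G_{ij}$ is a closed connected subgroup of $SU(d)$ whose Lie algebra contains every $g_{ij}$, hence contains the vector space they span, which by the theorem is all of $\su(d)$; since $SU(d)$ is connected this forces the subgroup to be $SU(d)$ itself. The only genuine subtlety — the point to be careful about rather than a real obstacle — is the meaning of ``generate'': the group literally generated by finitely many one-parameter subgroups need not be closed, so (as is standard in the circuit-universality setting) the claim must be read as density/approximability. Once that reading is fixed, the proof rests only on surjectivity of $\exp$ on the compact connected group $SU(d)$ and on the Lie product formula, both standard; the genuinely new ingredient is exactly the identity $g=\su(d)$ already established.
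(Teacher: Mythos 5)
Your proposal is correct and matches the approach the paper leaves implicit: the paper states the corollary without proof as a direct consequence of $g=\su(d)$, relying on exactly the Lie-algebra-to-Lie-group lift you spell out (the Trotter/Lie-product-formula mechanism is the one the paper itself sketches in its appendix on Lie algebras). Your remark that ``generate'' must be read as density/approximability is also the right reading, consistent with the paper's surrounding sentence that a two-level system ``is dense in $SU(d)$.''
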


\begin{corollary}
The following $d^2-1$ elements create a linearly independent basis for $\su(d)$:

$$
|e_i\rangle \langle e_j |+|e_j\rangle \langle e_i|,
$$

$$
- i |e_i\rangle \langle e_j |+ i|e_j\rangle \langle e_i|,
$$

\noindent for all $i<j\in [d]$, and:

$$
|e_d\rangle \langle e_d| - |e_i\rangle \langle e_i|,
$$

\noindent for $i\in [d-1]$.
\end{corollary}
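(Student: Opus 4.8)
The plan is to argue by a dimension count paired with a direct verification of linear independence. First I would tally the list: it contains $\binom{d}{2}$ matrices of the first type, $\binom{d}{2}$ of the second type, and $d-1$ of the third type, for a total of $2\binom{d}{2}+(d-1) = d(d-1)+(d-1) = (d-1)(d+1) = d^2-1$, which coincides with $\dim_{\mathbb{R}}\su(d) = d^2-1$ as already recorded above. Every matrix on the list is manifestly Hermitian with vanishing trace, so all of them lie in $\su(d)$. Consequently it suffices to prove that the list is linearly independent; that it spans $\su(d)$ — and hence is a basis — then follows from the dimension count. (Alternatively, the spanning property follows directly from the preceding theorem, whose proof writes an arbitrary traceless diagonal Hermitian matrix as a real combination of the matrices $|e_d\rangle\langle e_d| - |e_i\rangle\langle e_i|$ and writes the off-diagonal part in the first two families of generators.)

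For the independence step, I would start from a vanishing real combination
$$
\sum_{i<j} a_{ij}\big(|e_i\rangle\langle e_j| + |e_j\rangle\langle e_i|\big) + \sum_{i<j} b_{ij}\big(-i|e_i\rangle\langle e_j| + i|e_j\rangle\langle e_i|\big) + \sum_{i\in[d-1]} c_i\big(|e_d\rangle\langle e_d| - |e_i\rangle\langle e_i|\big) = 0 ,
$$
with all $a_{ij},b_{ij},c_i\in\mathbb{R}$, and read off matrix entries. For a fixed pair $i<j$, the only terms contributing to the $(i,j)$ entry are the two indexed by that pair, giving $a_{ij} - i\,b_{ij} = 0$; since $a_{ij},b_{ij}$ are real this forces $a_{ij}=b_{ij}=0$ for all $i<j$. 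What remains is $\sum_{i\in[d-1]} c_i\big(|e_d\rangle\langle e_d| - |e_i\rangle\langle e_i|\big) = 0$, and inspecting the $(i,i)$ diagonal entry for $i<d$ yields $-c_i = 0$, so each $c_i$ vanishes as well. Thus the $d^2-1$ listed matrices are linearly independent, and by the cardinality-equals-dimension observation they form a basis of $\su(d)$.

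I do not anticipate any genuine obstacle: the argument is entirely elementary once $\dim_{\mathbb{R}}\su(d)=d^2-1$ is granted, and the only points needing care are the sign convention in the antisymmetric generators and the fact that the $d-1$ diagonal generators $|e_d\rangle\langle e_d|-|e_i\rangle\langle e_i|$ already suffice in place of the full family $|e_i\rangle\langle e_i|-|e_j\rangle\langle e_j|$ for all $i<j$, which is immediate from $|e_i\rangle\langle e_i|-|e_j\rangle\langle e_j| = \big(|e_d\rangle\langle e_d|-|e_j\rangle\langle e_j|\big) - \big(|e_d\rangle\langle e_d|-|e_i\rangle\langle e_i|\big)$.
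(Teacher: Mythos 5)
Your proof is correct, and it takes the complementary route to what the paper leaves implicit. The paper states this corollary without a proof; it is meant to be read off from the preceding theorem, whose proof explicitly decomposes an arbitrary traceless Hermitian $M$ into the listed off-diagonal generators plus the diagonal generators $|e_i\rangle\langle e_i|-|e_d\rangle\langle e_d|$ (the negatives of those in the corollary, spanning the same subspace). That establishes spanning, and since there are exactly $d^2-1$ spanning elements in a $(d^2-1)$-dimensional real vector space, linear independence follows for free. You go in the opposite direction: you verify linear independence directly by reading off matrix entries and then conclude spanning from the dimension count. Both arguments hinge on the same cardinality-equals-dimension observation, but yours is self-contained and does not depend on the theorem's decomposition, whereas the paper's implicit argument reuses the spanning work already done. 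Either way the conclusion is secure; your entry-by-entry check (using that the off-diagonal coefficients are real to conclude $a_{ij}=b_{ij}=0$ from $a_{ij}-ib_{ij}=0$, then killing the $c_i$ via the diagonal entries for $i<d$) is airtight, and your parenthetical remark that the theorem's proof supplies spanning directly shows you see both routes.
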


Any $d\times d$ unitary matrix with unit determinant can be decomposed as the composition of $\dfrac{d(d-1)}{2}$ two level gates. For computation purpose we want to program a gate set to act on multi-qubit systems, that is the Hilbert space $Q^{\tensor n}$, which has dimension $2^n$. Therefore a two-level system on $Q^{\tensor n}$ consists of exponentially many elements, and is not an efficient choice for computing. Therefore, we are looking for gates that act on constant number of qubits at a time and generate a dense subgroup of $SU(Q^{\tensor n})$.

An important two qubit gate is the controlled not (CNOT) gate \cite{nielsen2010quantum}, which maps the basis $|x,y\rangle$ to $|x,x\oplus y\rangle$, for $x, y \in \{0,1\}$. That is it flips the second bit if the first bit is set to $1$. Indeed, we can discuss a controlled-U gate as $|0\rangle \langle 0| \tensor I + |1\rangle \langle 1| \tensor U$ for any unitary $U$. Therefore, a CNOT gate is the controlled $\sigma_x$ gate. A controlled phase gate is the one with $U=\sigma_z$. CNOT is also related to the classical reversible circuits, that is a boolean gate which flips the second bit conditioned on the status of the first bit. Among these, CNOT has a classical circuit analogue as we discussed previously. Also inspired by the classical reversible gates, we can discuss three qubit gates. Among these are the quantum Fredkin and Toffoli gates. A Fredkin gate is a swap controlled on two qubits controlled by a third qubit, that is the maps $|0,x,y\rangle\rightarrow |0,x,y\rangle$ and $|1,x,y\rangle\rightarrow |1,y,x\rangle$. A Toffoli or controlled-controlled not or CCNOT is the gate which acts as $\sigma_x$ a qubit controlled on the status of two other bits; it gives the map $|x,y,z\rangle\rightarrow |x,y,x.y\oplus z\rangle$. Here $x.y$ is the logical AND of $x$ and $y$.

As described two level systems are universal, but not efficiently programmable. There are a number of well known universal gate sets. CNOT with arbitrary qubit rotation are $\BQP$ universal by simulating the two level systems. Moreover, for any unitary $U$, there is a way of assigning real angles to the rotations, such that the composition of gates in this gate set simulates $U$ exactly. However, given finite rotation gates with irrational rotation angles, along with CNOT generate a dense subset of $SU(Q^\tensor(n))$. CNOT and $X$ rotations can simulate any special orthogonal matrix, and by the discussion of embedding complex matrices into real ones these are still universal for quantum computing. There are other known $\BQP$-universal gate set. For example, Hadamard with Toffoli generate a dense subset of the orthogonal group (see \cite{aharonov2003simple} for a proof), and also $\pi/8$ phase gate, Hadamard with CNOT are also universal for $\BQP$. However, the composition of Hadamard and CNOT gates generate a sparse subset of unitary matrices and the output of any such quantum circuit can be simulated in polynomial time.

A quantum computer works in three steps, initialization, evolution and measurement. The initialization is due to a polynomial time Turing machine which on the size of the input outputs the description of a quantum circuit in some universal gate set. Evolution is simply the action of the quantum circuit on the input $|x\rangle\tensor |00\ldots 0\rangle$. The $|x\rangle$ part is the string in computational basis and the state $|00\ldots 0\rangle$ is a number of extra bits which mediate the computation. These extra bits are called ancilla qubits. Measurement is basically sampling from the output distribution of $C|x 00\ldots 0\rangle$ in the computational basis. Moreover, for a decision problem, we can deform $C$ in such a way that measuring the first bit is sufficient to obtain a nontrivial answer.

\begin{definition}
(Bounded-error quantum polynomial time \cite{bernstein1993quantum}) a language $L\subset \{0,1\}^\star$ is contained in $\BQP$, if there a polynomial time Turing machine $M$, which on the input $1^|x|$, outputs the description of a quantum circuit $C$ in some universal gate set, such that if $x \in L$, the probability of measuring a $1$ in the first qubit is $\geq 2/3$ and otherwise $\leq 1/3$.
\end{definition}

One important result is the Solovay-Kitaev theorem, which asserts that denseness of a gate set implies efficiency:

\begin{theorem}
(Solovay-Kitaev\cite{kitaev1997quantum}) all $\BQP$-universal gate sets are equivalent: suppose that $G$ is a gate set consisting of $SU(d)$ gates. If $G$ is dense in $SU(d)$ and is furthermore closed under inverses, then for any element $U$ in $SU(d)$, there is a quantum circuit $C'$ of size $\log^{O(1)}(1/\epsilon)$ composed of $G$ gates that is $\epsilon$-close to $C$.
\label{solkit}
\end{theorem}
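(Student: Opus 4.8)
The plan is to reproduce the standard recursive construction. I build, by induction on a level parameter $n$, a procedure $\mathcal S_n$ that on input $U\in SU(d)$ returns a word $U_n$ in the gates of $G$ with operator distance $d(U,U_n)\le\varepsilon_n$ and word length $|U_n|\le L_n$, arranged so that $\varepsilon_n\to 0$ doubly exponentially in $n$ while $L_n$ grows only geometrically; choosing $n=O(\log\log(1/\epsilon))$ then gives $L_n=\log^{O(1)}(1/\epsilon)$. For the base case $\mathcal S_0$, density of $G$ together with closure under inverses implies that for a suitably small fixed $\varepsilon_0$ the words in $G$ of length at most some fixed $L_0=L_0(\varepsilon_0,d)$ form an $\varepsilon_0$-net of $SU(d)$ in the metric $d(\cdot,\cdot)$; $\mathcal S_0$ simply returns the nearest net point. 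The threshold on $\varepsilon_0$ is a universal (dimension-dependent) constant fixed by the estimates below.

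The technical heart is a \emph{shrinking lemma}: there is a constant $c_{\mathrm{sh}}=c_{\mathrm{sh}}(d)$ such that every $W\in SU(d)$ with $d(W,I)\le\delta$ (for $\delta$ small) can be written as a group commutator $W=VW'V^{-1}W'^{-1}$ with $d(V,I),\,d(W',I)\le c_{\mathrm{sh}}\sqrt{\delta}$. To prove it, write $W=e^{iX}$ with $\|X\|=O(\delta)$ and use the fact that $\su(d)$ is spanned by Lie brackets — indeed the two-level generators $g_{ij}$ discussed above close under commutation onto all of $\su(d)$ — together with the Baker--Campbell--Hausdorff expansion $e^{iP}e^{iQ}e^{-iP}e^{-iQ}=e^{-[P,Q]+O(\|P\|\,\|Q\|(\|P\|+\|Q\|))}$: choose $P,Q$ of norm $O(\sqrt\delta)$ with $-[P,Q]$ equal to the leading term of $X$, and absorb the $O(\delta^{3/2})$ remainder by a continuity/fixed-point correction. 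Exponentiating $P$ and $Q$ gives the desired factors $V,W'$.

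Given $\mathcal S_{n-1}$, define $\mathcal S_n(U)$ by: compute $U_{n-1}=\mathcal S_{n-1}(U)$ and the residual $\Delta=UU_{n-1}^{-1}$, which satisfies $d(\Delta,I)\le\varepsilon_{n-1}$; apply the shrinking lemma to write $\Delta=VW'V^{-1}W'^{-1}$ with $d(V,I),d(W',I)=O(\sqrt{\varepsilon_{n-1}})$; approximate $V\approx V_{n-1}=\mathcal S_{n-1}(V)$ and $W'\approx W'_{n-1}=\mathcal S_{n-1}(W')$; and output $U_n:=V_{n-1}W'_{n-1}V_{n-1}^{-1}W'^{-1}_{n-1}\,U_{n-1}$. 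Near the identity the group commutator is Lipschitz in each argument with constant proportional to the distance of the other argument from $I$, so $d\big(V_{n-1}W'_{n-1}V_{n-1}^{-1}W'^{-1}_{n-1},\,\Delta\big)=O(\varepsilon_{n-1}\cdot\sqrt{\varepsilon_{n-1}})$, giving the error recursion $\varepsilon_n=C\varepsilon_{n-1}^{3/2}$. The length recursion is $L_n=5L_{n-1}$ — four copies of length-$L_{n-1}$ words (inverses cost nothing, since $G$ is closed under inverses) plus $U_{n-1}$ — so $L_n=5^nL_0$.

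Putting it together: with $\eta_n:=C^2\varepsilon_n$ the error recursion becomes $\eta_n=\eta_{n-1}^{3/2}$, hence $\eta_n=\eta_0^{(3/2)^n}$; choosing $\varepsilon_0<C^{-2}$ makes $\eta_0<1$, so $\varepsilon_n\to 0$ doubly exponentially. To reach accuracy $\epsilon$ it suffices to take $n=O(\log\log(1/\epsilon))$, so the output word has length $5^{O(\log\log(1/\epsilon))}=\log^{O(1)}(1/\epsilon)$, and since every step is an explicit classical procedure the approximating circuit $C'$ is produced uniformly. I expect the main obstacle to be the shrinking lemma: one must bound the BCH remainder uniformly on a neighborhood of the identity, invert the bracket map on $\su(d)$ with square-root–type control, and then choose $\varepsilon_0$ small enough that all the intermediate approximants $V_{n-1},W'_{n-1}$ (and hence the residuals at every level) stay inside the region where both these estimates and the net lemma remain valid.
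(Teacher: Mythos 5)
The paper does not prove this theorem; it states it as a cited black box (to Kitaev's survey) and uses it downstream, so there is no in-paper proof to compare against. Your reconstruction is the standard recursive Solovay--Kitaev argument (base $\varepsilon_0$-net, shrink via group commutators, error $\varepsilon_n = C\varepsilon_{n-1}^{3/2}$, length $L_n = 5L_{n-1}$, stop at $n = O(\log\log(1/\epsilon))$), and all the bookkeeping — the rescaled recursion $\eta_n=\eta_{n-1}^{3/2}$, the condition $\varepsilon_0<C^{-2}$, the conclusion $L_n=5^nL_0=\log^{O(1)}(1/\epsilon)$ — is correct. You also silently correct a typo in the statement ($\epsilon$-close to $C$ should read $\epsilon$-close to $U$). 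One small but real gap you have correctly flagged but not quite closed: the justification of the shrinking lemma invokes ``$\su(d)$ is spanned by Lie brackets,'' but spanning only says every $X$ is a \emph{linear combination} of brackets, whereas your construction needs the stronger fact that $X$ is (to leading order) a \emph{single} bracket $[P,Q]$ with $\|P\|,\|Q\|=O(\sqrt{\|X\|})$. Two standard repairs: (i) cite the single-commutator fact for $\su(d)$ (e.g., the differential of $(P,Q)\mapsto[P,Q]$ is surjective at a generic point, so the image contains a neighborhood of $0$, and then rescale), or (ii) relax the recursion to a product of $K=O(1)$ group commutators, which changes $L_n=5L_{n-1}$ to $L_n=(4K+1)L_{n-1}$ but still yields $\log^{O(1)}(1/\epsilon)$. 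With either fix the argument is complete.
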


\section{The One-Clean-Qubit Model}

While state of a quantum system is a pure vector in a Hilbert space, most of the time the actual quantum state is unknown; instead, all we know is a classical probability distribution over different quantum states, \i.e., the given quantum state is either $|\psi_1\rangle$ with probability $p_1$, or $|\psi_2\rangle$ with probability $p_2$, and so on. In other words, the state is an ensemble of quantum states $\{(p_1, |\psi_1\rangle) ,(p_2, |\psi_2\rangle),\ldots, (p_n, |\psi_n\rangle) \}$, for $p_1 + p_2 + \ldots p_n =1$. Such an ensemble is a mixture of quantum probability and classical probability distributions at the same time, it is also called a mixed state, and is described by a density matrix $\rho$:

$$
\rho = \sum_{j \in [n]} p_j |\psi_j \rangle \langle \psi_j |.
$$

\noindent A density matrix is a Hermitian operator, with nonnegative eigenvalues and unit trace. A quantum state is called pure, if it has a density matrix of the form $|\psi\rangle \langle \psi |$. In other words, a quantum state $\rho$ is pure if and only if $tr(\rho^2)=1$.

If some quantum state is initially prepared in the mixed state $\rho_0$, then given a unitary evolution $U$, the state is mapped to $U\rho_0 U^\dagger$. Let $\{|j\rangle: j \in [n]\}$ be some orthonormal basis of a Hilbert space. The maximally mixed state of this Hilbert space has the form $\dfrac{1}{n} \sum_j  |j\rangle  \langle j|= \dfrac{I}{n}$, is a quantum state which contains zero quantum information in it. That is, the outcome of any measurement can be simulated by a uniform probability distribution on $n$ numbers. Also, a maximally mixed state is independent of the selection of the orthonormal basis. Quantum computing on a maximally mixed state is hopeless, since $\dfrac{I}{n}$ is stable under any unitary evolution.

Consider the situation where we can prepare a pure qubit along with $n$ maximally mixed qubits to get $|0\rangle \langle 0| \tensor \dfrac{I}{n}$. The state $|0\rangle \langle 0|$ is also referred to as a clean qubit. In this case, the quantum state has one bit of quantum information in it. It is also believed that there are problems in $\DQC1$ that are not contained in the polynomial time. One example of such problem, is the problem of deciding if the trace of a unitary matrix is large or small. No polynomial time algorithm is known for this problem. We are going to point out to the trace computing problem later in section ~\ref{trace}. Moreover, if we consider the version of $\DQC1$ where we are allowed to measure more than one qubits, then it is shown that there is no efficient classical simulation in this case, unless the polynomial Hierarchy collapses to the third level.  In the version of my definition, since we used a polynomial Turing machine as a pre-processor, $\DQC1$ immediately contains $\P$. Pre-processing can be tricky for the one-clean-qubit model. For example, as it appears, if instead of $\P$, we used $\NC^1$, the class $\P$ and $\DQC1$ are incomparable.

\begin{definition}
(The one-clean-qubit model \cite{knill1998power}) $\DQC1$ is the class of decision problems that are efficiently solvable with bounded probability of error using a one-clean-qubit and arbitrary amount of maximally mixed qubits. More formally, it is the class of languages $L\subseteq \{0,1\}^\star$, for which there is a polynomial time Turing machine $M$, which on any input $x \in \{0,1\}^\star$, outputs the description of a unitary matrix $\langle U \rangle$ with the following property: if $x \in L$, the probability of measuring a $|0\rangle$ on the first qubit of $U |0\rangle \langle 0| \tensor \dfrac{I}{n} U^\dagger$ is $\geq 2/3$, and otherwise it is $\leq 1/3$. Here $U$ is a $2 n \times 2 n$ unitary matrix.
\end{definition}

Notice that if we allow intermediate measurements we will obtain the original $\BQP$; just measure all qubits in $\{|0\rangle, |1\rangle \}$ basis, and continue on a $\BQP$ computation. Clearly, $\DQC1$ is contained in $\BQP$; in order to see this, just use Hadamrds and intermediate measurements to prepare the maximally mixed state, and continue on a $\DQC1$ computation. It is unknown whether $\BQP \subseteq \DQC_1$, however, we believe that this should not be true.

\section{Complexity Classes with Postselection}

Here we define the complexity classes with postselection. Intuitively, these are the complexity classes with efficient verifiers with free retries. That is an algorithm which runs on the input, and in the end will tell you whether the computation has been successful or not. The probability of successful computation can be exponentially small.

\begin{definition}
Fix an alphabet $\Sigma$. $\Post\BQP$ ($\Post \BPP$) is the class of languages $L \subset \Sigma^\star$ for which there is a polynomial time quantum (randomized) algorithm $\mathcal{A}: \Sigma^\star \rightarrow \{0,1\}^2$, which takes a string $x\in \Sigma^\star$ as an input and outputs two bits, $\mathcal{A}(x)= (y_1, y_2)$, such that:

\begin{itemize}

\item[] 1) $\forall x \in \Sigma^\star, \operatorname*{Pr}(y_1 (x)=1)>0$.

\item[] 2) If $x \in L$ then $\operatorname*{Pr}(y_2 (x)=1| y_1(x)=1)\geq \dfrac{2}{3}$

\item[] 3) If $x \notin L$ then $\operatorname*{Pr}(y_2 (x)=1| y_1(x)=1)\leq \dfrac{1}{3}$
\end{itemize}
\end{definition}

\noindent Here $y_1$ is the bit which tells you if the computation has been successful or not, and $y_2$ is the actual answer bit. The conditions $2)$ and $3)$ say that the answer bit $y_2$ is reliable only if $y_1=1$. In this work, we are interested in the class $\Post\BQP$. However, $\Post \BPP$ is interesting on its own right, and is equal to the class $\BPP_{path}$, which a modified definition of $\BPP$, where the computation paths do not need to have identical lengths. $\Post \BPP$ is believed to be stronger than $\BPP$, and is contained in $\BPP^{\NP}$, which is the class of problems that are decidable on a $\BPP$ machine with oracle access to $\NP$ or equivalently $\SAT$.

Due to a seminal result by Aaronson, $\Post \BQP$ is related to the complexity class $\PP$:

\begin{theorem}
(Aaronson \cite{aaronson2005quantum}) $\Post\BQP=\PP$.
\end{theorem}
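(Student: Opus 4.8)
Since the claim is an equality of classes, the plan is to prove $\Post\BQP \subseteq \PP$ and $\PP \subseteq \Post\BQP$ separately, and for both I first pass to a normal form. For the first inclusion I use that every $\Post\BQP$ algorithm may be taken to be a polynomial‑size circuit $C$ over the universal set $\{H,\mathrm{Toffoli}\}$ (the definition of $\Post\BQP$ is robust under this choice, and Solovay--Kitaev, Theorem \ref{solkit}, supplies universality), acting on $|x\rangle|0\cdots0\rangle$ with output bits $y_1,y_2$. For the second I use that a $\PP$ language has a polynomial‑time predicate $M$ on $\{0,1\}^n\times\{0,1\}^p$ with $x\in L \iff s(x):=\#\{y:M(x,y)=1\}>2^{p-1}$, and that by a standard padding (replace $y$ by $(y,c)$ with $c\in\{0,1\}^p$, and reject whenever $c=0^p$) one may further assume $s(x)\neq 2^{p-1}$; then $t(x):=2s(x)-2^p$ is a nonzero even integer in $[-2^p,2^p]$ whose sign indicates membership in $L$.

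\textbf{$\Post\BQP \subseteq \PP$.} Expanding $C$ over computational paths (one binary branch per Hadamard, each Toffoli contributing $1$), $\langle z|C|x0\cdots0\rangle = 2^{-h/2}a_z(x)$, where $h$ is the fixed number of Hadamards and $a_z(x)=(\#\text{ positive paths to }z)-(\#\text{ negative paths to }z)$ is a difference of two $\#\P$ functions; the same is true of $a_z(x)^2=\sum_{w,w'}\pm$ over pairs of paths, and hence of $A(x):=2^h\Pr[y_1{=}1\wedge y_2{=}1]=\sum_{z:z_1=z_2=1}a_z(x)^2$ and $B(x):=2^h\Pr[y_1{=}1]=\sum_{z:z_1=1}a_z(x)^2$, with $0\le A\le B$ and $B\ge 1$. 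The postselection promise gives $A/B\ge 2/3$ on $L$ and $A/B\le 1/3$ off $L$, so $D(x):=6A(x)-3B(x)$, again a difference of two $\#\P$ functions, satisfies $D(x)\ge 1$ on $L$ and $D(x)\le -1$ off $L$. Writing $D=f-g$ with $f,g\in\#\P$ realized by nondeterministic machines with equally many branches, the machine that tosses one extra coin and runs $f$ on heads and $g$ with its answer flipped on tails accepts on a strict majority of branches exactly when $D(x)>0$, i.e.\ exactly when $x\in L$; hence $L\in\PP$.

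\textbf{$\PP \subseteq \Post\BQP$.} The basic subroutine, parameterized by $r\in\{1,2,4,\dots,2^p\}$, uses postselection to produce a single qubit in the state $\propto r|0\rangle \pm t(x)|1\rangle$: from $\tfrac1{\sqrt{2^p}}\sum_y|y\rangle|M(x,y)\rangle$ apply $H$ to the last qubit and $H^{\otimes p}$ to the $y$‑register and postselect the latter on $|0^p\rangle$, which leaves the last qubit $\propto 2^p|0\rangle - t(x)|1\rangle$; then a controlled rotation by $\phi$ with $\cos\phi=r/2^p$ followed by postselecting a fresh ancilla rescales the $|0\rangle$‑amplitude to $r$. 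Applying one more $H$ and measuring yields a coin that is $1$ with probability $\tfrac12+\tfrac{r\,t(x)}{r^2+t(x)^2}$, whose bias has the sign of $t(x)$ for every $r$ and is at least $1/5$ in absolute value for the single dyadic scale $r=2^{\lfloor\log_2|t(x)|\rfloor}$, which lies within a factor $2$ of $|t(x)|$. The $\Post\BQP$ algorithm runs $K=\Theta(p)$ independent copies of this subroutine for each of the $p+1$ values of $r$ on disjoint registers, sets $y_1$ to the conjunction of all $O(p^2)$ postselection flags (whose joint success probability is a product of positive numbers, hence positive), and sets $y_2=1$ iff the total of the $O(p^2)$ coins exceeds half their number. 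All coins are biased toward $[x\in L]$, and the $K$ copies at the good scale add a drift of $\Omega(K)$ to the total against a variance of $O(pK)$, so taking the constant in $K=\Theta(p)$ large enough, Chebyshev's inequality forces $\Pr[y_2{=}1\mid y_1{=}1]\ge 2/3$ on $L$ and $\le 1/3$ off $L$.

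The delicate points are exactly those where the absence of gap amplification for $\PP$ makes itself felt. In the first direction one must check that the conditional acceptance probability is stable under the gate‑set normalization, since $\Pr[y_1{=}1]$ can be exponentially small; the clean fix is to build $\{H,\mathrm{Toffoli}\}$ into the definition rather than to approximate an arbitrary gate set. In the second direction, the naive single construction yields only a bit with bias $\Theta(1/|t(x)|)$, which is possibly exponentially small and not amplifiable by repetition; what rescues the argument is the combination of scanning all $O(p)$ dyadic scales (one of which necessarily produces an $\Omega(1)$ bias) with a concentration argument over polynomially many rounds, exploiting that every scale biases the coin the same way. I expect this last point — realizing that one need not "know" $|t(x)|$ in order to pick the right scale — to be the conceptual crux; the padding that keeps $t(x)$ nonzero and the legitimacy of an exponentially unlikely postselection are routine.
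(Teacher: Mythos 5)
The paper proves nothing here---it cites Aaronson---so the only meaningful comparison is with Aaronson's argument, and on that comparison your proof is correct and genuinely interesting in one place. The $\Post\BQP\subseteq\PP$ direction is the standard path-counting argument over $\{H,\mathrm{Toffoli}\}$ circuits (the Adleman--DeMarrais--Huang technique that Aaronson also uses): $A,B$ are integer-valued differences of $\#\P$ functions, $B\geq 1$ whenever $\Pr[y_1{=}1]>0$, and $6A-3B$ has the right sign with magnitude $\geq 1$, so a coin-toss between the two counting machines does the job; this is essentially identical to the cited proof. The $\PP\subseteq\Post\BQP$ direction uses the same state preparation as Aaronson (Hadamard the answer qubit, Hadamard and postselect the $y$-register, getting a qubit $\propto 2^p|0\rangle-t|1\rangle$, then a postselected rescaling across dyadic values of $r$), but you replace the combination step. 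Aaronson's argument is of the form ``if $t<0$ there \emph{exists} a scale at which the state has overlap $\approx 1$ with $|+\rangle$, while if $t>0$ \emph{every} scale has overlap $\leq 1/2+o(1)$,'' and the decision procedure searches for a scale with high overlap. You instead observe that all scales yield coins biased in the \emph{same} direction ($\mathrm{sign}(t)$), at least one of them by $\Omega(1)$, and therefore the total of $\Theta(p^2)$ coins deviates from its midpoint by $\Omega(p)$ against a standard deviation of $O(p)$, so a single Chebyshev threshold suffices. Your route is somewhat cleaner because it sidesteps the asymmetric exists/forall case analysis; what Aaronson's version buys is not having to reason about the small-bias scales at all, since a single near-certain $|+\rangle$ overlap is the only event he needs to detect. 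Both are correct. The one imprecision worth flagging is cosmetic: you write ``$\propto r|0\rangle\pm t(x)|1\rangle$,'' but the sign is forced (to $-t(x)$ with your conventions), and the sign is load-bearing since it is what makes all the coin biases point the same way; you compute the resulting bias correctly as $\tfrac12+\tfrac{r t(x)}{r^2+t(x)^2}$, so this does not affect the argument, but the $\pm$ should be a $-$.
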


\noindent Firstly, because of $\P^{\PP}=\P^{\#\P}$ as a corollary to the theorem, with oracle access to $\Post\BQP$, $\P$ can solve intricate counting tasks, like counting the number of solutions to an $\NP$ complete problem. The implication of this result for the current work is that if a quantum model, combined with postselection is able to efficiently sample from the output distribution of a $\Post\BQP$ computation, then the existence of a  randomized scheme for approximating the output distribution of the model within constant multiplicative factor is ruled out unless $\PH$ collapses to the third level. This point is going to be examined in section \ref{phcol}.

\chapter{Computational Complexity of the Classical Ball Permuting Model}
\label{ch4}

The results of this chapter has been obtained in joint collaboration with Scott Aaronson. 

In this chapter, a classical analogue for the ball permuting model is formalized. The computational power of this model is then partially pinned down within the known complexity classes.

\section{Classical Computation with Probabilistic Swaps}

Suppose that $n$ distinct colored balls are placed on a line. Label the initial configuration of the balls from left to right with ordinary numbers $1, 2, 3, \ldots, n$. Each configuration of the balls is then represented by a permutation of the set $\{1, 2, 3,\ldots, n\}=[n]$. Therefore, the ball permuting model of $n$-balls has $n!$ distinct states, and the transition rules are given by the set of bijections $: [n]\rightarrow [n]$. These bijections along with their compositions correspond to the well-known symmetric group. In the following some notations and background about the symmetric group is established. This notation is also going to be used in section \ref{dfs} of this chapter.

Given a finite element set $X$ of $n$ elements, let $S(X)\cong S([n])$ be the set of bijections $:X\rightarrow X$. The bijections of $S[n]$ along with $\circ : S[n] \times S[n] \rightarrow S[n]$ as the composition of functions of functions from right-to-left, create a group $S_n:=(S[n],\circ, e)$, with identity $e$ as the identity function.

$S_n$ is a group because each element of $S[n]$ is a bijection, and thereby is invertible. Also, the composition of functions $\circ$ is associative, and the set of bijections are closed under compositions. For any $\pi \in S_n$ construct the structure $\Big \{ \{1, \pi(1), \pi \circ \pi(1), \ldots \}, \{2, \pi(2), \pi \circ \pi(2), \ldots \}, \ldots, \{n, \pi(n), \pi \circ \pi(n), \ldots \}\Big \}=:\{C_{\pi}(1), C_{\pi}(2),\ldots, C_{\pi}(n)\}=: C_{\pi}$. This structure is called the set of cycles of permutation $\pi$, and each element $C_\pi (j):= \{\pi(j), \pi \circ \pi(j), \ldots \}$ is the cycle corresponding to element $j$. The number of cycles in $C_{\pi}$ can vary from $1$ to $n!$ (corresponding to $e$). The size of a cycle is the minimum $|C_\pi(j)|=\min \{k>0 : \pi^k (j)=j\}$. The set $[n]$ is therefore partitioned into the union of disjoint cycles $C_1, C_2, \ldots, C_N$. Let $\lambda_1 \geq \lambda_2\geq\ldots\geq \lambda_N$ be the size of these cycles, clearly $\lambda_1 + \lambda_2+\ldots+\lambda_N=n$.

\begin{definition}
For each positive integer $n$, a partition of $n$ is a non-ascending list of positive integers $\lambda=(\lambda_1 \geq \lambda_2\geq\ldots\geq \lambda_N)$ such that $\lambda_1 + \lambda_2+\ldots+\lambda_N=n$. Denote $\lambda !$ with $\lambda_1 ! \lambda_2 !\ldots \lambda_N !$.
\end{definition}

For any cycle structure of partition (size of cycles) $\lambda_1 \geq \lambda_2\geq\ldots\geq \lambda_N$ there is a subgroup of $S_n$ that is isomorphic to $S[\lambda]:=S_{\lambda_1}\times S_{\lambda_2}\times \ldots \times S_{\lambda_n}$. In other words the subgroup consists of the product of $N$ permutations each acting on a distinct cycle. Any such subgroup has $\lambda ! := \lambda_1 ! \lambda_2 ! \ldots \lambda_N !$ elements.

In order to address each permutation $\pi$ uniquely, we can use an alternative representation for the cycles as an ordered list (tuple) $(y_1, y_2, \ldots, y_k)$, with $\pi (y_t)=y_{t+1\mod k}$. Therefore, each permutation $\pi$ can be represented by its cycles $\pi \cong (C_1, C_2,\ldots, C_k)$. We can define two different actions of the symmetric group $S_n$ on an ordered list $(x_1, x_2, \ldots, x_n)$. The first one of these is called the left action where a permutation $\pi$ acts on a tuple $X=(x_1, x_2, \ldots, x_n)$ as $(\pi, X)\mapsto (x_{\pi(1)}, x_{\pi(2)}, \ldots, x_{\pi(n)})=: l(\pi)$. The left action rearranges the location of the symbols. The right action is defined by $(\pi, X)\mapsto (\pi(x_1), \pi(x_2), \ldots, \pi(x_n))=: r(\pi)$. Throughout by permutation we mean left action, unless it is specified otherwise.

A permutation is called a swap of $i, j$ if it has a nontrivial cycle $(i, j)$ and acts trivially on the other labels. A swap is called a transposition $(i)$ for $i\in [n-1]$ if it affects two adjacent labels $i, i+1$ with a cycle $(i,i+1)$. There are $n-1$ transpositions in $S_n$ and the group is generated by them. We use the notations $(i,j)$ or $b_{i,j}$ for swaps and $b_i$ and $(i)$ for transpositions, interchangeably. Any permutation can be generated by at most $\dfrac{n(n-1)}{2}$ transpositions or $n$ swaps. Indeed, we can define a distance between permutations as $d: S_n\times S_n \rightarrow \mathbb{N}$, with $d(\pi,\tau)$ being the minimum number of transpositions whose application on $\pi$ constructs $\tau$. Given a distance on the permutations we can classify the permutations into even and odd ones.

\begin{definition}
A sign of a permutation $sgn: S_n\rightarrow \{-1,1\}$ is a homomorphism, with the map $\sigma\mapsto 1$ if $d(\sigma, e)$ is even and gives $-1$ otherwise.
\end{definition}

It is not hard to see that sign is a homomorphism. Given this, we can think of $E_n \leq S_n $ as a subgroup of $S_n$ consisting of $\sigma \in S_n$ with $sgn (\sigma)=1$. Clearly, $E_n$ includes the identity permutation, inverse and closure.

\begin{proposition}
For $n\geq 2$, any group $(G, . , e)$ generated by $b_1, b_2, \ldots, b_{n-1}$ is isomorphic to $S_n$ if and only if the following is satisfied:

\begin{itemize}
\item $b_i . b_j = b_j . b_i $ for $|i-j| >1$
\item $b_i^2=e$ for $i \in [n-1]$
\item $b_i b_{i+1} b_i = b_{i+1} b_i b_{i+1}$ for $i \in [n-2]$
\end{itemize}

\label{sym}
\end{proposition}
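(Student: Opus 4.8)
The plan is to recognise the three displayed relations as the Coxeter presentation of the symmetric group relative to its adjacent transpositions, and to prove the two implications separately. Throughout, the natural reading of the statement is that $G$ is the group \emph{presented} by $\langle b_1,\dots,b_{n-1}\rangle$ subject to those relations (the $b_i$ being abstract generators, as in the ball–permuting setting where $b_i$ swaps the $i$-th and $(i{+}1)$-st labels).

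For the easy direction I would take $G=S_n$ with $b_i$ the adjacent transposition exchanging $i$ and $i+1$, which generate $S_n$, and verify the relations by evaluating on points: if $|i-j|>1$ then $b_i$ and $b_j$ move disjoint pairs of elements and so commute; $b_i^2=e$ because a transposition is its own inverse; and $b_ib_{i+1}b_i$ and $b_{i+1}b_ib_{i+1}$ both send $i\mapsto i+2\mapsto i$ and fix every other point, hence both equal the transposition $(i\ i{+}2)$.

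For the converse, since the adjacent transpositions of $S_n$ satisfy the three relations, the universal property of the presentation yields a surjective homomorphism $\psi:G\twoheadrightarrow S_n$ with $b_i\mapsto(i\ i{+}1)$, so $|G|\ge n!$; it remains to show $|G|\le n!$. I would prove this by induction on $n\ge 2$, the base case being $G=\langle b_1\mid b_1^2=e\rangle\cong\mathbb{Z}/2\cong S_2$. For the inductive step put $H:=\langle b_1,\dots,b_{n-2}\rangle\le G$; the relations among $b_1,\dots,b_{n-2}$ hold in $H$, so $H$ is a quotient of the $(n-1)$-generator presented group and $|H|\le(n-1)!$ by the inductive hypothesis. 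Consider the finite union of right cosets
\[
X := H \;\cup\; Hb_{n-1} \;\cup\; Hb_{n-1}b_{n-2} \;\cup\; \cdots \;\cup\; Hb_{n-1}b_{n-2}\cdots b_1 ,
\]
which is a union of $n$ cosets of $H$. The crux is to show $Xb_j\subseteq X$ for every generator $b_j$; since $e\in X$, this forces $X$ to be right-$G$-invariant and hence $X=G$, giving $|G|\le n\,|H|\le n!$. Together with $|G|\ge n!$ this makes $\psi$ an isomorphism.

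The main obstacle is exactly this coset-closure computation. Writing $w_i=b_{n-1}b_{n-2}\cdots b_{n-i}$, one must slide $b_j$ leftward through $w_i$ and track the effect of the braid relation $b_kb_{k+1}b_k=b_{k+1}b_kb_{k+1}$: when $j<n-i-1$ (or $j$ large enough) pure commutation gives $w_ib_j=b_jw_i$ or $b_{j-1}w_i$ with $b_{j-1}\in H$, so the coset is unchanged; when $j=n-i$ the trailing $b_{n-i}b_{n-i}$ collapses and the coset drops to $Hw_{i-1}$; when $j=n-i-1$ one braid move extends it to $Hw_{i+1}$; and for $j=n-1$ the cases $i=0,1$ and $i\ge 2$ (where $b_{n-1}b_{n-2}b_{n-1}=b_{n-2}b_{n-1}b_{n-2}$ and $b_{n-2}\in H$) must be handled separately. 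This is a finite case analysis rather than a deep step, but it is the only place all three relations are genuinely used and the only place the argument can fail, so care there is essential.
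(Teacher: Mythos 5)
The paper states Proposition~\ref{sym} without proof---it is quoted as the well-known Coxeter/Moore presentation of the symmetric group---so there is no internal argument to compare against. Your proposal supplies the standard proof and is essentially correct: a surjection $\psi:G\twoheadrightarrow S_n$ exists by the universal property of the presentation (adjacent transpositions generate $S_n$ and satisfy the three relations), and the reverse bound $|G|\le n!$ follows by induction on $n$ via the coset set $X=\bigcup_{i=0}^{n-1} H\,b_{n-1}b_{n-2}\cdots b_{n-i}$ with $H=\langle b_1,\dots,b_{n-2}\rangle$, $|H|\le (n-1)!$, once one shows $X$ is closed under right multiplication by each $b_j$.

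Two remarks. First, your reinterpretation of the statement is necessary and correct. Read literally, the proposition is false in both directions: for $n\ge 3$ take $G=\mathbb{Z}/2$ with every $b_i$ equal to the nonidentity element---all three relations hold yet $G\not\cong S_n$---and conversely $S_n$ admits generating $(n-1)$-tuples that violate the relations (e.g.\ $b_1=(1\,2)$, $b_2=(1\,2\,3)$ in $S_3$). The only sensible reading is that $G$ is the group \emph{presented} by these generators and relations, which is exactly what you prove. Second, your taxonomy of the coset-closure step is slightly off in one place: when $j=n-i-1$ no braid move is involved, since $w_i b_{n-i-1}=w_{i+1}$ is simply concatenation; the braid relation is needed precisely in the range $n-i<j\le n-1$ (for all such $j$, not just $j=n-1$), where commuting $b_j$ leftward produces the subword $b_j b_{j-1}b_j$, the braid relation turns it into $b_{j-1}b_j b_{j-1}$, and commuting the leading $b_{j-1}$ out gives $w_i b_j = b_{j-1}w_i$ with $b_{j-1}\in H$. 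Calling this ``pure commutation'' mislabels the mechanism. This is cosmetic rather than a gap; the argument goes through.
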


\subsection{The Classical Ball Permuting Model}
\label{classicalball}

Consider the following problem:

\textit{"($\BALL$) The input is given as a target permutation $\pi\in S_n$ along with a polynomially-long list of swaps $(i_1, j_1), (i_2, j_2), \ldots, (i_m, j_m)$, and a list of independent probabilities $p_1, p_2, \ldots, p_m$. Apply the swaps to the list $(1, 2, 3, \ldots, n)$ in order, each with its corresponding  probability. That is, first apply $(i_1, j_1)$ with probability $p_1$ (and with probability $1-p_1$ do nothing), then apply $(i_2, j_2)$ with probability $p_2$ and so on. 
\begin{itemize}
\item[] - 1) Given the promise that the probability of target permutation is $\geq 2/3$ or $\leq 1/3$, decide which one is the case.
\item[] - 2) Decide if there is a probability support on $\pi$.
\item[] - 3) Compute the probability that $\pi$ appears at the end of the process.
\end{itemize}
"}

The above process creates a probability distribution over the permutations of $S_n$, and three variations are phrased for the problem. How hard is it to fulfill these tasks? Is the complexity of the problem different if all the input swaps are adjacent ones? 

As is justified in appendix \ref{sharpp}, all of the problems can be solved in polynomial time if oracle access to $\# \P$ is provided. The first problem can be solved within $\BPP$, by just using random bits to produce a probability distribution on the set of permutations of $n$ labels. In other words, sampling from the probability distribution $(p_\pi)_{\pi \in S_n}$ can be done withing $\BPP$. Also deciding if the $p_\pi >0$ (the second problem) lies within $\NP$, since the witness for nonzero probability is a subset $S\subseteq [m]$, for which $\pi_S= \pi$. In the next section it is proved that if all of the swaps are adjacent ones, then the problem can be decided in polynomial time, but if the swaps are general ones, the problem is complete for $\NP$.

Lastly, it is worth to mention that finding the marginal probability distribution of the location of each color in the end can be done in classical polynomial time:

\begin{theorem}
Starting with the list $(1,2,3,\ldots,n)$, given a list of swaps $(i_1, j_1), (i_2, j_2),\ldots, (i_m, j_m)$ with corresponding probabilities $p_1, p_2, \ldots, p_m$, there is a polynomial time (in $n$ and $m$) procedure to compute the marginal probability distribution over the locations for each (ball) color $\in [n]$.
\end{theorem}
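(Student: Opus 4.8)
The plan is to avoid ever writing down the distribution over the $n!$ permutations and instead track only the $n\times n$ array of single-ball marginals. Let $M_t$ be the matrix whose entry $M_t[c,\ell]$ is the probability that ball (color) $c$ occupies position $\ell$ after the first $t$ probabilistic swaps have been applied. Since the list starts out as $(1,2,\ldots,n)$, we have $M_0=I$. The key observation is that each probabilistic swap induces an update on $M_t$ that is linear and local, touching only two columns, so $M_t$ can be maintained incrementally without reference to any correlations between balls.

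Concretely, suppose step $t$ applies the swap $(i,j)$ with probability $p_t$ and does nothing with probability $1-p_t$. Because the Bernoulli coin used at step $t$ is independent of the entire history, hence of the configuration just before step $t$, conditioning on the coin and then averaging gives, for every color $c$,
\begin{align*}
M_t[c,\ell] &= M_{t-1}[c,\ell] \qquad\text{for } \ell\notin\{i,j\},\\
M_t[c,i] &= (1-p_t)\,M_{t-1}[c,i] + p_t\,M_{t-1}[c,j],\\
M_t[c,j] &= (1-p_t)\,M_{t-1}[c,j] + p_t\,M_{t-1}[c,i].
\end{align*}
This is exactly the statement that the position of a fixed ball evolves as a Markov chain driven by the history-independent coins, and linearity of expectation is what lets the marginals evolve autonomously. (The identical argument works if one prefers to read the swaps as acting on colors rather than positions: then one tracks, for each position, the distribution over which color sits there, and each update touches two rows instead of two columns.)

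The algorithm is then immediate: initialize $M\leftarrow I$; for $t=1,\ldots,m$ update the two columns $i_t,j_t$ in every row of $M$ according to the displayed formula; finally output the rows of $M_m$, the $c$-th row being the requested marginal distribution of ball $c$. Each step costs $O(n)$ arithmetic operations, so the total running time is $O(mn)$ after an $O(n^2)$ initialization, polynomial in $n$ and $m$. As a consistency check one notes the invariant that $M_t$ is doubly stochastic at every step: within each row the update replaces columns $i,j$ by a convex combination of themselves and so preserves the row sum, and summing over $c$ shows each of columns $i,j$ again has sum $1$, while all other entries are untouched.

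The argument has no genuinely hard step; the only point that deserves explicit emphasis is the claim that the single-ball marginals form a closed system, i.e.\ that we may update $M_t$ without knowing how the balls are correlated. This is precisely where independence of the swap coins from the current arrangement is used, and it is worth flagging because the analogous statement would fail for an adaptive variant in which the decision to perform swap $t$ depended on the configuration at that time.
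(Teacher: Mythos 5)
Your proof is correct and follows essentially the same approach as the paper: track the single-ball marginal distributions and observe that a probabilistic swap updates them by a local, two-coordinate convex combination, giving an $O(mn)$-time procedure. The paper writes the update as a single vector $V$ (one ball at a time) while you assemble all colors into the matrix $M$, and you make explicit the point—left implicit in the paper—that independence of the swap coins from the history is what lets the marginals evolve autonomously; both are cosmetic rather than substantive differences.
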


\begin{proof}
Let $V$ be a vector of size $n$, whose entries $V_j$ for $j\in [n]$ is the probability of finding the first ball in the $j$'th location in the end of ball permutation. The vector is initialized at $(1, 0,0,0,\ldots, 0)$. For steps $t=2, 3, \ldots, m$, let $V_k \leftarrow V_k$ if $k \neq i_t, j_t$, and otherwise map:

$$
V_{i_t}\leftarrow p_t V_{j_t} +(1-p_t)V_{i_t}
$$

and,

$$
V_{j_t}\leftarrow p_t V_{i_t} +(1-p_t)V_{j_t}
$$

This can be done in $O(m.n)$ number of operations.
\end{proof}

\subsection{The Classical Yang-Baxter Equation}

The classical ball permuting model can be alternatively viewed as a stochastic matrices. Consider the set of probability distributions on permutations $\mathbb{V}_n:=\{V\in \mathbb{R^{+}}^{n!}, \sum_j V_j=1\}$. Each entry of $V$ corresponds to a permutation, and its content is the probability that the permutation appears in the process. Denote the basis $\{|\sigma\rangle: \sigma \in S_n\}$ for this vector space. The basis $|\sigma\rangle$ has probability support $1$ on $\sigma$ and $0$ elsewhere. Now a permutation can be viewed as a doubly stochastic $n! \times n!$ matrix with the map $L(\pi)|\sigma\rangle= |\pi\circ \sigma\rangle$. Denote the swap matrix corresponding to $(i,j)$ by $L_{i,j}:=L(i,j)$. A probabilistic swap $(i,j)$ with probability $p$ is therefore given by the convex combination:

$$
R_{i,j} (p)=(1-p) I+p L_{i,j}
$$

One can then formalize the Yang-Baxter equation for three labels (balls). Yang-Baxter (YB) equation is a restriction on the swap probabilities in such a way that the swaps of order $(1,2), (2,3) , (1,2)$ give the same probability distribution as the swaps $(2,3), (1,2), (2,3)$:

\begin{equation}
R_{1} (p_1) R_{2} (p_2)  R_{1} (p_3) =  R_{2} (p'_1) R_{1} (p'_2)  R_{2} (p'_3)
\label{YB}
\end{equation}

We want to solve this equation for $1\geq p_1, p_2, \ldots, p'_6 \geq 0$. If we rewrite $R_{i,j} (p)$ with the parameter $x\in [0,\infty)$:

$$
R_{i,j} (x)=\dfrac{1+x L_{i,j}}{1+x},
$$

\noindent then: 

\begin{theorem}
The following is a solution to equation ~\ref{YB}:

$$
p_1=\frac{x}{1+x} \hspace{1cm} p_2=\dfrac{x+y}{1+x+y} \hspace{1cm}p_1=\dfrac{y}{1+y} \hspace{1cm}
$$
$$
p'_1=\dfrac{y}{1+y} \hspace{1cm} p'_2=\dfrac{x+y}{1+x+y} \hspace{1cm} p'_3=\frac{x}{1+x} \hspace{1cm}
$$

\end{theorem}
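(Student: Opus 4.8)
The plan is to reduce the claimed identity to a single algebraic relation among the permutation matrices $L_{i,j}$, and then verify it by a direct expansion. First I would use the given reparametrization $R_{i,j}(x)=\frac{I+xL_{i,j}}{1+x}$, which is exactly $R_{i,j}(p)=(1-p)I+pL_{i,j}$ under $p=\frac{x}{1+x}$. Writing $L_1:=L_{1,2}$ and $L_2:=L_{2,3}$, and plugging the proposed parameter values into \ref{YB}, the left-hand side becomes $\frac{(I+xL_1)(I+(x+y)L_2)(I+yL_1)}{(1+x)(1+x+y)(1+y)}$ and the right-hand side becomes $\frac{(I+yL_2)(I+(x+y)L_1)(I+xL_2)}{(1+y)(1+x+y)(1+x)}$. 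The two scalar prefactors are the same product (scalars commute), so the whole statement reduces to the unnormalized identity
$$(I+xL_1)(I+(x+y)L_2)(I+yL_1)=(I+yL_2)(I+(x+y)L_1)(I+xL_2).$$

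Next I would expand both triple products, using only the involutivity $L_1^2=L_2^2=I$ and no commutation whatsoever. A short bookkeeping computation gives, for the left-hand side,
$$(1+xy)\,I+(x+y)L_1+(x+y)L_2+x(x+y)L_1L_2+y(x+y)L_2L_1+xy(x+y)\,L_1L_2L_1,$$
while the right-hand side yields the identical expression except that its final term is $xy(x+y)\,L_2L_1L_2$. Hence the identity holds precisely when $L_1L_2L_1=L_2L_1L_2$, which is the braid relation satisfied by the adjacent transpositions $(1,2)$ and $(2,3)$ generating a copy of $S_3$ (Proposition~\ref{sym}); alternatively it is checked in one line on the six basis vectors $\ket{\sigma}$, $\sigma\in S_3$. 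Finally, to confirm that the output is a genuine assignment of probabilities, I would note that for $x,y\ge 0$ each of $\frac{x}{1+x}$, $\frac{x+y}{1+x+y}$, $\frac{y}{1+y}$ lies in $[0,1)$.

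I do not expect a real obstacle: the only substantive step is observing that after the term-by-term matching of the two expansions, exactly the braid relation remains. The conceptual content of the theorem is that the linear-in-$L$ Ansatz for the $R$-matrices forces the ``rapidity-additive'' composition law $x,\ x+y,\ y$ on the parameters, mirroring the rational (XXX-type) solution of the quantum Yang–Baxter equation; the only care needed in writing it up is tracking the six cross terms and the coefficient $xy$ produced by $L_1^2=L_2^2=I$.
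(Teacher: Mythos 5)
Your proposal is correct and follows essentially the same route as the paper: both reduce to expanding the triple product of $(I+zL_i)$ factors and observing that all terms match except the cubic term, where the braid relation $L_1L_2L_1=L_2L_1L_2$ finishes the job. The only cosmetic difference is that the paper expands one side and invokes the $L_1\leftrightarrow L_2$, $x\leftrightarrow y$ symmetry to obtain the other, whereas you expand both sides explicitly.
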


\begin{proof}
First we need to show that these are indeed a solution to YB equation. Expanding the equation $R_1 (x) R_2 (x+y) R_1 (y)$
we get:

\begin{equation}
\dfrac{1+xy+(x+y)(L_1+L_2)+(x+y)(x L_1 L_2 + y L_2 L_1) + xy (x+y) L_1 L_2 L_1}{(1+x)(1+x+y)(1+y)}
\label{expand}
\end{equation}

Exchanging $L_1$ with $L_2$ and $x$ with $y$ in the left hand side of equation ~\ref{expand} gives the right hand side. Using the identity $L_1 L_2 L_1 = L_2 L_1 L_2$, however equation ~\ref{expand} is invariant under these exchanges.
\end{proof}

\subsection{Ball Permuting Languages}
\label{ballpermutingoracles}

In this section we formalize the class of problems that are solvable by classical ball permutation, and the aim is to pin down the resulting complexity class among standard complexity classes.

Assume that a configuration balls on a line is given as an oracle which is able to permute them upon demand. We can define different versions of such an oracle. For example, an oracle can permute the balls deterministically or probabilistically. Input to such an oracle is the list of local swaps and the probabilities, and the output is the final permutations obtained from the action of swaps. Thereby classes of languages with ball permutation are defined as those which are reducible to the language these ball permuting oracles. The class $\AC^0$ is used for the reduction:

\begin{definition}
\begin{itemize}
\item[]
\item A (deterministic ball permuting) $\DBall(n)$ oracle takes as input an ordered sequence of labels $(i_1, j_1), (i_2, j_2), \ldots, (i_m, j_m)$, applies them in order (from left) to the identity permutation $(1,2,3,\ldots, n)$, and outputs the resulting permutation.

\item A (randomized ball permuting) $\RBall(n)$ oracle inputs an ordered sequence of labels $(i_1, j_1), (i_2, j_2), \ldots, (i_m, j_m)$, along with a list of independent probabilities $p_1, p_2, \ldots, p_n$, applies the swaps in order (from left) to the identity permutation $(1,2,3,\ldots, n)$ each with its corresponding, \i.e. at step $t$ it applies the swap $(i_t, j_t)$ with independent probability $p_t$ or otherwise applies nothing with probability $1-p_t$. The oracle outputs the resulting permutation.

\item A (nondeterministic ball permuting) $\NBall(n)$ oracle inputs a target permutation $\tau \in S_n$, an ordered sequence of labels $(i_1, j_1), (i_2, j_2), \ldots, (i_m, j_m)$ and an instruction string of $m$-bits $z_1 z_2 \ldots z_n$. The oracle starts with the identity permutation. At each step $t$ if the instruction bit $z_t$ is equal to $1$, it nondeterministically either swaps the permutation according to $(i_t, j_t)$ or otherwise does nothing. Otherwise if the instruction bit is a $0$ it deterministically applies the swap. The oracle outputs the bit $1$ if the target permutation is generated nondeterministically in one of the permuting branches and otherwise outputs $0$.

\item A (counting ball permuting) $\#\Ball(n)$ oracle inputs an ordered sequence of labels $(i_1, j_1), (i_2, j_2), \ldots, (i_m, j_m)$, a target permutation $\tau \in S_n$ and an instruction string of $m$-bits $z_1 z_2 \ldots z_n$. The oracle outputs the number of way that a nondeterministic ball permuting oracle outputs the target permutation and if the $\NBall$ oracle rejects it outputs $0$.

\item Finally we define the complexity classes:
\begin{itemize}
\item $\DBALL$ is the class of languages that are decidable by an $\AC^0$ machine with single query access to a $\DBall$ oracle, and $\DBALL_{adj}$ is the subset where all queried swaps are adjacent transpositions.

\item $\RBALL$ is the class of languages that are decidable by bounded probability of error with an $\AC^0$ machine which makes a single polynomial size query to an $\RBall$ oracle. $\RBALL_{adj}$ is the subset where all the queried swaps are adjacent ones, and $\RBALL^\star_{adj}$ as the subset of $\RBALL_{adj}$ where all probabilities are nonzero.

\item $\NBALL$ is the class of languages that are polynomial time reducible to an $\NBall$ oracle with single query. $\NBALL_{adj}$ is the subset where all swaps are adjacent, and $\NBALL^\star_{adj}$ is the subset of $\NBALL_{adj}$ where the queried instruction string is all ones.

\item $\#\BALL$ is the class of functions that are computable by a polynomial time Turing Machine which makes a single query to a $\#\Ball$ oracle. $\#\BALL^\star_{adj}$ is the subset of $\#\BALL_{adj}$ where the queried instruction string is all ones.

\end{itemize}

\end{itemize}
\end {definition}

The next step is to pin down the defined complexity classes:

\begin{theorem}
$\DBALL=\DBALL_{adj}=\L=\Rev\L$ 
\end{theorem}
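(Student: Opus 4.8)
The plan is to establish a cycle of containments. Since $\Rev\L = \L$ is already stated in Theorem~\ref{Mck}, it suffices to prove $\DBALL \subseteq \L$, $\L \subseteq \DBALL_{adj}$, and note $\DBALL_{adj} \subseteq \DBALL$ trivially. First I would show $\DBALL \subseteq \L$: an $\AC^0$ machine with a single query to a $\DBall$ oracle can be simulated in logspace because (i) $\AC^0$ is contained in $\L$ (indeed in $\NC^1$), so the pre- and post-processing is fine, and (ii) the oracle call itself is logspace-computable --- given a list of swaps $(i_1,j_1),\ldots,(i_m,j_m)$, to find where label $k$ ends up, or equivalently what sits in position $\ell$ of the final permutation, we only need to track one position/label through the list of swaps, reading the swap list left-to-right and maintaining a single $O(\log n)$-bit counter. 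Composing this with a logspace machine that drives the $\AC^0$ circuit and feeds it the oracle answer bit-by-bit stays in $\L$ by the standard composition lemma for logspace reductions. The subtlety is that the $\AC^0$ machine produces the (possibly exponentially long, but polynomially long) query, and each bit of the query must itself be $\AC^0$-computable from the input; this is handled by interleaving the query generation with the single-position tracking.

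Next, for $\L \subseteq \DBALL_{adj}$, I would use the configuration-graph characterization of logspace together with reversibility. By Theorem~\ref{Mck}, every language in $\L$ is decided by a reversible logspace machine, whose configuration graph is a union of simple paths; on a given input the computation traces one such path from the start configuration to either $q_y$ or $q_n$. The key idea is to encode configurations as positions of balls and to simulate each reversible transition by an adjacent transposition (or a short product of adjacent transpositions) applied to the ball line. Concretely, the set of reachable configurations has polynomial size, say $N = n^{O(1)}$; list them in some canonical order and put $N$ balls on a line, where the ``position of the distinguished ball'' encodes the current configuration. A reversible step that maps configuration $c$ to configuration $c'$ is realized by the swap moving the distinguished ball from slot $c$ to slot $c'$; since $c$ and $c'$ need not be adjacent in the ordering, I would decompose that swap into a sequence of adjacent transpositions (bubble the ball over), which is exactly what $\DBALL_{adj}$ allows. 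Because the machine is reversible, this swap sequence is itself a well-defined deterministic list --- no branching --- and the whole list of swaps, of length $\mathrm{poly}(n)$ times $O(N)$, is generated by an $\AC^0$ (even logspace-uniform) machine reading the transition function. The final position of the distinguished ball is then post-processed by the $\AC^0$ machine to read off accept/reject. One must be a little careful that the $\AC^0$ pre-processor, which is quite weak, can actually produce this swap list; the point is that each swap in the list depends on the input and on $O(\log n)$ indices in a way expressible by constant-depth circuits, since enumerating configurations and looking up single transitions of a fixed machine is an $\AC^0$ operation (a table lookup with polynomially many entries).

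The main obstacle I anticipate is the $\L \subseteq \DBALL_{adj}$ direction, specifically reconciling the weakness of $\AC^0$ pre-processing with the need to lay out a valid adjacent-swap list that faithfully simulates a reversible logspace computation. Two things need care: first, that the decomposition of an arbitrary swap into adjacent transpositions can be produced in $\AC^0$ (this is fine --- it is a fixed, input-independent shuffle pattern once the source and target slots are known); and second, that the ball-permuting oracle genuinely computes, not merely verifies, the trajectory --- but this is guaranteed precisely by reversibility, which ensures the ``one ball, one path'' picture so that applying the swaps in order deterministically tracks the unique computation path. The remaining containments and the appeal to $\Rev\L = \L$ are then routine.
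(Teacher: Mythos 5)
Your overall plan---the cycle $\DBALL \subseteq \L = \Rev\L \subseteq \DBALL_{adj} \subseteq \DBALL$---is the same architecture as the paper's, and your $\DBALL \subseteq \L$ argument (track a single label's position through the swap list with an $O(\log n)$-bit counter, compose with logspace pre/post-processing) matches the paper's exactly.

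There is, however, a real gap in the $\L \subseteq \DBALL_{adj}$ direction, centered on the sentence ``A reversible step that maps configuration $c$ to configuration $c'$ is realized by the swap moving the distinguished ball from slot $c$ to slot $c'$,'' and the phrase ``bubble the ball over.'' If the $\AC^0$ pre-processor is to emit the swap $(c,c')$ for the step actually taken from $c$, it must already know the trajectory $c_0, c_1, \ldots, c_T$---but computing that trajectory is exactly the $\L$-complete task we are trying to delegate to the oracle, and $\AC^0$ cannot do it. Your length estimate of $O(N)$ adjacent transpositions per time step, which fits ``bubble one ball over'' rather than ``decompose a whole permutation,'' and the ``one ball, one path'' justification both confirm this reading. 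The fix, and what the paper actually does, is for the $\AC^0$ machine to emit a transposition decomposition of the \emph{entire} one-step transition permutation $\sigma : [N] \to [N]$, moving all $N$ balls at once, and then repeat that fixed decomposition $T$ times. Applied obliviously, this swap list sends the ball that starts at position $1$ (the start configuration) to position $\sigma^T(1)$, the halting configuration, without the pre-processor ever needing to locate the ball. Note also that reversibility is invoked to make $\sigma$ \emph{injective} and hence a genuine permutation of $[N]$; it is not about ``no branching,'' which holds for any deterministic machine and is not the issue. Without injectivity, the one-step map would have merging configurations and could not be realized by a product of swaps at all. Once this is corrected, the remaining pieces---converting each general swap $(i,j)$ to the adjacent pattern $(i)(i+1)\cdots(j-1)\cdots(i+1)(i)$, $\DBALL_{adj}\subseteq\DBALL$ trivially, and $\Rev\L = \L$ from Theorem~\ref{Mck}---line up with the paper.
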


\begin{proof}
We first prove the direction $\DBALL=\DBALL_{adj}$. Clearly, $\DBALL_{adj}\subseteq \DBALL$ as a special case. Any nonadjacent swap $(i,j)$ for $i<j$ can be obtained by a sequence of adjacent transpositions $(i,j)= (i) \circ (i+1) \circ \ldots \circ (j-2) \circ (j-1) \circ (j-2) \circ \ldots \circ (i+1) \circ (i)$. Therefore, any sequence of $m$ nonadjacent swaps on $n$ labels can be simulated by $O(m.n)$ number of adjacent transpositions. Thereby $\DBALL \subseteq \DBALL_{adj}$.

In order to prove the direction $\Rev \L \subseteq \DBALL$ we observe from before that the evolution of a reversible computation is according to a configuration space wherein all configuration nodes have in-degree and out-degree at most $1$, and thereby the map which evolves the current configuration of a Turing machine to the next one is a bijection between configurations. A $\LOGSPACE$ (reversible) Turing Machine that uses $c. \log(n)$ space has a configuration space of size $c. n \log (n) n^c=n^{O(1)}$. Notice that such Turing Machine runs for at most polynomial amount of time before looping. Now given any $\LOGSPACE$ machine, consider a $\DBall$ oracle of size $N=n^{O(1)}$. Also without loss of generality we can assume that the Turing Machine on any input runs in a fixed time $T=n^{O(1)}$ for all of its inputs.  Given the description of the Turing machine and its input, we can encode the description of each configuration of the Turing machine with numbers $1, 2, \ldots, N$. Now each step of the computation corresponds to a permutation $:[N]\rightarrow [N]$, and each permutation can be decomposed into $N^{O(1)}$ pairwise permutations (swaps). Therefore an $AC^0$ machine encodes the evolution of the Turing Machine as sequence of swaps. And the evolution of the machine for $T$ steps as the list of swaps repeated for $T$ times. Let $1$ be the encoded initial state. The oracle then applies these swaps in order and in the end we look at the location of the symbol $1$ in the final permutation. If the final location corresponds to an accepting state the $\DBALL$ computer accepts, and otherwise rejects. Now from lemma ~\ref{Mck} we observe that $\Rev \L=\L$ and thereby the direction $\Rev \L=\L\subseteq \DBALL$ is derived.

In order to prove the direction $\DBALL\subseteq \L$, consider any $\DBall$ oracle queried with a list of swaps as the input. We can devise a $\LOGSPACE$ computation in which the final permutation is computed. Notice that the logarithmic space is unable to store the full description of the final permutation. Therefore, instead we design the computation in a way that the location of each symbol $s$ in the final permutation appears on the read/write tape. For this purpose, the machine just keeps track of the current location of $s$ on the read/write tape. If at some step a swap $(i,j)$ is queried, the machine updates the tape if and only if the location of $s$ is either $i$ or $j$. Consider this program as a subroutine. Given the list of swaps and a target permutation as the input, a $\LOGSPACE$ machine implements this subroutine for each symbol and compares the location of the symbol in the final permutation with the input and rejects if they do not match. Then the whole computation accepts if all the tests succeed.
\end {proof}

\begin{theorem}
$\L \subseteq \BPL \subseteq \RBALL=\RBALL_{adj}\subseteq \Almost \L\subseteq \BPP$. However, if we let $\RBALL (2)$ to the class where the $\AC$ machine is allowed to make two adaptive queries, then $\RBALL(2)=\Almost\L$.
\end{theorem}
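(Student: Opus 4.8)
The plan is to verify the chain $\L\subseteq\BPL\subseteq\RBALL=\RBALL_{adj}\subseteq\Almost\L\subseteq\BPP$ link by link and then treat $\RBALL(2)=\Almost\L$ separately. Two links are essentially free: $\L\subseteq\BPL$ because a deterministic machine is a probabilistic one that discards its coins, and $\Almost\L\subseteq\BPP$ because a $\BPP$ machine can simulate the logspace machine while answering each oracle query by drawing a fresh uniform bit the first time a point is queried and caching it (polynomially many queries, polynomial memory), thereby reproducing the bounded-error behaviour over a genuine random oracle. The content lies in the two ball-permuting inclusions and in the two-query lower bound.

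For $\BPL\subseteq\RBALL$ I would first invoke the Lange--McKenzie--Tapp theorem (Theorem~\ref{Mck}) to replace the $\BPL$ machine by a reversible one, and then normalise it by a routine slowdown so that it has a polynomially large configuration set $[N]$, runs for $T=n^{O(1)}$ steps, and every step is either a deterministic reversible step (a fixed permutation of $[N]$) or a coin-flip step whose two outcomes, regarded as permutations of $[N]$, differ by a single transposition --- in other words, a step that with probability $1/2$ applies one designated swap of two positions and otherwise does nothing. Then, exactly as in the proof that $\DBALL=\L$, an $\AC^0$ preprocessor can emit the whole evolution as a list of swaps for an $\RBall(N)$ oracle: each deterministic step expands into $N^{O(1)}$ swaps queried with probability $1$, and each coin-flip step is a single swap queried with probability $1/2$. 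The oracle returns the final permutation, from which the $\AC^0$ machine reads off the image of the initial configuration and accepts accordingly; the error probability is inherited from the $\BPL$ machine.

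For $\RBALL=\RBALL_{adj}$ only the inclusion $\RBALL\subseteq\RBALL_{adj}$ needs work, and it hinges on a conjugation identity: writing a general transposition as $(i,j)=\sigma^{-1}\tau\sigma$ with $\tau$ adjacent and $\sigma$ a product of adjacent transpositions, the composite ``apply the deterministic sequence $\sigma$, then apply $\tau$ with probability $p$, then apply $\sigma^{-1}$'' equals ``apply $(i,j)$ with probability $p$, else the identity'', since on the branch where $\tau$ is skipped one is left with $\sigma^{-1}\sigma=e$. Hence each general probabilistic swap becomes $O(n)$ probability-$1$ adjacent swaps together with one adjacent swap carrying the original probability; the expansion is uniform enough for an $\AC^0$ machine to produce and blows up the query length only polynomially. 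For $\RBALL\subseteq\Almost\L$ (and, in the same way, $\RBALL(2)\subseteq\Almost\L$): since $\AC^0\subseteq\L$, a logspace machine with a random oracle can recompute the $\AC^0$ query list, simulate the $\RBall$ oracle by using fresh oracle bits to decide each probabilistic swap (the probabilities are given to polynomially many bits, so this is exact) while tracking the position of one symbol at a time in logspace as in the proof that $\DBALL\subseteq\L$, and finally evaluate the $\AC^0$ output; a second adaptive query is handled identically, with the intermediate $\AC^0$ step folded into the logspace simulation. Together with $\Almost\L\subseteq\BPP$ this closes the displayed chain.

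The part I expect to be the real obstacle is $\Almost\L\subseteq\RBALL(2)$, where the two queries serve to \emph{decouple} randomness generation from the logspace computation --- which a single query cannot obviously do, since then the oracle supplies only read-once randomness and the $\AC^0$ postprocessor is far too weak to run a logspace machine on its own (this is precisely why whether one query suffices is left open). The plan: the first query emits a fixed Fisher--Yates-style list of probabilistic swaps producing a uniformly random permutation $\rho$ of $[N]$ for a large polynomial $N$; from $\rho$ one extracts a string $R$ of $n^{O(1)}$ bits statistically negligibly far from uniform (for instance the bits recording whether $\rho(2k-1)<\rho(2k)$). Knowing $R$ explicitly, the $\AC^0$ machine then writes down an all-probability-$1$ swap list for the second $\RBall$ query that simulates the reversible logspace machine on input $(x,R)$, once again as in the $\DBALL=\L$ construction, and reads the answer off the returned permutation; bounded error follows from the bounded error of the $\Almost\L$ machine together with the negligible statistical distance of $R$ from uniform. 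The two points needing care are the normalisation in $\BPL\subseteq\RBALL$ guaranteeing that every coin flip reduces to a single-transposition decision, and the reduction (for $\Almost\L\subseteq\RBALL(2)$) letting one take the random oracle to be supported on $O(\log n)$-bit strings --- equivalently a polynomial-length random string --- so that the polynomial-size object $R$ really does the job.
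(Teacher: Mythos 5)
Your proposal follows essentially the same route as the paper for every link of the chain, but two places deserve comment.

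For $\BPL\subseteq\RBALL$: the normalisation you describe, in which a coin-flip step ``applies one designated swap of two positions and otherwise does nothing,'' is not quite achievable, because at preprocessing time the $\AC^0$ machine does not know which configuration the simulation will occupy at that step, so there is no single ``designated'' pair. The paper's construction makes this precise by doubling the configuration set --- positions $2k-1$ and $2k$ encode configuration $C_k$ with random-bit cell $0$ and $1$ respectively --- and, at each coin-flip step, querying the \emph{entire family} of disjoint transpositions $(1,2),(3,4),\ldots$ each with probability $\tfrac12$. Since the simulation only tracks where a single label goes, only the one relevant swap among these ever matters, but all of them must be emitted. Your phrasing elides exactly this point.

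For $\Almost\L\subseteq\RBALL(2)$: the paper's first query is just $N$ probability-$\tfrac12$ transpositions on the disjoint pairs $(1,2),(3,4),\ldots,(2N-1,2N)$; since the swaps are applied independently on disjoint supports, the $N$ indicator bits of whether each one fired are \emph{exactly} i.i.d.\ uniform and can be read straight off the returned permutation. Your Fisher--Yates variant is over-engineered relative to this and also hides a nontrivial assertion: that an $\RBall$ query, i.e.\ a fixed list of swaps each fired independently with a fixed probability, can produce an exactly uniform random permutation of $[N]$. Classical Fisher--Yates picks a uniformly random partner for each index, which is not a single fixed-pair Bernoulli swap, and the independent-probability decomposition of it does not obviously yield the uniform distribution on $S_N$; you would need to supply a sorting-network-with-probabilities construction and verify it. Once that is granted your extraction does give exactly uniform bits (the within-pair orderings of a uniform permutation are i.i.d.\ $\mathrm{Bernoulli}(1/2)$, not merely statistically close), so your hedge about negligible statistical distance is unnecessary. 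Everything else --- the conjugation trick for $\RBALL=\RBALL_{adj}$, tracking one label at a time over the random oracle for $\RBALL\subseteq\Almost\L$, and folding the intermediate $\AC^0$ step into the logspace simulation for the two-query direction --- matches the paper.
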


\begin{proof}
In order to observe $\RBALL=\RBALL_{adj}$, we use the fact that any nonadjacent swap can be produced as application of polynomially many transpositions, thereby $\RBALL_{adj}$ simulates $\RBALL$ by simulating each swap with a sequence of adjacent swaps. More precisely, suppose that $\RBALL$ queries the swap $(i,j)$ with probability $p$. Then $\RBALL_{adj}$ computer first queries $(i) , (i+1)  ,\ldots , (j-2)$ each with probability $1$, then queries $(j-1)$ with probability $p$, and finally queries $ (j-2), \ldots, (i+1), (i)$ each with probability $1$. This proof works independent of the number of queries.

Now suppose that single queries are allowed. As the simulation of $\DBALL$ in $\L$ requires repeated use of a subroutine for each label over and over, any machine to simulate $\RBALL$ with this scheme needs consistent access to the random bits for each subroutine. Given this, we observe that the machines of class $\Almost\L$ provide such consistent access to these random bits. Therefore, an $\Almost\L$ machine runs $\RBALL$ in the $\L$ simulation for each label, and whenever a probabilistic swap is queried, the machine uses the random oracle to decide whether to make it or not. Notice that for this simulation a random oracle is required rather than an ephemeral stream of random bits. That is because the simulation needs to use the same random bits over and over. Suppose that the $\Almost\L$ machine requires $N=n^{O(1)}$ random bits in each $\L$ simulation. Then it picks a lexicographic convention (hardwired to the machine's transition function) on finite strings, and in order to obtain the $j \leq N$'th random bit $b_j$, it queries the $j$'th lexicographic string to the oracle and lets $b_j=1$ if the oracle accepts, and $0$ otherwise.

In order to see the direction $\BPL \subseteq \RBALL$, we amalgamate the configuration space with a single tape cell which is a random bit. Thereby we double the configuration space, by adding $C0$ and $C1$ for each configuration $C$ of the original $\L$ machine. Also we map all configurations of the form $C0$ to odd numbers $1, 3, 5, \ldots$ and all of those with the form $C1$ to even numbers $2, 4, 6,\ldots$. Thereby we just run $\L$ in the $\DBALL$ simulation and whenever the machine needs a random bit on the random bit cell the $\AC^0$ machine queries the swaps $(1,2), (3,4), \ldots$ each with probability $1/2$ to the $\RBall$ oracle.

Now suppose that $\RBALL$ computer is allowed to adaptively query the Ball-Permuting oracle twice. First of all, in this case also $\RBALL\subseteq \Almost\L$. In order to see this suppose that the simulation of the first query to $\RBall$ needs $N_1$ random bits and the second one requires $N_2$ bits. Then the $\Almost \L$ machine first queries the first $N_1$ strings to its random oracle, and when it is done with the first round of $\RBall$ queries, it uses the result to simulate the $AC^0$ machine to design the second query to the oracle. Notice that the $\log n$ read/write tape might not be sufficient to store the whole content of the second query, and thereby it suffices for the machine to just store one bit of the second query at a time and repeat the whole computation to obtain the next bit.

In order to see the $\Almost\L\subseteq \RBALL$ direction, suppose that an $\Almost\L$ machine queries $N$ distinct strings to its oracle. Then an $\RBALL$ computer first queries $N$ transpositions $(1,2), (3,4),\ldots,(2N-1,2N)$ each with probability $1/2$, and uses the result to design the second query to simulate the running of $\Almost\L$ on the queried strings.
\end{proof}

Next we wish to pin down the power of $\NBALL$. In the following proposition we prove that indeed $\NBALL$ can decide all problems in $\NP$. We prove this by a reduction from word problem for the product of permutations ($\WPPP$) which is known to be $\NP$ complete.

\begin{definition}
($\WPPP$) Given the set $\{1,2,3, \ldots, n\}=: [n]$, an ordered list of subsets $S_1, S_2, \ldots, S_m \subseteq [n]$ with $m=\Poly(n)$ and a target permutation $\tau$ on $[n]$, the problem is to decide whether there exist permutations $\pi_1, \pi_2, \ldots, \pi_m$, with each $\pi_j$ acting on the labels of $S_j$ only and identity on the others, such that the combination $\pi_1\circ \pi_2\circ \ldots\circ \pi_m=\tau$.
\end{definition}

\begin{theorem}
(Garey, Johnson, Miller, and Papadimitriou \cite{garey1980complexity}) $\WPPP$ is $\NP$ complete.
\end{theorem}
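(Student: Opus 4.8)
The easy direction is $\WPPP\in\NP$: a certificate is exactly the list of permutations $\pi_1,\pi_2,\ldots,\pi_m$ promised in the definition. Each $\pi_j$ permutes at most $n$ labels and is therefore describable in $O(n\log n)$ bits, so since $m=\Poly(n)$ the certificate has polynomial length. A deterministic verifier checks in polynomial time that each $\pi_j$ fixes every label outside $S_j$ and that the product $\pi_1\circ\pi_2\circ\cdots\circ\pi_m$, formed by $m-1$ successive compositions of $n$-element permutations, equals $\tau$. Thus $\WPPP\in\NP$, and everything reduces to proving $\NP$-hardness.

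\textbf{$\NP$-hardness: the reduction.} For hardness I would build a polynomial-time many-one reduction from $\SAT$ (it is $\NP$-complete by Cook--Levin, and $3$-$\SAT$ already suffices). Given a $3$-CNF $\phi$ with variables $x_1,\ldots,x_k$ and clauses $C_1,\ldots,C_l$, the idea is to produce a $\WPPP$ instance $\big([n],(S_1,\ldots,S_m),\tau\big)$ whose ground set is partitioned into a block of ``register'' labels --- a fixed small number of labels per variable --- and a block of ``gadget'' labels, one group per occurrence of a literal in a clause. The first $k$ subsets are the variable registers, so that the unconstrained choice of $\pi_1,\ldots,\pi_k$ on them encodes a truth assignment $a\in\{0,1\}^k$; for instance $\pi_i$ is the transposition of the two labels of register $i$ exactly when $a(x_i)=1$. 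Every later subset belongs to a clause gadget that shares labels with the registers of the variables occurring in that clause, and the gadgets are arranged so that the composite $\pi_1\circ\cdots\circ\pi_m$ can be steered onto one designated configuration $\tau$ --- chosen once and for all, independent of $a$ --- precisely when $a$ satisfies every clause. One then proves the two directions: from a satisfying $a$ one picks the register permutations accordingly and, occurrence by occurrence, the ``completing'' permutations in each clause gadget whose product is $\tau$; conversely, any factorization producing $\tau$ forces the registers to encode a satisfying assignment, because an unsatisfied clause makes $\tau$ unreachable no matter how the remaining (independently chosen) $\pi_j$'s are set. All parameters are $\Poly(k+l)$, so the map is polynomial time.

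\textbf{The main obstacle.} The genuine difficulty, and the reason the gadgets must be designed carefully, is that each $\pi_j$ is chosen independently and cannot ``read'' the earlier choices --- a permutation acts, it does not branch on a control bit. So a clause gadget cannot literally test a register; instead one threads a dedicated ``bus'' label through the gadgets, so that the net effect of a gadget depends only on which slot the bus currently occupies, and the pieces are arranged so that a false literal permanently displaces some label from the position $\tau$ requires unless another literal of the same clause restores it. Two invariants then need checking: that a variable occurring in several clauses cannot be read ``true'' in one gadget and ``false'' in another (enforced by chaining all occurrences of $x_i$ off the same register labels in a fixed order), and that no unintended choice of the gadget permutations can forge $\tau$ when the assignment is bad. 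Establishing these invariants is the bulk of the argument; together with membership in $\NP$ it gives that $\WPPP$ is $\NP$-complete.
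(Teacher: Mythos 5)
Your membership-in-$\NP$ argument is correct and routine, and matches what any proof of this theorem would do.

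For $\NP$-hardness, however, you take a genuinely different route from the paper. The paper follows Garey--Johnson--Miller--Papadimitriou and reduces from the \emph{vertex disjoint paths} problem ($\VDP$), which is essentially immediate because $\WPPP$ \emph{is} a routing problem in disguise: a choice of permutation on $S_j$ corresponds to a choice of how to switch labels among the positions in $S_j$, so a disjoint-paths instance translates almost verbatim into ``can the labels be routed so that the final arrangement is $\tau$.'' You instead propose to go directly from $3$-$\SAT$ with a register/gadget/bus construction. That is a legitimate strategy in principle, but as written it is only an outline: you never exhibit the clause gadgets, never specify the bus labels or the ``completing'' permutations, and you explicitly defer the two crucial invariants --- consistency of register reads across clauses, and soundness against forged factorizations --- to ``the bulk of the argument'' without establishing them. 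In the $\WPPP$ model these are exactly the hard parts, because the prover chooses the $\pi_j$'s adversarially and a permutation cannot condition on earlier choices, so it is far from evident that the bus mechanism can be made to work without further ideas (it is not a Barrington-style situation, since here the permutations are freely chosen rather than input-indexed). So while your high-level plan is plausible, the hardness direction contains a genuine gap: the reduction is asserted rather than constructed. The paper's route via $\VDP$ avoids this difficulty entirely, at the cost of invoking the $\NP$-completeness of disjoint paths as a black box.
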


\begin{proof} 
(Sketch) There is a polynomial time reduction from another $\NP$ complete problem, the ($\VDP$) vertex disjoint path problem: given an undirected graph with source and sink nodes on the boundary, decide if there are paths disjoint in vertex, from source nodes to the sink nodes.
\end{proof}

\begin {theorem}
$\NBALL=\NBALL_{adj}=\NP$
\end{theorem}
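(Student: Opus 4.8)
\medskip
\noindent\textbf{Proof plan.} I would prove the two equalities at once by establishing the cycle $\NP\subseteq\NBALL_{adj}\subseteq\NBALL\subseteq\NP$. The middle inclusion is immediate. For $\NBALL\subseteq\NP$, note that the decision problem answered by an $\NBall$ oracle is itself in $\NP$: given a target $\tau$, a swap list $(i_1,j_1),\dots,(i_m,j_m)$ and an instruction string $z_1\cdots z_m$, a witness is the vector of nondeterministic choices made at the positions $t$ with $z_t=1$, and the verifier simply composes the at most $m$ resulting transpositions in order and checks equality with $\tau$. Reading ``polynomial time reducible with a single query'' as a polynomial-time many-one reduction to this decision problem, and using that $\NP$ is closed under such reductions, we get $\NBALL\subseteq\NP$.

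The content is in the direction $\NP\subseteq\NBALL_{adj}$. Since $\WPPP$ is $\NP$-complete (Garey--Johnson--Miller--Papadimitriou), it is enough to give a polynomial-time many-one reduction from $\WPPP$ to the $\NBall$ oracle that queries only adjacent swaps. Given subsets $S_1,\dots,S_m\subseteq[n]$ and a target $\tau$, I would build for each $j$ a gadget $G_j$ --- a \emph{fixed} sequence of swaps, each marked nondeterministic ($z_t=1$) --- whose set of composite permutations reachable over all nondeterministic branches is exactly $S(S_j)$, the group of permutations supported on $S_j$. Concatenating $G_1,\dots,G_m$ in the order dictated by the oracle's composition convention (reversing the list, and/or querying $\tau^{-1}$, if the convention demands it) then yields a swap list whose reachable set is precisely $\{\pi_1\circ\cdots\circ\pi_m:\pi_j\in S(S_j)\}$; hence the oracle, queried with target $\tau$, accepts iff the $\WPPP$ instance is a yes-instance.

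To realize $G_j$ for $S_j=\{a_1<\cdots<a_k\}$ I would use a bubble-sort network on the ``wires'' $a_1,\dots,a_k$: the block of comparators $(a_1,a_2),(a_2,a_3),\dots,(a_{k-1},a_k)$ repeated $k-1$ times. Each comparator is a transposition inside $S_j$, so every reachable composite lies in $S(S_j)$; and since bubble sort sorts every input sequence, for every $\rho\in S(S_j)$ there is a choice of active comparators realizing $\rho$, so the reachable set is all of $S(S_j)$. For the adjacent version I must expand each (possibly non-adjacent) comparator $(a,b)$ with $a<b$ into adjacent transpositions: using $(a,b)=c\,\beta_{b-1}\,c^{-1}$ with $\beta_i=(i,i+1)$ and $c=\beta_a\beta_{a+1}\cdots\beta_{b-2}$, I implement ``optionally apply $(a,b)$'' as: apply $c$ deterministically ($z_t=0$), then apply $\beta_{b-1}$ nondeterministically ($z_t=1$), then apply $c^{-1}$ deterministically. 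The two branches compose to $c\beta_{b-1}c^{-1}=(a,b)$ and $cc^{-1}=e$, so this block has reachable set exactly $\{e,(a,b)\}$, the whole construction's reachable set is unchanged, and every queried swap is adjacent. All gadgets and their adjacent expansions have size $\Poly(n)$ and there are $m=\Poly(n)$ of them, so the reduction runs in polynomial time.

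The only delicate points are bookkeeping: fixing the concatenation order so the gadgets compose to $\pi_1\circ\cdots\circ\pi_m$ under the oracle's actual left-to-right (and position-versus-label) convention, and checking that the deterministic conjugating transpositions in the adjacent construction do not enlarge the reachable set of any block beyond $\{e,(a,b)\}$. Both reduce to the elementary observation that the reachable set of a concatenation of ``optional transposition'' blocks is exactly the set of ordered subset-products of the underlying transpositions; with that in hand, both equalities follow.
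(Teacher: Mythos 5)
Your proposal is correct and rests on the same key idea as the paper's: a polynomial-time reduction from $\WPPP$ to the $\NBall$ oracle establishes $\NP\subseteq\NBALL_{adj}$, and the remaining inclusions $\NBALL_{adj}\subseteq\NBALL\subseteq\NP$ are routine. Your write-up is, however, substantially more complete than the paper's proof, which is a two-sentence sketch. The paper only describes the reduction from $\WPPP$: for each $S_j=\{i_1,\dots,i_k\}$ it lists all $\binom{k}{2}$ swaps on $S_j$ in some order and repeats that list $k$ times (total $O(k^3)$ swaps), relying on the fact that every element of $S(S_j)$ is a product of at most $k-1$ transpositions, one drawable from each copy. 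Your bubble-sort network is a functionally equivalent gadget --- in both cases the reachable set of the block is exactly $S(S_j)$ because all queried swaps lie inside $S_j$ and the block contains a sorting network --- so this is a variation in presentation, not in approach. What you add that the paper's proof omits entirely: (i) the explicit argument that $\NBALL\subseteq\NP$, which does require reading ``single query'' as a many-one reduction (as you note, anything stronger would smuggle in $\mathsf{coNP}$ and the claimed equality would be suspect); and (ii) the treatment of the adjacent case. Your conjugation trick --- using the deterministic instruction bits $z_t=0$ to apply $c$ and $c^{-1}$ around a single nondeterministic $\beta_{b-1}$ so that the block's reachable set is exactly $\{e,(a,b)\}$ --- is precisely the right idea and fills a real gap in the paper's argument, which simply asserts $\NBALL=\NBALL_{adj}$ without justification in the proof. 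One caution on your formula $(a,b)=c\,\beta_{b-1}\,c^{-1}$ with $c=\beta_a\cdots\beta_{b-2}$: whether this identity holds as written depends on which end of the product the oracle applies first; under the oracle's left-to-right application you may need to emit $c$ and $c^{-1}$ in the reverse listing. You flag this correctly as bookkeeping, and it is; just be careful that the final swap list composes to $(a,b)$ and not to $(b-2,b)$.
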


\begin {proof}
We introduce a polynomial time reduction from the word problem of permutations $\WPPP$ which is known to be $\NP$ complete. Any instance of $\WPPP$ is given by an ordered list of subsets of $[n]$, thereby for each subset $S =\{i_1, i_2, \ldots, i\}$ we add $ O(k^3)$ swaps. We need to choose the list in such a way that for each permutation on $S$, there is a nondeterministic choice swaps which produces the permutation. Each permutation on $k$ elements can be produced by $O(k)$ swaps. Therefore we list all the swaps on elements of $S$ in some list $L$ and repeat the list for $k$ times. 
\end{proof}

\begin{definition}
(The edge disjoint path problem $\EDP$) Given a directed graph $G$, with source and sink nodes $(s_1, t_1), (s_2, t_2),\ldots,(s_m, t_m),$, decide if there are paths from $s_i$ to $t_i$ for all $i \in [m]$, such that all the paths are disjoint in edge.
\end{definition}

\begin{theorem}
(Wagner and Weihe\cite{wagner1995linear}) $\EDP$ for the case of planar graphs is decidable in linear time.
\label{ww}
\end{theorem}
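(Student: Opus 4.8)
The statement is the linear-time routing theorem of Wagner and Weihe \cite{wagner1995linear}; here is the route I would take to establish it. First I would normalize the instance. Since the paths must be edge-disjoint and $G$ is planar, one adds the demand edges $(s_i,t_i)$ and assumes (this is exactly the case handled by Wagner and Weihe) that the resulting multigraph is still planar; a prototypical instance is when all $2m$ terminals lie on the boundary of the outer face. Compute a combinatorial planar embedding (rotation system / doubly-connected edge list) in linear time and fix it. The algorithm must output either a certificate of infeasibility or the actual edge-disjoint paths, in total time $O(|V|+|E|)$.

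The combinatorial heart is a feasibility criterion. For planar graphs with all terminals on one face there is an Okamura--Seymour-type cut condition: a routing exists provided that for every cut $\delta(X)$ the number of demand pairs separated by $X$ is at most $|\delta(X)|$, together with an Eulerian (parity) normalization at each vertex. I would first reduce to the Eulerian case by the standard splitting/parity argument, or carry the parity bookkeeping along the run. Checking all cuts explicitly is too slow for a linear bound, so rather than verify the criterion I would make the algorithm directly constructive and let a failed search serve as the infeasibility witness.

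The algorithm is a \emph{right-first search}. List the terminals in the cyclic order in which they appear around the outer face and route the demands one at a time in that order; when tracing a path out of a vertex, always leave along the first still-unused edge encountered when rotating clockwise from the edge just entered (the ``rightmost'' turn), deleting each edge as it is used and splicing it out of the incidence lists. The key lemma, proved by a planar uncrossing/exchange argument, is that this leftmost routing of the current demand never destroys feasibility of the remaining demands: any routing of the rest can be pushed around the leftmost path without introducing crossings or exceeding capacities, precisely because the terminals sit on the outer face so the demand pattern is itself planar. Infeasibility is reported exactly when some search fails to reach its target. For the running time, the crucial point is that the rotation system lets one locate the next candidate edge at a vertex in constant time, and each directed edge is examined $O(1)$ times over the whole execution since used edges are removed; summing gives $O(|V|+|E|)$.

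The main obstacle is the correctness of the greedy, i.e. the ``leftmost routing is safe'' lemma: one must show that committing to rightmost turns cannot create an obstruction, which needs a careful planar uncrossing argument (swapping path segments at crossings, invoking planarity of $G$ plus the demand edges) together with the right invariant to maintain across the whole sequence of demands, including the parity normalization. The secondary technical obstacle is the linear-time bookkeeping: maintaining the incidence structure under edge deletions and guaranteeing $O(1)$-amortized edge scans even on infeasible instances, where a naive implementation with backtracking could spend superlinear time before detecting failure.
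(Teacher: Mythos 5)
The paper does not prove this theorem: it is cited as a black box from Wagner and Weihe \cite{wagner1995linear} and used only as a subroutine in the proof of Theorem~\ref{nballstar}, so there is no internal proof to compare your sketch against. That said, your reconstruction does identify the main ingredients of the Wagner--Weihe algorithm: the setting with all terminals on one face (so that $G$ together with the demand edges stays planar), the Okamura--Seymour cut/parity feasibility criterion, the greedy right-first search traced along a fixed rotation system with used edges spliced out, and the amortized constant-work-per-edge accounting that gives the linear bound. Two points to tighten. First, the theorem as the paper states it is a little loose: $\EDP$ on \emph{general} planar graphs is NP-hard; the linear-time result holds specifically for the one-face (Okamura--Seymour) case, which you correctly assume and which is exactly what the construction inside Theorem~\ref{nballstar} produces, so the paper's application is sound even though the statement elides the hypothesis. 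Second, your sketch calls the search rule a ``rightmost turn'' but then names the key lemma ``leftmost routing is safe''; these are the same uncrossing idea under opposite orientation conventions, but a careful writeup should commit to one and carry it consistently through the exchange argument.
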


\begin{theorem}
$\NBALL^\star_{adj}\subseteq\P$
\label{nballstar}
\end{theorem}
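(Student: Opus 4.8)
\emph{Proof proposal.} The plan is to show that the oracle underlying the class becomes polynomial-time computable once we restrict to adjacent swaps and the all-ones instruction string; the containment in $\P$ then follows, because a language in $\NBALL^\star_{adj}$ is by definition computed by a polynomial-time reduction followed by one query to an $\NBall$ oracle of exactly this restricted form. So fix such a query: adjacent transpositions $b_{i_1},\dots,b_{i_m}$ (with $b_{i_t}$ swapping positions $i_t$ and $i_t+1$), a target $\tau\in S_n$, and the all-ones instruction. I would reduce the oracle's question --- is there a subset $S\subseteq[m]$ such that applying, in the order $t=1,\dots,m$, the swap $b_{i_t}$ exactly for $t\in S$ (and nothing for $t\notin S$) to $(1,2,\dots,n)$ yields $\tau$? --- to $\EDP$ on a planar graph, and then invoke Theorem~\ref{ww}.

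\emph{The graph I would build.} Picture the ball world-lines in a (time)$\times$(position) grid. Take vertices $v_{t,p}$ for $t\in\{0,\dots,m\}$, $p\in[n]$, drawn at the grid point $(t,p)$, plus one ``junction'' vertex $c_t$ for each step $t\in[m]$, drawn inside the grid cell bounded by times $t-1,t$ and positions $i_t,i_t+1$. For each step $t$ add the directed ``straight'' edge $v_{t-1,p}\to v_{t,p}$ for every $p\notin\{i_t,i_t+1\}$, together with the four edges $v_{t-1,i_t}\to c_t$, $v_{t-1,i_t+1}\to c_t$, $c_t\to v_{t,i_t}$, $c_t\to v_{t,i_t+1}$. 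Since each $c_t$ lives in its own cell and the straight edges are horizontal, this drawing has no crossings, so the graph is planar, and the vertices $v_{0,\cdot}$, $v_{m,\cdot}$ all lie on the outer face. The $\EDP$ instance uses the $n$ source--sink pairs $\bigl(v_{0,j},\,v_{m,\tau^{-1}(j)}\bigr)$, $j\in[n]$: ball $j$ must travel from its starting position to the position it must occupy in $\tau$.

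\emph{Why the reduction is correct.} Apart from sources and sinks, every vertex has a forced passage: each interior $v_{t,p}$ ($0<t<m$) has in-degree and out-degree exactly $1$, and each $c_t$ has in-degree and out-degree exactly $2$. Hence any family of edge-disjoint paths is automatically vertex-disjoint along every column $\{v_{t,\cdot}\}$, and so describes trajectories $p_j(t)$ with $p_j(0)=j$, $p_j(m)=\tau^{-1}(j)$, and $(p_1(t),\dots,p_n(t))$ a permutation of $[n]$ for every $t$. At step $t$ the two balls occupying positions $i_t,i_t+1$ are precisely those whose paths are forced through $c_t$, and edge-disjointness at $c_t$ leaves exactly two outcomes --- they keep their sides (no swap) or exchange sides (swap) --- while every other ball goes straight. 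Thus edge-disjoint path systems correspond bijectively to subsets $S$, and such a system reaches the prescribed sinks if and only if the corresponding $S$ realizes $\tau$. So $\tau$ is reachable if and only if the $\EDP$ instance is feasible; by Theorem~\ref{ww} this is decidable in linear (hence polynomial) time, and composing with the polynomial-time reduction that defined the language gives $\NBALL^\star_{adj}\subseteq\P$.

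\emph{Expected difficulty.} The delicate part will be the degree bookkeeping that keeps the encoding faithful: using a single merge-vertex $c_t$ --- rather than two crossing diagonal ``swap'' edges, which would destroy planarity --- is what forces the ``all or nothing'' behaviour of an adjacent swap, and one has to check carefully that in/out-degree $1$ at every ordinary vertex is exactly what upgrades edge-disjointness to the vertex-disjointness needed for $(p_1(t),\dots,p_n(t))$ to stay a genuine permutation, so that a path system really does correspond to a run of the model. A secondary point is confirming that the planar-$\EDP$ algorithm of Theorem~\ref{ww} applies to instances of this shape, which is immediate here since all terminals lie on the outer face. (One could instead avoid Theorem~\ref{ww} altogether via the subword property of the Bruhat order --- a permutation is reachable if and only if it lies below the Demazure product of the query word in Bruhat order, which is polynomial-time checkable --- but the planar routing argument is the one that matches the machinery assembled just above.)
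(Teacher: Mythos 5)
Your proof is correct and follows essentially the same route as the paper's: reduce the all-ones adjacent-swap reachability question to edge-disjoint paths in a directed planar grid-with-merge-vertices graph and invoke Wagner--Weihe (Theorem~\ref{ww}). Your version is somewhat more careful than the paper's in spelling out why every $c_t$ carries exactly two paths and why edge-disjointness at degree-$(1,1)$ vertices forces genuine trajectories, but the underlying construction and reduction are the same.
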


\begin{proof}
This can be done by a reduction to the problem of edge disjoint path for directed planar graphs, which is contained in linear time. Given the list of $m$ swaps as input to an $\NBall$ oracle with $n$ balls and a target permutation, we construct a directed planar graph with $m$ nodes, $n$ source edges (nodes) and $n$ sink edges (nodes), according to the following: Initially add $n$ source nodes with $n$ outgoing edges. Number these edges and nodes with $1,2,3, \ldots, n$. For each transposition $(i)$, merge the edge $i$ and $i+1$ in a vertex with two outgoing edges, update the numbering of the edges accordingly, \i.e. name one of the edges to be $i$ and another $i+1$. Continue this for all transpositions. In the end add $n$ sink nodes each taking one of the edges as an input edge, and number them according to the target permutation. Clearly, the resulting graph is planar. This can be seen by induction on the steps of the construction algorithm. Initially the graph is planar. At each step we only merge two adjacent edges and the graph in the next step remains planar. We need to prove that there is a nondeterministic computation of target permutation in the $\NBall$ oracle if and only if there is an edge disjoint path between the source nodes and sink nodes. Suppose that there is an edge disjoint path in the graph, we construct the list of swaps that create the target permutation. Sort the vertices in an ascending order by the distance from the source nodes. Each vertex is mapped to a transposition in the list with the ascending order, \i.e., if some vertex inputs the edges $i$ and $i+1$, then the corresponding transposition is $i$. Two paths are incident to each vertex. Suppose that the input edge of the first path takes the label $i$ and the second one $i+1$. If the output edge of the first path is $i+1$, we include the swap $(i)$ in the instruction list, and otherwise we don't. Each path thereby maps the initial label of its corresponding source node to the desired permuted label in the sink node. Now if there is a nondeterministic computation of the target permutation, then there is a way of choosing the transpositions to construct the target permutation. We then construct the resulting planar graph according to the algorithm and the edge disjoint paths by the following: for each source node pick the first outgoing edge. For each edge in this path then its endpoint corresponds to a vertex which corresponds to a swap in the list. If the edge is labeled by $i$ choose the edge $i+1$ as the next edge if the corresponding swap is active, and otherwise pick the edge $i$.
\end{proof}

\begin{corollary}
The problem of deciding if the probability of a target permutation in $\BALL$ is nonzero is $\NP$ complete; but if the queried swaps are all adjacent ones, then the problem has a polynomial time algorithm.
\end{corollary}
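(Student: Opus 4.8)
The plan is to derive both halves of the statement directly from the two $\NBALL$ theorems proved just above, the only real work being to match the ``nonzero probability'' formulation of $\BALL$ with the all-ones instruction-string regime of the $\NBall$ oracle. First I would observe that, as far as the \emph{support} of the output distribution is concerned, a probabilistic swap of probability $p_t\in(0,1)$ is indistinguishable from a nondeterministic swap (instruction bit $1$): both ``apply'' and ``skip'' are possible outcomes. Swaps with $p_t\in\{0,1\}$ are degenerate and can be deleted or hard-wired in a trivial preprocessing step. Hence the language ``given $\pi$, a list of swaps, and a list of probabilities, decide whether $\pi$ occurs with nonzero probability'' is, up to this reduction, exactly the $\NBall$-oracle problem on the all-ones instruction string, and the two claims of the corollary become special cases of the facts $\NBALL=\NP$ and $\NBALL^\star_{adj}\subseteq\P$.

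For the general (arbitrary-swap) case I would argue $\NP$-completeness in the usual two steps. Membership in $\NP$ was already noted in Section~\ref{classicalball}: a witness is the set $S\subseteq[m]$ of swaps actually performed, and applying exactly those swaps in order to $(1,2,\dots,n)$ and comparing the result with $\pi$ is a polynomial-time check. For hardness I would reuse the reduction in the proof that $\NBALL=\NBALL_{adj}=\NP$: it sends a $\WPPP$ instance to a single $\NBall$ query in which every swap is nondeterministic (all-ones instruction string) and the swaps are general transpositions of elements of the subsets $S_j$; assigning probability $\tfrac12$ to each such swap turns that query into a $\BALL$ instance, yielding a polynomial-time many-one reduction from the $\NP$-complete problem $\WPPP$ to the $\BALL$ support problem. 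Combining the two steps gives $\NP$-completeness.

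For the adjacent-swap restriction the claim is immediate from Theorem~\ref{nballstar}: a $\BALL$ support instance all of whose swaps are adjacent transpositions is precisely a query in the class $\NBALL^\star_{adj}$ (all swaps adjacent, instruction string all ones), and $\NBALL^\star_{adj}\subseteq\P$. Concretely one runs the construction in that proof --- build the planar directed graph whose vertices correspond to the transpositions and whose source/sink labels encode the target permutation, then test for edge-disjoint source-to-sink paths using Theorem~\ref{ww}.

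I do not foresee a genuine obstacle; the only subtle point worth double-checking is that the nonadjacent-to-adjacent conversion used to prove $\NBALL=\NBALL_{adj}$ does \emph{not} preserve the all-ones instruction string --- it inserts deterministic swaps around a single nondeterministic one --- which is exactly why $\NBALL_{adj}=\NP$ coexists with $\NBALL^\star_{adj}\subseteq\P$. Consequently the $\NP$-hardness half must be run with genuinely general swaps rather than routed through the adjacent model, and one should verify that the $\WPPP$ reduction above indeed stays in the all-probabilistic (equivalently, all-ones-instruction) regime so that it lands inside the ``nonzero probability'' problem rather than requiring forced swaps.
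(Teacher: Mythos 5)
Your proposal follows essentially the same route the paper takes: the paper's entire proof of this corollary is the single sentence ``There are linear time reductions in both ways between $\NBALL$ and $\BALL$. These are the same problems in two disguises,'' and your write-up is simply a fleshed-out version of that reduction, deriving both halves from $\NBALL=\NP$ and $\NBALL^\star_{adj}\subseteq\P$. Your observation in the final paragraph --- that the nonadjacent-to-adjacent conversion used to prove $\NBALL=\NBALL_{adj}$ inserts deterministic (instruction-bit-$0$) swaps and hence does not land in $\NBALL^\star_{adj}$, so the $\NP$-hardness argument must use genuinely general swaps --- is exactly the right subtlety to flag, and is more care than the paper itself gives.

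However, there is one step you call ``trivial'' that is actually the crux of the whole corollary, and it does not go through as stated. You claim that swaps with $p_t\in\{0,1\}$ ``can be deleted or hard-wired in a trivial preprocessing step,'' so that the adjacent $\BALL$-support problem becomes precisely an $\NBALL^\star_{adj}$ instance. Deletion handles $p_t=0$, but hard-wiring a forced ($p_t=1$) swap does \emph{not} preserve adjacency: absorbing a forced transposition $(k,k+1)$ into the sequence amounts to conjugating all downstream transpositions by it, and $(k+1,k+2)(k,k+1)(k+1,k+2)=(k,k+2)$ is nonadjacent. So an adjacent $\BALL$ instance with some $p_t=1$ maps to an $\NBALL_{adj}$ instance with $0$'s in the instruction string, not an $\NBALL^\star_{adj}$ instance. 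This is not a cosmetic point: the paper's own proof that $\NBALL_{adj}=\NP$ produces exactly such adjacent-with-forced-swaps instances (a single nondeterministic transposition sandwiched between deterministic ones), and those are $\NP$-hard. Consequently the adjacent half of the corollary is only supported for instances with all $p_t\in(0,1)$ (equivalently, instances that genuinely land in $\NBALL^\star_{adj}$); if you want the claim for $p_t\in[0,1]$ you would need to extend the planar $\EDP$ construction of Theorem~\ref{nballstar} to accommodate forced crossings directly rather than preprocessing them away, which the paper does not do either. You should either add the hypothesis $p_t\in(0,1)$ or supply that extension; the ``trivial preprocessing'' framing hides the problem rather than solving it.
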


\begin{proof}
There are linear time reductions in both ways between $\NBALL$ and $\BALL$. These are the same problems in two disguises.
\end{proof}

Or in other words: Given a list of swaps and a target permutation, it is in general $\NP-Hard$ to decide if there is a way of constructing the target permutation out of the given swaps. If all the swaps are transpositions, the problem is decidable in linear time.

\begin{corollary}
The edge disjoint path problem in the non-planar case is $\NP$ complete.
\end{corollary}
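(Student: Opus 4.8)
The plan is to prove membership in $\NP$ and $\NP$-hardness separately, the latter by reusing the graph construction from the proof of Theorem~\ref{nballstar} with its planarity hypothesis dropped. Membership is routine: given an instance of $\EDP$ on a directed graph $G$ with pairs $(s_1,t_1),\ldots,(s_m,t_m)$, a witness is a list of $m$ directed paths, one from each $s_i$ to $t_i$, and verifying that each is a legal path with the prescribed endpoints and that they are pairwise edge-disjoint takes time polynomial in $|G|$; hence $\EDP\in\NP$.

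For hardness I would revisit the reduction in the proof of Theorem~\ref{nballstar}. There, from a list of \emph{adjacent} swaps and a target permutation $\tau$, one builds a layered directed graph whose systems of edge-disjoint source-to-sink paths correspond exactly to the accepting branches of the $\NBall$ oracle; planarity of that graph is invoked only at the final step, to apply Theorem~\ref{ww}. Running the same construction on an \emph{arbitrary} list of swaps $(i_1,j_1),\ldots,(i_m,j_m)$---each swap merging the two current dangling edges labelled $i_t$ and $j_t$ into a new vertex with two outgoing edges again labelled $i_t,j_t$, then attaching $n$ sink nodes according to $\tau$---produces a directed graph $G$ that is in general non-planar, while the combinatorial correspondence is unaffected: an $n$-tuple of edge-disjoint directed paths from the $n$ sources to the $n$ sinks is the same data as a choice, at each swap-vertex, of whether the two paths through it cross over or go straight, and the induced permutation of labels (the product of the transpositions at the ``crossed'' vertices) equals $\tau$ precisely when some nondeterministic branch of $\NBall$ outputs $\tau$. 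Because $\NBALL=\NP$---equivalently, deciding whether a target permutation can be produced from a list of general swaps is $\NP$-complete---this is a polynomial-time reduction from an $\NP$-complete problem to $\EDP$ without the planarity restriction, so $\EDP$ is $\NP$-hard, and with the previous paragraph $\NP$-complete.

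The step I expect to require the most care is re-certifying the path-to-branch correspondence in the absence of planarity: one must check that in any family of $n$ edge-disjoint source-to-sink paths, every dangling edge at every layer carries exactly one path---which follows by induction on the layers from the fact that each swap-vertex has in-degree $2$ and out-degree $2$---so that the label map induced by such a family is a genuine permutation, namely the product of the swaps that were applied, and that this matches $\tau$ exactly on the branches on which the $\NBall$ oracle accepts. Alternatively, one could shortcut hardness by citing the known $\NP$-completeness of directed edge-disjoint paths with a fixed small number of terminal pairs, but the construction above keeps the argument self-contained within the machinery already developed.
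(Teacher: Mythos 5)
Your proposal is correct and takes essentially the same approach as the paper: the paper likewise reduces from $\BALL$ (equivalently $\NBALL$) to $\EDP$ by running the graph construction from Theorem~\ref{nballstar} on general (non-adjacent) swaps, noting that the resulting graph is non-planar and that the path-to-branch correspondence survives once planarity is no longer needed to invoke Theorem~\ref{ww}. Your added check that every dangling edge carries exactly one path in any family of $n$ edge-disjoint source-to-sink paths, via the $2$-in/$2$-out structure of the layered DAG, is a useful elaboration the paper leaves implicit.
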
 

\begin{proof}
There is a reduction from $\BALL$ to the edge disjoint path. The reduction is similar to the one given in the proof of lemma ~\ref{nballstar}. The graph instance of $\EDP$ is non-planar if and only if the swaps of $\BALL$ are adjacent. In short we have:

$$
\VDP \leq_\P \WPPP \leq_\P \BALL \leq_\P  \EDP
$$

That means that a polynomial time algorithm for the general $\EDP$ implies $\P=\NP$.
\end{proof}

\begin{theorem}
There is a polynomial time reduction from $\#\BALL$ to the problem of computing the probabilities of $\BALL$.
\end{theorem}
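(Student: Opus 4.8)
The key observation is that the nondeterministic branching structure underlying a $\#\Ball$ query is \emph{exactly} the randomized process of $\BALL$ once we install probability $\tfrac12$ on the positions flagged by the instruction string and probability $1$ on the rest. So a single call to an oracle for problem 3 of $\BALL$ (exact output probability) recovers the desired count up to an explicit, polynomially-sized scaling factor.

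\textbf{Setup.} Fix a $\#\Ball$ instance: balls $[n]$, a list of swaps $(i_1,j_1),\dots,(i_m,j_m)$, a target permutation $\tau$, and an instruction string $z_1\cdots z_m$. Put $A=\{t\in[m]:z_t=1\}$ and $k=|A|$. For $S\subseteq A$, let $\pi_S$ denote the permutation obtained by applying, for $t=1,\dots,m$ in order, the swap $(i_t,j_t)$ whenever $t\in S$ or $t\notin A$, and doing nothing otherwise. A branch of the $\NBall$ process is precisely a choice of such $S$, so the value returned by the $\#\Ball$ oracle (which is $0$ exactly when $\NBall$ rejects, i.e.\ when no branch hits $\tau$) is
$$
N \;=\; \#\bigl\{\, S\subseteq A \;:\; \pi_S=\tau \,\bigr\}.
$$

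\textbf{The reduction.} Given the $\#\Ball$ instance, build the $\BALL$ instance with the same balls, the same swap list, the same target $\tau$, and probabilities $p_t=\tfrac12$ for $t\in A$ and $p_t=1$ for $t\notin A$. In the resulting randomized process every step outside $A$ is performed with certainty, and the set of steps in $A$ whose swap is actually performed is a uniformly random subset $S\subseteq A$; hence
$$
\Pr[\text{output}=\tau] \;=\; \sum_{\substack{S\subseteq A\\ \pi_S=\tau}} 2^{-k} \;=\; N\,2^{-k}.
$$
Querying the $\BALL$-probability oracle on this instance returns the dyadic rational $N/2^k$ exactly, and multiplying by $2^{k}$ (arithmetic on a $(k{+}1)$-bit integer, with $k\le m$, hence polynomial time) yields $N$. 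To handle an arbitrary $f\in\#\BALL$, simulate its defining polynomial-time machine and, at the single step where it queries $\#\Ball$, answer with $2^{k}$ times the $\BALL$-probability oracle's reply; everything else in the computation is unchanged. This is a polynomial-time reduction (in fact a single query with an $O(m)$-time post-processing multiplication).

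\textbf{Main obstacle.} There is no serious obstacle; the content is entirely in checking that the model translation is faithful. One must verify that both processes start from the identity configuration and apply the swaps in the same order, that the $p_t=1$ steps exactly reproduce the deterministic instruction bits, and that $\tfrac12$-randomization over the active steps is a measure-preserving copy of the two-way nondeterministic branching (so each branch carries weight $2^{-k}$). The only edge case is when $\NBall$ rejects: then $N=0$ and the probability is likewise $0$, so the $2^k$-scaling does no harm. This also lines up with the $\#\P$ bound for these problems noted in Appendix~\ref{sharpp}: computing $\BALL$ output probabilities exactly is at least as hard as the corresponding $\#\Ball$ counting task.
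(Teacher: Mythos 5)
Your proposal is correct, and it takes a genuinely cleaner route than the paper's. The paper first transforms the $\#\Ball$ instance into an equivalent one in which every instruction bit equals~$1$: it scans for maximal blocks of $0$-bits, composes those deterministic swaps into a fixed permutation $\pi$, and then conjugates the subsequent swaps by $\pi$ so that the deterministic block can be dropped; only after this normalization does it set all probabilities to $\tfrac12$ and read off $N = 2^{k}\,p_\tau$. You instead leave the swap list entirely untouched and just install probability $1$ on the $0$-bit positions and $\tfrac12$ on the $1$-bit positions. Because a step with $p_t=1$ contributes a factor $1-p_t=0$ to any term omitting that step, the sum defining $p_\tau$ collapses automatically to exactly the $\NBall$ branches, each weighted $2^{-k}$, with no relabeling or conjugation needed. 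This sidesteps the bookkeeping of the paper's conjugation (which in particular has to track how the residual $\pi$ interacts with the target permutation as the zero blocks are iteratively removed) and makes the scaling $N=2^{k}p_\tau$ immediate. Both arguments yield the same final formula, and both are single-query polynomial-time reductions; yours simply exploits the fact that the $\BALL$-probability oracle already accepts arbitrary (in particular, degenerate) probabilities, so the problem never needs to be forced into the uniform-$\tfrac12$ form.
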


\begin{proof}
Given an instance of $\#\BALL$, if all the bits of the instruction string are ones, output a version of $\BALL$ with the same order of swaps, target permutation $\tau$ and with probabilities all equal to $1/2$. Otherwise, produce an equivalent instance of $\#\BALL$ where all the instruction string bits are ones, and follow the last step. In order to do this, consider a portion of the instruction bit with a sequence of ones followed by zeros and then by a sequence of ones $11\ldots 100\ldots 011\ldots 1$, corresponds to a sequence of swaps $(i_1,j_1)\ldots (i_m,j_m)$ for the first collection of ones, $(i'_1,j'_1)\ldots (i'_{m'},j'_{m'})$ for zeros, and $(i''_1,j''_1)\ldots (i''_{m''},j''_{m''})$ for the second collection of ones. The composition of $(i'_1,j'_1)\circ\ldots \circ(i'_{m'},j'_{m'})$ gives a pure permutation $\pi$. Then there is an equivalent list of swaps with all ones instruction bits according to $(i_1,j_1)\ldots (i_m,j_m) (\pi(i''_1),\pi(j''_1))\ldots (\pi(i''_{m''}),\pi(j''_{m''}))$. We can then iterate over the list until all the zeros are removed. Clearly, the desired number is $2^m p_{\tau}$, where $m$ is the number of (nonzero) instruction bits, and $p_\tau$ is the probability of the target permutation, in the $\BALL$ instance.
\end{proof}

\section{Open Problems}

\begin{itemize}

\item[] 1. We proved that $\BPL \subseteq \RBALL \subseteq \Almost \L \subseteq \BPP$. The class $\Almost \L$ is not known to be contained in $\P$. It remains open to see if $\RBALL \subseteq \P$.

\item[] 2. We proved that with two adaptive queries to the $\RBall$ oracle, $\RBALL = \Almost \L$. Can $\Almost \L$ be simulated with $\RBALL$ using one query?

\item[] 3. What is the complexity of $\#\BALL$ and $\#\BALL^\star_{adj}$ and exact computation of probabilities in $\BALL$? Since $\NBALL=\NP$ there is no approximation within multiplicative error for $\#\BALL$, unless $\P= \NP$. Is this the case that $\# \BALL = \#\P$? Although deciding if some target permutation has nonzero probability is decidable in polynomial time, the counting version still seems like a hard task. 

\item[] 4. We defined the ball permuting oracles with all distinguishable balls. What is the complexity class if the balls are labeled with $0$ and $1$ labels only? Clearly, the original oracles can simulate the binary balls, and in the case of $\DBALL$ there are reductions in both ways. However, it is not clear if we recover $\RBALL$ and $\NBALL$ in the case of binary balls.

\item[] 5. Is there a $\P$ simulation for the case where the probabilities are given by the Yang-Baxter equation? If we pin down this class, how can we program the velocities of the balls to decide problems?

\item[] 6. The problem of deciding if a target permutation has nonzero probability in $\RBALL^\star_{adj}$ is in $\P$, however the same problem for $\RBALL$ is $\NP$ complete. Is $\RBALL^\star_{adj}$ itself a weaker class than $\RBALL$?
\end{itemize}

\chapter{Computational Complexity of Particle Scattering and Ball Permuting Models}
\label{ch5}

The results of this chapter has been obtained in joint collaboration with Greg Kuperberg. 

Using tools from pervious chapters, the computational complexity of the scattering processes of two dimensional integrable models is examined. Quantum ball permuting model is defined as a model of quantum computation, and its relationship with the scattering problem of the integrable model is demonstrated, when all distinguishable particles are considered. Complexity classes corresponding to these models are then partially pinned down within the standard complexity classes. In particular, it is argued that under certain complexity theoretic assumptions, the complexity class is different for different initial quantum states. In order to investigate the scattering problem, a model of quantum computation with ball scattering with intermediate demolition measurements is formally defined. The main result of this chapter is that under the assumption that the polynomial hierarchy is infinite, there is no polynomial time randomized algorithm to sample from the output distribution of the ball scattering computer within multiplicative error.

\section{General Hilbert Space of Permutations}

In section \ref{models}, several quantum models were discussed. In these models the evolution is governed by local gates that are responsible for the exchange of internal degrees of freedom. The approach is to introduce a formal model, the quantum ball permuting model, which captures all of the others. In this model, all degrees of freedom are distinguishable, and the basis of the Hilbert space is marked by the permutations of a finite-element set. We can think of these labels, as colors, distinguishable particles (balls), or internal degrees of freedom like qudits.

Let $\mathcal{H}_n=\C S_n$ be an $n!$ dimensional Hilbert space, with permutations of $n$ symbols as its orthonormal basis.We are interested in the set of local gates according to:

$$
X(\theta, k)=\cos\theta I + i \sin \theta L_{(k,k+1)},
$$

\noindent 
where $\theta$ is a free parameter, and $I$ is the identity operator on the Hilbert space. $L_{(k,k+1)}$ is called the left transposition with the map:

$$
|x_1, x_2, \ldots, x_k, x_{k+1},\ldots, x_n\rangle\mapsto |x_1, x_2, \ldots, x_{k+1}, x_{k},\ldots, x_n\rangle
$$

\noindent for any permutation $x_1, x_2, \ldots, x_n$ of the labels $1,2,3,\ldots, n$.

For example, consider $\C S_2$, spanned by the basis $|12\rangle$ and $|21\rangle$. The matrix form of $X(\theta,1)$ on these basis according to:

$$
X(\theta, 1) =
\begin{pmatrix}
\vspace{2mm}\cos\theta & i\sin\theta\\
i\sin\theta & \cos\theta
\end{pmatrix}.
$$

\noindent This has one free parameter, and moreover, each column is a permutation of the other.

% However, the general form of a unitary on a two dimensional orthonormal Hilbert space has four parameters with the form:
%
%$$
%U(\theta, \psi, \delta_1, \delta_2) =
%e^{i\psi}
%\begin{pmatrix}
%\vspace{2mm}\cos\theta & i\sin\theta e^{i\delta_1} \\
%i\sin\theta e^{i\delta_2} & \cos\theta e^{i(\delta_1+\delta_2)}
%\end{pmatrix}.
%$$
%
%The $X( \cdot , \cdot )$ operator is a special case of $U$ where all the phases $\psi, \delta_1, \delta_2=0$. Moreover, $U$ is expressible by permutations on two labels only, only if $\delta_1=\delta_2$ and $\delta_1+\delta_2=0$ modulo $2\pi$, which implies that $\delta_1, \delta_2 \propto \pi$. Therefore, this suggests that other than some arbitrary overall phase $\psi$, $X$ is the most general form to express a local permuting unitary gate.

The operator $L$ can be thought of as a homomorphism $:S_n\rightarrow U(\C S_n)$, we call it a left action because it lets a member $\tau$ of $S_n$ act on a basis state $|\sigma\rangle,  \sigma \in S_n$ according to:

$$
L(\tau)|\sigma\rangle = |\tau \circ \sigma\rangle
$$
%
%However, the transpositions and swaps in general of the form $L_{(i j)}$ are both Hermitian and unitary, that is the image of a swap under $L$ is an involution (a map that is its own inverse). Indeed, one can observe the following simple fact, which we are going to use in a later section:
%
%\begin{proposition}
%If $L$ is unitary and also an involution, projecting out the overall phase, the operator $\alpha+ \beta L$ for $\alpha, \beta \in \C$ is unitary if and only if $\alpha=c, \beta= i s$ for some real valued parameters $c, s$, with $c^2+s^2=1$.
%\end{proposition}
%

We can also talk about a right action $:S_n\rightarrow U(\C S_n)$ with a corresponding map:

$$
R(\tau)|\sigma\rangle = |\sigma\circ \tau\rangle
$$

\noindent While a left action rearranges the physical location of the labels, a right action relabels them, and as it is mentioned in a later section, the unique commutant of the associative algebra generated by left actions is the algebra of right actions. Therefore, it is worthwhile to introduce a right permuting version of a local gate:

$$
Y(\theta, k)=\cos\theta I + i \sin \theta R_{(k,k+1)}
$$

A $Y$ operator simply exchanges the labels $k$ with $k+1$, independent of their actual location, and is not local in this sense. 

The following asserts that the set of unitary operators generated with these gates is small compared to the set of unitary operators on an $n!$ dimensional Hilbert space:

\begin{theorem}
Let $U=X(\theta_m, k_m) \ldots X(\theta_2, k_2) X(\theta_1, k_1)$ be any composition of the $X$ operators, then columns of $U$ as a matrix in $S_n$ basis, are obtainable by permuting the entries of the top-most column.
\label{ind}
\end{theorem}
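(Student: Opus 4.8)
The plan is to track how a single $X$ gate acts on the matrix of a product of $X$ operators, proceeding by induction on the number $m$ of gates. The base case $m=0$ is trivial: $U=I$, whose columns are $\{\ket{\sigma}\}_{\sigma\in S_n}$, and every column is a permutation of the $\ket{e}$ column (the column indexed by the identity is $\ket{e}$, and the column indexed by $\sigma$ is $\ket{\sigma}=L(\sigma)\ket{e}$, i.e.\ the entries of the $\ket{e}$ column permuted by the left action of $\sigma$). For the inductive step, suppose $U'=X(\theta_{m-1},k_{m-1})\cdots X(\theta_1,k_1)$ already has the property that the $\sigma$-column equals $L(\sigma)$ applied to the $e$-column; write this as $U'\ket{\sigma} = L(\sigma)\,U'\ket{e}$ for all $\sigma$, which is equivalent to saying $U'$ commutes with every left action $L(\sigma)$. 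Then I want to show $U=X(\theta_m,k_m)\,U'$ has the same property, i.e.\ that $X(\theta,k)$ also commutes with all left actions.

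The key structural fact is that $X(\theta,k)=\cos\theta\, I + i\sin\theta\, L_{(k,k+1)}$ is itself built out of $I$ and a \emph{left} action $L_{(k,k+1)}$, and left actions commute with each other because the group operation is associative: $L_{(k,k+1)}L(\sigma)\ket{\tau} = \ket{(k,k+1)\circ\sigma\circ\tau}$ while $L(\sigma)L_{(k,k+1)}\ket{\tau}=\ket{\sigma\circ(k,k+1)\circ\tau}$ — wait, these are not equal in general. The right statement is that left actions commute with \emph{right} actions, not with each other. So the formulation ``columns are obtained by permuting the top column via a left action'' must instead be: the $\sigma$-column of $U$ is the $e$-column with its \emph{entries reindexed}, i.e.\ $U_{\tau,\sigma} = U_{\tau\sigma^{-1},\,e}$, which says $U$ commutes with the \emph{right} action $R(\sigma)$: $U\,R(\sigma) = R(\sigma)\,U$. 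This is the invariant I should actually carry through the induction. And now it works cleanly: $R(\sigma)$ commutes with $I$, and $R(\sigma)$ commutes with $L_{(k,k+1)}$ since $L_{(k,k+1)}R(\sigma)\ket{\tau} = \ket{(k,k+1)\tau\sigma}=R(\sigma)L_{(k,k+1)}\ket{\tau}$; hence $R(\sigma)$ commutes with $X(\theta_m,k_m)$, and by induction with $U'$, so it commutes with $U$. Then for any column index $\sigma$, $U\ket{\sigma}=U\,R(\sigma)\ket{e}=R(\sigma)\,U\ket{e}$, so the $\sigma$-column is $R(\sigma)$ applied to the $e$-column — a permutation of its entries. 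Since the theorem statement says "permuting the entries of the top-most column" without specifying the permutation is a left action, this is exactly what is claimed (the top-most column being the one indexed by $e$, the lexicographically first permutation).

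The main obstacle — and the thing to get exactly right in the writeup — is the left/right bookkeeping: one must be careful that the gates $X(\theta,k)$ are defined with \emph{left} transpositions $L_{(k,k+1)}$, that left and right actions of $S_n$ on $\C S_n$ commute (this is the standard fact that the group algebra acting on itself by left multiplication has commutant the right-multiplication algebra, alluded to in the excerpt's discussion of commutants), and consequently that an arbitrary product of left-transposition gates commutes with every right action. I would state this commutation as a short lemma first, prove it by the one-line associativity computation above, and then the theorem follows immediately: every column of $U$ is the image of the $e$-column under some $R(\sigma)$, hence a coordinate permutation of it. I should also remark that this shows $U$ is determined by a single column (its first), which is the quantitative ``smallness'' content motivating the $\DQC1$ upper bound later; but that remark is not needed for the proof itself.
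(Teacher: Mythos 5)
Your proposal is correct and takes essentially the same route as the paper: the paper's proof likewise uses the single observation that $U$ commutes with every right action $R(\pi)$ to conclude $U|\pi\rangle = R(\pi)\,U|123\ldots n\rangle$, i.e.\ each column is a reindexing of the first. The only difference is that the paper asserts $[U,R(\pi)]=0$ without proof, whereas you supply the (short, correct) justification that $X(\theta,k)$ is a linear combination of $I$ and the left transposition $L_{(k,k+1)}$, both of which commute with $R(\pi)$ by associativity of composition.
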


\begin{proof}
Consider the first column of $U$ spanned by:

$$
U|123\ldots n\rangle =\sum_{\sigma\in S_n} \alpha_\sigma |\sigma\rangle
$$

Where $\alpha_\sigma$'s are the amplitudes of the superposition. Now consider any other column marked by $\pi$:

$$
U|\pi\rangle=\sum_{\sigma\in S_n} \beta_\sigma |\sigma\rangle
$$

Clearly, $|\pi\rangle=R(\pi)|123\ldots n\rangle$, and since $[U,R(\pi)]=0$:

$$
\sum_{\sigma\in S_n} \beta_\sigma |\sigma\rangle= \sum_{\sigma\in S_n} \alpha_\sigma |\sigma\circ \pi\rangle,
$$

\noindent which is the desired permutation of columns, and in terms of entries $\beta_{\sigma\circ \pi}=\alpha_\sigma$.

\end{proof}

The same conclusion can be made for the composition of $Y$ operators. Let $G$ be the group of unitary operators that can ever be generated by the compositions of $X$ operators. While the unitary group $U(\mathcal{H})$ is a Lie group of dimension $n!^2$, as the corollary of the above theorem $G\subset U(\mathcal{H})$ as a Lie group, has dimension $n!$ which is polynomially smaller than $n!^2$, suggesting that $G$ is not a dense subgroup of the unitary group.

In the definition of $X (\cdot, \cdot)$ and $Y(\cdot, \cdot)$ operators, the angle $\theta$ is independent of the labels that are being swapped. In fact the property observed in theorem ~\ref{ind} was a consequence of this independence. Therefore, we introduce another local unitary, $Z(\tilde{\theta},k)$, wherein the transposition angles depend on the color of the labels. Here $\tilde {\theta} = \{\theta_{ij}\}$ is a list of angles, one element per each $i\neq j\in [n]$. By definition $Z(\tilde{\theta},k)$ acts on the labels $|ab\rangle$ in the locations $k$ and $k+1$ with the following map:

$$
Z(\tilde{\theta},k)|ab\rangle =\cos \theta_{ab} I + i\sin \theta_{ab} L_{(a,a+1)}
$$

If we assume real valued angles with $\theta_{ij}=\theta_{ji}$, then the operator $Z$ becomes unitary. Clearly, the $X$ operators are the special case of the $Z$ operators. In order to see this, consider any basis $|\sigma\rangle, \sigma \in S_n$, and suppose $\sigma(k)=a, \sigma(k+1)=b$ then:

\begin{eqnarray*}
&Z&^\dagger (\tilde{\theta},k)Z(\tilde{\theta},k)|\sigma\rangle=\\
&(&\cos \theta_{ab}- i \sin \theta_{ab} L_{(k,k+1)})(\cos \theta_{ab}+ i \sin \theta_{ab} L_{(k,k+1)})|\sigma\rangle=|\sigma\rangle
\end{eqnarray*}

Now the composition of $Z$ operators give rise to a subgroup whose dimension can exceed $n!$. As is going to be clarified later in section \ref{lowerbounds}, this variant of of the ball permuting gates leads to a possibly more powerful model of computation.

 We can also define $W(\tilde{\theta},k)$ as an analogue of the $Z$ operators. Such a $W$ map is according to the following:

$$
|\ldots \overset{a}{k}\ldots \overset{b}{k+1}\ldots \rangle \rightarrow\cos \theta_{a b}|\ldots \overset{a}{k}\ldots \overset{b}{k+1}\ldots \rangle +i\sin \theta_{a b} |\ldots \overset{a}{k+1}\ldots \overset{b}{k}\ldots \rangle.
$$

The superscripts demonstrate the location of the labels. However, $W$ is not a dual to $Z$, like $X$ and $Y$'s, since these operators do not commute in general. In general, we can confirm the following properties for the discussed operators:

\begin{itemize}
\item $X^\dagger (\theta, k)=X^{-1} (\theta, k)=X (-\theta, k)$, and the similar relations are true for the $Y$ and $Z$ operators.
\item $[X(\theta, k), X(\theta',k')]\neq 0$ if and only if $|k-k'|=1$, this is also true for the $Y$ operators. 
\item $[Z(\tilde{\theta}, k), Z(\tilde{\theta'},k')]\neq 0$ if and only if $|k-k'|=1$, this is also true for the $W$ operators.
\item $[X(\theta, k), Y(\theta',k')] = 0$ for all values of $k, k'$ and $\theta, \theta'$.
\item $[Z(\tilde{\theta}, k), R(\pi)]$ can be nonzero. Therefore if $U$ is a composition of $Z$ operators, the columns of can be different modulo permutation.
\end{itemize}

\subsection{The Quantum Yang-Baxter Equation}
\label{YangBaxter}

We can discuss the restriction on the angles of the $X$ operators in such a way that they respect Yang-Baxter equation (YBE) of three particles. Therefore, this restricted version can capture the scattering matrix formalism of particles on a line. A solution to the YBE in the scattering models is based on the amplitudes which depend only on the initial state and momenta of the particles. The aim of this section is to justify that the only non-trivial solution to the YBE on the Hilbert space of permutations is the one in which the amplitudes are selected according to the velocity parameters.

Given a vector space $\mathbb{V}^{\otimes n}$, let $H_{ij}$ for $i<j \in [n]$ be a family of two-local operators in $GL(\mathbb{V}^{\otimes n})$ such that each $H_{ij}$ only affects the $ij$ slot of the tensor product, and acts trivially on the rest of the space. Then, $H$ is said to satisfy the parameter independent YBE if they are constant and:

$$
H_{ij} H_{jk} H_{ij}= H_{jk} H_{ij} H_{jk}
$$

Sometimes, we refer to the following as the YBE:

$$
(H\otimes I) (I\otimes H) (H\otimes I)=(I\otimes H) (H\otimes I) (I\otimes H)
$$

Both sides of the equation act on the space $\mathbb{V}\otimes \mathbb{V}\otimes \mathbb{V}$, and $H\otimes I$ acts effectively on the first two slots, and trivially on the other one. Similarly, we can define a parameter dependent version of the YBE, wherein the operator $H:\C \rightarrow GL(\mathbb{V}\otimes \mathbb{V})$ depends on a scalar parameter, and $H$ is said to be a solution to the parameter dependent YBE is according to:

$$
(H(z_1)\otimes I) (I\otimes H(z_2)) (H(z_3)\otimes I)=(I\otimes H(z'_1)) (H(z'_2)\otimes I) (I\otimes H(z'_3))
$$

for some $z_1, z_2, \ldots, z'_3$. We are interested in a solution of parameter dependent YBE with $X(\cdot, \cdot )$ operators. For simplicity of notations, in this part, we use the following operator:

$$
H(z , k):=\dfrac{1}{\sqrt{1+z^2}}+\dfrac{i z}{\sqrt{1+z^2}} L_{(k, k+1)} = X(\tan^{-1} (z), k)
$$

instead of the $X$ operators. The following theorem specifies a solution to the parameter dependent YBE:

\begin{theorem}
Constraint to $z_1 z_2 \ldots z'_3\neq 0$, the following is the unique class of solutions to the parameter dependent YBE, with the $H (\cdot, \cdot) = X (\tan^{-1}(\cdot), \cdot )$ operators:

$$
H(x,1) H(x+y,2) H(y,1)=H(y,2) H(x+y,1) H(x,2),
$$

for all $x,y \in \mathbb{R}$.

\end{theorem}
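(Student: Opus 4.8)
The plan is to collapse the operator identity to a small polynomial system by working in the span of the six operators $I, L_1, L_2, L_1L_2, L_2L_1, L_1L_2L_1$, where $L_1:=L_{(1,2)}$ and $L_2:=L_{(2,3)}$. On $\C S_n$ with $n\geq 3$ these are the left-regular images of the six elements of $S_3=\langle s_1,s_2\rangle$, hence linearly independent as operators (test any vanishing linear combination on $|e\rangle$), so two elements of the algebra they generate agree exactly when all six of their coordinates agree. Writing $H(z,k)=\tfrac{1}{\sqrt{1+z^2}}\bigl(I+iz\,L_k\bigr)$, I would expand both sides of
$$H(z_1,1)H(z_2,2)H(z_3,1)=H(z_1',2)H(z_2',1)H(z_3',2).$$
The left-hand side has coordinates $\bigl(1-z_1z_3,\; i(z_1+z_3),\; iz_2,\; -z_1z_2,\; -z_2z_3,\; -iz_1z_2z_3\bigr)$ against $\bigl(I,L_1,L_2,L_1L_2,L_2L_1,L_1L_2L_1\bigr)$, all scaled by $(1+z_1^2)^{-1/2}(1+z_2^2)^{-1/2}(1+z_3^2)^{-1/2}$; the right-hand side is the same list with $L_1$ and $L_2$ interchanged, the parameters primed, and $L_2L_1L_2$ rewritten as $L_1L_2L_1$.

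Let $\mu>0$ be the ratio of the scalar prefactor on the left to that on the right. Matching the six coordinates gives $1-z_1z_3=\mu(1-z_1'z_3')$, $z_1+z_3=\mu z_2'$, $z_2=\mu(z_1'+z_3')$, $z_1z_2=\mu z_2'z_3'$, $z_2z_3=\mu z_1'z_2'$, and $z_1z_2z_3=\mu z_1'z_2'z_3'$. Adding the two ``length-two'' equations yields $z_2(z_1+z_3)=\mu z_2'(z_1'+z_3')$, and substituting $z_1+z_3=\mu z_2'$ together with $z_1'+z_3'=z_2/\mu$ collapses this to $\mu\,z_2z_2'=z_2z_2'$; since every one of the six parameters is nonzero by hypothesis, $\mu=1$. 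In particular the two prefactors are equal, and the system reduces to the purely multiplicative relations (i)~$z_1z_3=z_1'z_3'$, (ii)~$z_1+z_3=z_2'$, (iii)~$z_2=z_1'+z_3'$, (iv)~$z_2z_3=z_1'z_2'$, (v)~$z_1z_2=z_2'z_3'$, (vi)~$z_1z_2z_3=z_1'z_2'z_3'$.

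It remains to solve this system. Put $x:=z_1$ and $y:=z_3$; then (ii) gives $z_2'=x+y$, and (iv), (v) give $z_1'=yz_2/(x+y)$ and $z_3'=xz_2/(x+y)$, legitimate since $x+y=z_2'\neq 0$. Feeding these into (vi) produces $xyz_2=xyz_2^2/(x+y)$, and cancelling the nonzero factor $xyz_2$ forces $z_2=x+y$; hence $z_1'=y$, $z_3'=x$, $z_2'=x+y$, which is exactly the asserted family, and (i), (iii) then hold automatically. For the converse I would simply substitute $(z_1,z_2,z_3,z_1',z_2',z_3')=(x,x+y,y,y,x+y,x)$ into (i)--(vi) and the prefactor equality and observe that each identity holds term-by-term, so $H(x,1)H(x+y,2)H(y,1)=H(y,2)H(x+y,1)H(x,2)$ indeed solves the parameter-dependent YBE; this recovers the velocity-addition rule $z_2=z_1+z_3$ already apparent in the factorized-scattering picture of Section~\ref{models}. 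The hypothesis $z_1z_2\cdots z_3'\neq 0$ is used exactly at the divisions by $x+y$, by $z_2$, and by $xyz_2$ (and $x+y=0$ would force $z_2'=0$, so no nonzero solution is lost); beyond this bookkeeping I expect the main work to be carrying out the two six-term operator expansions correctly.
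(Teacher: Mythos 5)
Your proof is correct and takes essentially the same route as the paper's: expand both sides in the six-dimensional span of $I$, $L_1$, $L_2$, $L_1L_2$, $L_2L_1$, $L_1L_2L_1$, collapse the top-degree term with the braid identity $L_2L_1L_2=L_1L_2L_1$, use faithfulness of the left regular representation to turn the operator equation into six scalar equations, and solve. Your derivation of $\mu=1$ by adding the two length-two equations and substituting (ii), (iii) is a modest streamlining of the paper's, which instead divides equation (6) by (5) and (4) to get $z_1=z_3'$, $z_3=z_1'$ and then handles $z_1z_3=1$ as a separate case; the only slip is cosmetic — with $\mu$ defined as (left prefactor)/(right prefactor) the matching equations should carry $\mu$ on the unprimed side, e.g. $\mu(1-z_1z_3)=1-z_1'z_3'$, so you have implicitly used its reciprocal, but since the conclusion is $\mu=1$ nothing downstream is affected.
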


\begin{proof}
We wish to find the class parameters $z_1, z_2, \ldots, z'_3$ such that the following equation is satisfied:

\begin{equation}
H(z_1,1) H(z_2,2) H(z_3,1)=H(z'_1,2) H(z'_2,1) H(z'_3,2)
\label{PDYBE}
\end{equation}

It is straightforward to check that if $z_1=z'_3$, $z_3=z'_1$ and $z_2=z'_2=z_1+z_2$, then the equation is satisfied. We need to prove that this is indeed the only solution. Let:

$\Gamma:= \sqrt{\dfrac{(1+z'^2_1)(1+z'^2_2)(1+z'^2_3)}{(1+z^2_1)(1+z^2_2)(1+z^2_3)}}.$

If equation ~\ref{PDYBE} is satisfied, then the following are equalities inferred:

\begin{eqnarray*}
&1)&\hspace{3mm}\Gamma . (1- z_1 z_3) = (1- z'_1 z'_3)\\
&2)&\hspace{3mm}\Gamma . (z_1 + z_3) = z'_2\\
&3)&\hspace{3mm}\Gamma . z_2 = (z'_1+z'_3)\\
&4)&\hspace{3mm}\Gamma . z_1 z_2 = z'_2 z'_3\\
&5)&\hspace{3mm}\Gamma . z_2 z_3 = z'_1 z'_2\\
&6)&\hspace{3mm}\Gamma . z_1 z_2 z_3 = z'_1 z'_2 z'_3\\
\end{eqnarray*}

Suppose for now that all of the parameters are nonzero; We will take care of these special cases later. If so, dividing $6)$ by $5)$ and $6)$ by $4)$ reveals:

\begin{eqnarray}
\nonumber
z_1&=&z'_3\\
z_3&=&z'_1.
\label{blahh}
\end{eqnarray}

Again suppose that $z_1 z_3\neq 1$ and $z'_1 z'_3\neq 1$. Then using the equivalences of ~\ref{blahh} in $2)$, one gets $\Gamma=1$, from $2)$ and $3)$:

$$
z_2=z'_2=z_1+z_3,
$$

\noindent which is the desired solution. Now suppose that $z_1 z_3 = 1$. This implies also $z'_1 z'_3 = 1$. Using these in $6)$  one finds $\Gamma . z_2 =z'_2$ and substituting this in $2)$ and $3)$ reveals $\Gamma=1$ as the only solution, and inferring from equations $2), 3), \ldots, 6)$ reveals the desired solution.
\end{proof}

If one of the parameters is indeed $0$, we can find other solutions too, but all of these are trivial solutions. The following is the list of such solutions:

\begin{itemize}
\item If $z_1=0$, then:
\begin{itemize}
\item either $z'_3=0$, which implies $z_3=z'_2$ and $z_2=z'_1$
\item or $z'_2=0$ that implies $z_3=0$ and $\tan^{-1}(z_2)=\tan^{-1}(z'_1)+\tan^{-1}(z'_3)$.
\end{itemize}
\item If $z_2=0$ then $z'_1=z'_3=0$ and $\tan^{-1}(z,_2)=\tan^{-1}(z_1)+\tan^{-1}(z_3)$.
\item If $z_3=0$, then:
\begin{itemize}
\item either $z'_1=0$, which implies $z_1=z'_2$ and $z_2=z'_3$
\item or $z'_2=0$ that implies $z_1=0$ and $\tan^{-1}(z_2)=\tan^{-1}(z'_1)+\tan^{-1}(z'_3)$.
\end{itemize}
\end{itemize}

The solutions corresponding to $z'_j=0$ are similar, and we can obtain them by replacing the primed rapidities with the unprimed rapidities in the above table. There is another corresponding to the limit $z_j \rightarrow \infty$:

$$
X(0,1) X(0, 2) X(0, 1)=X(0,2) X(0, 1) X(0, 2)
$$

Which corresponds to the property $L_{(1,2)}L_{(2,3)}L_{(1,2)}=L_{(2,3)}L_{(1,2)}L_{(2,3)}$ of the symmetric group. From now on, we use the following form of the $H$-matrices:

$$
H(v_1, v_2, k) := H(v_1 -v_2, k),
$$

for real parameters $v_1, v_2$, and the YBE is according to:

$$
H(v_1, v_2, 1) H(v_1, v_3, 2) H(v_2, v_3, 1)=H(v_2, v_3, 2) H(v_1, v_3, 1)H(v_1, v_2, 2).
$$

The parameters $v_j$ can be interpreted as velocities in the scattering model. We can now extend the three label Yang-Baxter circuit to larger Hilbert spaces. 

\begin{definition}
An $m$ gate Yang-Baxter circuit over $n$ labels is a collection of $n$ smooth curves $(x_1(s), s), (x_2(s), s)\ldots (x_n(s), s)$ where $s\in[0,1]$, with $m$ intersections, inside the square $[0,1]^2$, such that, $0<x_1(0)< x_2(0) <\ldots <x_n (0)<1$, and $x_i(1)$ are pairwise non-equal.

If $\sigma \in S_n$, and $x_{\sigma(1)}(1)<x_{\sigma(2)}(1)<\ldots<x_{\sigma(n)}(1)$, then $\sigma$ is called the permutation signature of the circuit.

We say a Yang-Baxter circuit consists of line trajectories if all of the smooth curves are straight lines.
\end{definition}

Each such Yang-Baxter circuit can be equivalently represented by a set of adjacent permutations. When only line trajectories are considered, the circuit is related to the particle scattering models discussed in section \ref{models}. The permutation signature in this case is obtained by the momenta of the particles.

\begin{definition}
Let $C$ be a Yang-Baxter circuit of $m$ gates, each corresponding to a transposition $(k_t, k_t +1), t \in [m]$, and the permutation signature $\pi_t, t \in [m]$ at each of these gates. Then if one assigns a real velocity $v_j$ to each line, then the Yang-Baxter quantum circuit for $C$ is a composition of $H(\cdot, \cdot, \cdot)$ operators:

\begin{center}
$$
\hspace{-0.8cm}
 H(v_{\pi_m(k_m)}-v_{\pi_m(k_m)+1}, k_m) \ldots H(v_{\pi_2(k_1)}-v_{\pi_2(k_1+1)}, k_2) H(v_{k_1}-v_{k_1+1}, k_1).
$$
\end{center}

Each of these unitary $H$-matrices is a quantum gate.
\end{definition}

\subsection{Quantum Ball-Permuting Complexity Classes}
\label{qcomplexity}

Now that we have specified the quantum gate sets, we formalize these models according to classes of languages they recognize. In general, we are interested in a form of quantum computing where one starts with some initial state in $\C S_n$, applies a polynomial size sequence of ball permuting gates, and then in the end samples from the resulting probability distribution in the permutation basis of $S_n$. An initial state of the form $|123\ldots n\rangle$ sounds natural, however, for the reasons that we are going to see later, the computational power of the model seems to depend critically on the initial states. Therefore, we study the case where the model has access to arbitrary initial states separately:

\begin{definition}

Let $\XQBALL$ be the class of languages $L \subseteq \{0,1\}^\star$ for which there exists a polynomial time Turing machine $M$ which on any input $x \in \{0,1\}^\star$, outputs the description of a ball permuting quantum circuit $C$ as a composition of $X$ operators, and the description a subset $P \subseteq S_n$ of permutations, such that if $x\in L$, then the probability that the sampled permutation from $C$ is in $P$ at least $1/2 + \dfrac{1}{\Poly(n)}$, and otherwise it is at most $1/2- \dfrac{1}{\Poly(n)}$. Also, define $\YQBALL$ and $\ZQBALL$ similarly with the ball permuting circuits as the composition of $Y$ and $Z$ operators, respectively. Define $\HQBALL$ in the same way with the $H$ operators according to the Yang-Baxter circuits.
\end{definition}

In the definitions we did not specify the initial state of the ball permuting circuits, and we will specify this point whenever we mention the complexity classes.

\section{Upper-bounds}
\label{Upper-bounds}

Some of these models are the special cases of the others and therefore following containments are immediate:

$$
\HQBALL\subseteq \XQBALL \subseteq \ZQBALL
$$

Clearly, models with arbitrary initial states immediately contain their corresponding model with initial state $|123\ldots n\rangle$. In order to see all the containments in $\BQP$, it is sufficient to prove that $\ZQBALL \subseteq \BQP$.

\begin{theorem}
$\ZQBALL \subseteq \BQP$.
\end{theorem}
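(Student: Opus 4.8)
The plan is to produce, uniformly from the data output by the polynomial-time machine $M$, a polynomial-size family of $\BQP$ circuits that simulate a $Z$-type ball permuting computation. The one conceptual point is that $\cH_n=\C S_n$ is only exponential, not doubly exponential, in $n$, so it embeds into $\Poly(n)$ qubits. First I would fix the encoding: send a permutation $\sigma\in S_n$ to the computational basis string obtained by concatenating the binary expansions of $\sigma(1),\dots,\sigma(n)$, each written in $\lceil\log_2 n\rceil$ bits, for a total of $N=n\lceil\log_2 n\rceil=\Poly(n)$ qubits. Distinct permutations go to orthogonal strings, so $|\sigma\rangle\mapsto|\sigma\rangle_{\mathrm{enc}}$ is an isometry onto a subspace $\mathcal P\subseteq(\C^2)^{\otimes N}$. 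Note that a $Z$-circuit keeps the ideal state inside $\C S_n$, since each $Z(\tilde\theta,k)$ is a combination of the identity and the position-swap $L_{(k,k+1)}$, both of which permute $S_n$.

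Next I would translate the gates. Because $Z(\tilde\theta,k)$ acts on whichever labels $a,b$ currently occupy positions $k$ and $k+1$, under the encoding it becomes a unitary $\tilde Z_k$ supported on the two length-$\lceil\log_2 n\rceil$ blocks encoding those positions and acting as the identity elsewhere. Explicitly, $\tilde Z_k$ is block diagonal in the computational basis of those two blocks: on $\mathrm{span}\{|a\rangle|b\rangle,\,|b\rangle|a\rangle\}$ with $a\ne b$ it is the rotation $\bigl(\begin{smallmatrix}\cos\theta_{ab}&i\sin\theta_{ab}\\ i\sin\theta_{ab}&\cos\theta_{ab}\end{smallmatrix}\bigr)$, well defined and unitary since $\theta_{ab}=\theta_{ba}$, and it is the identity on $|a\rangle|a\rangle$ and on any block pattern not encoding an element of $[n]$. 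By construction $\tilde Z_k$ preserves $\mathcal P$ and agrees there with $Z(\tilde\theta,k)$, so once the computation starts inside $\mathcal P$ no ``illegal'' strings ever arise.

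Then I would deal with size and endpoints. Each $\tilde Z_k$ is a unitary on $O(\log n)$ qubits, i.e.\ a $\Poly(n)\times\Poly(n)$ matrix; by the standard two-level-gate decomposition together with Solovay--Kitaev (Theorem~\ref{solkit}) it compiles, to within operator distance $\delta$, into $\Poly(n,\log(1/\delta))$ gates of a fixed universal set, the angles $\theta_{ab}$ being part of $M$'s output. The $\BQP$ machine prepares the encoded initial state (for $|123\dots n\rangle$ just a computational basis state; for any other admissible initial state we rely on the implicit stipulation in the definition that it is preparable by a polynomial-size circuit), applies $\tilde Z_{k_m}\cdots\tilde Z_{k_1}$, reversibly computes the predicate $[\sigma\in P]$ into a fresh answer qubit (polynomial time, since $M$ outputs a polynomial-size description of $P$), and measures that qubit. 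Its acceptance probability equals the ideal-model probability that the sampled permutation lies in $P$. Taking $\delta=1/(m\cdot\Poly(n))$ with $m=\Poly(n)$ the number of $Z$ gates and applying the triangle inequality for $d(\cdot,\cdot)$ keeps the total simulation error below $1/\Poly(n)$, hence below the promise gap; ordinary amplification turns $1/2\pm 1/\Poly(n)$ into $2/3$ versus $1/3$. Thus $\ZQBALL\subseteq\BQP$, and since $\HQBALL\subseteq\XQBALL\subseteq\ZQBALL$ the other inclusions follow.

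The only delicate items are bookkeeping — checking that the encoded circuit never leaves $\mathcal P$ (immediate from the block structure of $\tilde Z_k$) and pinning down the meaning of ``admissible initial state'' and ``description of $P$'' (stipulations of the definition rather than obstacles). The real content is just the observation that $n!\le n^n=2^{\Poly(n)}$, after which the argument is routine $\BQP$ simulation and error accounting.
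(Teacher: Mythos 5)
Your proposal is correct and follows essentially the same route as the paper: encode each permutation by concatenating the $\lceil\log_2 n\rceil$-bit representations of $\sigma(1),\dots,\sigma(n)$, simulate each $Z(\tilde\theta,k)$ as an $O(\log n)$-qubit unitary compiled via Solovay--Kitaev, prepare/evolve/measure, and do the usual error accounting. The only cosmetic difference is that the paper builds $\tilde Z_k$ explicitly with an ancilla flag bit $f_{ij}$ for each pair $i<j$ (controlled-detect, controlled-rotate, uncompute), whereas you invoke the generic two-level decomposition of the $\Poly(n)\times\Poly(n)$ block directly — an equivalent and somewhat cleaner way to reach the same $\Poly(n,\log(1/\delta))$ gate count.
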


\begin{proof}
Let $L\in \ZQBALL$. Then, on any input $x\in \{0,1\}^\star$, there exists a polynomial time Turing machine that outputs the description of a $Z$ ball permuting circuit. We simulate the Hilbert space of permutations with bits, by just representing each label of $[n]$ with its $\lceil \log n \rceil$ long binary representation. Therefore, we use $n \lceil \log n \rceil$ bits to encode the permutations of $S_n$. The computation consists of three steps: at first we should simulate the initial state quantum states over binary bits, then we need to simulate the $Z$ operators, and in the end we need to demonstrate how to sample from the output states.

$1)$ Initialization: the $\BQP$ quantum circuit first applies enough not gates to the $|0\rangle^{\tensor n\lceil \log n \rceil}$ to prepare the encoded initial state $|123\ldots n\rangle$ with binary representations.

$2)$ Evolution: it is sufficient to show how to simulate one of the $Z(\tilde{\theta})$ operators on two labels. The list $\tilde{\theta}$ consists of coefficients $\theta_{ij}$. So for each pair of indices $i<j$ we add a control ancilla bit. We initialize all of the ancilla bits with zeros. We first apply enough controlled operations to the binary encodings of the $k$ and $k+1$ slots, to flip the $i,j$ control bit if and only if the contents of $k$ and $k+1$ slots are $i$ and $j$, then controlled with the $i,j$ control bit we apply the unitary $\cos \theta_{ij}+i \sin \theta_{ij} S$, where $S$ is the operator which swaps all the bits in the $k$ slot with all the bits in $k+1$. Notice the operator $S$ acts on at most $O(\log n)$ qubits, and because of the Solovay-Kitaev theorem, it can be efficiently approximated by a quantum computer. We continue this for all of the indices $i<j$. Since we are using several ancilla bits, we need to uncompute their contents, at the end of each $Z$ simulation. Therefore, a $Z$ ball permuting quantum circuit can be simulated by a qubit quantum circuit with polynomial (in $n$) blow-up in its size, and a Hilbert space consisting of $O(n\log n + m n^2)= O(m. n^2)$, where $m$ is the size of the original circuit.

$3)$ Measurement: at the end of the computation we only need to measure the output bits, and interpret them as a permutation. Therefore the output of the $Z$ ball permuting circuit can be sampled efficiently. We can also use enough controlled operations to flip a single bit if and only if the $n \lceil \log n\rceil$ bits encode the identity permutation.
\end{proof}

This readily demonstrates that all of the discussed models are contained in $\BQP$. 

Next, we use the following theorem to prove that the models with $X$ and $Y$ operators are equivalent:

\begin{theorem}
Let $G$ and $G'$ be the unitary groups generated by $X$ and $Y$ operators, respectively. Then $G\cong G'$.
\label{isom}
\end{theorem}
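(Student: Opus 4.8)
The plan is to exhibit an explicit unitary conjugation on $\C S_n$ that carries each $X$ generator to the corresponding $Y$ generator. Define the \emph{inversion operator} $J$ on $\cH_n=\C S_n$ by $J\ket{\sigma}=\ket{\sigma^{-1}}$ for $\sigma\in S_n$, extended linearly. Since $\sigma\mapsto\sigma^{-1}$ is a bijection of $S_n$, the operator $J$ merely permutes the orthonormal basis $\{\ket{\sigma}\}$, so it is unitary (a $0/1$ permutation matrix); and since inversion is an involution, $J^2=I$, so $J=J^{-1}=J^\dagger$. Consequently conjugation by $J$, namely $\Phi:U\mapsto JUJ^{-1}=JUJ$, is an inner automorphism of the unitary group $U(\cH_n)$ with $\Phi\circ\Phi=\mathrm{id}$; in particular $\Phi(U_1U_2)=JU_1U_2J=(JU_1J)(JU_2J)=\Phi(U_1)\Phi(U_2)$.

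The heart of the argument is a one-line conjugation rule exchanging the left and right actions. For any $\tau\in S_n$ and any basis vector $\ket{\sigma}$,
$$
J\,L(\tau)\,J\,\ket{\sigma}=J\,L(\tau)\,\ket{\sigma^{-1}}=J\,\ket{\tau\circ\sigma^{-1}}=\ket{(\tau\circ\sigma^{-1})^{-1}}=\ket{\sigma\circ\tau^{-1}}=R(\tau^{-1})\,\ket{\sigma},
$$
hence $J\,L(\tau)\,J=R(\tau^{-1})$, and symmetrically $J\,R(\tau)\,J=L(\tau^{-1})$. Now specialize to the adjacent transposition $\tau=(k,k+1)$, which is its own inverse, so that $J\,L_{(k,k+1)}\,J=R_{(k,k+1)}$. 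Since $JIJ=I$, linearity gives, for every $\theta\in\R$ and every $k$,
$$
\Phi\big(X(\theta,k)\big)=J\big(\cos\theta\,I+i\sin\theta\,L_{(k,k+1)}\big)J=\cos\theta\,I+i\sin\theta\,R_{(k,k+1)}=Y(\theta,k),
$$
and likewise $\Phi\big(Y(\theta,k)\big)=X(\theta,k)$.

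To conclude, recall that $G$ is the group generated by all $X(\theta,k)$ and $G'$ the group generated by all $Y(\theta,k)$. Since $\Phi$ is a group automorphism of $U(\cH_n)$ carrying the generating set of $G$ bijectively onto the generating set of $G'$ (with inverse again given by $\Phi$), it restricts to a group isomorphism from $G$ onto $G'$, which proves $G\cong G'$.

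I do not expect a real obstacle here: the entire content is the conjugation identity $J\,L(\tau)\,J^{-1}=R(\tau^{-1})$, and the only points that require care are verifying that $J$ is genuinely unitary (it is a permutation of the basis) and keeping the directions of composition and inversion straight, so that the self-inverseness of a transposition makes the extra $\tau\mapsto\tau^{-1}$ disappear. It is worth stressing that one should \emph{not} hope for the stronger statement $G=G'$ as subgroups of $U(\cH_n)$: the $X$ and $Y$ operators commute with one another, so $\langle X\rangle$ and $\langle Y\rangle$ are genuinely different subgroups despite being abstractly isomorphic, which is precisely why the conjugation $\Phi$, rather than an outright identification, is needed.
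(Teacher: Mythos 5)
Your proof is correct, and it lands on the same underlying map as the paper's: the map sending $L_\sigma \mapsto R_{\sigma^{-1}}$. The difference is in how that map is realized. The paper defines $T$ as an abstract linear map on the span of left (resp.\ right) actions and then verifies by an explicit induction on circuit length that $T$ carries a product $X(\theta_m,k_m)\cdots X(\theta_1,k_1)$ to the corresponding product of $Y$ operators, so that $T$ restricts to a group isomorphism $G\to G'$. You instead exhibit $T$ as conjugation by the inversion operator $J:|\sigma\rangle\mapsto|\sigma^{-1}\rangle$, and the one-line computation $J\,L(\tau)\,J=R(\tau^{-1})$ does all the work: since $J$ is a unitary involution, $U\mapsto JUJ$ is automatically a group automorphism of $U(\cH_n)$, and checking that it sends generators $X(\theta,k)$ to generators $Y(\theta,k)$ is immediate by linearity. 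This avoids the paper's induction entirely and also shows the stronger structural fact that the isomorphism $G\cong G'$ is \emph{inner} in $U(\cH_n)$.

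One small caveat on your closing remark: the claim that $G\neq G'$ as subgroups \emph{because} the $X$'s and $Y$'s commute is not a valid implication on its own (an abelian group commutes with itself and is equal to itself). What is true is that for $n\geq 3$ the left action generates a nonabelian group; if $G=G'$ then $G$ would lie in its own commutant and hence be abelian, a contradiction. For $n=2$ the two coincide. This is a side comment, not part of your proof, and the main argument is sound as written.
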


\begin{proof}
We show an isomorphism $T: G\rightarrow G'$, as a linear map, with $T (L_\sigma) = R_{\sigma^{-1}}$, and the (linear) inverse $T^{-1}: G'\rightarrow G$, with $T^{-1} (R_\sigma) = L_{\sigma^{-1}}$. Let $U$ be any element in $G$, then $U$ can be decomposed as a sequence of $X$ operators $U_m=X(\theta_m, k_m) X(\theta_{m-1}, k_{m-1})\ldots X(\theta_1, k_1)=:\sum_{\sigma\in S_n} \alpha_{\sigma} L_\sigma$, we need to prove that $T(U_m) \in G'$. We use induction to show that $T(U)$ is simply $Y(\theta_m, k_m) Y(\theta_{m-1}, k_{m-1})\ldots Y(\theta_1, k_1)\in B$. Clearly, $T(X(\theta_1, k_1))=Y(\theta_1, k_1)$, since the inverse of each transposition is the same transposition. For $t<m$, let $ U_t= X(\theta_t, k_t) X(\theta_{t-1}, k_{t-1}) $ $ \ldots X(\theta_1, k_1))=:\sum_{\sigma\in S_n} \alpha'_{\sigma} L_\sigma$. By induction hypothesis suppose that $T(U_t)=: V_t =Y(\theta_t, k_t)$ $ Y(\theta_{t-1}, k_{t-1})\ldots Y(\theta_1, k_1)=\sum_{\sigma\in S_n} \alpha'_{\sigma} R_{\sigma^{-1}}$. For simplicity let $c:= \cos \theta_{t+1}$ and $s:= \sin \theta_{t+1}$ and the transposition $k:= (k_t, k_t +1)$, then: 

\begin{eqnarray*}
U_{t+1}&=&X(\theta_{t+1}, k_{t+1}) U_t= \sum_{\sigma\in S_n} c \alpha'_{\sigma} L_{\sigma} + is \alpha'_{\sigma} L_{k\circ \sigma}\\
&=&\sum_{\sigma\in S_n} (c \alpha'_{\sigma} + is \alpha'_{k\circ \sigma}) L_{\sigma}
\end{eqnarray*}

and,

\begin{eqnarray*}
V_{t+1}&=&Y(\theta_{t+1}, k_{t+1}) V_t= \sum_{\sigma\in S_n} c \alpha'_{\sigma} R_{\sigma^{-1}} + is \alpha'_{\sigma} R_{\sigma^{-1}\circ k^{-1}}\\
&=&\sum_{\sigma\in S_n} c \alpha'_{\sigma} R_{\sigma^{-1}} + is \alpha'_{\sigma} R_{(k\circ\sigma)^{-1}}=\sum_{\sigma\in S_n} (c \alpha'_{\sigma} + is \alpha'_{k\circ \sigma}) R_{\sigma^{-1}}\\
&=& T(U_{t+1})
\end{eqnarray*}
\end{proof}

As a corollary the following is understood:

\begin{corollary}
$\XQBALL=\YQBALL$.
\end{corollary}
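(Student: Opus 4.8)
The plan is to read the corollary off Theorem~\ref{isom} together with a short bookkeeping computation that records how the isomorphism $T$ acts on the measured output distribution. First I would note that the proof of Theorem~\ref{isom} does more than produce an abstract group isomorphism: it shows that $T$ sends an explicit $X$-circuit $U=X(\theta_m,k_m)\cdots X(\theta_1,k_1)$ to the matching $Y$-circuit $T(U)=Y(\theta_m,k_m)\cdots Y(\theta_1,k_1)$, gate by gate. Thus the induced map on circuit descriptions is simply the identity on the list $(\theta_t,k_t)_{t\in[m]}$, and in particular it is trivially computable by the polynomial-time Turing machine that an $\XQBALL$ (resp.\ $\YQBALL$) instance is required to supply.

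Second, I would track what $T$ does at the level of states. Writing $U=\sum_{\sigma\in S_n}\alpha_\sigma L_\sigma$ and starting from the natural initial state $|123\ldots n\rangle=|e\rangle$, we get $U|e\rangle=\sum_\sigma\alpha_\sigma|\sigma\rangle$, whereas $T(U)|e\rangle=\sum_\sigma\alpha_\sigma R_{\sigma^{-1}}|e\rangle=\sum_\sigma\alpha_\sigma|\sigma^{-1}\rangle$. Hence the amplitude that $T(U)$ places on a basis vector $|\tau\rangle$ is $\alpha_{\tau^{-1}}$, so the output distribution of the $Y$-circuit $T(U)$ is exactly the pushforward of the output distribution of the $X$-circuit $U$ along the inversion bijection $\sigma\mapsto\sigma^{-1}$ of $S_n$.

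Third, given an $\XQBALL$ machine with circuit $C$ and accepting set $P\subseteq S_n$, I would build a $\YQBALL$ machine that outputs the same list $(\theta_t,k_t)$ and the accepting set $P^{-1}:=\{\sigma^{-1}:\sigma\in P\}$. Since inversion is a polynomial-time-computable bijection of $S_n$, this is a legal $\YQBALL$ machine, and by the previous step it accepts each input with probability $\sum_{\tau\in P^{-1}}|\alpha_{\tau^{-1}}|^2=\sum_{\sigma\in P}|\alpha_\sigma|^2$, which is precisely the acceptance probability of the original machine; this gives $\XQBALL\subseteq\YQBALL$. Running the identical argument with $T^{-1}$, which sends $Y$-circuits to $X$-circuits and again conjugates the distribution by inversion, yields $\YQBALL\subseteq\XQBALL$. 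If one instead adopts the convention of an arbitrary initial state $|\psi\rangle=\sum_\sigma c_\sigma|\sigma\rangle$, the only modification is that the $Y$-side must be initialized with the relabeled state $\sum_\sigma c_\sigma|\sigma^{-1}\rangle$; this is still just a permutation of computational-basis labels, hence producible by the Turing machine, so the equivalence persists.

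The one point to be careful about, rather than a genuine obstacle, is that $T$ is defined on operators and not on states, so one must verify (as in the second step) that conjugating the whole computation by $T$ has a clean, efficiently invertible effect on the object actually observed, namely the permutation read out in the $S_n$ basis; once that reduction of $T$ to ``relabel the outcome by inversion'' is pinned down, the corollary follows immediately.
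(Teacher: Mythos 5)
Your argument is correct and follows essentially the same route as the paper's: both invoke Theorem~\ref{isom} to conclude that running the identical gate list with $Y$ operators produces an output state whose amplitude on $|\tau\rangle$ equals the $X$-circuit's amplitude on $|\tau^{-1}\rangle$, and then adjust the readout to compensate for this inversion. Where you differ — in a way that is arguably cleaner — is the final readout step: you simply replace the accepting set $P$ by $P^{-1}$, a polynomial-time relabeling, which directly preserves the acceptance probability $\sum_{\sigma\in P}|\alpha_\sigma|^2$ and fits the sampling-based definition of $\XQBALL$ verbatim. The paper instead claims the $X$ computation ``can be deformed so that in the end the computation just reads the amplitude corresponding to the identity permutation,'' which is harder to reconcile with that definition (a single permutation generically carries exponentially small probability, so one cannot simply accept on the identity with bounded error) and in any case requires an extra deformation your version avoids. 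Both proofs rest on the same lemma and the same inversion bookkeeping; your handling of the acceptance condition is the tighter of the two.
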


\begin{proof}
We use the isomorphism result of theorem ~\ref{isom}: suppose that after application of several gates $X(\theta_m, k_m) X(\theta_{m-1}, k_{m-1})\ldots X(\theta_1, k_1)$ to the state $|123\ldots n\rangle$ one obtains the quantum state $|\psi\rangle =\sum_{\sigma\in S_n} \alpha_\sigma |\sigma\rangle$, then the application of the corresponding $Y$ operators $Y(\theta_m, k_m) Y(\theta_{m-1}, k_{m-1})\ldots Y(\theta_1, k_1)$ to the same initial state, one obtains $|\psi\rangle =\sum_{\sigma\in S_n} \alpha_\sigma^{-1} |\sigma\rangle$, where $\sigma^{-1}$ is the inverse of the permutation $\sigma$. Any $X$ computation can be deformed in a way that in the end the computation just reads the amplitude corresponding to the identity permutation, \i.e., $\alpha_{123\ldots n}$. Since the inverse of the identity permutation is identity itself, $X$ computation can be simulated by a $Y$ computation, by just applying the same quantum circuit with $Y$ operators and read the identity amplitude in the end. A similar reduction also works from $X$ to $Y$ computations.
\end{proof}

The last observation of this section is that if we constraint the gates from ball permuting circuit to satisfy the Yang-Baxter equation, then the set of unitary gates that can be ever generated constitute a small dimensional manifold.

\begin{theorem}
Let $Q_n$ be the Lie group generated by \textit{planar} Yang-Baxter quantum circuits over $n$ labels, then $Q_n$ as a manifold is isomorphic to the union of $n!$ manifolds, each with dimension at most $n$.
\label{YBnonu}
\end{theorem}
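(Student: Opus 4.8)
The plan is to stratify $Q_n$ according to the permutation signature of the circuit and to show that each stratum is a single $n$-parameter real-analytic family. Fix $\sigma \in S_n$. I would first establish a combinatorial normal form: a planar (line-trajectory) Yang-Baxter circuit over $n$ labels is a wiring diagram of $n$ straight segments inside $[0,1]^2$, and since two distinct lines meet at most once, the set of crossings inside the box is exactly the inversion set of its permutation signature $\sigma$. In particular the circuit has precisely $\ell(\sigma)$ gates, where $\ell(\sigma)$ is the number of inversions of $\sigma$, and the gate at the crossing of the lines carrying labels $a$ and $b$ is $H(v_a-v_b,k)$ (up to sign, according to which line enters from the left), where $v_1,\dots,v_n$ are the assigned velocities. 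Thus to each planar Yang-Baxter circuit one attaches a pair $(\sigma,v)\in S_n\times\mathbb{R}^n$.

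Second, I would prove that the resulting unitary depends only on $(\sigma,v)$, not on the geometric details of the diagram. Any two straight-line wiring diagrams realizing the same $\sigma$ are connected by a finite sequence of triangle moves and far-apart slidings; a far-apart sliding is exactly the identity $[X(\theta,k),X(\theta',k')]=0$ for $|k-k'|\ge 2$, which always holds, while a triangle move on three lines $a,b,c$ is an instance of the parameter-dependent Yang-Baxter equation. The key point is that the velocity parametrization automatically falls in the unique solution class isolated by the uniqueness theorem proved above: the three gates around the triangle carry parameters $v_a-v_b$, $v_a-v_c$, $v_b-v_c$, and $(v_a-v_b)+(v_b-v_c)=v_a-v_c$ is precisely the relation $z_2=z_1+z_3$ there, so the Yang-Baxter identity applies verbatim and the unitary is invariant under the move. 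Since (Matsumoto--Tits) any two reduced words for $\sigma$, equivalently any two such wiring diagrams, are linked by exactly these moves, we obtain a well-defined real-analytic map $U_\sigma:\mathbb{R}^n\to U(\mathcal{H}_n)$, and $M_\sigma:=\operatorname{Im}U_\sigma$ is the set of all planar Yang-Baxter unitaries with signature $\sigma$.

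Third comes the dimension count: $U_\sigma$ is a product of at most $\binom{n}{2}$ gates, each depending analytically on $v$ through a single linear form $v_a-v_b$, hence $U_\sigma$ is real-analytic, and its image $M_\sigma$ is a connected real-analytic subset of $U(\mathcal{H}_n)$ of dimension at most $n$ (it is the image of $\mathbb{R}^n$; away from the lower-dimensional critical locus it is an immersed submanifold of dimension $\le n$, and in fact the redundancy $v\mapsto v+(c,\dots,c)$ already brings the effective count down to $n-1$). Assembling over $\sigma$ gives $Q_n=\bigcup_{\sigma\in S_n}M_\sigma$, a union of at most $n!$ manifolds each of dimension at most $n$, which is the assertion; note that $M_e=\{I\}$ and distinct $M_\sigma$ may share lower-dimensional strata, which is why the bounds are only upper bounds.

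The main obstacle is not the Yang-Baxter bookkeeping --- that is forced by the uniqueness theorem already in hand --- but two points of rigor. First, one must say precisely in what sense the image of the non-injective map $U_\sigma$ ``is a manifold'': the clean statement is that it is a semianalytic set of dimension $\le n$, or an immersed submanifold off a measure-zero set, so one should either adopt that convention or pass to the quotient by the translation redundancy. Second, one must pin down exactly what $Q_n$ is: the set of unitaries \emph{directly realized} by planar Yang-Baxter circuits equals $\bigcup_\sigma M_\sigma$ by the above, and that set is closed under inverses (reverse the wiring diagram, which realizes $\sigma^{-1}$ with the velocities negated), so with that reading the theorem follows at once; if instead $Q_n$ is the group generated, one must additionally verify --- using $H(a,k)H(b,k)=H(\tan(\arctan a+\arctan b),k)$ to collapse repeated crossings of a pair and the moves above to reduce the concatenated wiring diagram --- that no composition escapes the union, and this reduction is where the real care is required.
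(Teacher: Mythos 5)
Your argument follows essentially the same strategy as the paper's: fix the velocities, stratify the set of realized unitaries by permutation signature, show that any two circuits with the same signature yield the same unitary by applying the Yang--Baxter relation and far-apart commutativity, and count the dimension of each stratum as the image of a real-analytic map from $\mathbb{R}^n$. The one substantive gap is that you silently restrict to line trajectories, so that each circuit is a \emph{reduced} word for its signature and Matsumoto's theorem suffices. The paper's definition of a Yang--Baxter circuit allows arbitrary smooth curves, under which a pair of strands may cross several times and the word need not be reduced; the paper's proof therefore also invokes the relation $b_i^2 = e$ (quantumly, $H(z)H(-z)=I$ together with the planarity fact that consecutive crossings of the same pair carry opposite rapidities), and to cover that case you would need Tits' solution to the word problem rather than Matsumoto's theorem for reduced words alone, or simply the full three-relation presentation of Proposition~\ref{sym} that the paper uses. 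On the other side, you are more careful than the paper in two places: you flag that the argument really only establishes the union-of-manifolds structure for the \emph{set} of realized unitaries, not for the generated group (the paper conflates these), and your observation that the translation redundancy $v \mapsto v + (c,\dots,c)$ drops the effective stratum dimension to $n-1$ is a mild sharpening of the stated bound.
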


\begin{proof}
Fix the velocities $v_1, v_2, \ldots, v_n$. The idea is to demonstrate an embedding of the group generated with these fixed velocities into the symmetric group $S_n$.  Consider any two planar Yang-Baxter quantum circuits $C$ and $C'$, with permutation signatures $\sigma$ and $\tau$, respectively. We show that if $\sigma=\tau$, then $C=C'$.

The underlying circuit of $C$ corresponds to a sequence of transpositions $k_1, k_2, \ldots , k_M$ and $C'$ corresponds to another sequence $l_1, l_2, \ldots , l_N$, such that $k_M\circ \ldots \circ k_2\circ k_1= \sigma$, and $l_N\circ \ldots \circ l_2\circ l_1= \tau$. Then the unitary operators $C$ and $C'$ can be written as a sequence of $H$ operators:

$$
C= H(z_M, k_M) \ldots H(z_2, k_2) H(z_1, k_1)
$$

and,

$$
C'= H(z'_N, l_N) \ldots H(z'_2, l_2) H(z'_1, l_1).
$$

Where the $z$ parameters are the suitable rapidities assigned to each two-particle gate based on the velocities $v_1, v_2, \ldots, v_n$, and the underlying Yang-Baxter circuits. From proposition ~\ref{sym} if two sequence of transpositions $k_M\circ \ldots \circ k_2\circ k_1$ and $l_N\circ \ldots \circ l_2\circ l_1$ amount to the same permutation, then there is a sequence of substitution rules among:

1) $b_i ^2 \Leftrightarrow e$

2) $b_i b_j \Leftrightarrow b_j b_i$  if $|i-j|>1$

3) $b_i b_i+1 b_i \Leftrightarrow b_i+1 b_i b_i+1$, for all $i\in [n-1]$

Such that if we start with the string $k_M\circ \ldots \circ k_2\circ k_1$ and apply a sequence of substitution rules, we end up with $l_N\circ \ldots \circ l_2\circ l_1$. All we need to do is to prove that the sequences of unitary gates are invariant under each of the substitution rules. The invariance under each rule is given in the below:

$1)$ If we apply two successive quantum transpositions on the labels $i, i+1$ we will end up with the identity operator. This follows from unitarity $H(z) H(-z)=I, \forall z\in \mathbb{R}$, and planarity of the circuits.

$2)$ Clearly $H(\cdot, i) H(\cdot, j) = H(\cdot, j) H(\cdot, i)$ for $|i-j|>1$, since these are $2$-local gates.

$3)$ This part also follows from the Yang-Baxter equation.

We can then start with the unitary $C= H(z_M, k_M) \ldots H(z_2, k_2) H(z_1, k_1)$  and apply the same substitution rules and end up with $C= H(z'_N, l_N) \ldots H(z'_2, l_2) H(z'_1, l_1)$.

Now let $Q_n (\sigma)$ be the Lie group corresponding to all Yang-Baxter quantum circuits with permutation signature $\sigma$. For each choice of velocities, there is exactly one unitary in this group, so $Q_n (\sigma)$ is locally diffeomorphic to $\mathbb{R}^n$, and $Q_n=\cup_{\sigma\in S_n} Q_n(\sigma)$.
\end{proof}

In the following we show that even with postselection in the end, a quantum planar Yang-Baxter circuit still generates a sparse subset of unitary group. In other words any attempt to prove post-selected universality for the particle scattering model without intermediate measurements will probably fail.

\begin{theorem}
The set of unitary operators generated by $\HQBALL$ with postselection in particle label basis in the end of computation, correspond to the union of (discrete) $n!^{O(1)}$ manifolds, each with linear dimension.
\label{YBnonu1}
\end{theorem}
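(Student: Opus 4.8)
The plan is to reduce the statement to Theorem~\ref{YBnonu} and to show that one round of postselection at the very end of the computation contributes only a bounded \emph{discrete} multiplicity and no new continuous parameters. Recall from Theorem~\ref{YBnonu} that a planar Yang-Baxter quantum circuit over $n$ labels is completely determined by its permutation signature $\sigma \in S_n$ together with the $n$ real velocities $v_1,\ldots,v_n$; hence the set of unitaries $U_C$ it can realize is $\bigcup_{\sigma\in S_n} Q_n(\sigma)$, a union of $n!$ pieces, each of dimension at most $n$. A postselected $\HQBALL$ computation applies such a $U_C$ and then projects onto the subspace of $\C S_n$ spanned by the permutation-basis vectors consistent with the postselected outcome. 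Because the postselection is in the \emph{particle label basis} and happens only at the end, this projection $\Pi$ is not an arbitrary coordinate projection but a ``label-respecting'' one, obtained by fixing the labels occupying some subset of positions to prescribed values; there are at most $2^n\, n! = n!^{O(1)}$ such projections, one for each subset of positions and each consistent labelling of it.

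First I would argue that, since the single postselection step occurs only after all gates, the overall effect of a postselected $\HQBALL$ computation is exactly the (subnormalized) linear operator $M = \Pi\, U_C$; there is no way to interleave measurements with gates and so compose several Yang-Baxter circuits, which is precisely the mechanism that would be needed to escape the bound of Theorem~\ref{YBnonu}. The family of all such operators is therefore
$$
\mathcal{M} = \bigcup_{\sigma\in S_n}\ \bigcup_{\Pi}\ \Pi\cdot Q_n(\sigma),
$$
with $\Pi$ ranging over the $n!^{O(1)}$ label-respecting projections. Each set $\Pi\cdot Q_n(\sigma)$ is the image of the $(\le n)$-dimensional manifold $Q_n(\sigma)$ under a \emph{fixed} linear map, hence has dimension at most $n$, and $\mathcal{M}$ is a union of $n!\cdot n!^{O(1)} = n!^{O(1)}$ of them; passing to the renormalized action on the support of $\Pi U_C$ does not change this count. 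This is the assertion of the theorem, and it immediately yields the intended corollary: were terminal postselection to make $\HQBALL$ able to realize, on an encoded register, a dense subset of all operators up to scalars (as would be needed for post-selected universality), then $\mathcal{M}$ would have to be dense in the projective space of $N\times N$ matrices with $N = n!$, a manifold of real dimension $\Theta(N^2)$, which for large $n$ cannot be covered by $n!^{O(1)}$ manifolds of dimension $n$.

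The main obstacle I anticipate is the very first step: making precise what ``the set of unitary operators generated by $\HQBALL$ with postselection'' means, and justifying that terminal postselection genuinely yields nothing beyond operators of the form $\Pi U_C$. One must fix conventions so that the admissible initial states do not secretly inject continuous parameters --- the natural choice, matching the scattering picture, is to start from a permutation basis state, which is only a discrete choice --- and one must verify that ``postselection in the particle label basis'' really does restrict $\Pi$ to the $n!^{O(1)}$ label-respecting projections rather than to all coordinate projections onto subsets of the $n!$-element basis. Once these conventions are in place, the remainder is just the dimension count inherited from Theorem~\ref{YBnonu}.
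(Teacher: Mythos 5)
Your proposal is correct and takes essentially the same route as the paper: both arguments inherit the parametrization of $Q_n(\sigma)$ by the permutation signature and the $n$ velocity parameters from Theorem~\ref{YBnonu}, then observe that the terminal postselection is a discrete choice drawn from an $n!^{O(1)}$-sized family, so that the full set of postselected operators is a union of $n!\cdot n!^{O(1)}$ manifolds, each still parametrized by at most $n$ velocities (and hence of dimension at most $n$). Your write-up is a bit more explicit than the paper at two points — the bound $2^n\,n!$ on the number of label-respecting projections, and the remark that $\Pi\cdot Q_n(\sigma)$ is the image of a fixed linear map and so cannot increase dimension — but the underlying argument is the same.
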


\begin{proof}
We follow the proof of theorem ~\ref{YBnonu}. Consider the planar YB circuits on $n$ labels. If the input velocities are fixed, then the unitary operators generated by the model constitute a finite set of size at most $n!$. There are finite $n!^{O(1)}$ to do a postselection on the output labels of each circuit. So for each fixed set of velocities, the unitary matrices obtained by postselection and proper normalization still constitute a set of size $n!^{O(1)}$. Therefore, label the manifolds with the permutation signature of the circuits and the type of final postselection. Then the points in each of these manifolds are uniquely specified by $n$ velocity parameters, which is an upper-bound on the dimension for each of them.
\end{proof}

\noindent Notice that result of these theorems still hold if we allow the circuit models to start with arbitrary initial states.

\section{$\DQC 1$ Algorithm to Approximate Single Amplitudes of the Ball Permuting Model with Separable Initial State}
\label{trace}

The result of this section was obtained in collaboration with Greg Kuperberg. We find a $\DQC 1$ algorithm to compute the amplitudes of the $\XQBALL$ model in ball color basis within additive error.

The main result is the following:

\begin{theorem}
There is an efficient $\DQC 1$ algorithm which takes the description of a $\Poly(n)$ size ball permuting circuit $C$ over $\C S_n$ as its input, and outputs a complex number $\alpha$ such that $|\alpha - \langle 123 \ldots n | C | 123 \ldots n\rangle| \leq \dfrac{1}{\Poly(n)}$, with high probability.
\label{mainDQC1}
\end{theorem}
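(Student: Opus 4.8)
The plan is to realize the amplitude $\langle 123\ldots n|C|123\ldots n\rangle$ as a normalized trace of a polynomial-size quantum circuit acting on $O(n\log n)$ qubits, and then invoke the standard fact that $\DQC1$ computes normalized traces of polynomially-implementable unitaries up to additive $1/\Poly(n)$ error. The key observation is Theorem~\ref{ind}: for a product $C = X(\theta_m,k_m)\cdots X(\theta_1,k_1)$ of $X$-operators, the matrix of $C$ in the $S_n$-basis has all its columns equal to permutations of the first column. More precisely, from the proof of Theorem~\ref{ind}, $\langle\sigma\circ\pi|C|\pi\rangle = \langle\sigma|C|123\ldots n\rangle$ for every $\pi,\sigma\in S_n$. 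Setting $\sigma = e$ gives $\langle\pi|C|\pi\rangle = \langle 123\ldots n|C|123\ldots n\rangle$ for \emph{every} $\pi\in S_n$. Hence the diagonal of $C$ in the permutation basis is constant, and therefore
$$
\langle 123\ldots n|C|123\ldots n\rangle \;=\; \frac{1}{n!}\sum_{\pi\in S_n}\langle\pi|C|\pi\rangle \;=\; \frac{1}{n!}\,\mathrm{tr}(C).
$$
So the target amplitude is exactly the normalized trace of $C$ over the $n!$-dimensional space $\C S_n$.

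Next I would embed $\C S_n$ into a qubit register. As in the proof of $\ZQBALL\subseteq\BQP$, encode a permutation $\sigma\in S_n$ by writing the $\lceil\log n\rceil$-bit binary string of each label into $n$ consecutive blocks, so that $S_n$ sits inside the $2^{n\lceil\log n\rceil}$-dimensional qubit space as a subspace of dimension $n!$. Each gate $X(\theta,k)=\cos\theta\, I + i\sin\theta\, L_{(k,k+1)}$ becomes the two-block operation $\cos\theta\,I + i\sin\theta\, S_{k,k+1}$, where $S_{k,k+1}$ swaps the two $\lceil\log n\rceil$-qubit blocks; this acts on $O(\log n)$ qubits and is efficiently approximable to inverse-polynomial accuracy by Solovay--Kitaev (Theorem~\ref{solkit}). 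Composing, we get a qubit circuit $\widetilde C$ of size $\Poly(n)$ such that $\|\widetilde C - C\oplus(\text{something on the complement})\|$ is small on the encoded subspace. To get the trace \emph{over the whole qubit space} to equal (a rescaling of) $\mathrm{tr}(C)$ we must control what $\widetilde C$ does on the $(2^{n\lceil\log n\rceil}-n!)$-dimensional garbage subspace: the cleanest route is to arrange, via the controlled constructions, that $\widetilde C$ acts as the identity on every block configuration that is not a permutation (all the controlled-swap gates are conditioned on the two blocks holding a valid pair of distinct labels, so non-permutation basis states are untouched). Then $\mathrm{tr}(\widetilde C) = \mathrm{tr}(C) + (2^{n\lceil\log n\rceil}-n!)$, a known additive constant, and $\frac{1}{n!}\mathrm{tr}(C)$ is recovered from $\frac{1}{2^{n\lceil\log n\rceil}}\mathrm{tr}(\widetilde C)$ by an explicit affine rescaling by the (exponentially large but known) factor $2^{n\lceil\log n\rceil}/n!$.

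Finally, recall the standard $\DQC1$ trace-estimation primitive: with one clean qubit in state $|0\rangle\langle0|$ tensored with the maximally mixed state $I/N$ on $N$ wires, applying a controlled-$\widetilde C$ followed by a Hadamard on the clean qubit and measuring it yields, in the two-outcome statistics, the real part of $\frac1N\mathrm{tr}(\widetilde C)$; a phase gate on the clean qubit gives the imaginary part. Repeating $\Poly(n)$ times gives $\frac1N\mathrm{tr}(\widetilde C)$ within additive $1/\Poly(n)$, hence $\frac{1}{n!}\mathrm{tr}(C)$ within additive $1/\Poly(n)$, which is the claimed $\alpha$. The pre-processing Turing machine outputs the description of the controlled-$\widetilde C$ circuit, so the whole procedure is a legitimate $\DQC1$ computation.

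\textbf{Main obstacle.} The delicate point is the error analysis across the two approximations that interact with the exponential rescaling factor $2^{n\lceil\log n\rceil}/n!$: the Solovay--Kitaev approximation of each $O(\log n)$-qubit swap-controlled gate, and the sampling error in the $\DQC1$ trace estimate, are both measured relative to the \emph{normalized} trace $\frac{1}{2^{n\lceil\log n\rceil}}\mathrm{tr}(\widetilde C)$, and I will need to check that a $1/\Poly(n)$ error there does \emph{not} blow up to something exponential after multiplying by $2^{n\lceil\log n\rceil}/n!$. The resolution must exploit that the garbage subspace contributes a \emph{fixed, exactly known} constant $2^{n\lceil\log n\rceil}-n!$ to $\mathrm{tr}(\widetilde C)$ (because $\widetilde C$ is exactly the identity there by construction, not merely approximately), so all approximation error is confined to the $n!$-dimensional encoded block; the error in $\mathrm{tr}(C)$ itself is then at most $n!\cdot\|\widetilde C - C\|_{\mathrm{block}}$, and after dividing by $n!$ the additive error on $\alpha$ is just $\|\widetilde C - C\|_{\mathrm{block}} + (\text{sampling error scaled by }2^{n\lceil\log n\rceil}/n!)$. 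Controlling the second term forces the $\DQC1$ sampling to be done to additive precision $n!/2^{n\lceil\log n\rceil}\cdot 1/\Poly(n)$ on the \emph{normalized} trace — which, because $\DQC1$ trace estimation costs $\Poly(1/\varepsilon)$ samples, is \emph{not} affordable. Hence the correct implementation must instead estimate the trace over a space of dimension $\Theta(n!)$ (e.g. restrict the maximally mixed state to the permutation subspace, or to a subspace of dimension $n!\le N \le \Poly(n)\cdot n!$ on which $\widetilde C$ is controlled to be identity off the encoded block), so that the normalization is only polynomially larger than $n!$; preparing the maximally mixed state on the permutation subspace within the one-clean-qubit framework, and verifying the resulting machine is still a $\DQC1$ machine, is the real technical content of the proof.
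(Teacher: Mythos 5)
Your proposal follows the same route as the paper's own proof: use the commutation of a ball-permuting circuit $C$ with the right regular representation to derive
$\langle 123\ldots n|C|123\ldots n\rangle = \tfrac{1}{n!}\mathrm{tr}(C)$
(this is exactly the paper's Lemma~\ref{lem1}), encode $\C S_n$ into a qubit register, and then hand the normalized trace to the Jordan--Shor $\DQC 1$ primitive (the paper's Theorem~\ref{JordanShor}). The genuine gap in your proposal is the one you honestly flag at the end: you do not exhibit an encoding of $S_n$ into bit strings for which the rescaling factor $2^m/n!$ stays polynomially bounded, nor do you verify that the alternative you gesture at --- preparing the maximally mixed state directly on the $n!$-dimensional permutation subspace --- can be realized inside a bona-fide $\DQC 1$ machine, whose mixed register lives on qubits and therefore has a power-of-two dimension. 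This missing step is exactly where the paper's proof does its real work: it constructs a tree/factorial-number-system (Lehmer-code-like) encoding using $\sum_{j\le n}\lceil\log_2 j\rceil$ bits with the crucial \emph{locality} property that an adjacent swap changes only the two consecutive substrings $a_k,a_{k+1}$, i.e.\ $O(\log n)$ bits, so each $X$ gate compiles by Solovay--Kitaev to a small qubit circuit. That locality argument is a substantive idea your proposal does not have, and without some version of it (or a worked-out ``mixed state on a non-power-of-two subspace'' variant of $\DQC 1$) the proof is not complete.

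That said, your error analysis is actually sharper than the paper's and seems to me correct: you correctly conclude that the encoding must be within $O(\log n)$ bits of $\log_2 n!$ (i.e.\ dimension at most $\Poly(n)\cdot n!$) for the affine rescaling to preserve a $1/\Poly(n)$ additive error. The paper's encoding uses $m=\sum_{j\le n}\lceil\log_2 j\rceil = \log_2 n! + \Theta(n)$ bits --- the slack $\sum_j(\lceil\log_2 j\rceil-\log_2 j)$ is $\Theta(n)$, not $O(\log n)$ --- so $2^m/n!=2^{\Theta(n)}$ and the proportionality constant $\Delta(n)$ in the paper's reduction is in fact exponential, not the claimed $\Poly(n)$. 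In other words, the obstacle you identify is precisely the right one, and the paper's concrete construction, as written, does not satisfy your (correct) quantitative requirement; completing this proof would require either a local encoding with only $O(\log n)$ bits of overhead, or a more permissive model of $\DQC 1$ than the definition in the text, and neither is supplied.
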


Here, efficient means polynomial time in $n$ in terms of the preprocessing which outputs the description of the circuit, and the size of output quantum circuit. Clearly, there are efficient reductions from the approximation of any other (permutation) amplitudes to the computation of $\langle 123 \ldots n | C | 123 \ldots n\rangle$. Also, for scattering models of $1+1$ dimensions, in the case where all particles are distinguishable there is an efficient $\DQC 1$ computation to obtain additive error approximation to the amplitudes:

\begin{corollary}
Additive approximation to the amplitudes in the models of section \ref{models} can be obtained within $\DQC 1$.
\end{corollary}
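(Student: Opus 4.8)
The plan is to reduce the computation of $\langle 123\ldots n| C |123\ldots n\rangle$ to estimating the \emph{normalized trace} of a polynomial‑size ordinary qubit circuit, and then to invoke the fact (alluded to in the discussion of $\DQC 1$ above) that trace estimation is the canonical $\DQC 1$ task. The structural step is immediate from Theorem~\ref{ind}: since $C$ is a product of $X$ operators it is left‑multiplication by an element of $\C S_n$, so the columns of $C$ in the $S_n$ basis are permutations of its first column, with $\beta_{\sigma\circ\pi}=\alpha_\sigma$ in the notation there. Taking $\sigma\circ\pi=\pi$ forces $\sigma=e$, hence $\langle\pi|C|\pi\rangle=\alpha_e=\langle 123\ldots n|C|123\ldots n\rangle$ for every $\pi\in S_n$, and summing over $\pi$ gives $\langle 123\ldots n|C|123\ldots n\rangle=\frac{1}{n!}\mathrm{Tr}(C)$, the trace being over the $n!$‑dimensional space $\C S_n$. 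The same identity holds verbatim for products of $H$‑gates (the argument only uses $[C,R(\pi)]=0$), and inserting one extra left‑multiplication $L_{\sigma^{-1}}$ also writes any off‑diagonal amplitude $\langle\sigma|C|123\ldots n\rangle$ as $\frac{1}{n!}\mathrm{Tr}(L_{\sigma^{-1}}C)$; this is what will yield the corollary.

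Next I must realize $C$ as an honest qubit circuit whose Hilbert‑space dimension is only an $O(1)$ (or $\Poly(n)$) factor larger than $n!$, since $\DQC 1$ estimates $2^{-N}\mathrm{Tr}(U)$ to additive $1/\Poly$ and a factor $2^{N}/n!$ in front of the answer would ruin the error guarantee — so the naïve "position'' encoding of a permutation into $n\lceil\log n\rceil$ qubits (dimension $\approx n^{n}$) is unusable. Instead I would encode $\sigma\in S_n$ by its rank in the factorial number system, using $M=\lceil\log_2 n!\rceil$ qubits, so that $n!\le 2^{M}<2\,n!$. In this encoding each gate $X(\theta,k)=\cos\theta\,I+i\sin\theta\,L_{(k,k+1)}$ becomes $e^{i\theta T_k}$, where $T_k$ is the permutation of $\{0,\dots,n!-1\}$ sending $\mathrm{rank}(\sigma)\mapsto\mathrm{rank}((k,k+1)\circ\sigma)$ — a fixed‑point‑free involution that is computable and invertible in $\Poly(n)$ time. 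The exponential of a poly‑time‑computable fixed‑point‑free involution is implementable by a poly‑size circuit via the standard "orbit representative'' trick: compute an orientation bit $b(j)=[\,j>T_k(j)\,]$ into an ancilla, conditionally move the data register to $\min(j,T_k(j))$, apply a single $X$‑rotation by $\theta$ controlled on the register being a representative (plus a phase $e^{i\theta}$ on the $\ge n!$ padding states, where $T_k$ is padded by the identity), then undo the first two steps; composing these over the $m$ gates gives a poly‑size circuit $\mathcal C$ on $M+O(1)$ qubits. The $H$‑gates of the scattering models are handled identically, since each is $\alpha I+\beta L_{(k,k+1)}$ with $|\alpha|^2+|\beta|^2=1$ and adjacent $k$, and $\begin{pmatrix}\alpha&\beta\\\beta&\alpha\end{pmatrix}$ is the relevant $2\times 2$ unitary.

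Finally, run the $\DQC 1$ trace‑estimation procedure on $\mathcal C$ to obtain, with high probability, an estimate $\hat\nu$ of $2^{-(M+O(1))}\mathrm{Tr}(\mathcal C)$ accurate to $1/\Poly(n)$ in both real and imaginary parts (polynomially many one‑clean‑qubit runs, averaged), and output an affine rescaling of $\hat\nu$. The only care needed is trace bookkeeping: in $\DQC 1$ the $O(1)$ ancillas are maximally mixed rather than clean and the $2^{M}-n!$ padding states are present, but the circuit is block‑diagonal in the ancilla values (each ancilla returns to its initial value), so $\mathrm{Tr}(\mathcal C)=\mathrm{Tr}(C)+\kappa$ for a classically computable constant $\kappa$ (contributions of the blocks where an ancilla is nonzero, and the $e^{i\Theta}$ phase on the padding block, $\Theta=\sum_t\theta_t$). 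Subtracting $\kappa$ and multiplying by $2^{M+O(1)}/n!\in\Theta(1)$ recovers $\langle 123\ldots n|C|123\ldots n\rangle$ to within $1/\Poly(n)$; finite‑precision approximation of $\cos\theta,\sin\theta$ and of the reversible arithmetic (via Theorem~\ref{solkit}) adds only a further $1/\Poly(n)$. The corollary then follows: for the models of Section~\ref{models} every scattering amplitude in the all‑distinguishable sector is, by the first step, $\frac{1}{n!}\mathrm{Tr}$ of such a product of left/right transposition gates, hence $\DQC 1$‑estimable exactly as above.

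I expect the main obstacle to be the second step — forcing the encoding to be tight enough that the $2^{N}/n!$ blow‑up is a constant while keeping every ball‑permuting gate poly‑time implementable on the factorial encoding — together with the careful trace accounting imposed by the maximally mixed ancillas of the $\DQC 1$ model; by contrast the reduction to a trace (Theorem~\ref{ind}) and the appeal to $\DQC 1$‑completeness of trace estimation are routine.
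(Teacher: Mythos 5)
Your opening and closing steps match the paper: the identity $\langle 123\ldots n|C|123\ldots n\rangle=\tfrac{1}{n!}\mathrm{Tr}(C)$ (the paper's Lemma~\ref{lem1}, which you re-derive from Theorem~\ref{ind}), the extension to off-diagonal amplitudes via an extra $L_{\sigma^{-1}}$, the observation that the $H$-gates of Section~\ref{models} have the form $\alpha I+\beta L_{(k,k+1)}$ and are therefore special $X$-gates, and the appeal to Jordan--Shor trace estimation (Theorem~\ref{JordanShor}). The divergence, and the gap, is in the middle step, where you realize $C$ as a qubit circuit.

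You choose the rank encoding on $M=\lceil\log_2 n!\rceil$ qubits, making $2^M/n!=\Theta(1)$; this is even tighter than the paper's $\log n!+O(n)$-bit Lehmer/tree encoding. But the rank encoding is not \emph{local}: a single adjacent swap changes the Lehmer digits $a_k,a_{k+1}$, and in the integer $j=\sum_i a_i\,(n-i)!$ this is a global additive shift that touches most of the bits of $j$. Implementing $e^{i\theta T_k}$ therefore requires reversible arithmetic (rank-decode, apply the swap, rank-encode, compare), and every reversible realization of such a poly-time computation needs $\Poly(n)$ ancilla workspace --- the cost of Bennett's compute/uncompute, incurred in each of your steps (computing $b(j)$, the controlled-$T_k$, and their reversals) --- not the ``$O(1)$'' you assert. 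In $\DQC 1$ those ancillas are maximally mixed, so the effective normalization becomes $2^{M+\Poly(n)}/n!=2^{\Poly(n)}$, and since Jordan--Shor supplies only $1/\Poly(n)$ additive precision on $2^{-N}\mathrm{Tr}(\mathcal C)$, the rescaling destroys the error guarantee. Your ``classically computable $\kappa$'' is also unsupported --- on nonzero-ancilla blocks the Bennett routine produces garbage so the gadget runs an unrelated unitary whose trace you have not argued is accessible --- but even knowing $\kappa$ exactly would not rescue the rescaling. The paper's proof of Theorem~\ref{mainDQC1} avoids all of this precisely because its encoding is \emph{local}: a swap alters only the digit substrings $a_k,a_{k+1}$, a fixed set of $O(\log n)$ bits, and that fixed $O(\log n)$-qubit unitary is synthesized directly by Solovay--Kitaev with no ancillas at all. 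You correctly diagnosed that the encoding must be compressed to near $\log n!$ bits; what you missed, and what the paper's Lehmer/tree encoding is chosen to provide, is that it must simultaneously be local, so that the gate implementations cost zero extra qubits.
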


Usually, basing quantum computation on single amplitudes is hopeless, and one seeks for efficient sampling from the output distributions instead. But the $\DQC 1$ computation in theorem \ref{mainDQC1} only approximates single amplitudes, and this does not immediately imply efficient sampling. The existence of an efficient sampling from ball permuting circuits with $\DQC 1$ computation is unknown. Moreover, additive approximations can be problematic: if the amplitudes are exponentially small for generic circuits, outputting $0$ all the time will provide a good additive approximation. Whether or not this situation happens in this case is unknown.
 
The theorem is proved in three steps. First, in lemma \ref{lem1} it is motivated that for ball permuting circuits the computation of single amplitudes can be reduced to the computation of (normalized) traces. Next, we borrow a result of \cite{shor2008estimating} which provides a reduction from additive approximation of traces for unitary matrices to $\DQC 1$ computations. Finally, in the third step, by some careful analysis it is shown that the $\DQC 1$ reduction of the second step is an efficient one. The main idea for this step is to use a compressed encoding permutations with binary bits.

The amplitudes in ball permuting circuits are related to traces according to:

\begin{lemma}
For any ball permuting quantum circuit $C$, the trace $Tr(C)= n! \langle 123\ldots n | C | 123\ldots n \rangle$.
\label{lem1}
\end{lemma}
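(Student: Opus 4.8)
I want to show $\Tr(C) = n!\,\langle 123\ldots n \mid C \mid 123\ldots n\rangle$ for any ball permuting circuit $C$, i.e. any composition of $X$-operators. The key structural fact, already established in Theorem~\ref{ind} and its proof, is that $C$ commutes with every right action $R(\pi)$, and that the columns of $C$ in the $S_n$ basis are permutations of one another: concretely $\langle \sigma\circ\pi \mid C \mid \tau\circ\pi\rangle = \langle \sigma \mid C \mid \tau\rangle$ for all $\pi,\sigma,\tau\in S_n$. So the diagonal entry in the column labelled by $\pi$ satisfies $\langle \pi \mid C \mid \pi\rangle = \langle \pi\circ e \mid C \mid e\circ e\rangle$ after setting $\sigma=\tau=e$ and relabelling — more carefully, apply the column-permutation identity with $\tau = e$ and $\sigma=e$ to get $\langle \pi \mid C \mid \pi \rangle = \langle e \mid C \mid e\rangle$, where $\mid e\rangle = \mid 123\ldots n\rangle$.

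\medskip

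First I would record the commutation relation $[C, R(\pi)] = 0$ for every $\pi \in S_n$; since each generator $X(\theta,k) = \cos\theta\, I + i\sin\theta\, L_{(k,k+1)}$ is a polynomial in left actions $L$, and left actions commute with right actions ($L(\tau)R(\rho)\mid\sigma\rangle = \mid \tau\circ\sigma\circ\rho\rangle = R(\rho)L(\tau)\mid\sigma\rangle$), this is immediate. Next, from $C = R(\pi)^{-1} C\, R(\pi)$ I read off matrix entries: $\langle \sigma \mid C \mid \sigma\rangle = \langle \sigma \mid R(\pi)^{-1} C R(\pi) \mid \sigma\rangle = \langle \sigma\circ\pi \mid C \mid \sigma\circ\pi\rangle$, using $R(\pi)\mid\sigma\rangle = \mid\sigma\circ\pi\rangle$ and that $R(\pi)$ is unitary with $R(\pi)^\dagger = R(\pi^{-1})$. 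Choosing $\pi = \sigma^{-1}$ gives $\langle \sigma \mid C \mid \sigma\rangle = \langle e \mid C \mid e\rangle = \langle 123\ldots n \mid C \mid 123\ldots n\rangle$ for every $\sigma\in S_n$. Finally, summing over the $n!$ basis states,
$$
\Tr(C) = \sum_{\sigma\in S_n} \langle \sigma \mid C \mid \sigma\rangle = n!\,\langle 123\ldots n \mid C \mid 123\ldots n\rangle,
$$
which is the claim.

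\medskip

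\textbf{Main obstacle.} There is essentially no obstacle here — the lemma is a direct corollary of the homogeneity already proved in Theorem~\ref{ind}. The only point requiring a little care is making the quantifier manipulation clean: one must be sure that the identity "columns are permutations of the top column" is being applied to the \emph{diagonal} entries correctly, i.e. that the permutation relating column $\sigma$ to column $e$ also carries the $\sigma$-th row entry to the $e$-th row entry. This is exactly what conjugation by the unitary $R(\pi)$ guarantees (it permutes rows and columns simultaneously by the same $\pi$), so phrasing the argument via $[C,R(\pi)]=0$ rather than via the raw entrywise identity $\beta_{\sigma\circ\pi} = \alpha_\sigma$ of Theorem~\ref{ind} is the cleanest route and avoids any index bookkeeping error. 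The result then feeds directly into the $\DQC1$ trace-estimation reduction of \cite{shor2008estimating} in the next step of the proof of Theorem~\ref{mainDQC1}.
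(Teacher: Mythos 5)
Your argument is correct and is essentially identical to the paper's own proof: both establish $[C,R(\pi)]=0$ and then conjugate the diagonal entry $\langle\sigma|C|\sigma\rangle$ by $R(\sigma^{-1})$ to show all diagonal entries equal $\langle 123\ldots n|C|123\ldots n\rangle$, then sum. Your extra care in phrasing the reduction via the unitary conjugation rather than the raw entrywise identity of Theorem~\ref{ind} is a harmless (and slightly cleaner) elaboration, not a different route.
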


\begin{proof}
A quantum ball permuting circuit, by definition, consists of left permuting actions only which commute with right actions $R(\sigma)$ (relabeling) for any $\sigma \in S_n$. Thereby $\langle 123\ldots n | C | 123\ldots n \rangle = \langle 123\ldots n | R^{-1}(\sigma) C R(\sigma)| 123\ldots n \rangle= \langle \sigma | C | \sigma \rangle$. From this, $Tr(C)= \sum_{\sigma \in S_n} \langle \sigma | C | \sigma \rangle = n! \langle 123\ldots n | C | 123\ldots n \rangle$.
\end{proof}

Next, we formally mention the problem of trace approximation:

\begin{definition}
($\Trace$) given as input the $\Poly(n)$ size description of a unitary circuit $U$ as a composition of gates from a universal gate set over $n$ qubits, compute a complex number $t$ such that $|t-\dfrac{1}{2^n} Tr(U)| \leq \dfrac{1}{\Poly(n)}$, with high probability.
\end{definition}

The following theorem provides an efficient $\DQC 1$ algorithm for $\Trace$:

\begin{theorem}
(Jordan-Shor \cite{shor2008estimating}) $\Trace \in \DQC1$. \footnote{Moreover, the authors show that $\Trace$ is a complete problem for this class.}
\label{JordanShor}
\end{theorem}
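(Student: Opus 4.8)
The plan is to prove membership of $\Trace$ in $\DQC 1$ via the \emph{one-clean-qubit Hadamard test}, which converts the bias of a single measured qubit into the normalized trace of the circuit under consideration. Throughout, a ``$\DQC 1$ algorithm'' is understood in its standard estimation flavour: a polynomial-time Turing machine outputs the description of a one-clean-qubit circuit, the circuit is run polynomially many times on $\ket 0 \bra 0 \tensor \frac{I}{2^n}$, the clean qubit is measured each time, and a classical post-processor averages the outcomes. (We only need the ``$\in$'' direction; the $\DQC 1$-hardness promised in the footnote is separate and not attempted here.)

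First I would set up the test circuit. Let $U$ be the input unitary on $n$ qubits, given as a polynomial-size circuit over a universal gate set. Adjoining one fresh clean qubit as a control, I form the controlled operator $CU := \ket 0 \bra 0 \tensor I + \ket 1 \bra 1 \tensor U$; since a controlled version of each gate of a universal set is again a constant-size circuit, $CU$ has a polynomial-size description computable from that of $U$ in polynomial time. The real-part circuit is $V_{\mathrm{re}} := (H \tensor I)\, CU\, (H \tensor I)$, where $H$ is the Hadamard on the clean qubit, and the imaginary-part circuit is $V_{\mathrm{im}} := (H \tensor I)\, (S^\dagger \tensor I)\, CU\, (H \tensor I)$ with $S^\dagger = \mathrm{diag}(1,-i)$ on the clean qubit. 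Both are polynomial size and both fit the $\DQC 1$ template exactly: a polynomial-size unitary applied to $\ket 0 \bra 0 \tensor \frac{I}{2^n}$ followed by a measurement of the first qubit.

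Next I would compute the measurement statistics, which is the heart of the argument. Expanding $\frac{I}{2^n} = \frac{1}{2^n}\sum_{y\in\{0,1\}^n}\ket y\bra y$ and writing the clean-qubit component of $(H\tensor I)\,\ket 0\bra 0\,(H\tensor I)$ as $\frac12\sum_{a,b\in\{0,1\}}\ket a\bra b$, one tracks the effect of $CU$, which sends $\ket a\bra b \tensor M$ to $\ket a\bra b \tensor U^{a} M U^{-b}$, and then of the final $H$ (and of $S^\dagger$ in the imaginary case). A short density-matrix calculation shows that the probability of measuring $0$ on the clean qubit of $V_{\mathrm{re}}\bigl(\ket 0\bra 0 \tensor \tfrac{I}{2^n}\bigr)V_{\mathrm{re}}^\dagger$ equals $\frac12 + \frac{\mathrm{Re}\,\mathrm{Tr}(U)}{2^{n+1}}$: the four pairs $(a,b)\in\{0,1\}^2$ contribute $\mathrm{Tr}(U^{a-b}) \in \{2^n, 2^n, \mathrm{Tr}(U), \mathrm{Tr}(U^{-1})\}$ and $\mathrm{Tr}(U) + \mathrm{Tr}(U^{-1}) = 2\,\mathrm{Re}\,\mathrm{Tr}(U)$. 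The identical bookkeeping with the extra $S^\dagger$ replaces the cross terms by $\pm i\,\mathrm{Tr}(U^{\mp 1})$ and gives clean-qubit acceptance probability $\frac12 + \frac{\mathrm{Im}\,\mathrm{Tr}(U)}{2^{n+1}}$ for $V_{\mathrm{im}}$. Thus the two biases jointly encode $\frac{1}{2^n}\mathrm{Tr}(U)$.

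Finally I would handle amplification and error control. Each run of $V_{\mathrm{re}}$ (resp. $V_{\mathrm{im}}$) produces an independent Bernoulli bit with mean $p_{\mathrm{re}} = \frac12 + \frac{\mathrm{Re}\,\mathrm{Tr}(U)}{2^{n+1}}$ (resp. $p_{\mathrm{im}}$); by a Hoeffding bound, $O(\epsilon^{-2}\log(1/\eta))$ runs estimate each mean within additive $\epsilon$ with failure probability at most $\eta$. Taking $\epsilon = \Theta(1/\Poly(n))$ and $\eta$ inverse polynomial (or smaller) makes the total number of runs polynomial in $n$; setting $t := 2(\hat p_{\mathrm{re}} - \tfrac12) + 2 i(\hat p_{\mathrm{im}} - \tfrac12)$ and taking a union bound over the two failure events yields $|t - \frac{1}{2^n}\mathrm{Tr}(U)| \le \frac{1}{\Poly(n)}$ with high probability, which is exactly what the definition of $\Trace$ asks for. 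The only points that need genuine (if routine) care — the nearest thing to an obstacle — are checking that $CU$ has a polynomial-size description obtainable in polynomial time from the circuit for $U$, and noting that the function/estimation reading of $\DQC 1$ (polynomially many one-clean-qubit runs plus classical averaging) is the intended one; the trace identity itself is a one-line computation once the Hadamard test is in place.
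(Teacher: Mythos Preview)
Your proposal is correct and is exactly the standard Hadamard-test argument; the paper itself does not give a proof of this theorem but cites it, and its one-line reformulation immediately after the statement---a $\DQC 1$ experiment yielding a coin with bias $\tfrac{1}{2}+\tfrac{1}{2}\tfrac{\Re\,\mathrm{Tr}(U)}{2^n}$ (and similarly for the imaginary part)---is precisely the outcome of your controlled-$U$ Hadamard test. So you have reproduced the intended construction.
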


Indeed, this theorem can be reformulated as: given an $n$ qubit unitary $U$, there is a round of $\DQC 1$ computation which reveals a coin which gives heads with probability $\dfrac{1}{2} + \dfrac{1}{2}\dfrac{\Re Tr (U)}{2^n}$. Also, there is another similar computation which gives a coin with bias according to the imaginary part of the normalized trace.

Using these observations, we are ready to present the proof of the main theorem:

\begin{proof}
(of theorem \ref {mainDQC1}) The objective is find an efficient algorithm which given a ball permuting circuit $C$ over $n$ labels, outputs the description of a unitary $U$ over $m=\Poly(n)$ qubits such that $\dfrac{1}{2^m}Tr(U) =\dfrac{1}{\Delta(n)} \langle 123\ldots n | C | 123 \ldots n \rangle$, with $\Delta(n)=\Poly(n)$. Given this reduction using theorem \ref{JordanShor} we deduce that the additive approximation of the amplitude can be obtained by rounds of $\DQC 1$ computation. 

The idea is to encode permutations with strings of bit. For the reasons that we are going to discuss later in this section, we need the encoding to be as compressed as possible. More precisely, we need an encoding that uses $O(\log (n! \Poly(n))$ number of bits. Moreover, in order to provide efficient quantum circuits, the code needs to be local, in the sense that in order to apply a swap, we just need to alter few ($O(\log n)$) bits. Otherwise, it is not clear if it is possible to implement quantum circuits efficiently.

Inspired by factorial number system and Lehmer code, we use an encoding of permutations that uses $\log n! + O(n)$ number of bits. Moreover, we confirm that in order to apply a swap, only $O( \log n)$ bits of the code should be altered.

The encoding of each permutation, $\sigma(1), \sigma(2),\ldots, \sigma(n)$ ($\sigma \in S_n$), is accomplished by a walk from root to each leaf of the following tree, $T_n$: consider a tree with its root located at node $0$, as we mark it to be distinct. Let node $0$ have degree $n$, with its children marked with numbers $1,2,3,\ldots, n$, from left to right. Denote these nodes by layer $1$. Let each node of layer $1$ have $n-1$ children, and label each child of node $i$ in layer $1$, by numbers $[n]-\{i\}$, in an increasing order from left to right. Construct the tree inductively, layer by layer: each node $k$ in layer $j$ have $n-j$ children, and the children labeled with numbers $[n]-L_k$. Where $L_k$ is the set of labels located on the path from node $0$ to node $k$. Therefore, nodes of layer $n$ have no children. The number of leaves of the tree is $n!$. For each leaf there is a unique path from root down to the leaf, and the indexes from top to down represent a permutation. This is because the indexes of each path are different from each other. Also each permutation $\sigma$ is mapped to a unique path in this tree: start from node $0$, pick the child with index $\sigma(1)$, then among the children of $\sigma(1)$, pick the child with index $\sigma(2)$ and so on. Therefore, this establishes a one-to-one map between the paths on $T_n$ and permutations of labels in $[n]$.

The next step is to provide a one-to-one mapping from the paths on the graph to bit strings of length $\log n! + O(n)$. First, label the edges of $T_n$ by the following. For each node of degree $p$, with children labeled with $x_0< x_1< \ldots< x_{p-1}$, label the edge incident to $x_0$ by $0$, the edge incident to $x_1$ by $1$, and so on. Given these edge labels, The construction is simple: represent each path with the bit string $a_n a_{n-1} \ldots a_0$, where $a_j$ is a bit string of length $\lceil \log j\rceil =\log j + O(1)$, is the binary representation of the label of the edge used in the $j$'th walk.

The final step is to show that in order to apply a swap on this encoding one needs to alter only $O(\log n)$ bits. Suppose that the permutations $\sigma=\sigma(1), \sigma(2), \ldots, \sigma(k), \sigma(k+1), \ldots, \sigma(n)$ and $\pi=\sigma(1), \sigma(2), \ldots, \sigma(k+1), \sigma(k), \ldots, \sigma(n)$ are represented by the binary encoding $X=a_1, a_2, \ldots a_k, a_{k+1}\ldots, a_n$ and $Y=b_1, b_2, \ldots b_k, b_{k+1}\ldots, b_n$, respectively. Clearly, $\pi$ can be obtained from $\sigma$ by swapping the element $k$ and $k+1$. Notice that $a_1=b_1, a_2=b_2, \ldots , a_{k-1}=b_{k-1}$. This is because the corresponding path representations of the two permutations on $T_n$ walk through the same node at the $k-1$'th walk. Also $a_{k+2}=b_{k+2}, \ldots , a_n=b_n$. This is because the subtrees behind the $k+2$'th layer nodes in the two paths are two copies of the same tree, since their nodes consist of same index sets. Therefore, $X$ and $Y$ differ only at $a_k, a_{k+1}$ and $b_k, b_{k+1}$ substrings. As a consequence of these observations, the bit-string codes for two permutations that differ in adjacent labels only, are different in $O (\log n)$ bits.

If in the description of $C$ nonadjacent swaps are implemented, we simulate these swaps by adjacent ones. We construct $U$ by approximating each adjacent $X$ gate in $C$.  Each such gate alters $O(\log n)$ bits and because of the Solovay-Kitaev theorem, there exists a $\Poly(n, \log 1/\epsilon)$ size circuit that approximates each $X$ gate within error $\epsilon$.
\end{proof}

In the proof above we mentioned that a compressed encoding of permutations is necessary to establish the result with this approach. Here, we mention an example of a slightly less compressed encoding which makes the situation hopeless: represent each number in $[n]$ with $\log n$ bits. Simulate each $X$ gate in $C$ with a qubit quantum circuit which swaps the encoded numbers in a superposition. Such a quantum circuit can be efficiently obtained from a universal gate set. Again the reason for the existence of such efficient circuit is the Solovay-Kitaev theorem. Let $U$ be the composition of these unitary circuits. The objective is to do a $\DQC 1$ computation to obtain an approximation to $Tr(U)/D$, where $D$ is the dimension of the Hilbert space that $U$ is acting on. Among the summands of $Tr(U)$ there are terms like $\langle b | U | b\rangle$, where $b$ is a string of bits with repeated labels (for example $|1 1 2 3 4\rangle$). In order to avoid the contribution of these terms, use $\dfrac{n(n-1)}{2}$ more (flag register) qubits, $f_{ij}, i< j \in [n]$. Then we add another term $T$ to the quantum circuit to obtain $U T$. The role of $T$ is simply to modify the flag registers in a way that the contribution of unwanted terms in the trace becomes zero: for each $i<j \in [n]$, using sequences of $CNOT$ gates, $T$ compares the qubits $(i-1) \lceil \log n\rceil +1$ to $i.\lceil \log n\rceil$ with the qubits $(i-1) \lceil \log n\rceil +1$ to $i.\lceil \log n\rceil$, bit by bit, and applies $NOT$ to the register $f_{i,j}$ if the corresponding bits are all equal to each other. Then $U T$ is fed into the $\Trace$ computation. Let's see what approximation to $ \langle 123\ldots n | C| 123 \ldots n \rangle$ we get in this case. Let $N:=n\lceil \log n\rceil+n(n-1)/2$. The trace $Tr(U)=\sum_{x\in \{0,1\}^N} \langle x|U|x \rangle$. Given the described construction, the term $\langle x | U | x\rangle= \langle \sigma | C |\sigma \rangle$, if and only if the label part of $x$ is the correct encoding of the permutation $\sigma$, and if $x$ is not a correct encoding of a permutation it gives $0$. There are $2^{n(n-1)/2}$ strings like $x$ which encode $\sigma$ correctly, therefore:

$$
1/2^N Tr(U)= \dfrac{2^{n(n-1)/2}}{2^N} Tr (C)= \dfrac{n!}{2^{n \lceil \log n \rceil}} \langle 123\ldots n | C | 123 \ldots n \rangle.
$$

\noindent This is problematic, since the coefficient $\dfrac{n!}{2^{n \lceil \log n \rceil}}$ can be exponentially small, and thereby polynomial iterations of the $\DQC1$ computation wouldn't reveal any information about the desired amplitude. Taking a close look at the this coefficient, it is observed that for any encoding of permutations with bit-strings, the proportionality constant appears as:

$$
\dfrac{n!}{dim{V}}
$$

\noindent where $V$ is the dimension of the Hilbert space that is used to encode permutations in it. In the latter example, we used $O (n \log n )$ bits to encode permutations of $n$ labels, which is less compressed than $\log n! + O(n)$, used in the proof of theorem \ref{mainDQC1}.

We believe that this result can be generalized to a wide variety of quantum models based on group algebras. More precisely, consider a group $G$, with identity element $e$. Then construct the Hilbert space $\mathcal{H}_G$ with orthonormal basis $\{ |g\rangle : g \in G \}$. Let $\C G$ be the (left) group algebra, and $x \in \C G$. Then $\langle e | x | e \rangle= \dfrac{1}{|G|} Tr (x)$, which is a reduction to the computation of normalized trace. The only issue is with the encoding of the bases of the Hilbert space with local binary strings.

\section{Some Lower-bounds}
\label {lowerbounds}
\subsection{$\ZQBALL=\BQP$}

In this part we use a simple encodings of qubits using labels $1,2,3\ldots, n$, and the $Z$ operators to operate on them as single and two qubit gates. More specifically, we prove that using a sequence of $Z$ operators, one can encode any element in the special orthogonal group. For an example of encoded universality see \cite{gottesman2001encoding, bacon2001encoded}. We encode each qubit using two labels. Given two labels $a < b$ we define the encoded (logical) qubits as:

$$
|0\rangle := | a b \rangle
$$

and,

$$
|1\rangle := i |b a\rangle.
$$

\noindent Using simple $X(\theta,1)$ we can apply arbitrary rotation of the following form:

$$
|0\rangle \rightarrow \cos \theta |0\rangle+\sin \theta |1\rangle
$$

and,

$$
|1\rangle \rightarrow \cos \theta |1\rangle-\sin \theta |0\rangle.
$$

\noindent We are dealing with orthogonal matrices which are represented over the field or real numbers. Using the $Z$ operators, we can discuss a controlled swap of the form:

$$
S(i,j,k,l) := Z(\pi/2 \delta_{i,j}, k , l).
$$

In simple words, $S(i,j,k,l)$ applies the swap $i L_(k,l)$, on the $k$ and $l$'th labels if and only if the content of these label locations are $i$ and $j$ ( $j$ and $i$). We can also extend it to the following form:

$$
S(\{(i_1,j_1)^{s_1}, (i_2,j_2)^{s_2}, \ldots, (i_t,j_t)^{s_t}\},k,l) := Z(\pi/2 \delta_{i,j}, k , l).
$$

\noindent Where $s_m$ can be a symbol $\star$ or nothing. Given $(i_m, j_m)^\star$ in the list means that the swap $(i L_{(k, l)})^\dagger= - i L_{(k, l)}$ is applied if the content of $k$ and $l$ are $i_m$ and $j_m$. And given plain $(i_m, j_m)$ in the list means $i L_{(k, l)}$ if the content of $k$ and $l$ are $i_m$ and $j_m$.

Suppose that one encodes one qubit with labels $a < b$ and another one with $x < y$, we wish to find a unitary operator which applies a \textit{controlled not} on the two qubits, that is the following map:

\begin{eqnarray*}
|00\rangle &:=& | a,b, x,y\rangle \rightarrow  | a,b, x,y\rangle=|00\rangle\\
|01\rangle &:=& i| a,b, y,x\rangle \rightarrow  i| a,b, y,x\rangle=|01\rangle\\
|10\rangle &:=& i| b,a, x,y\rangle \rightarrow -| b,a, y,x\rangle=|11\rangle\\
|10\rangle &:=& -| b,a, y,x\rangle \rightarrow i | b,a, x,y\rangle =|10\rangle
\end{eqnarray*}

\noindent It can be confirmed that the following operator can do this:

$$
\hspace{-1cm}
C:= S (\{(a , x), (a , y)^\star\}, 1, 3) S (\{(a , x), (a , y)\}, 2, 3) S (\{(a , x), (a , y)\}, 1, 2)
$$

\noindent Given these two operators, one can simulate special orthogonal two-level systems, that is for each orthonormal $|\psi\rangle$ and $|\phi\rangle$ in the computational basis of $n$ qubits we can apply an operator which acts as:

$$
|\psi\rangle \rightarrow \cos \theta |\psi\rangle + \sin \theta |\phi\rangle 
$$

and,

$$
|\phi\rangle \rightarrow \cos \theta |\phi\rangle - \sin \theta |\psi\rangle 
$$

\section{Quantum Ball Permuting Model on Arbitrary Initial States}

As mentioned in theorem ~\ref{ind}, the columns of a ball permuting operator as a unitary matrix are all permutations of each other. Moreover, one can observe the following property:

\begin{lemma}
If $C$ is any composition of $X$ ball permuting operators over $\C S_n$, then $C|123\ldots n\rangle = |123\ldots n \rangle$ if and only if $C=I$.
\end{lemma}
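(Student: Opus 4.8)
The plan is to exploit the structure established in Theorem~\ref{ind}: a composition $C$ of $X$ operators commutes with every right action $R(\pi)$, so its columns in the $S_n$ basis are all permutations of the first column. Writing $C|123\ldots n\rangle = \sum_{\sigma\in S_n}\alpha_\sigma|\sigma\rangle$, the identity-column hypothesis says $\alpha_{e}=1$ where $e=123\ldots n$, and since $C$ is unitary its first column is a unit vector, forcing $\alpha_\sigma=0$ for all $\sigma\neq e$. Thus the first column of $C$ is exactly $|e\rangle$.

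Next I would propagate this to every column. By the proof of Theorem~\ref{ind}, the $\pi$-th column satisfies $C|\pi\rangle = \sum_\sigma \alpha_\sigma|\sigma\circ\pi\rangle$; substituting $\alpha_\sigma=\delta_{\sigma,e}$ gives $C|\pi\rangle = |\pi\rangle$ for every $\pi\in S_n$. Since $C$ fixes a basis pointwise, $C=I$. The converse direction ($C=I \Rightarrow C|e\rangle=|e\rangle$) is trivial. So the whole argument is essentially a two-line consequence of Theorem~\ref{ind} together with unitarity (which guarantees the columns are unit vectors, so a single entry of modulus $1$ exhausts the column).

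The only place that needs a word of care is the claim that $\alpha_e = 1$ and $\|C|e\rangle\|_2 = 1$ together force $\alpha_\sigma = 0$ elsewhere: this is just the equality case of Cauchy--Schwarz / the fact that $\sum_\sigma|\alpha_\sigma|^2 = 1$ with one term already equal to $1$. I do not anticipate a genuine obstacle here; the statement is a structural corollary rather than a deep fact. If anything, the subtlety to flag is that the lemma is specific to $X$ (equivalently $Y$) operators and genuinely uses the commutation $[C,R(\pi)]=0$ — it would fail for general $Z$ operators, which is exactly why the later sections treating $\ZQBALL$ behave differently, and it is worth remarking that the lemma therefore does not obstruct universality for the $Z$-model.
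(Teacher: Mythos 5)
Your proof is correct and follows essentially the same route as the paper: both arguments rest on the fact that $C$ commutes with the right actions $R(\pi)$, which propagates $C|123\ldots n\rangle = |123\ldots n\rangle$ to $C|\pi\rangle = |\pi\rangle$ for all $\pi \in S_n$ and hence $C=I$. One small remark: the unitarity/Cauchy--Schwarz step is superfluous, since the hypothesis $C|123\ldots n\rangle = |123\ldots n\rangle$ already determines the entire first column (all $\alpha_\sigma = \delta_{\sigma, e}$), not merely the diagonal entry $\alpha_e = 1$.
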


\begin{proof}
One direction is clear, that is if $C=I$ then $C|123\ldots n\rangle = |123\ldots n \rangle$. In order to see the other direction, suppose that $C|123\ldots n\rangle = |123\ldots n \rangle$, then act $R(\sigma)$ for all $\sigma \in S_n$ to the both sides to obtain $R(\sigma) C|123\ldots n\rangle = R(\sigma) |123\ldots n \rangle$. Since any ball permuting circuit commutes with a right (relabeling) action, we get $C|\sigma\rangle = |\sigma \rangle, \forall \sigma \in S_n$, which readily implies $C=I$.
\end{proof}

The lemma states that the group of ball permuting operators that stabilize $|123\ldots n\rangle$ is trivial. Also, one can extend this to all states like $|\sigma\rangle$ for $\sigma \in S_n$, and indeed all states that are reachable by an $X$ ball permuting circuit from $|123\ldots n \rangle$. This suggests that the set of ball permuting gates is contained in a subgroup $SU(n!)$ as manifold of lower dimension. However, although this does not rule out encoded universality, because of the following corollary there is some evidence, supporting that even encoded $\BQP$ universality and therefore $\BQP$ universality is impossible for ball permuting circuits in this case:

\begin{corollary}
Let $|0_L\rangle=C_0 |123\ldots n\rangle$ and $|1_L\rangle=C_1 |123\ldots n\rangle$, with $C_0$ and $C_1$ being ball permuting circuits, be any logical encoding of a qubit. Then for any $N\geq 1$ and any string $x \in \{0,1\}^N$, a ball permuting unitary $U$ is a stabilizer of the encoded $|x\rangle$, if and only if $U=I$.
\label{nonu}
\end{corollary}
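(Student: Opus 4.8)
The plan is to reduce Corollary \ref{nonu} to the lemma immediately preceding it. The encoded basis state $|x\rangle$ for $x = x_1 x_2 \ldots x_N \in \{0,1\}^N$ is built as a tensor-like product of the single-qubit encodings $|0_L\rangle = C_0|123\ldots n\rangle$ and $|1_L\rangle = C_1|123\ldots n\rangle$; but since these encodings live inside $\C S_n$ and the construction is not scalable, I should be careful and instead argue structurally. The key observation I would use is exactly the one that drove the lemma: every encoded state $|x\rangle$ is itself of the form $C_x|123\ldots n\rangle$ for some composition $C_x$ of $X$ ball permuting operators. This is because each $|0_L\rangle$ and $|1_L\rangle$ is reachable from $|123\ldots n\rangle$ by a ball permuting circuit, and the encoded multi-qubit states are assembled from these pieces by further ball permuting gates (on disjoint blocks of labels, or via the explicit $S(\cdot,\cdot,\cdot,\cdot)$-type constructions of the previous section), so the whole preparation is a single ball permuting circuit $C_x$.

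Granting that, the argument is short. Suppose $U$ is a ball permuting unitary with $U|x\rangle = |x\rangle$. Write $|x\rangle = C_x|123\ldots n\rangle$. Then $U C_x |123\ldots n\rangle = C_x|123\ldots n\rangle$, hence $C_x^{-1} U C_x |123\ldots n\rangle = |123\ldots n\rangle$. Now $C_x^{-1} U C_x$ is again a composition of $X$ ball permuting operators (the $X$ operators are closed under inverses by $X^\dagger(\theta,k) = X(-\theta,k)$, and under composition), so it is a ball permuting circuit fixing $|123\ldots n\rangle$. By the preceding lemma, $C_x^{-1} U C_x = I$, and therefore $U = I$. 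The converse direction ($U = I$ stabilizes everything) is trivial.

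The main obstacle — and the step I would want to nail down carefully — is the claim that every encoded state $|x\rangle$ is of the form $C_x|123\ldots n\rangle$ with $C_x$ a genuine composition of $X$ operators. One has to check that the block-wise preparation really can be written using the $X$-type gates on $\C S_n$ (the paper's non-scalability remark means the blocks are not literal tensor factors, so the bookkeeping of which labels each sub-circuit touches needs attention), and that combining the single-qubit preparations on different encoded qubits does not require gates outside the ball permuting family. Once that structural fact is in hand, the conjugation trick plus the lemma finishes it immediately; everything else is routine. I would also remark, as the corollary's placement suggests, that this triviality of stabilizers is precisely the obstruction to encoded $\BQP$-universality for $X$ circuits on the fixed initial state $|123\ldots n\rangle$, which is why the positive universality result needs the $Z$ operators (Section \ref{lowerbounds}) or arbitrary initial states.
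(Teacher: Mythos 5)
Your proposal is correct and follows essentially the same route as the paper: write the encoded state as $|x\rangle = G_x|12\ldots nN\rangle$ for a ball permuting circuit $G_x$, conjugate the stabilizer condition by $G_x^{-1}$ to reduce to the preceding lemma, and conclude $U=I$. The one gap you flag — that the multi-qubit encoded state is itself reachable by a single ball permuting circuit — is exactly what the paper addresses, by taking $G_x := C_{x_1}\otimes\cdots\otimes C_{x_N}$ to be the composition of the single-qubit preparation circuits acting on disjoint blocks of labels (which commute and are each ball permuting, hence so is $G_x$); you could have closed the gap yourself with this observation rather than leaving it as a caveat.
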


\begin{proof}
Again, one direction of the proof is clear, that is if $U=I$, then $U|x\rangle =|x\rangle$. Now suppose that $U|x\rangle=|x\rangle$. Then $U C_{x_1} \otimes C_{x_2} \otimes \ldots \otimes C_{x_N} |12,3,\ldots, n N\rangle=C_{x_1} \otimes C_{x_2} \otimes \ldots \otimes C_{x_N} |12,3,\ldots, n N\rangle$. Since $C_{x_j}$ are unitary ball permuting circuits, they have inverses, and $G_x := C_{x_1} \otimes C_{x_2} \otimes \ldots C_{x_N}$ is also a ball permuting gate and has an inverse $G^{-1}_x$ which is also a ball permuting circuit. Therefore, $G^{-1}_x U G_x |123\ldots n N \rangle=|123\ldots n N \rangle$. From lemma ~\ref{nonu}, $G^{-1}_x U G_x=I$ which implies $U=I$.
\end{proof}

\noindent In the proof, we used two facts: first that $C_0$ and $C_1$ have inverses, and that they commute with the relabeling $R$ operators. Indeed, the lemma applies to non-unitary $C_{0}$ and $C_{1}$, as long as they have the commuting and the inverse properties. we finally conclude the following general non-universality criterion:

\begin{corollary}
The result of corollary ~\ref{nonu} remains true if $C_0$ and $C_1$ have inverses and commute with the relabeling operators.
\end{corollary}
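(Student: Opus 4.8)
The statement is established by inspecting the two preceding arguments --- the stabilizer lemma and Corollary~\ref{nonu} --- and isolating exactly where the hypothesis that $C_0,C_1$ are ball permuting circuits was used. The plan is to verify that in both places only two features of $C_0,C_1$ entered: that they are invertible, and that they commute with every relabeling operator $R(\sigma)$. Since the generalized hypothesis supplies precisely these two features, the same reasoning applies line for line.

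The first step is to reformulate the stabilizer lemma in commutant form: if $C$ is any operator on $\C S_n$ (resp.\ on the product space carrying the $N$-fold encoding) that commutes with every relabeling operator $R(\sigma)$ and fixes the identity-permutation state $|123\ldots n\rangle$ (resp.\ $|123\ldots nN\rangle$), then $C=I$. The proof is verbatim the one already given for ball permuting circuits: applying $R(\sigma)$ to both sides of $C|123\ldots n\rangle=|123\ldots n\rangle$ and using $[C,R(\sigma)]=0$ gives $C|\sigma\rangle=|\sigma\rangle$, and the states $|\sigma\rangle$ obtained this way span the space, whence $C=I$. Nowhere is it used that $C$ is built from $X$-gates; only commutation with relabelings and spanning of the orbit are needed.

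The second step is to re-run the proof of Corollary~\ref{nonu} with $C_0,C_1$ merely invertible and relabeling-commuting. The operator $G_x:=C_{x_1}\otimes\cdots\otimes C_{x_N}$ is still invertible, with inverse $C_{x_1}^{-1}\otimes\cdots\otimes C_{x_N}^{-1}$, and it still commutes with the (product) relabeling operators, since each tensor factor does. If a ball permuting unitary $U$ stabilizes the encoded state $|x\rangle=G_x|123\ldots nN\rangle$, then conjugation gives $(G_x^{-1}UG_x)|123\ldots nN\rangle=|123\ldots nN\rangle$, and $G_x^{-1}UG_x$ commutes with the relabeling operators because $U$ does (being a composition of $X$-gates, which commute with all $R(\sigma)$) and $G_x,G_x^{-1}$ do. The commutant form of the stabilizer lemma then forces $G_x^{-1}UG_x=I$, hence $U=I$, which is the claim.

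The one place that genuinely requires the reformulation --- and the main obstacle --- is that, unlike in the original corollary, $G_x^{-1}UG_x$ is no longer itself a ball permuting circuit, so one cannot appeal to the literal stabilizer lemma; the commutant version is exactly what is needed, and it must be stated and proved first. Some care is also needed to pin down the ambient space and the relevant family of relabeling operators so that the orbit of $|123\ldots nN\rangle$ under them still spans (this is why I would phrase everything in terms of the product encoding space rather than the ambient $\C S_{nN}$). Everything else is bookkeeping: that a tensor product of invertible, relabeling-commuting operators is again invertible and relabeling-commuting, and that inverses of such operators inherit both properties.
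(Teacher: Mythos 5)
Your proposal is correct and takes essentially the same approach as the paper, which gives only a one-line justification before the corollary (noting that the proof of Corollary~\ref{nonu} used only the two facts that $C_0,C_1$ have inverses and commute with the relabeling operators). Your key improvement is to make explicit that the stabilizer lemma must be restated in commutant form — as a statement about \emph{any} operator commuting with the $R(\sigma)$'s and fixing $|123\ldots n\rangle$ — because $G_x^{-1}UG_x$ need no longer be a ball permuting circuit once $C_0,C_1$ are general; the paper glosses over this. On the ambient-space worry you raise at the end: it in fact resolves itself, since the commutant of the right regular representation $R(S_n)$ in $\mathrm{End}(\C S_n)$ is exactly the left group algebra $\C L(S_n)$, so the hypothesis "commutes with the relabeling operators" already forces each $C_{x_j}$ to be a $\C$-linear combination of left actions $L(\sigma)$, $\sigma\in S_n$; these embed block-wise into $\C S_{nN}$ as elements of $\C L(S_{nN})$, which commute with the full $R(S_{nN})$, and the orbit of $|123\ldots nN\rangle$ under $R(S_{nN})$ spans as needed, so the argument closes in $\C S_{nN}$ after all.
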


While computing with the $|123\ldots n\rangle$ initial state, results in a presumably week model, one can use other initial states to break the conditions of lemma ~\ref{nonu}. For example, if one allows the initial state $|\psi\rangle = \dfrac{1}{n!} \sum_{\sigma \in S_n} |\sigma \rangle$ then the application of any operator of the form $X(\theta_1, \cdot )X(\theta_2, \cdot )\ldots X(\theta_p, \cdot )$ results in the state $\exp(i(\theta_1+\theta_2+\ldots+ \theta_p)) |\psi \rangle$, so as long as the angles sum up to zero the operator stabilizes $|\psi\rangle$. Indeed, the projection $\dfrac{1}{n!}\sum_{\sigma \in S_n} R(\sigma)$ maps any state in $\C S_n$ to a state proportional to $|\psi\rangle$.

In this section we provide evidence for encoded universality of the ball permuting model, in the case where initial states other than $|123\ldots n\rangle$ are allowed. This requires three considerations: first, we need to find a subspace $V \subseteq \mathcal{H}$, that is invariant under ball permuting operators, and that also scales exponentially in $n$ in dimension. Secondly, we need to find a way of composing the $X$ operators to act densely in $SU(V)$. Thirdly, we need to find a way to sample from the output states in $V$, in the color basis, to extract nontrivial information about these states. In order to achieve the second goal, we find a reduction from a model that is already known to be $\BQP$ universal. See section .

\subsection{Theory of Decoherence Free Subspaces}
\label{dfs}

The theory of decoherence free subspaces was originally motivated by the following problem \cite{kempe2001theory}. Let $\mathcal{H}$ be a Hilbert space, and let $N$ be the set of Hamiltonians, as the unwanted noise interactions. The objective is to find a large enough subspace $V\subseteq\mathcal{H}$ that is unaffected by the noise operators. Such a subspace is called a decoherencce free subspace. It is tempting to find a set of local and feasible quantum Hamiltonians, $E$, which commute with $N$ and affect the decoherence free subspace only. Then universal quantum computation is possible if $E$ acts as a universal gate set on $V$.

Ideally, the Hilbert space is decomposable into two separate subsystems $\mathcal{H}=\mathcal{H}_1 \otimes \mathcal{H}_2$, and the action of $N$ affects $\mathcal{H}_2$ only, and acts trivially on $\mathcal{H}_1$. Then, any universal gate set acting on $\mathcal{H}_1$ can reliably do universal quantum computing. However, in general, $N$ can mix all local degrees of freedom of the available subsystems at the same time. Thereby, instead of subsystems, one can think about subspaces which mimic the structure of decoupled subsystems. Intuitively, this can be interpreted as subspaces of a Hilbert space simulating decoupled subsystems. 

In order to find such a decomposition, it is helpful to consider a larger structure, $A$, as the matrix algebra generated by operators of $N$, by matrix composition and scalar linear combinations. This is a vector space of matrices, with matrix multiplication as the vector on vector action. Then, under certain conditions, $A$ is isomorphic to the decomposition of smaller irreducible matrix algebras:

$$
A\cong\bigoplus^D_{\lambda=1} \bigoplus_{j=1}^{n_\lambda} M(d_\lambda)\cong \bigoplus_{\lambda} I(n_\lambda)\otimes M(d_\lambda).
$$

\noindent Here $\lambda$ enumerates the type of the matrix blocks. $M(d_\lambda)$ denotes the algebra of matrices with dimension $d_\lambda$, and $\bigoplus_{j=1}^{n_\lambda} M(d_\lambda)$ is the same algebra repeated for $d_\lambda$ times, and $n_\lambda$ is therefore the multiplicity of this block. This means that there is a change of basis, on which the element of $A$ acts as the block diagonal structure $M_1\times M_1 \times \ldots \times M_1 \times \ldots \times M_D \times M_D \times \ldots \times M_D$; each matrix $M_j$ is repeated for $n_j$ times in the product series. These basis states are indeed the desired subspaces, and the Hilbert space also decomposes accordingly:

$$
\mathcal{H}=\bigoplus_{\lambda} n_\lambda  X(d_\lambda) \cong \bigoplus_{\lambda} V(n_\lambda)\otimes X(d_\lambda).
$$

\noindent Here $n_\lambda X_{d_\lambda}$ means $n_\lambda$ isomorphic subspaces each with dimension $d_\lambda$, and therefore $\dim{\mathcal{H}}=\sum_{\lambda} n_\lambda d_\lambda$. Given these decompositions, the operators of each block $\lambda$ in the decomposition of $A$ acts non-trivially on $X(\lambda)$ only, and leaves the subspace $V(n_\lambda)$ unaffected. It remains to find operators that act only on the $V$ species and leave the $X$ parts unaffected. The most general such structure is the matrix algebra $B$ which commutes with all of $A$. $B$ has the unique decomposition according to:

$$
B\cong \bigoplus_{\lambda}  M(n_\lambda)\otimes I(d_\lambda).
$$

\noindent Let $E$ be a gate set in $B$. Then, universal quantum computation on a decoherence free subspace $V(n_\lambda)$ is translated to first zooming into a subspace $V(n_\lambda)\tensor |\psi\rangle$, for some $|\psi\rangle \in X(d_\lambda)$, and then denseness of $E$ in $SU(V(n_\lambda))$, and finally zooming out from $V(n_\lambda)$, by sampling bits of information from the output of computation.

\subsection{Representation Theory of the Symmetric Group}

Most of the mathematical review is borrowed from \cite{james1981representation} We are interested in two mathematical structures, the group algebra of the symmetric group $\C S_n$, and the unitary regular representation of the symmetric group. As it turns out, the two structures are closely related to each other, and also to the group generated by the ball permuting gates. Group algebra is an extension of a group to an algebra, by viewing the members of the group as linearly independent basis of a vector space over the field $\C$. Therefore, in addition to the group action an action of $\C$ on $S_n$ is needed, by the map $(\alpha, \sigma) \mapsto \alpha S_n$, and also addition of vectors in the usual sense. Therefore, a group algebra consists of all elements that can ever be generated by vector on vector composition and linear combination of vectors over $\C$.  Any element of $\C S_n$ can be uniquely written as $\sum_{\sigma \in S_n} \alpha_\sigma \sigma$, with $\C$ coefficients $\alpha_\sigma$. If we add a conjugation convolution $^\dagger$ with maps $\sigma^\dagger=\sigma^{-1}$, and $\alpha^\dagger = \alpha^\star$, then for any element $v\in \C S_n$, $v^\dagger v=0$, if and only if, $v=0$. In order to see this, let $v=\sum_{\sigma \in S_n} \alpha_\sigma \sigma$. Then, $v^\dagger v= \sum_\sigma |\alpha_\sigma|^2 e+ \ldots=0$. A zero on the right hand side implies zeroth of all the vector components, including the component along $e$, which implies $\alpha_\sigma=0$ for all $\sigma \in S_n$, and therefore $v=0$. Let $e$ be the identity element of $S_n$, consider an element $p\in \C S_n$ to be a projector if it has the property $p^2=p$. Two projectors $p$ and $q$ are called orthogonal if $p. q =0$. Then $(e-p)^2=e-p$ is also a projector, and also $p (e-p) =0$ are orthogonal projectors. $0$ is trivially a projector. Therefore, the group algebra decomposes as:

$$
\C S_n = \C S_n e = \C S_n (e-p) + p = \C S_n (e-p) \oplus \C S_n p.
$$

\noindent A projector is called minimal if it cannot be written as the sum of any two others projectors other than $0$ and itself. Let $p^\mu$ be a list of minimal projectors summing $\sum_\mu p^\mu=e$, then the decomposition of the group algebra into minimal parts is according to:

$$
\C S_n = \bigoplus_\mu \C S_n p^\mu.
$$

\noindent $p^\mu$ are known as Young symmetrizers, and we are going to mention them later.

A (finite) representation $\rho$ of a group $G$ is a homomorphism from $G$ to the group of isomorphisms of a linear space $: G \rightarrow GL(V, \C)$, for some vector space $V$. Let $g$ be any element of $G$, with its inverse $g^{-1}$, and $e$ and $1$ as the identity elements of $G$ and $GL(V,\C)$, respectively. Given the definition, $\rho (g^{-1})= \rho(g)^{-1}$, and $\rho(e)=1$ are immediate. One can observe that $\rho : G \rightarrow \{I\in GL(V,\C)\}$, is immediately a representation, and is called the trivial representation of $V$. A dual representation of $G$ is a homomorphism from $G$ into the group of linear maps $: V \rightarrow \C$. As we discussed before, this is called the dual space $V^\star$, and $V$ is viewed as the space of column vectors, then its dual space is a row space. For any vector spaces $V$ and $W$, the two can be combined into a larger linear structure, $V\otimes W^\star$, as the set of linear maps from $W$ to $V$. Let $M_1$ and $M_2$ be two elements of $GL(V,\C)$ and $GL(W,\C)$, respectively. Then, viewing $V\otimes W^\star$ as a vector space, the object $(M_1, M_2)$ acts on $x \in V\otimes W^\star$ with $M_1 x M^{-1}_2$. Then, if $M_1$ and $M_2$ are two representations of $G$ on $V$ and $W$, then $(M_1, M_2)$ is a representation of $G$ on $V\otimes W^\star$, as a vector space. Notice that the inverse on $M_2$ is needed in order to have $(M_1, M_2)$ act as a homomorphism. The dual representation $M$ of $V$ is then the representation on $\C\otimes V^\star$, when $M_2=M$, and $M_1$ is the one dimensional trivial representation. This is just saying that the dual representation $M^\star$ of $M$ on $V^\star$, maps $\langle \psi |$ to $\langle \psi| M(g^{-1})$, if we view the dual space as the usual row space. If we define an inner product as the action of the dual of a vector on itself, then $G$, as a representation, sends orthonormal basis to orthonormal basis. This suggests that every representation of a finite group is isomorphic to a unitary representation. That is, any non-unitary representation becomes unitary after a change of basis. Let $M$ be a representation on $V$. Then, we say $W\subseteq V$ is called stable under $M$, if for any $x\in W$, $M x \in W$. Then, $M$ restricted to $W$ is called a sub-representation.  A representation $M$ on $V$ is called an irreducible representation (irrep), if it has no stable subspaces other than $0$ and $V$. Two representations $M_1$ and $M_2$ on $V_1$ and $V_2$ are isomorphic if $M_1$ resembles $M_2$ after a suitable change of basis within $V_1$. Then, if $V$ is reducible, it can be decomposed as $V_1 \oplus V_2 \oplus \ldots \oplus V_n$, for $n>1$. Some of the sub-representations can be isomorphic, and the multiplicity of a sub-representation is the number of sub-representations isomorphic to it. Then, the isomorphic subspaces can be grouped together to $V\cong m_1 V_1 \oplus m_2 V_2 \oplus \ldots m_k V_k$. Then $\dim V= \sum_j m_j \dim V_j$. The structure of such decomposition is isomorphic to $\bigoplus_j V_j \otimes X_j$, where $X_j$ is the multiplicity space of $V_j$ and is a vector space of dimension $m_j$. Decomposition of a representation onto the irreducible ones is unique up to isomorphism and multiplicities and dimensionality of irreducible representations do not depend on the decomposition. Canonical ways to find a decomposition are also known.

The regular representation of $S_n$, also denoted by $\C S_n$, is the unitary representation of $S_n$ onto the usual Hilbert space $\C S_n$ spanned by the orthonormal basis $\{|\sigma\rangle : \sigma \in S_n\}$. It is well known that for any regular representation, the dimension of each irrep is equal to the multiplicity of the irrep, and therefore $\C S_n$ decomposes into irreducible representations of the following form:

$$
\C S_n \cong \bigoplus_\lambda V_\lambda \otimes X_\lambda,
$$

\noindent with $\dim X_\lambda = \dim V_\lambda =: m_\lambda$, and indeed $\sum_\lambda m^2_\lambda= n!$. Here $X_\lambda$ is again the multiplicity space, and $V_\lambda$ corresponds to each irrep. It is tempting to make a connection between the group algebra and regular representation of the symmetric group. As described earlier, $S_n$ can act on the Hilbert space $\C S_n$ in two ways; the left and right, $L, R: S_n \rightarrow U(\C S_n)$, unitary regular representation, with the maps $L(\sigma) |\tau\rangle = |\sigma \circ \tau\rangle$ and $R(\sigma) |\tau\rangle = |\tau\circ \sigma^{-1}\rangle$. Also, similar left and right structure can be added to the group algebra. Clearly, $L$ and $R$ representations commute, and it can be shown that the algebra generated by $L$ is the entire commutant of the algebra generated by $R$. Putting everything together, inspired by the theory of decoherence free subspaces, and the defined structures, one can show that the left ($A$) and right ($B$) algebras and the Hilbert space $\C S_n$ decompose according to:

$$
A\cong \bigoplus_\lambda M(m_\lambda)\otimes I(m_\lambda),
$$

$$
B\cong \bigoplus_\lambda I(m_\lambda) \otimes M(m_\lambda),
$$

and,

$$
\C S\cong \bigoplus_\lambda V(m_\lambda) \otimes X(m_\lambda).
$$

\noindent This is indeed a nice and symmetric structure. Indeed each irrep $V_\lambda$ is an invariant subspace of the $X$ operators, and it cannot be reduced further. It remains to demonstrate the structure of the irreps $\lambda$, and to study the action of $X$ operators on these subspaces.

The irreducible representations of the symmetric group $S_n$ are marked by the partitions of $n$. Remember that a partition of $n$ is a sequence of non-ascending positive numbers $\lambda_1 \geq \lambda_2 \geq \lambda_3 \geq \ldots \lambda_k$ summing to $n$, \i.e., $\sum_j \lambda_j = n$. The number of partitions of $n$ grows like $\exp \Theta(\sqrt{n})$. Each as described earlier each partition $\lambda= (\lambda_1, \lambda_2, \ldots, \lambda_k)$ is related to a diagram, called the Young diagram, which consists of $k$ horizontal rows of square boxes $r_1, r_2, \ldots, r_k$. The Young diagram is then created by paving the left-most box of $r_1$ to the left-most box of $r_2$, and so on. For a Young diagram $\lambda$, the dual diagram $\tilde{\lambda}$, is another Young diagram, whose rows are the columns of $\lambda$. A Young tableau $t^\lambda$ with the shape $\lambda$, is a way of bijective assigning of the numbers in $[n]$ to the boxes of $\lambda$. We will use $t^\lambda$ and simply $t$ with the shape $\lambda$ interchangeably. A permutation $\pi \in S_n$ can act on a Young tableau $t^\lambda$ by just replacing the content of each box to the its image under $\pi$, \i.e., if a box contains $j$, after the action of $\pi$ it will be replaced with $\pi(j)$. A tableau is called standard, if the numbers in each row and column are all in ascending orders. The number of standard tableau for each partition of shape $\lambda$ is denoted by $f^\lambda$.

Let $t$ be a tableau with shape $\lambda$. Define $P(t)$ and $Q(t)\subseteq S_n$ to be sets of permutations that leave each row and column invariant, respectively. Then the projectors of the $\C S_n$ group algebra are according to the Young symmetrizers, one for each standard tableau:

$$
p^t = \frac{1}{f^\lambda} \sum_{\pi \in C(t)}\sum_{\sigma \in R(t)} sgn(\pi) \pi \circ \sigma.
$$

These subspaces correspond to all of the irreducible invariant subspaces of $S_n$. The dimension for each of these subspaces is the number of standard tableaus of each partition, and it is computable using the hook lengths. The hook of each box in a partition of shape $\lambda$ is consists of the box itself along with all boxes below and at the right of the box. The hook length of each box is the number of boxes contained in that hook, and the hook length $h^\lambda$ of the shape $\lambda$ is the multiplication of these numbers for each box. Then, the dimension of the irrep corresponding to $\lambda$ is according to $f^\lambda=n!/ h^\lambda$. 

\subsection{The Young-Yamanouchi Basis}

In order to talk about quantum operations orthonormal basis for the discussed subspaces are needed. It would be nice if we have a lucid description of the basis, in a way that the action of $X$ operators on these subspaces is clear. Moreover, we seek for an inductive structure for the orthonormal basis of the irreps that is adapted to the nested subgroups $S_1\subset S_2 \subset \ldots \subset S_n$. By that we mean states that are marked with quantum numbers like $|j_1, j_2, j_3, \ldots, j_k\rangle$, such that while elements of $S_n$ affect all the quantum numbers, for any $m_1< n$, elements of $S_n$ restricted to the first $m_1$ labels affects the first $k_1$ quantum numbers only, and act trivially on the rest of the labels. Also, for any $m_2< m_1<n$, the elements of $S_n$ restricted to the first $m_2$ labels affect the first $j_2< j_1 < k$ quantum numbers only, and so on.

Fortunately, such a bases exist, and are known as the subgroup adapted Young-Yamanouchi (YY) bases \cite{james1981representation}. These bases are both intuitive and easy to describe: for any partition of shape $\lambda$, mark an orthonormal basis with the standard Young Tableaus of shape. Agree on a lexicographic ordering of the standard tableaus, and denote these basis corresponding to the partition $\lambda$, by a $\{|\lambda_j\rangle\}_{j=1}^{f^\lambda}$. Denote the action of a swap $(i, j)$ on $|\lambda_l\rangle$ by $|(i,j). \lambda_l\rangle$, to be the basis of a tableau that is resulted by exchanging location of $i$ and $j$ in the boxes. Suppose that for such tableau $t$, the number $j$ ($i$) is located at the $r_j$ and $c_j$ ($r_i$ and $c_i$) row and column of $t$, respectively. Then, define the axial distance $d_{ij}$ of the label $i$ from label $j$ of on each tableau to be $(c_j-c_i)-(r_j-r_i)$. Or in other words, starting with the box containing $i$ walk on the boxes to get to the box $j$. Whenever step up or right is taken add a $-1$, and whenever for a step down or left add a $1$. Starting with the number $0$, the resulting number in the end of the walk is the desired distance. Given this background, the action of $L_{(k,k+1)}$ on the state $|\lambda_i\rangle$, is according to:

$$
L_{(k,k+1)} |\lambda_i\rangle = \dfrac{1}{d_{k+1, k}} |\lambda_i\rangle + \sqrt{1-\dfrac{1}{d^2_{k+1, k}}}|(k,k+1).\lambda_i\rangle
$$

\noindent Three situations can occur: either $k$ and $k+1$ are in the same column or row, or they are not. If they are in the same row, since the tableau is standard, $k$ must come before $k+1$, then the axial distance is $d_{k+1, k}=1$, and the action of $L_{(k,k+1)}$ is merely:

$$
L_{(k,k+1)} |\lambda_i\rangle = |\lambda_i\rangle.
$$

\noindent If the numbers are not in the same column, $k$ must appear right at the top of $k+1$, and the action is:

$$
L_{(k,k+1)} |\lambda_i\rangle = -|\lambda_i\rangle.
$$

\noindent Finally, if neither of these happen, and the two labels are not in the same row or column, then the tableau is placed in the superposition of itself, and the tableau wherein $k$ and $k+1$ are exchanged. Notice that if the tableau $|\lambda_i\rangle$ is standard the exchanged tableau $|\lambda_i\rangle$ is also standard. This can be verified by checking the columns and rows containing $k$ and $k+1$. For example, in the row containing $k$, all the numbers at the left of $k$ are less than $k$, then if we replace $k$ with $k+1$, again all the numbers on the left of $k+1$ are still less than $k+1$. Similar tests for the different parts in the two rows and columns will verify $(k,k+1)\lambda_i$, as a standard tableau. The action of $L_{k,k+1}$ in this case is also an involution. This is obvious for the two cases where $k$ and $k+1$ are in the same row or column. Also, in the third case if the action of $L_{(k,k+1)}$ maps $|\lambda\rangle$ to $\dfrac{1}{d} |\lambda\rangle + \sqrt{1-\dfrac{1}{d^2}}|t\circ \lambda\rangle$ then a second action maps $|t\circ \lambda\rangle$ to $\dfrac{-1}{d} |t\circ \lambda\rangle + \sqrt{1-\dfrac{1}{d^2}}|\lambda\rangle$, and therefore:

$$
L^2_{(k,k+1)}|\lambda\rangle = \dfrac{1}{d}(\dfrac{1}{d} |\lambda\rangle + \sqrt{1-\dfrac{1}{d^2}}|t\circ \lambda\rangle)+\sqrt{1-\dfrac{1}{d^2}}(\dfrac{-1}{d} |t\circ \lambda\rangle + \sqrt{1-\dfrac{1}{d^2}}|\lambda\rangle)=|\lambda\rangle.
$$

Given this description of the invariant subspaces, we wish to provide a partial classification of the image of the ball permuting gates on each of these irreps. The hope is to find denseness in $\prod_\lambda SU(V_\lambda)$, on each of the irreps $V_\lambda$, with an independent action on each block. In this setting, two blocks $\lambda$ and $\mu$ are called dependent, if the action on $\lambda$ is a function of the action on $\mu$, \i.e., the action on the joint block $V_\lambda\oplus V_\mu$ resembles $U\times f(U)$, for some function $f$. Then, independence is translated to decoupled actions like $I \times U$ and $U \times I$.

Throughout, the $\lambda \vdash n$, means that $\lambda$ is a partition of $n$. We say $\mu\vdash n+1$ is constructible by $\lambda \vdash n$, if there is a way of adding a box to $\lambda$ to get $\mu$. We say a partition $\mu \vdash m$ is contained in $\lambda \vdash n$, for $m<n$, if there is a sequence of partitions $\mu_1 \vdash m+1$, $\mu_2 \vdash m+2, \ldots, \mu_{n-m-1} \vdash n-1$, such that $\mu_1$ is constructible by $\mu$, $\lambda$ is constructible by $\mu_{n-m-1}$, and finally for each $j \in [n-m-2]$, $\mu_{j+1}$ is constructible by $\mu_j$. We also call $\mu$ a sub-partition of $\lambda$. A box in a partition $\lambda$ is called removable, if by removing the box the resulting structure is still a partition. Also, define a box to be addable if by adding the box the resulting structure is a partition.

\begin{theorem}
The Young-Yamanouchi bases for partitions of $n$ are adapted to the chain of subgroups $\{e\}=S_1 \subset S_2 \subset \ldots \subset S_n$. 
\end{theorem}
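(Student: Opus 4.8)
The plan is to reduce the statement to a single inductive step: I will show that the span of those Young--Yamanouchi (YY) vectors whose underlying standard tableau places the labels $1,2,\dots,m$ in a fixed sub-shape $\mu\vdash m$ is precisely an invariant, irreducible subspace for the copy of $S_m$ generated by $L_{(1,2)},\dots,L_{(m-1,m)}$, carrying the irrep $V_\mu$. Iterating this down the chain $S_n\supset S_{n-1}\supset\cdots\supset S_1=\{e\}$ then exhibits, for each YY vector, a decreasing flag of invariant subspaces refining the canonical restriction decomposition at every level, which is exactly what it means for the basis to be adapted to the chain. Since each restriction $V_\lambda\!\downarrow\! S_{n-1}$ turns out to be multiplicity-free, this flag is essentially forced, so the YY basis is the Gelfand--Tsetlin basis of this chain.

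First I would set up notation. Fix $\lambda\vdash n$ and recall that the YY basis $\{|\lambda_t\rangle\}$ of $V_\lambda$ is indexed by standard tableaux $t$ of shape $\lambda$. For a removable box $b$ of $\lambda$, let $W_b\subseteq V_\lambda$ be the span of the $|\lambda_t\rangle$ with $t(b)=n$. A standard tableau of shape $\lambda$ can put $n$ in box $b$ if and only if $\lambda\setminus b$ is again a Young diagram, i.e.\ iff $b$ is removable, and then $t\mapsto t\setminus b$ is a bijection onto the standard tableaux of shape $\lambda\setminus b$. Hence $V_\lambda=\bigoplus_b W_b$, the sum over removable boxes, with $\dim W_b=f^{\lambda\setminus b}$.

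Next I would check invariance and identify the action. The generators of $S_{n-1}$ inside $S_n$ are $L_{(k,k+1)}$ for $k\le n-2$, and by the explicit YY formula quoted in the excerpt each of these sends $|\lambda_t\rangle$ into the span of $|\lambda_t\rangle$ and $|\lambda_{(k,k+1)\cdot t}\rangle$. Since $(k,k+1)$ merely exchanges the positions of $k$ and $k+1$, both $\le n-1$, it fixes the box holding $n$; therefore $W_b$ is $S_{n-1}$-invariant. The coefficients in that formula depend only on the axial distance $d_{k+1,k}$, which is a function of the row/column coordinates of the boxes containing $k$ and $k+1$ alone, and those coordinates already lie in $\lambda\setminus b$, so the coefficients are unchanged when computed in the shape $\lambda\setminus b$; likewise $t$ is standard on $\{1,\dots,n-1\}$ precisely when $t\setminus b$ is standard. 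Consequently the map $t\mapsto t\setminus b$ intertwines the $S_{n-1}$-action on $W_b$ with the YY action on $V_{\lambda\setminus b}$, so $W_b\cong V_{\lambda\setminus b}$; in particular each $W_b$ is irreducible and the sum $\bigoplus_b W_b$ is the branching rule, which is therefore multiplicity-free.

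Finally I would iterate: running the same argument inside each $W_b$ for the pair $S_{n-1}\supset S_{n-2}$, and so on, produces a nested chain $V_\lambda\supseteq W^{(n-1)}\supseteq\cdots\supseteq W^{(1)}$, where $W^{(m)}$ is the span of the YY vectors whose first $m$ entries occupy a prescribed $\mu^{(m)}\vdash m$; each $W^{(m)}$ is $S_m$-invariant and isomorphic to $V_{\mu^{(m)}}$, and $W^{(1)}$ is one-dimensional, spanned by a single YY vector. This is exactly the adaptedness assertion, with the standard tableau of each basis vector recording the whole chain $\mu^{(1)}\subset\mu^{(2)}\subset\cdots\subset\mu^{(n)}=\lambda$. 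The one place that needs care — and which I expect to be the main (if mild) obstacle — is the identification in the third paragraph: verifying that ``delete the corner box carrying the largest label'' really does conjugate one YY representation into the other. But this collapses to two routine facts, the coordinate-independence of axial distances and the stability of standardness under restriction to an initial segment of the labels, so no genuine computation is required.
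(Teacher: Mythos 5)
Your proof is correct and takes essentially the same approach as the paper's: both exploit that in a standard tableau the labels $1,\ldots,m$ fill a sub-Young-diagram with $m$ sitting in a removable corner, and that the YY transition coefficients depend only on local axial distances, so the $S_{n-1}$-generators act inside each $W_b$ exactly as they act on the YY basis of $V_{\lambda\setminus b}$. The paper records this by labeling each YY vector with the chain of sub-shapes $\lambda_1\subset\cdots\subset\lambda_n$ and checking that generators of $S_m$ leave $\lambda_m,\ldots,\lambda_n$ untouched, while you phrase it as the decomposition $V_\lambda=\bigoplus_b W_b$ together with the identification $W_b\cong V_{\lambda\setminus b}$ and iterate down the chain; this also yields the multiplicity-free branching rule as a byproduct, which the paper instead proves as a separate lemma immediately afterward.
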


\begin{proof}
Let $\lambda\vdash n$, and $t$ be any standard tableau of shape $\lambda$.  We construct some enumeration of states in the Young-Yamanouchi basis of $\lambda$ which is adapted to the action of subgroups. For any $m<n$, since $t$ is a standard tableau, the numbers $1,2,3,\ldots, m$, are all contained in a sub-partition $\mu \vdash m$ of $\lambda$. This must be true, since otherwise the locus of numbers $1, 2, 3, \ldots, m$ do not shape as a sub-partition of $\lambda$. Let $\nu$ be the smallest sub-partition of $n$ that contains these numbers. Clearly, $|\nu|>m$. The pigeonhole principle implies that, there is a number $k>m$ contained somewhere in $\nu$. The box containing $k$ is not removable from $\nu$, since otherwise you can just remove it to obtain a sub-partition smaller than $\nu$ that contains all of the numbers in $[m]$. Therefore, if $k$ is in the bulk of $\nu$, then both the row and column containing $k$ are not in the standard order. If $k$ is on a vertical (horizontal) boundary, then the column (row) of the box containing $k$ is not standard.

Let $\lambda_k$ be the smallest sub-partition of $\lambda$ that contains $[k]$. Then the enumeration of the basis is according to $|\lambda_1, \lambda_2, \ldots, \lambda_n \rangle$. Here, $\lambda_n=\lambda$, and $\lambda_1$ is a single box. From before, for any $j<n$, $\lambda_{j+1}$ is constructible by $\lambda_j$. For $m<n$, let $S_m$ be the subgroup of $S_n$, that stabilizes the numbers $m+1, m+2, \ldots, n$. For any $k\leq m$, $L_{(k,k+1)}$ just exchanges the content of boxes withing $\lambda_m$, and therefore leaves the quantum numbers $\lambda_{m+1}, \lambda_{m+2}, \ldots, \lambda_n$ invariant. Moreover, the box containing $m$ is somewhere among the removable boxes of $\lambda_m$, since otherwise, as described in the last paragraph, the tableau $\lambda_m$ is not standard. The box containing $m-1$ is either right above or on the left side of $m$, or it is also a removable box. In the first two cases, the action of $L_{(m-1,m)}$ is diagonal, and the quantum numbers are intact. In the third case, the only quantum numbers that are changed are $\lambda_{m-1}$ and $\lambda_m$.
\end{proof}

Consider now the action of $S_{n-1}$ on an element $|\lambda_1, \lambda_2, \ldots , \lambda_n=\lambda\rangle$. In any case $\lambda$ is constructible by $\lambda_{n-1}$, and the construction is by adding an addable box to $\lambda_{n-1}$. In other words, $\lambda_{n-1}$ can be any partition $\vdash n-1$, that is obtained by removing a removable box from $\lambda$. These observations, all together, lead to a neat tool:

\begin{lemma}
(Branching.) Under the action of $S_{n-1}$, $V_\lambda \cong \bigoplus_{\substack{\mu\vdash n-1 \\ \mu\subset \lambda}}V_{\mu}$.
\end{lemma}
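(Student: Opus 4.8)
The plan is to read off the decomposition directly from the subgroup-adapted Young--Yamanouchi basis constructed above. Recall that a basis vector of $V_\lambda$ is labelled $|\lambda_1,\lambda_2,\ldots,\lambda_n\rangle$ with $\lambda_n=\lambda$, where $\lambda_j$ is the shape occupied by $\{1,\ldots,j\}$ in the underlying standard tableau $t$, and each $\lambda_{j+1}$ is obtained from $\lambda_j$ by adding one box. The subgroup $S_{n-1}\subset S_n$ (the stabilizer of $n$) is generated by the transpositions $(k,k+1)$ with $k\le n-2$, and for such $k$ the operator $L_{(k,k+1)}$ merely exchanges the contents of boxes lying inside $\lambda_{n-1}$; in particular it leaves the quantum number $\lambda_{n-1}$ (and $\lambda_n=\lambda$) fixed. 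Hence, for each partition $\mu\vdash n-1$ with $\mu\subset\lambda$, the span
$$
W_\mu := \operatorname{span}\{\, |\lambda_1,\ldots,\lambda_{n-2},\mu,\lambda\rangle \,\}
$$
is an $S_{n-1}$-invariant subspace, and $V_\lambda=\bigoplus_\mu W_\mu$, the sum running over exactly those $\mu$ obtained from $\lambda$ by deleting one removable box (an empty span is simply dropped).

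Next I would identify each $W_\mu$ with $V_\mu$ as an $S_{n-1}$-representation. A basis vector of $W_\mu$ corresponds to a standard tableau $t$ of shape $\lambda$ whose sub-tableau on $\{1,\ldots,n-1\}$ has shape $\mu$; deleting the box containing $n$ sets up a bijection between these tableaux and the standard tableaux of shape $\mu$, so $\dim W_\mu=f^\mu=\dim V_\mu$. It then remains to check that the two actions of $S_{n-1}$ agree, i.e. that the Young--Yamanouchi matrix coefficients match. This reduces to one observation: for $k\le n-2$ the labels $k$ and $k+1$ both sit inside the $\mu$-shaped part of $t$, so the axial distance $d_{k+1,k}$---the only datum entering the formula for $L_{(k,k+1)}$---is the same whether computed in $t$ or in its shape-$\mu$ restriction; moreover $L_{(k,k+1)}$ sends a shape-$\lambda$, restriction-$\mu$ standard tableau to a superposition of tableaux of the same type, since swapping $k$ and $k+1$ neither moves the box holding $n$ nor alters which cells are filled by $\{1,\ldots,n-1\}$. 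Therefore the map $W_\mu\to V_\mu$ defined on basis vectors by erasing the cell containing $n$ intertwines the $S_{n-1}$-actions and is an isomorphism of representations.

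Finally, since each $V_\mu$ with $\mu\vdash n-1$ is irreducible and the removable boxes of $\lambda$ are pairwise distinct cells, the partitions $\mu$ appearing in $V_\lambda=\bigoplus W_\mu$ are pairwise distinct, so the decomposition $V_\lambda\cong\bigoplus_{\mu\vdash n-1,\ \mu\subset\lambda}V_\mu$ is multiplicity-free; this also matches the fact noted earlier that, for $\mu\vdash n-1$, $\mu\subset\lambda$ is equivalent to $\lambda$ being constructible from $\mu$ by adding a single box. The one step requiring genuine care is the matrix-coefficient comparison in the middle paragraph; the rest is bookkeeping on standard tableaux. An alternative, more abstract route would bypass the explicit basis by combining Young symmetrizers, a hook-length dimension count, and Frobenius reciprocity, but the basis-adapted argument is cleaner here precisely because the adaptation to $S_1\subset S_2\subset\cdots\subset S_n$ has already been set up.
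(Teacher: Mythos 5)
Your proof is correct and follows the same route the paper takes: group the Young--Yamanouchi basis vectors of $V_\lambda$ by the (necessarily removable) cell containing $n$, observe that each group is stable under $S_{n-1}$ because $L_{(k,k+1)}$ for $k\le n-2$ fixes the quantum number $\lambda_{n-1}$, and identify each stable block with the corresponding $V_\mu$. You are in fact more complete than the paper's sketch, which silently conflates the stable subspace of $V_\lambda$ with the abstract irrep $V_{\mu_j}$; your dimension count via the tableau bijection and the check that the axial-distance data (hence the $L_{(k,k+1)}$ matrix coefficients) are unchanged under deleting the cell holding $n$ are exactly the points needed to justify that identification as an isomorphism of $S_{n-1}$-representations.
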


\begin{proof}
This is an unusual proof, since it is based on the structure of the YY bases. We would like to emphasize that the multiplicity free branching rule of the symmetric group is manifest in the structure of the YY bases. For a more rational proof see \cite{james1981representation}.

Choose an orthonormal basis according to YY. Enumerate the removable boxes of $\lambda$ by $1,2,\ldots, p$. Clearly, in any standard tableau of $\lambda$, the box containing $n$ is a removable one. Group the tableaus according to the location of $n$. Clearly, each subspace corresponds to a partition $\mu \vdash n-1 \subset \lambda$. Call these partitions $\mu_1, \mu_2, \ldots, \mu_p$, according to the enumeration of removable boxes. Also denote the space $V_{\mu_j}$ correspondingly. For any $\mu_j$, any element of $S_{n-1}$, acted on $V_{\mu_j}$, generates a vector within $V_{\mu_j}$. In other words, these subspaces are stable under $S_{n-1}$.
\end{proof}

\subsection{Partial Classification of Quantum Computation on Arbitrary Initial States}
\label{classification}

The author would like to thank Greg Kuperberg for very useful discussions and notes on the representation theory of the symmetric group which motivated the approach of this section.

In the following, it is proved that the ball permuting gates act densely on invariant subspaces corresponding to Young tableaus with two rows or two columns. The proof is based on the bridge lemma and decoupling lemma of reference \cite{aharonov2011bqp}. As we discuss, conditioned on the existence of a bridge operator, and decoupled dense action on two orthogonal subspace of different dimensionality, the bridge lemma glues the two subspaces into a larger subspace with dense action on it. Also the decoupling lemma decouples action on two orthogonal subspaces of different dimensionality, given dense action on each of them. Greg Kuperberg brought this into our attention that consulting with \cite{kuperberg2011denseness}, it is conceivable that these two lemmas have natural generalizations to more than two subspaces and subspaces that have equal dimensionality. We conjecture that using these tools one can prove that the action of ball permuting gates is dense on all invariant subspaces of the symmetric group, even for those which correspond to Young diagrams of more than two rows/columns. We leave this result to further work. 

In this section, the Lie algebra and the unitary Lie group generated by $X$ operators are used interchangeably. For an intuitive introduction to the Lie Algebra see appendix \ref{LieAlgebra}. As described, the Hilbert space $\C S_n$ has the decomposition:

$$
\C S_n \cong \bigoplus_{\lambda \vdash n} V_\lambda \otimes X_\lambda
$$

Let $G$ be the unitary group generated by these $X(\theta,k)=\exp (i\theta L_{(k,k+1)}).$ operators. As described earlier, the space tangent to the identity element of $G$ is a Lie algebra, $g$, which contains $L_{(k,k+1)}$ for all $k\in [n-1]$, and is close under linear combination over $\mathbb{R}$, and the Lie commutator $i [\cdot, \cdot]$. The objective is to show that for any $\lambda\vdash n$ with two rows or two columns, and any element $U$ of $SU(V_\lambda)$, there is an element of $G$ that is arbitrarily close to $U$.

The proof is presented inductively. First of all, for any $n$, the irreps $V_n$ and $V_{1,1,1,\ldots, n}$ are one dimensional, and the action of $x \in G$ is to add an overall phase. However, observing the structure of YY basis for these irreps, the action of $G$ on the joint blocks $V_n \oplus V_{(1,1,1, \ldots, 1)}$ cannot be decoupled, and the projection of $G$ onto these subspaces is diagonal, and moreover isomorphic to the group $e^{i\theta}\times e^{-i \theta}: \theta \in \mathbb{R}$. Intuitively, these are Bosonic  and Fermionic subspaces, where an exchange $L_{(k,k+1)}$ of particles results in a $+1$ and $-1$ overall phase, respectively. 

For $n=2$, the only invariant subspaces are $V_2$ and $V_{(1,1)}$, and we know the structure of these irreps from the last paragraph:

$$
\C S_2\cong V_2 \oplus V_{(1,1)}, \hspace{1cm} G \twoheadrightarrow e^{i\theta}\times e^{-i \theta}: \theta \in \mathbb{R}.
$$

For $n=3$, the decomposition is according to:

$$
\C S_3 \cong V_{3} \oplus V_{(1,1,1)} \oplus V_{(2,1)}\otimes X(2).
$$

Here, $X(2)$ is a two dimensional multiplicity space. There are two standard $(2,1)$ tableaus and therefore $V_{(2,1)}$ is also two dimensional. Observing the YY basis the two generators $L_{(1,2)}$ and $L_{(2,3)}$ take the matrix forms:

$$
L_{(1,2)} = 
 \begin{pmatrix}
  1 &0 \\
  0&-1
 \end{pmatrix},
$$

and,

$$
L_{(2,3)} = 
 \begin{pmatrix}
  -1/2 &\sqrt{3}/2 \\
  \sqrt{3}/2&1/2
 \end{pmatrix}.
$$

\noindent The basis of the matrix are marked with the two standard Young tableaus of shape $(2,1)$. The first basis corresponds to the numbering $(1,2; 3)$ and the second one corresponds to $(1,3;2)$. Here, the rows are separated by semicolons. The following elements of the Lie algebra $g$ generate $\su(V_{(2,1)})$ and annihilate the two Bosonic and Fermionic subspaces:

$$
\dfrac{1}{2\sqrt{3}}[L_{(1,2)}, [L_{(1,2)},L_{(2,3)}]]= 0 \oplus 0 \oplus \sigma_x\otimes I,
$$

$$
\dfrac{i}{\sqrt{3}}[L_{(1,2)}, L_{(2,3)}]= 0 \oplus 0 \oplus \sigma_y\otimes I,
$$

and,

$$
\dfrac{1}{6}[[L_{(1,2)},[L_{(1,2)}, L_{(2,3)}]],[L_{(1,2)}, L_{(2,3)}]]= 0 \oplus 0 \oplus \sigma_z \otimes I.
$$

This implies the denseness of $G$ in $1 \times 1 \times SU(V_{(2,1)})$. Therefore, we obtain a qubit coupled to the multiplicity space, placed in a superposition of the one dimensional Bosonic and Fermionic subspaces. So, projecting onto a subspace like $V_{(2,1)}\otimes |\psi\rangle$, for $|\psi\rangle \in X(2)$, we obtain a qubit.

We use this result as the seed of an induction. The upshot is to add boxes to $(2,1)$ one by one, in a way that the partitions remain with two rows or two columns. At each step, we use the branching rule to combine the blocks together to larger and larger special unitary groups. In the course of this process, we use two important tools, called the bridge lemma, and decoupling lemma:

\begin{lemma}
(Aharonov-Arad \cite{aharonov2011bqp}) let $A$ and $B$ be two orthogonal subspaces, with \textit{non-equal} dimensions, $\dim A < \dim B$:

\begin{itemize}
\item (Bridge) if there is some state $|\psi\rangle \in A$, and a (bridge) operator $V \in SU (A \oplus B)$, such that the projection of $V |\psi\rangle$ on $B$ is nonzero, then the combination of $SU(A)$, $SU(B)$, and $V$ is dense in $SU(A \oplus B)$.

\item (Decoupling\footnote{We wrote an alternative formulation of Aharonov-Arad's original lemma that is consistent with the ball permuting group $G$.}) suppose for any elements $x\in SU(A)$ and $y \in SU(B)$, there are two corresponding sequences $I_x$ and $I_y$ in $G$, arbitrarily close to $x$ and $y$, respectively, then the action of $G$ on $A \oplus B$ is decoupled, \i.e., $SU(A)\times SU(B)\subseteq G$. 

\end{itemize}
\end{lemma}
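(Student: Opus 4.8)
The plan is to treat both halves of the lemma through a single Lie-algebraic mechanism. Write $n=\dim A+\dim B$ and decompose $\su(A\oplus B)$ as a module over $SU(A)\times SU(B)$ (acting block-diagonally on $A\oplus B$) into the block-diagonal summand $\su(A)\oplus\su(B)\oplus\mathfrak z$, where $\mathfrak z=\mathbb R\,(\tfrac{i}{\dim A}I_A-\tfrac{i}{\dim B}I_B)$ is the one-dimensional traceless centre, together with the off-block-diagonal piece $M$ of anti-Hermitian matrices supported on $\mathrm{Hom}(A,B)\oplus\mathrm{Hom}(B,A)$. The two structural facts I would establish first are: (a) $M$ is irreducible over $\mathbb R$ as an $SU(A)\times SU(B)$-module — it realifies the complex irrep $B\otimes A^\star$, which fails to be of real type precisely because $\dim A\neq\dim B$, so its real form stays irreducible; and (b) $[M,M]$ spans $\mathfrak z$ modulo $\su(A)\oplus\su(B)$, which I would verify with one explicit pair of off-diagonal generators. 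A third, soft, input is rigidity of the decomposition $A\oplus B$: since $A$ and $B$ are non-isomorphic irreducibles for $SU(A)\times SU(B)$, and $\su(\dim A)\not\cong\su(\dim B)$ as Lie algebras, any operator normalising $\su(A)\oplus\su(B)$ must preserve each of $A$ and $B$.

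For the bridge statement I would let $\mathcal G$ be the closure of the group generated by $SU(A)$, $SU(B)$ and $V$ — a compact subgroup of $SU(A\oplus B)$ — with Lie algebra $\mathfrak g=\mathrm{Lie}(\mathcal G)\supseteq\su(A)\oplus\su(B)$. For $u\in SU(A)$, $w\in SU(B)$ the group commutator $V(u\oplus w)V^{-1}(u\oplus w)^{-1}$ lies in $\mathcal G$ and varies smoothly with $(u,w)$; differentiating at the identity puts $\mathrm{Ad}(V)\big(\su(A)\oplus\su(B)\big)-\big(\su(A)\oplus\su(B)\big)$ inside $\mathfrak g$. If all of these elements were block-diagonal then $\mathrm{Ad}(V)$ would preserve $\su(A)\oplus\su(B)$, so $V$ would fix $A$, contradicting the bridge hypothesis $\mathrm{proj}_B(V|\psi\rangle)\neq 0$. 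Hence $\mathfrak g$ meets $M$ nontrivially; $\mathfrak g$ is $\mathrm{Ad}(SU(A)\times SU(B))$-invariant, so by (a) it contains all of $M$, and by (b) it then contains $\mathfrak z$; therefore $\mathfrak g=\su(A\oplus B)$ and, by connectedness of $SU(A\oplus B)$, $\mathcal G=SU(A\oplus B)$.

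For the decoupling statement I would set $K=\overline{G|_{A\oplus B}}$, which is a compact subgroup of the block-diagonal group $U(A)\times U(B)$ because $A$ and $B$ are $G$-invariant. The density hypothesis makes the coordinate projections of $K$ cover $SU(A)$ and $SU(B)$ respectively; passing to the derived subalgebra $\mathfrak k'=[\mathrm{Lie}(K),\mathrm{Lie}(K)]$ discards the abelian phase freedom and leaves $\mathfrak k'\subseteq\su(A)\oplus\su(B)$ still surjecting onto both (simple) factors. By Goursat's lemma such a subalgebra is either the whole product or the graph of an isomorphism $\su(\dim A)\cong\su(\dim B)$; the second option is impossible for $\dim A\neq\dim B$, so $\mathfrak k'=\su(A)\oplus\su(B)$ and $K\supseteq\exp(\mathfrak k')=SU(A)\times SU(B)$, which is the claimed decoupling.

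The main obstacle, I expect, is not conceptual but pinning down the small-dimensional cases of fact (a): one genuinely needs $\dim A\neq\dim B$ rather than merely $\dim A,\dim B\geq 2$ to rule out $B\otimes A^\star$ being self-dual of real type, and the exceptional standard representation of $SU(2)$ must be handled separately — it turns out quaternionic, hence still real-irreducible. The explicit bracket for (b) and the Goursat step are routine, and the passage from $\mathfrak g=\su(A\oplus B)$ to $\mathcal G=SU(A\oplus B)$ uses only connectedness. I would also double-check that one-dimensional summands $A$ — which, as the Bosonic/Fermionic blocks illustrate, cannot be decoupled — are harmlessly excluded here, since for $\dim A=1$ the group $SU(A)$ is trivial and the decoupling conclusion on that summand is vacuous.
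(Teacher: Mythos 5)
The paper states this lemma as a result of Aharonov and Arad (citing \cite{aharonov2011bqp}) and does not reprove it --- the footnote even remarks that what appears here is only a reformulation of their statement --- so there is no author-supplied argument to compare yours against; I will judge the attempt on its own. Your Lie-theoretic plan is sound and closely tracks the mechanism of the cited reference: pass to the compact closure $\mathcal{G}$, observe that its Lie algebra $\mathfrak{g}$ is $\mathrm{Ad}\big(SU(A)\times SU(B)\big)$-invariant, decompose $\su(A\oplus B)$ into the four pairwise non-isomorphic real-irreducible pieces $\su(A)$, $\su(B)$, the central line $\mathfrak{z}$, and the off-block space $M$, and show the bridge hypothesis forces $\mathfrak{g}$ to meet $M$, hence to contain all of $M$, hence to contain $\mathfrak{z}$ via $[M,M]$. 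The Goursat step for decoupling and the cautionary remarks about $\dim A=1$ are likewise correct.

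Two places deserve tightening. First, the claim that $B\otimes A^\star$ ``fails to be of real type precisely because $\dim A\neq\dim B$'' overstates the criterion: the Frobenius--Schur indicator of the external tensor product is the product of the indicators of the factors, so $B\otimes A^\star$ is of real type exactly when both standard representations are self-dual of the same sign, which for $SU(n)$ happens only at $\dim A=\dim B=2$ (or both equal to $1$); equal dimensions $\geq 3$ also give complex type. The hypothesis $\dim A<\dim B$ does exclude the bad case, so the conclusion you want is true, but the stated reason is not the right dichotomy --- and ``real form'' should read ``realification''. Second, in the bridge argument you jump from ``the differentiated commutators are all block-diagonal'' to ``$\mathrm{Ad}(V)$ preserves $\su(A)\oplus\su(B)$''; a priori $\mathrm{Ad}(V)\big(\su(A)\oplus\su(B)\big)$ could still have a component along the central line $\mathfrak{z}$. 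One extra sentence --- $\mathrm{Ad}(V)\big(\su(A)\oplus\su(B)\big)$ is a perfect subalgebra of the block-diagonal part, and any perfect subalgebra of $\su(A)\oplus\su(B)\oplus\mathfrak{z}$ lies inside $\su(A)\oplus\su(B)$ --- closes this before you invoke isotypic rigidity of $A\oplus B$. With those repairs, the proposal is correct.
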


See \cite{aharonov2009polynomial,aharonov2007polynomial} for more similar results. Intuitively, what bridge lemma says is that given two subspaces, with one of them larger than the other, dense action each, along with a bridge between them, implies denseness on the combined subspace. That is a bridge glues them to a larger special group. The condition of different dimensions is a crucial requirement for the application of this lemma. The decoupling lemma, on the other hand, states that given dense action on two subspaces, as long as they have different dimensionality, there is way of acting on the two subspaces independently. Again, in this case non-equal dimensionality is important. For example, suppose that $\dim A=\dim B$, then the action $x \times \bar{x}: x \in SU(A)$, cannot be decoupled. Here $\bar{x}$ is the complex conjugate of $x$, \i.e., entries $\bar{x}$ as a matrix are complex conjugates of corresponding entries of matrix $x$. In order to see this, just notice that after finite compositions, the general form of elements generated in this way is $(x_1 x_2 \ldots x_n) \times \overline{(x_1 x_2 \ldots x_n)}$, and an identity action on the left part forces identity action on the right part of the Cartesian product.

Next, we show that the lemma along with the branching rule, force denseness on all irreps corresponding to partitions of two rows or two columns. We will take care of the case with two rows. The situation with two columns is similar. As a way of induction, suppose that, for any $m<n$, for any  $\lambda = (\lambda_1\geq \lambda_2) \vdash m$, the projection of $G$ on $\lambda$ is dense in $SU(V_\lambda)$. The objective is to prove denseness for any partition $\mu \vdash n$.

This is true for $(2,1)$, as showed above. For the sake of illustration, we prove this for $n=4$. The partitions $(4)$ is immediate, because this is one dimensional. Also, the partition $(2,2)$ is immediate, since the branching rule, under the action of $S_3$ is:

$$
V_{(2,2)} \cong V_{(2,1)},
$$

\noindent That is the only removable box from $(2,2)$ is the last box, and in the YY basis for $(2,2)$, this last box can contain the symbol $4$ only. So, the same operators of $S_3$ act densely on this subspace.

The situation with the partition $(3,1)$ is a little different. Analyzing the hook lengths, $V_{(3,1)}$ has dimension $3$, and the branching rule involves the direct sum of partitions $(2,1)$ and $(3)$:

$$
V_{(3,1)} \cong V_{(2,1)} \oplus V_{(3)}.
$$

\noindent Where, $V_{(2,1)}$ is two dimensional, and $V_{(3)}$ is one dimensional, and therefore, they have non-equal dimensions, and also their direct sum adds up to dimension $3$. From, the analysis of $S_3$ we know that independent $SU(2)$, and $SU(1)=\{1\}$ is possible on these irreps. It suffices to find a bridge operator in $SU(V_{(2,1)} \oplus V_{(3)})$. In the first glance, the operator $L_{3,4} \in g$ sounds like a suitable choice. However, there is a problem with this: the restriction of $L_{3,4}$ on $V_{(3,1)}$ is not traceless, and therefore the image under exponentiation does not have unit determinant. Therefore, a wise choice for a bridge operator is $i [L_{(2,3)}, L_{(3,4)}]$. Looking at the actual matrices, restricted to the YY basis of $(3,1)$, one finds $i [L_{(2,3)}, L_{(3,4)}]$, as a suitable bridge, that is nice and traceless:

$$
i 
\begin{pmatrix}
0&\sqrt{2}&-\sqrt{\dfrac{2}{3}}\\
-\sqrt{2}&0 &\sqrt{\dfrac{1}{3}}\\
\sqrt{\dfrac{2}{3}} &-\sqrt{\dfrac{1}{3}}&0\\
\end{pmatrix}.
$$

\noindent Here the matrix is written in the basis corresponding to the tableaus $(1,2,3;4) , (1,2,4; 3)$  and $(1,3,4; 2)$. The bridging is between the $(1,2)$ and $(2,1)$ elements of the matrix. Thereby, the bridge lemma implies the desired denseness.

For general $n$, two situations can happen, either the partition under analysis is of the form $(\nu, \nu)=(n/2, n/2)$ (for even $n$ of course), or not. In the first case, the situation is similar to the partition $(2,2)$ of $n=4$. Thereby, restricted to $S_{n-1}$:

$$
V_{(\nu, \nu)}\cong V_{(\nu, \nu-1)},
$$

\noindent and based on the induction hypothesis the image of $G$ is already dense in the subspace. In the second case, also two cases can happen: either the partition has the form $\mu=(\nu+1, \nu)$, with $2 \nu+1 =n$, or not. In the first case, the branching rule is according to:

$$
V_{(\nu+1, \nu)}\cong V_{(\nu, \nu)}\oplus V_{(\nu+1, \nu-1)}
$$

\noindent The space $V_{(\nu, \nu)}$ corresponds to all YY basis corresponding to tableaus, wherein the index $n$ is located in the last box of the first row. Therefore, the index $n-1$ in all of the tableaus of $(\nu, \nu)$ is located in the last box of the second column, because this is the only removable box available. For simplicity, let's call this space $V_1$. The YY bases of $V_{(\nu+1, n-1)}$ correspond to all the tableaus of $(\nu+1, \nu)$, where the index $n$ is located in the last box of the second row. In this space, the location of the index $n-1$ is either in the last box in the first row or in the box right at the left of the last box in the second row. A coarser stratification of the states in $V_{(\nu+1, \nu-1)}$ is by grouping the YY basis according to the location of $n-1$. Let $V_2$ be the first one, and $V_3$ the second one. Therefore, $YY$ bases of $V_{(\nu+1, \nu)}$ can be grouped in three ways, $V_1, V_2, V_3$, corresponding to all the ways that one can remove two boxes from the original $V_{(\nu+1, \nu)}$. Again, a neat candidate for a bridge is $L_{(n-1,n)}$. Taking a closer look at the operator $L_{(n-1,n)}$, it can be decomposed according to:

$$
L_{(n-1,n)}=\sum_{|j\rangle \in V_3} |j\rangle \langle j|+ \dfrac{1}{2} \sum_{\substack{k' : k\\|k\rangle \in V_1\\|k'\rangle \in V_2}} |k\rangle\langle k| - |k' \rangle \langle k'|+\sqrt{\dfrac{3}{2}} \sum_{\substack{k' : k\\|k\rangle \in V_1\\|k'\rangle \in V_2}} |k\rangle\langle k'| + |k' \rangle \langle k|
$$

\noindent $|j\rangle$, $|k\rangle$, and $|k'\rangle$, of $V_1$, $V_2$, and $V_3$ are the corresponding orthonormal basis in the spaces. Notice that the space $V_1$ is isomorphic to $V_2$, and $k : k'$, refers to this isomorphism. Clearly, the restriction of $L_{(n-1,n)}$ to this block is not traceless, and indeed $tr_{V_{(\nu+1,\nu)}}=\dim V_3 = \dim V_{(\nu+1, \nu-2)}$.

Now, we use the decoupling lemma of Aharonov-Arad. $V_{(\nu+1, \nu)}$ and $V_{(\nu,\nu)}$ have different dimensionality, and also, due to the induction hypothesis the operators can act as the special unitary group on each of them. Thereby, there is a way to act as $x\oplus 0$ on the joint space $V_{(\nu,\nu)}\oplus V_{(\nu+1, \nu-1)}$, for some traceless element $x \in \su(V_{(\nu, \nu)})$. Therefore, $x |j\rangle =0$ and $x |k'\rangle$, for all $|j\rangle \in V_3$, $|k'\rangle \in V_2$. And denote $|x k\rangle := x |k\rangle$, for $|k\rangle \in V_1$. Taking the commutator $i [x, L_{(n-1,n)}]$:

$$
i [x, L_{(n-1,n)}]= \dfrac{i}{2} \sum_{\substack{k' : k\\|k\rangle \in V_1\\|k'\rangle \in V_2}} |x k\rangle\langle k|-|k\rangle\langle x k|+ i \sqrt{\dfrac{3}{2}} \sum_{\substack{k' : k\\|k\rangle \in V_1\\|k'\rangle \in V_2}} |x k\rangle\langle k'| - |k' \rangle \langle x k|.
$$

\noindent Clearly, this operator is traceless, Hermitian, and also one can choose $x$ in such a way that the bridging term in the second sum is nonzero.

Given the above proof for the case $V_{(\nu+1, \nu)}$, we will use a similar technique to take care of the situation $V_{(p, q)}$, where $p> q+1$, and $p+q=n$. Again, the branching rule is:

$$
V_{(p, q)} = V_{(p, q-1)}\oplus V_{(p-1, q)}.
$$

\noindent The space $V_{(p,q-1)}$ corresponds to all YY bases that correspond to the tableaus where the index $n$ is located at the last box of the first row. In this space, the index $n-1$ is either located at the left side of the box containing $n$, or it is located in the last box of the second row. Call the space corresponding to the first (second) one $V_1$ ($V_3$).  $V_{(p-1,q)}$ corresponds to all YY bases of tableaus with index $n$ is located at the last box of the second row. In this space, the index $n-1$ is either located at the left side of the box containing $n$, or it is located in the last box of the first row. Call the first space $V_2$ and the second one $V_4$. Again, write the decomposition of $L_{(n-1, n)}$, accordingly:

$$
L_{(n-1,n)}=\sum_{|j\rangle \in V_1} |j\rangle \langle j|+\sum_{|j\rangle \in V_2} |j\rangle \langle j|+ \alpha(p,q) \sum_{\substack{k' : k\\|k\rangle \in V_3\\|k'\rangle \in V_4}} |k\rangle\langle k| - |k' \rangle \langle k'|+\beta(p,q) \sum_{\substack{k' : k\\|k\rangle \in V_3\\|k'\rangle \in V_4}} |k\rangle\langle k'| + |k' \rangle \langle k|
$$

\noindent Here:

$$
\alpha(p,q) = \dfrac{1}{p-q+1}
$$

\noindent and,

$$
\beta(p,q)=\sqrt{1-\dfrac{1}{(p-q+1)^2}}.
$$

Once again, $V_2$ is isomorphic to $V_3$, and $k: k'$ denotes the correspondence between elements of the two spaces. Once again, we use the decoupling lemma, which asserts the existence of elements like $X:=x\oplus 0$, and $Y:=0 \oplus y$, on $V_{(p,q-1)}\oplus V_{(p-1,q)}$, for every $x\in \su(V_{(p,q-1)})$ and $y \in \su (V_{(p,q-1)})$. A bridge between $V_3$ and $V_4$ is needed, in such a way that the bridge annihilates both $V_1$ and $V_2$. A candidate for a bridge is $[Y,[X, L_{(n-1,n)}]]$. However, it can be easily shown that the element $i[X, L_{(n-1,n)}]$ will also work. The operator $X$ annihilates everything in $V_2$ and $V_4$. Therefor, taking the commutator, the second sum is annihilated, and also, all the remaining terms are traceless and one can find $x$ in such a way that the bridge part is nonzero. All the above results also apply to the tableaus with two columns.

\subsection{Reduction from Exchange Interactions}
\label{bqpuniversality}

In the last section, we demonstrated denseness of $G$ in special unitary groups over some specific invariant subspaces of $\C S_n$. However, this is a nonconstructive statement, and it is desirable to have an explicit description of $\BQP$ simulations in this model. Suppose that we can prepare an arbitrary initial state. For example, suppose that we can initialize the ball permuting model in one of the YY bases of an irrep $V_{(n,n)}$. We can view $V_{(n,n)}$ as a single giant qudit of exponential size. Analyzing the hook lengths, indeed one obtains the dimensionality:

$$
\dim V_{(n,n)}=\dfrac{(2n)!}{(n+1)! n!}=2^{\Omega(n)},
$$

\noindent which is exponential, and can hold $\Omega(n)$ bits of quantum information in it. Even so, it is not clear how to efficiently program the states of $V_{(n,n)}$, using a polynomial time Turing preprocessor. Moreover, in the way that we described the model, the final measurements can be done in the ball labels basis only, and given an output state in $V_{(n,n)}$, it is not obvious how one can sample from it to extract bits of information. Given this motivation, in this section we show how to use arbitrary initial states to obtain a programmable $\BQP$ universal model. This is done by demonstrating a reduction from the exchange interaction model of quantum computation which is already known to be $\BQP$ universal.

Here, we first review the exchange interaction model \cite{kempe2001theory, bacon2001encoded,kempe2001encoded}, and then describe how to do a reduction from the computation in this model to the ball permuting model of computing on arbitrary initial states. Next, we sketch the proof of universality for the exchange interaction model, which in turn results in $\BQP$ universality of ball permuting model on arbitrary initial states. Consider the Hilbert space $(\C^2)^{\tensor n}=:\C \{0,1\}^n$, with binary strings of length $n$, $\mathcal{X}_n:=\{|x_1\rangle \tensor |x_2\rangle \tensor \ldots \tensor |x_n\rangle: x_j \in \{0,1\} \}$, as the orthonormal computational basis. we are interested in the group generated by the unitary gates $T (\theta, i, j)=\exp (i \theta E_{(i,j)})=\cos\theta I + i \sin \theta E_{(i, j)}$, where the operator, $E_{(i,j)}$, called the exchange operator, acts as:

$$
E= \dfrac{1}{2}(I + \sigma_x \tensor \sigma_x + \sigma_y \tensor \sigma_y + \sigma_z \tensor \sigma_z)
$$

\noindent on the $i, j$ slots of the tensor product, and acts as identity on the other parts. More specifically, $E$ is the map:

\begin{eqnarray*}
&|00\rangle&\rightarrow \hspace{3mm}|00\rangle,\\
&|01\rangle&\rightarrow \hspace{3mm}|10\rangle,\\
&|10\rangle&\rightarrow \hspace{3mm}|01\rangle,\\
&|11\rangle&\rightarrow \hspace{3mm}|11\rangle.
\end{eqnarray*}

The action of $E_{i j}$ is very similar to the permuting operator $L_{(i, j)}$, except that $E$ operates on bits rather than the arbitrary labels of $[n]$. These operators are also known as the Heisenberg couplings, related to the Heisenberg Hamiltonian for spin-spin interactions:

$$
H(t)=\sum_{i< j \in [n]} a^x_{i j}(t) \sigma^i_x \tensor \sigma^j_x + a^y_{i j}(t) \sigma^i_y \tensor \sigma^j_y+ a^z_{i j}(t) \sigma^i_z \tensor \sigma^j_z.
$$

\noindent Here $a^{l}_{ij}$ are time dependent real valued couplings. The Heisenberg Hamiltonian models pairwise interaction of spin $\dfrac{1}{2}$ particles, on a network. If one considers zero couplings for the nonadjacent locations with $|i-j|>1$, what the model describes is a chain of spins, with spin-spin interaction of the particles located on a line, or circle if the boundary condition of $a_{i,i+1} = a_{j, j+1}$ is considered for $i= j \mod n$. These are both one dimensional geometries. An isotropic Heisenberg Hamiltonian is the one for which the coefficients satisfy $a^x_{i j}(t)=a^y_{i j}(t)=a^z_{i j}(t)=a(t)_{i j}$, at all times. The isotropic Hamiltonian is expressible by the exchange operators according to:

$$
H(t) =\sum_{i< j \in [n]} a(t)_{ij} (2 E_{(i,j)}-1).
$$

\noindent If one further restricts the $a(t)_{ij}$ couplings to be piecewise constant in time, and that at most one nonzero coupling at a time, in the summation above, $H(t)$ imposes a general form of unitary evolution, according to:

\begin{equation}
U=T (\theta_m, i_m, j_m)\ldots T (\theta_2, i_2, j_2) T (\theta_1, i_1, j_1).
\label{exch}
\end{equation}

\noindent Given $(\theta_1, i_1, j_1), (\theta_2, i_2, j_2),\ldots, (\theta_m, i_m, j_m)$ as the description of $U$, and special initial states $|\psi \rangle \in (\C^2)^{\tensor n}$, we will demonstrate how to use $X$ operators with arbitrary initial state to simulate $U$, and then sample from the output distribution of $U |\psi\rangle$. After that, we consult a previously known result, which asserts that exchange interactions are sufficient for universal quantum computing.

\begin{definition}
Define $\mathcal{X}^k_n :=\{|x\rangle: x\in \{0,1\}^n, |x|_H = k\}$ to be the subset of $\mathcal{X}_n$, containing strings of Hamming distance $k\leq n$. Here, $|.|_H$ is the Hamming distance, which is the number of $1$'s in a string. Also, let $\C \mathcal{X}^k_n$ \footnote{Usually $\C G$ refers to a group algebra, however, here we just use $\C \mathcal{X}^k_n$ just for the simplicity of notations.} be the corresponding Hilbert space spanned by these basis.
\end{definition}

\begin{theorem}
Given a description of $U$ (in equation ~\ref{exch}), and an initial state $|\psi\rangle \in \C \mathcal{X}^k_n$, there exists an initial $|\psi'\rangle\in \C S_n$, and a ball permuting circuit, with $X$ operators, that can sample from the output of $U|\psi'\rangle$, exactly.
\label{hrd}
\end{theorem}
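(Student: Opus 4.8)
The plan is to exhibit an explicit isometric encoding $\iota:\C\mathcal{X}^k_n\hookrightarrow\C S_n$ of Hamming–weight-$k$ bit strings by uniform superpositions over cosets of a Young subgroup, to show that $\iota$ intertwines the adjacent exchange gates $T(\theta,i,i+1)$ with the adjacent ball-permuting gates $X(\theta,i)$, and then to argue that measuring the resulting permutation and thresholding its labels reproduces the computational-basis statistics of $U|\psi\rangle$ exactly. The first preparatory step is to remove the non-adjacent gates from $U=T(\theta_m,i_m,j_m)\cdots T(\theta_1,i_1,j_1)$: since a transposition factors through adjacent transpositions, $E_{(i,j)}=Q\,E_{(a,a+1)}\,Q^{-1}$ for a product $Q$ of adjacent exchange operators, hence $T(\theta,i,j)=Q\,T(\theta,a,a+1)\,Q^{-1}$, with the spurious phases from rewriting each $E_{(b,b+1)}=-iT(\pi/2,b,b+1)$ cancelling between $Q$ and $Q^{-1}$ up to a global phase that is irrelevant for sampling. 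So I may assume $U$ is a product of gates $T(\theta,i,i+1)$.

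Next, say a permutation $\sigma=(\sigma(1),\dots,\sigma(n))\in S_n$ is \emph{compatible} with $x\in\{0,1\}^n$ of weight $k$ if $\sigma(i)\le k\iff x_i=1$ for all $i$. Every $\sigma$ is compatible with a unique $x$, and the $\sigma$'s compatible with a fixed $x$ form a left coset of the Young subgroup $S_k\times S_{n-k}$, of size $k!(n-k)!$. Define $|\tilde x\rangle:=\frac{1}{\sqrt{k!(n-k)!}}\sum_{\sigma\text{ compatible with }x}|\sigma\rangle$ and $\iota\bigl(\sum_x\alpha_x|x\rangle\bigr):=\sum_x\alpha_x|\tilde x\rangle$; since distinct $x$ have disjoint supports, $\iota$ is an isometry, so $|\psi'\rangle:=\iota(|\psi\rangle)$ is a legitimate initial state in $\C S_n$.

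The heart of the argument is the intertwining relation $X(\theta,i)\,\iota(|x\rangle)=\iota\bigl(T(\theta,i,i+1)|x\rangle\bigr)$. This follows from the fact that the left transposition $L_{(i,i+1)}$, acting on permutations, permutes the coset compatible with $x$ among itself when $x_i=x_{i+1}$, and maps it bijectively onto the coset compatible with $x'$ (where $x'$ is $x$ with positions $i,i+1$ swapped) when $x_i\neq x_{i+1}$; hence $L_{(i,i+1)}|\tilde x\rangle=|\tilde x\rangle$ in the first case and $|\widetilde{x'}\rangle$ in the second, exactly mirroring $E_{(i,i+1)}|x\rangle=|x\rangle$ resp. $|x'\rangle$. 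Composing gate by gate gives $X(\theta_m,i_m)\cdots X(\theta_1,i_1)\,|\psi'\rangle=\iota(U|\psi\rangle)$. Finally, write $U|\psi\rangle=\sum_x\beta_x|x\rangle$; measuring $\iota(U|\psi\rangle)=\sum_x\beta_x|\tilde x\rangle$ in the permutation basis returns a $\sigma$ with $\Pr[\sigma]=|\beta_x|^2/(k!(n-k)!)$ for the unique $x$ compatible with $\sigma$, so declaring the output bit string to be $x_i:=1$ iff $\sigma(i)\le k$ yields $\Pr[x]=|\beta_x|^2$, i.e.\ precisely the distribution of a computational-basis measurement of $U|\psi\rangle$.

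The step needing the most care is the intertwining, and specifically the degenerate case $x_i=x_{i+1}$: the reduction works only because $\iota$ uses the \emph{uniform} superposition over a whole coset, which is exactly what makes $L_{(i,i+1)}$ act on the encoded state as the global phase $e^{i\theta}$, matching the diagonal action of the exchange gate on $|00\rangle$ and $|11\rangle$; a single coset representative would instead be carried to a different representative and the correspondence would break. A secondary point to verify is that the adjacentization of the first step genuinely stays within $X$ operators, which it does since each $E_{(b,b+1)}$ equals $-iT(\pi/2,b,b+1)$ — a global phase times $X(\pi/2,b)$ under $\iota$ — and these phases cancel in $Q Q^{-1}$.
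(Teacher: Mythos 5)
Your proposal is correct and essentially reproduces the paper's argument: you use the same encoding of each weight-$k$ string $x$ as a uniform superposition over the coset of the Young subgroup $S_k\times S_{n-k}$ compatible with $x$, the same intertwining of the ball-permuting gates $X(\theta,\cdot)$ with the exchange gates $T(\theta,\cdot,\cdot)$, and the same thresholding rule (output $1$ at position $i$ iff the measured label $\sigma(i)\le k$) to recover the computational-basis statistics of $U|\psi\rangle$. The only difference is presentational: you first reduce to adjacent swaps and verify the global-phase cancellation, whereas the paper tacitly applies non-adjacent $X$ gates and asserts the intertwining without spelling out the coset bijection; your version is a bit more careful but not a different route.
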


\begin{proof}
We show how to encode any state of $\C \mathcal{X}^k_n$ with states of $\C S_n$. Let $S_{k, n-k}$ be the subgroup of $S_n$ according to the cycles $\{1,2,\ldots, k\}$ and $\{k+1, k+2, \ldots, n\}$, and denote $|\phi_0\rangle = \dfrac{1}{\sqrt{k! (n-k)!}}\sum_{\sigma \in S_{k, n-k}} R(\sigma) |123\ldots n\rangle$ be an encoding of the state $|1^k 0^{n-k}\rangle$. Here, $1^k$ means $1$'s repeated for $k$ times. This is indeed a quantum state that is symmetric on each the labels of $\{1,2,\ldots, k\}$ and $\{k+1,k+2,\ldots, n\}$, separately.  Any string of Hamming distance $k$ can be obtained by permuting the string $0^k 1^{n-k}$. For any such string $x$ let $\pi_x$ be such a permutation, and encode $|x\rangle $ with $|\phi(x)\rangle := L_{\pi_x} |\phi_0\rangle$. Therefore, given any initial state $|\psi\rangle := \sum_{x \in \mathcal{X}^k_n} \alpha_x |x\rangle$, pick an initial state $|\psi'\rangle := \sum_{x \in \mathcal{X}^k_n} \alpha_x |\phi(x)\rangle$ in $\C S_n$. Now, given any unitary $U=T (\theta_m, i_m, j_m)\ldots T (\theta_2, i_2, j_2) T (\theta_1, i_1, j_1)$ with $T$ operators, pick a corresponding ball permuting circuit $U'=X (\theta_m, i_m, j_m)\ldots X (\theta_2, i_2, j_2) X (\theta_1, i_1, j_1)$. It can be confirmed that for any $i< j \in [n]$ if $E_{(i, j)} |x\rangle = |x'\rangle$, then $E_{(i, j)} |\phi(x)\rangle = |\phi(x')\rangle$. From this, if $U|\psi\rangle=\sum_{x \in \mathcal{X}^k_n} \beta_x |x\rangle$, then $U'|\psi'\rangle=\sum_{x \in \mathcal{X}^k_n} \beta_x |\phi(x)\rangle$.

 It remains to show that given access to the output of $U'|\psi'\rangle$, one can efficiently sample from $U|\psi\rangle$. Suppose that $U'|\psi'\rangle$ is measured in the end, and one obtains the permutation $\sigma= (\sigma(1),\sigma(2),\ldots, \sigma (n))$. Then, by outputting a string $x$ by replacing all the labels of $\{1,2,\ldots, k\}$ in $\sigma$ with ones and the other labels with zeros the reduction is complete. The probability of obtaining any string $x$ with this protocol is exactly equal to $| \langle x |U|\psi\rangle |^2$.
\end{proof}

Indeed, in this simulation, the space is going to be projected onto a subspace of $\C S_n$ that is invariant under $X$ operators. Moreover, this subspace is isomorphic to the strings of bits with certain Hamming distances. One can formally extend this idea to other similar subspaces. As before, let $\lambda= (\lambda_1, \lambda_2, \ldots, \lambda_t)$ be a partition of $n$, and $S_\lambda\cong S_{\lambda_1}\times S_{\lambda_2}\times\ldots S_{\lambda_t}$, be the subgroup of $S_n$ as the set of permutations with cycles $\{1,2,\ldots, \lambda_1\}, \{\lambda_1 +1, \ldots, \lambda_2\}, \{\lambda_{t-1}+1, \ldots, \lambda_t\}$. Then it can be seen that $P(\lambda):=\dfrac{1}{\lambda!}\sum_{\sigma\in S_\lambda} R(\sigma)$ is a projection, \i.e., it is Hermitian and also $P(\lambda)^2=P(\lambda)$. $P(\lambda)$ is Hermitian, since $R(\cdot)$ is a Hermitian operator. $S_\lambda$ is a group, and $R$ is a homomorphism, therefore for any $\tau \in S_\lambda$, $R(\tau) \sum_{\sigma \in S_\lambda} R(\sigma)=\sum_{\sigma \in S_\lambda} R(\sigma)$, which is implied by the closure of $R(S_\lambda)$ as a group. Therefore:

$$
P(\lambda)^2=\dfrac{1}{\lambda!^2}\sum_{\tau \in S_\lambda} R(\tau) \sum_{\sigma \in S_\lambda} R(\sigma)=\dfrac{|S_\lambda|}{\lambda !^2}\sum_{\sigma \in S_\lambda} R(\sigma)=P(\lambda).
$$

\noindent Indeed, looking at the Young symmetrizers, it can be confirmed that the subspace $V_{\lambda}$ is contained in the space resulted under this projection, and the subspace is further reducible. We need to show that for any partition $\lambda$, $P(\lambda) \C S_n$ is a subspace that is invariant under the $X$ operators. Let $W_\lambda:= \{|\psi\rangle \in \C S_n: (I- P(\lambda))|\psi\rangle=0 \}$ be the subspace obtained by this projection, and $W'_\lambda$ as its complement in $\C S_n$. Choose any $|\psi\rangle \in W_\lambda$, we claim that for any operator $X$, $X|\psi\rangle \in W_\lambda$. This is true, because the projection $P(\lambda)$ commutes with $X$, and $(I-P(\lambda))X|\psi\rangle=X (I-P(\lambda))|\psi\rangle=0$. However, as it is going to be mentioned in a later section, these subspaces are further reducible. More specifically, in the proof of theorem ~\ref{hrd}, we used the partition $\lambda=(k, n-k)$, and constructed the subspace $W_\lambda:= P(\lambda) \C S_n \subset \C S_n$ as the encoding of $\C \mathcal{X}^k_n$; in other words $W_{(k, n-k)}\cong \C \mathcal{X}^k_n$.

For any $k\in [n]$, $\C \mathcal{X}^k_n$ is an invariant subspace of the group, $G_T$, generated by $T$ operators. This is because the exchange operators do not change the Hamming distance of the computational basis. So the decomposition $\C \{0,1\}^n \cong \bigoplus_{k} \C \mathcal{X}^k_n$ is immediate. Consider the standard total $Z$ direction angular momentum operator:

$$
J_Z:= \dfrac{1}{2} (\sigma^1_z+\sigma^2_z+\ldots+\sigma^n_z).
$$

\noindent
Then $[J_Z, E_{(i,j)}]=0$ for all $i$ and $j$. Here, the superscript $j$ in $A ^j$ for operator $A$ means $I\tensor I \tensor \ldots \tensor \overset{\overset{j}{\downarrow}}{A} \tensor \ldots \tensor I$, the action of the operator on the $j$'th slot of the tensor product. $J_Z$ indeed counts the Hamming distance of a string, and more precisely, for any $|\psi\rangle \in \C \mathcal{X}^k_n$, $J_Z|\psi\rangle = (\dfrac{n}{2}-k)|\psi\rangle$. Therefore, the eigenspace corresponding to each eigenvalue of $J_Z$ is an invariant subspace of $G_T$. For each eigenvalue $n/2 - k$ the multiplicity of this space is ${n}\choose {k}$, the number of $n$ bit strings of Hamming distance $k$. One can also define the $X$ and $Y$ direction total angular momentum operators in the same way:

$$
J_X:= \dfrac{1}{2} (\sigma^1_x+\sigma^2_x+\ldots+\sigma^n_x),
$$

and,

$$
J_Y:= \dfrac{1}{2} (\sigma^1_y+\sigma^2_y+\ldots+\sigma^n_y).
$$

\noindent Indeed, consulting the decoherence free subspaces theory of the exchange operators, the algebra generated by the operators $J_X, J_Y$ and $J_Z$, is the unique commutant of the exchange operators, and vice versa. Indeed, for any positive algebra that is closed under the conjugation map, the commutant relation is an involution \cite{james1981representation}, i.e., the commutant of the commutant of any such algebra is the algebra itself.

The decomposition of $\C \{0,1\}^n$, of $n$ spin $\dfrac{1}{2}$ particles, is well known, and can be characterized by total angular momentum, and the $Z$ direction of the total angular momentum. The total angular momentum operator is:

$$
J^2 = (\sum_{j\in [n]}\dfrac{1}{2}\sigma^j_x)^2+(\sum_{j\in [n]} \dfrac{1}{2}\sigma^j_y)^2+(\sum_{j\in [n]}\dfrac{1}{2}\sigma^j_z)^2.
$$

\noindent Indeed, using a minimal calculation one can rewrite $J^2$ as:

$$
J^2= n(n-1/4) + \sum_{i<j \in [n]} E_{(i,j)},
$$

\noindent and it can be confirmed that for all $k<l \in [n]$, $[E_{(k,l)}, J^2]=0$. In other words, the exchange operators do not change the total and $Z$ direction angular momentum of the a system of spin $1/2$ particles. The decomposition of $\C \{0,1\}^n$ can be written down according to these quantum numbers. Let $V(s)\subset \C \{0,1\}^n$, be the set of states $|\psi\rangle$ in $\C \{0,1\}^n$ such that $J^2|\psi\rangle = s(s+1/2)|\psi\rangle$, and $V(s,m)\subset V(s)\subset \C \{0,1\}^n$, as the subspace with states $|\phi\rangle$ such that $J_Z |\phi\rangle = m/2 |\phi\rangle$. If $n$ is even, $\C \{0,1\}^n$ decomposes according to:

$$
\C \{0,1\}^n \cong V(0) \oplus V(1) \oplus \ldots \oplus V(n/2),
$$

\noindent
and each of these subspaces further decomposes to:

$$
V(s)\cong V(s,-s)\oplus V(s,-s+1) \oplus \ldots \oplus V( s, s).
$$

\noindent For odd $n$ the only difference is in the decomposition $\C \{0,1\}^n \cong V(\dfrac{1}{2}) \oplus V(\dfrac{3}{2}) \oplus \ldots \oplus V(n/2)$. From what is described in the context of decoherence free subspaces theory, the exchange interaction can affect the multiplicity space of each subspace $V(s,m)$. We are interested in the subspaces of the form $V(s,0)$ for even $n$, and $V(s, \pm \dfrac{1}{2})$, for odd $n$, which correspond to the decomposition of $\mathcal{X}^{n/2}_n$, and $\mathcal{X}^{(n \pm 1)/{2}}_n$, based on the total angular momentum, respectively.

There is a neat connection between the multiplicity space of these subspaces, and the subgroup adapted YY bases. For $k\in [n]$, define the following series of operators:

$$
J_k^2 = k(k-1/4) + \sum_{i<j \in [k]} E_{(i,j)}.
$$

\noindent Clearly, $J_k^2=J^2$. These are indeed the total angular momentum measured by just looking at the first $k$ particles. Using a minimal calculation one gets $[J^2_k , J^2_l]=0$ for all $k,l$. That is they are all commuting, and they can be mutually diagonalized. For $x_j \in [n]$, let $|x_1, x_2, \ldots, x_n\rangle$, be such basis with $J^2_j |x_1, x_2, \ldots, x_n\rangle=x_j (x_j + \dfrac{1}{2})|x_1, x_2, \ldots, x_n\rangle$. These are appropriate candidates as a basis for the multiplicity space of $V(s,0)$($V(s,1/2)$ for odd $n$). Then, $x_n=s$. Analyzing these operators more carefully, it is realized that for each $l<n$, either $x_{l+1}= x_l + 1/2$ or $x_{l+1}= x_l - 1/2$. Intuitively, this is saying that adding a new spin $1/2$ particle $Q=\C^2$ to $V(x_j)$:

$$
V(x_j)\tensor Q \cong V(x_j+\dfrac{1}{2})\oplus V(x_j-\dfrac{1}{2}),
$$

\noindent for $x_j>0$, and otherwise:

$$
V(0)\tensor Q \cong V(\dfrac{1}{2}),
$$

\noindent This is similar to the branching rule of the symmetric group representation theory. The second form is directly related to the branching rule of $V_{(n,n)}\cong V_{(n,n-1)}$. For simplicity, here we consider the twice of the $J$ operators instead, so that the branching rule takes the form:

$$
V(x_j)\tensor Q \cong V(x_j+1)\oplus V(x_j-1),
$$

\noindent for $x_j>0$ and,

$$
V(0)\tensor Q \cong V(1),
$$

%%%%%%%%%%%%%%%%%%%%%%%%%%%%%%%%%%%%%%%%%%%%%%%%%%%%%%

\begin{figure}[tp]
\centering
\includegraphics[height=3.0in]{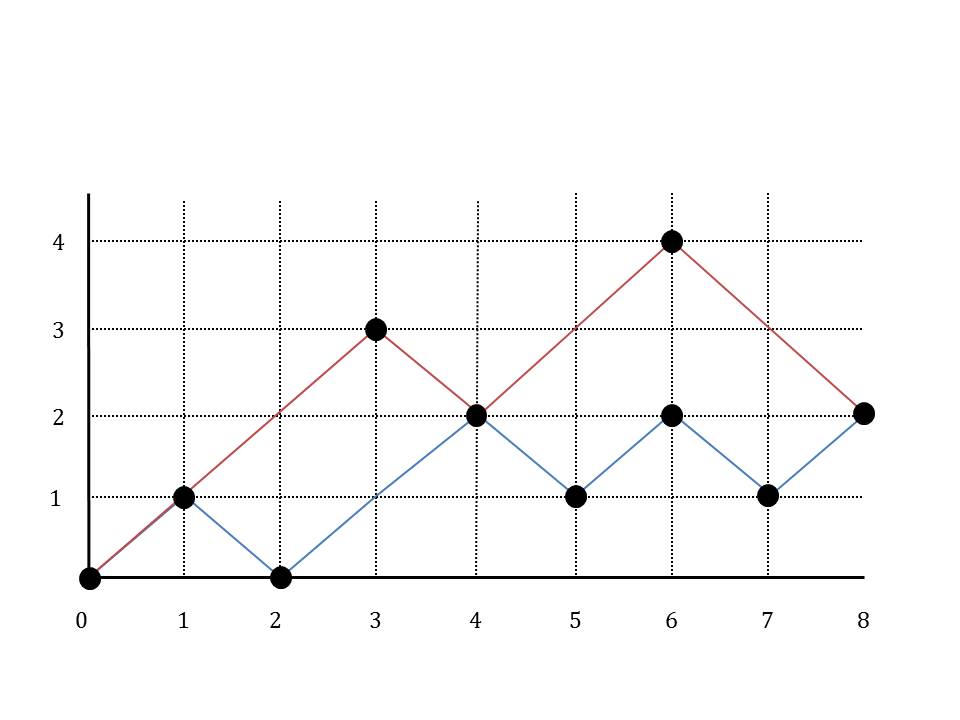}
\caption[Example of a path model]{An example of a path model. The blue and red paths start out of $(0, 0)$ and end up with the point $(8,2)$. YY basis corresponding to tableaus that two rows are closely related to the path model.
}
\label{Path}
\end{figure}
%%%%%%%%%%%%%%%%%%%%%%%%%%%%%%%%%%%%%%%%%%%%%%%%%%%%%%%

\noindent otherwise. Applying this rule recursively, the path model is obtained. See Figure \ref{Path} for an example. A path model $P_s$ is the set of paths between two points $(0,0)$ and $(n, s)$ in a two dimensional discrete Cartesian plane $\{0,1,2,3,\ldots, n\}^2$, where no path is allowed to cross the $(x,0)$ line, and at each step the path will move either one step up or one step down. Path up/down from the point $(x,y)$ is the connection from this point to $(x+1, y+1)/(x+1, y-1)$. See Figure \ref{Path} for an example of a path model. Therefore, the Hilbert space $V(s)$ corresponds to the orthonormal basis labeled by the Paths to the point $(n,2s)$. 

Indeed, we could agree on a path model for the YY basis of the tableaus with two rows. Let $\lambda$ be the Young diagram of shape $\lambda= (n,m)$. The path model is constructed in the following way: map each tableau $t$ to a symbol, $M_t= y_1 y_2 \ldots y_{n+m}$, where $y_j$ is the row index of the box containing the number $j$. Starting at the point $(0, 0)$ then the path corresponding to $t$ is constructed by taking a step up, whenever a $1$ is read in $M$, and a step down otherwise. Thereby, $P_0$ corresponds to $(n,n)$, and $P_{2n}$ corresponds to the partition $(2n)$.

Given this background, universal quantum computing is possible by encoding a qubit using three spin $1/2$ particles. Suppose that the following initial states are given in $\C \mathcal{X}^1_3$:

$$
|0_L\rangle := \dfrac{|010\rangle-|100\rangle}{\sqrt{2}}
$$

and,

$$
|1_L\rangle :=\dfrac{2|001\rangle-|010\rangle-|100\rangle}{\sqrt{6}},
$$

\noindent
as some logical encoding of a qubit using three quantum digits. We claim that there is a way to distinguish $|0_L\rangle$ from $|1_L\rangle$ with perfect soundness. These mark the multiplicity space of the space with half $Z$ direction angular momentum and half total angular momentum. First, we should find a way to distinguish between these two states using measurement in the computational basis. Suppose that we have access to $k$ copies of an unknown quantum state, and we have the promise that it is either $|0_L\rangle$ or $|1_L\rangle$, and we want to see which one is the case. The idea is to simply measure the third bit of each copy, and announce it to be $0_L$ if the results of the $k$ measurements are all $0$ bits. If the state has been $|0_L\rangle$, the probability of error in this decision is zero, because $|0_L\rangle= \dfrac{|01\rangle-|10\rangle}{\sqrt{2}}\tensor |0\rangle$. Otherwise, we will make a wrong decision with probability at most $(1/3)^k$, which is exponentially small. This is because the probability of reading a $0$ in the third bit of $|1_L\rangle$ is $1/3$. 

\begin{theorem}
There is a way of acting as encoded $SU(2)$ on the span of $\{|0_L\rangle,|1_L\rangle \}$, and also $SU(4)$ on the concatenation of two encoded qubits. 
\end{theorem}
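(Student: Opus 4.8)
The plan is to treat the two assertions in turn, exploiting the fact — already used earlier in this chapter for the ball–permuting gates — that on three spin-$\tfrac12$ particles the exchange operators $E_{(i,j)}$ act within the fixed-$(J^2,J_Z)$ sector, so that the logical subspace $\mathrm{span}\{|0_L\rangle,|1_L\rangle\}$ is invariant under $E_{(1,2)},E_{(2,3)}$ and their products, and then iterate the construction on six spins for a pair of encoded qubits.

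\textbf{Encoded $SU(2)$.} First I would write out the generators in the ordered basis $\{|0_L\rangle,|1_L\rangle\}$. A direct computation using $|0_L\rangle=\tfrac{|01\rangle-|10\rangle}{\sqrt2}\otimes|0\rangle$ and the formula for $|1_L\rangle$ gives $E_{(1,2)}=\mathrm{diag}(0,1)$ (the singlet on the first pair is killed, and $|1_L\rangle$ is fixed) and $E_{(2,3)}=\tfrac12\sigma_z+\tfrac{\sqrt3}{2}\sigma_x$, which is already traceless; up to the shift $\tfrac12 I$ these are exactly the two-row Young--Yamanouchi matrices recorded earlier for $(2,1)$. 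Hence $2E_{(1,2)}-I=-\sigma_z$, and $i[\,E_{(1,2)},E_{(2,3)}\,]\propto\sigma_y$, so the real span of $\{E_{(1,2)},E_{(2,3)}\}$ together with iterated commutators is all of $\su(2)$ (indeed of $\mathfrak u(2)$, the extra $i\mathbb R\,I$ being an irrelevant global phase coming from the $\tfrac12 I$ parts). Since $T(\theta,i,j)=\exp(i\theta E_{(i,j)})$ is available for every real $\theta$, the group generated is a connected subgroup of $U(2)$ with Lie algebra $\mathfrak u(2)$, hence equals $U(2)$; modulo global phase this is exactly encoded $SU(2)$, and it is obtained \emph{exactly}, not merely densely — no Solovay--Kitaev approximation is needed here.

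\textbf{Encoded $SU(4)$.} Here I would take six spins split into triples $A=\{1,2,3\}$, $B=\{4,5,6\}$, each carrying a logical qubit as above, so the code space is $\mathcal L=W_A\otimes W_B$ of dimension $4$. Exchanges internal to $A$ generate $U(W_A)\otimes I$ and exchanges internal to $B$ generate $I\otimes U(W_B)$ by the previous paragraph; because operators supported on disjoint spins commute and each factor is connected, the group they jointly generate on $\mathcal L$ is precisely $SU(2)_A\times SU(2)_B$ (the "local" part, up to phase). It then remains to produce one entangling operation and to check that it, together with $SU(2)_A\times SU(2)_B$, Lie-generates all of $\su(\mathcal L)\cong\su(4)$: for that it suffices to exhibit a single reachable Hamiltonian on $\mathcal L$ that is not of the form (something on $W_A$) $\oplus$ (something on $W_B$), after which the textbook iterated-commutator argument for $\su(2)\oplus\su(2)$ plus one non-local generator closes up to $\su(4)$ \cite{nielsen2010quantum}, and by the Solovay--Kitaev theorem \cite{kitaev1997quantum} (or here, exactly, since all $\theta$ are available) one gets all of $SU(4)$ on the encoded pair.

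\textbf{The main obstacle} is supplying that entangler, because the obvious candidate, a between-triple exchange $E_{(3,4)}$, does \emph{not} preserve $\mathcal L$: it commutes with the total six-spin $J^2$ and $J_Z$ but not with the individual triple angular momenta, so $e^{i\theta E_{(3,4)}}$ leaks $\mathcal L$ into the other multiplicity sectors of the $J_Z$-eigenspace. I would close this gap in one of two ways. The route I would actually write down is to invoke the known universality of the exchange interaction \cite{bacon2001encoded,kempe2001theory,kempe2001encoded}: a fixed sequence of $O(1)$ exchange pulses, mixing and then unmixing the two triples, realizes a two-encoded-qubit entangler (a logical CNOT up to local gates) that returns exactly to $\mathcal L$ with no residual leakage; combined with the encoded $SU(2)$ gates of the first part this is a universal set, hence $SU(4)$-complete on the pair. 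The more self-contained alternative is to work inside the smallest subspace $\widetilde{\mathcal L}\supseteq\mathcal L$ invariant under all the exchange gates used — a union of total-$S$ sectors of the six-spin space — compute the Lie algebra generated there, and then peel $\mathcal L$ off the leakage sectors by a decoupling argument in the spirit of the decoupling lemma of Aharonov--Arad \cite{aharonov2011bqp} quoted earlier in this chapter; the bookkeeping of which total-$S$ sectors occur, together with the non-equal-dimension hypotheses that decoupling requires, is the technical heart of this version and is where I expect the real work to lie.
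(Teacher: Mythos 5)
Your overall strategy matches the paper's: for $SU(2)$, compute the exchange operators in the logical basis and close the Lie algebra; for $SU(4)$, invoke the known encoded universality of exchange interactions (\cite{kempe2001theory,bacon2001encoded,divincenzo2000universal}) together with a decoupling argument guaranteeing that the leakage from a between-triple exchange can be managed. The paper's own sketch for the $SU(4)$ part cites the decoupling result from \cite{kempe2001theory} directly and observes that $|0_L\rangle\otimes|0_L\rangle$ lives entirely in the $m=2$ sector, so the isomorphic-pair obstruction ($s_1=s_2$, $m_2=-m_1$) never arises; this is essentially your first route, while your second route via the Aharonov--Arad lemma is a reasonable variant of the same idea.

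One concrete computational error in your $SU(2)$ paragraph should be fixed. In the ordered basis $\{|0_L\rangle,|1_L\rangle\}$, the exchange operator $E_{(1,2)}$ is \emph{not} $\mathrm{diag}(0,1)$: acting on $|0_L\rangle=\tfrac{|01\rangle-|10\rangle}{\sqrt2}\otimes|0\rangle$, the swap sends the singlet factor to its negative (eigenvalue $-1$), not to zero, and it fixes $|1_L\rangle$ (eigenvalue $+1$). Hence
\[
E_{(1,2)}\big|_{\mathcal L}=\mathrm{diag}(-1,1)=-\sigma_z,
\qquad
E_{(2,3)}\big|_{\mathcal L}=\tfrac12\sigma_z+\tfrac{\sqrt3}{2}\sigma_x,
\]
both of which are already traceless, so the remarks about ``$2E_{(1,2)}-I$'' and the extra $i\mathbb{R}\,I$ direction in $\mathfrak{u}(2)$ are unnecessary. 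The subsequent claim is unaffected: $-\sigma_z$ and $\tfrac12\sigma_z+\tfrac{\sqrt3}{2}\sigma_x$ span $\{\sigma_x,\sigma_z\}$, a single commutator produces $\sigma_y$, and the generated connected group is precisely $SU(2)$ on the code space. But the intermediate ``the singlet is killed'' phrasing conflates the exchange operator with the projector $\tfrac{1}{2}(I-E)$ and would mislead a reader; please correct it.
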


\begin{proof}(Sketch) according to the analysis of \cite{divincenzo2000universal, kempe2001theory}, one can look at the Lie algebra of the exchange operators to find encoded $\su(2)$ algebra on the encoded qubit. Also, we need to take enough commutations such that the action of the designed operators annihilates the two one dimensional spaces spanned by $|000\rangle$ and $|111\rangle$. The authors of \cite{kempe2001theory} prove that there is a way to act as $SU(V(s,m))$ on each invariant subspace $V(s,m)$. Moreover, they prove that the action on two subspaces $V(s_1, m_1)$ and $V(s_2, m_2)$ can be decoupled, unless $s_1=s_2$, and $m_2=-m_1$, where the two subspaces are isomorphic. It is almost enough to prove that the state $|0_L\rangle \otimes |0_L\rangle$ is contained in non-isomorphic invariant subspaces. However, this is also true, since $|0_L\rangle \otimes |0_L\rangle$ is completely contained in subspaces with $m=2$.
\end{proof}

See \cite{bauer2014universality,kempe2002exact,wu2002power} for similar models with encoded universality. Therefore, this is a nonconstructive proof for the existence of an encoded entangling quantum gate; CNOT for example. Indeed, the actual construction of a CNOT is given in \cite{divincenzo2000universal} . Notice that for a decision problem, one can formulate quantum computation in such a way that only one qubit needs to be measured in the end, and this can be done by distinguishing $|0_L\rangle$ and $|1_L\rangle$ using measurement in the computational basis. The probability of success in distinguishing between the two bits can also be amplified by just repeating the computation for polynomial number of times, and taking the majority of votes. Also, taking the majority of votes can be done with encoded CNOTs and single qubits gates on a larger circuit, and without loss of generality we can assume that one single measurement on one single qubit is sufficient.

\section{The Scattering Quantum Computer with Demolition Intermediate Measurements}
\label{post}
In section \ref{Upper-bounds} of this chapter we showed that the group of unitaries generated by Yang-Baxter circuits constitute the union of small dimensional manifolds. Since this model is the generalization of scattering models in section \ref{models}, we conclude:

\begin{corollary}
(Of theorem ~\ref{YBnonu}) let $S$ be the set of unitary scattering operators generated by $n$ particle scattering with any of the models in section 2. Then $S$ corresponds to a manifold of dimension at most $n$.
\end{corollary}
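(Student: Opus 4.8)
The statement to prove is the Corollary of Theorem \ref{YBnonu}: the set $S$ of unitary scattering operators generated by $n$-particle scattering in any of the models of section \ref{models} is a manifold of dimension at most $n$.

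The plan is to reduce this directly to Theorem \ref{YBnonu}, which already establishes that the Lie group $Q_n$ generated by \emph{planar} Yang-Baxter quantum circuits over $n$ labels is, as a manifold, isomorphic to a union of $n!$ pieces each of dimension at most $n$. First I would recall from section \ref{models} that for each of the integrable models discussed there (the factorized relativistic $S$-matrix and the repulsive delta-interaction model), the full scattering matrix is a product of local two-particle gates, one per intersection of the particle world-lines, and that these world-lines are \emph{straight lines} in the $x$--$t$ plane determined by the initial momenta/velocities. Hence every such scattering process is literally a planar Yang-Baxter circuit consisting of line trajectories, and each two-particle gate has the form $H(v_i - v_j, k) = X(\tan^{-1}(v_i-v_j),k)$ appearing in the definition of the Yang-Baxter quantum circuit. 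So $S$ is a subset of $Q_n$ — in fact $S$ is exactly the union over permutation signatures $\sigma$ of the orbits $Q_n(\sigma)$ obtained as the velocities $(v_1,\dots,v_n)$ range over $\mathbb{R}^n$ (one may also restrict to the relevant sign/ordering conventions without increasing the dimension).

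The key steps, in order: (1) state the identification of scattering processes with planar line-trajectory Yang-Baxter circuits, citing the discussion around the ball-permuting gates $H(u,t)$ and the Yang-Baxter relation $H(u,t)H(u+v,t+1)H(v,t) = H(v,t+1)H(u+v,t)H(u,t+1)$ in section \ref{models}; (2) invoke Theorem \ref{YBnonu} to conclude that, for fixed velocities, the generated unitary depends only on the permutation signature $\sigma \in S_n$, so $Q_n(\sigma)$ is the image of the smooth map $\mathbb{R}^n \to U(\mathcal H_n)$, $(v_1,\dots,v_n) \mapsto$ (the scattering unitary with signature $\sigma$); (3) conclude that each $Q_n(\sigma)$ is locally diffeomorphic to a quotient of $\mathbb{R}^n$ and hence has dimension at most $n$, and $S \subseteq Q_n = \bigcup_{\sigma \in S_n} Q_n(\sigma)$ is a union of at most $n!$ such pieces; (4) note (as remarked after Theorem \ref{YBnonu1}) that allowing arbitrary initial states does not change this, since the dimension count is about the generated unitary group, not the orbit of a state. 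Then the corollary follows immediately.

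I do not expect a serious obstacle here — this is essentially a bookkeeping corollary. The one point that deserves a careful sentence is making sure that the models in section \ref{models} genuinely only ever produce \emph{planar} circuits with line trajectories (as opposed to general smooth Yang-Baxter circuits): this is exactly the content of the semi-classical picture in section \ref{models}, where particles move on actual straight-line trajectories and each crossing is a two-particle gate, so planarity is automatic. A second minor point is that the overall global phase appearing in the annihilation-creation rule $A_i(\theta)A_i(\phi) = e^{i\phi(\theta,\phi)}\sum_j A_j(\phi)A_j(\theta)$ does not occur in the distinct-label block $(1,1,\dots,1)$ we care about, so it does not add a dimension; and even if one tracked it, a single global phase adds at most $1$ to the dimension, which can be absorbed. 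So the final write-up is: identify $S$ with a sub-union of the $Q_n(\sigma)$, apply Theorem \ref{YBnonu}, and read off the dimension bound $n$.
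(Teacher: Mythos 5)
Your proof is essentially correct and relies on the same key ingredient (Theorem \ref{YBnonu}) as the paper, but the paper's argument is slightly more direct and captures a stronger structural fact that your phrasing partially obscures. The paper observes that in the scattering models of section \ref{models}, the permutation signature of the (line-trajectory) Yang-Baxter circuit is \emph{determined} by the velocities --- a velocity tuple fixes both which collisions occur and hence which $\sigma$ you realize --- so $S$ is the image of a \emph{single} smooth map $\mathbb{R}^n \to U(\mathcal H_n)$, $v \mapsto C(v,\sigma(v))$, and the dimension bound $\le n$ falls out immediately. This is why the corollary says ``\emph{a} manifold'' (singular), in contrast with the union of $n!$ pieces in Theorem \ref{YBnonu} itself. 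You do note this fact in step (2), but your step (1) claim that ``$S$ is exactly the union over $\sigma$ of the orbits $Q_n(\sigma)$'' is imprecise: for line trajectories you only sweep out the portion of each $Q_n(\sigma)$ coming from velocity tuples whose ordering produces signature $\sigma$, so $S \subsetneq Q_n$ in general. That imprecision is harmless for the dimension bound (a subset of a union of manifolds of dimension $\le n$ still has dimension $\le n$), but it is worth tightening, since the ``single $n$-parameter family'' observation is exactly the ``tiny difference'' from Theorem \ref{YBnonu} that the paper's proof wants you to notice.
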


\begin{proof}
The proof of theorem ~\ref{YBnonu} still works here, with a tiny difference. In the scattering problem, the signature of the corresponding planar YB quantum circuit is fixed by the velocities, therefore, for any set of velocities, the model can generate at most one unitary, and hence the points in the set of scattering matrices combined together is parameterized with $n$ real numbers, as velocities. 
\end{proof}

\noindent Moreover:

\begin{corollary}
(Of theorem \ref{YBnonu1}) let $S$ be the set of unitary scattering operators generated by $n$ particle scattering with any of the models in section 2, with and without postselection in the particle label basis in the end of computation. Then $S$ corresponds to $n!^{O(1)}$ manifolds each with dimension at most $n$.
\end{corollary}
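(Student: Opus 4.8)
The statement to prove is the corollary of Theorem~\ref{YBnonu1}, namely that the set $S$ of unitary scattering operators generated by $n$-particle scattering, both with and without postselection in the particle label basis at the end, corresponds to $n!^{O(1)}$ manifolds, each of dimension at most $n$. The plan is to reduce directly to Theorem~\ref{YBnonu1}, which already established that $\HQBALL$-type circuits (planar Yang–Baxter quantum circuits), after postselection in the particle label basis, generate unitaries parameterized by at most $n$ velocity parameters, grouped into $n!^{O(1)}$ discrete manifolds indexed by the permutation signature together with the choice of postselection pattern. The essential point is that the scattering models of Section~\ref{models} are precisely realized as planar Yang–Baxter quantum circuits: the trajectories of the $n$ particles are straight lines in the $x$--$t$ plane, the intersections are the two-particle gates $H(v_i - v_j, k)$, and the resulting product of local $H$-matrices is exactly a planar Yang–Baxter quantum circuit in the sense of the definition preceding Theorem~\ref{YBnonu}.

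First I would recall, as in the proof of the preceding Corollary (of Theorem~\ref{YBnonu}), the one structural difference between a general planar Yang–Baxter circuit and a genuine scattering process: in the scattering setting the permutation signature is not a free choice but is \emph{determined} by the velocities (the particles reorder themselves so that momenta end up sorted), and likewise the order in which collisions occur is fixed once the velocities and initial spacing are given. Hence, for each fixed tuple of velocities $(v_1,\dots,v_n)$ the scattering model produces exactly one unitary operator. This means the ``no postselection'' part of $S$ is a union over permutation signatures $\sigma \in S_n$ of manifolds $S(\sigma)$, each locally diffeomorphic to a subset of $\mathbb{R}^n$ (the velocity space), so at most $n!$ manifolds of dimension at most $n$ — recovering the earlier corollary as a special case.

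Next I would handle the postselection. At the end of a scattering process one may postselect on a measurement outcome in the particle label basis; for each circuit there are at most $n!^{O(1)}$ such postselection patterns (one for each target permutation, or more generally each subset of permutations one postselects onto, but it suffices to take the $n!$ rank-one projectors onto basis states and note that general postselection subspaces only multiply the count by a further $n!^{O(1)}$ factor). For a \emph{fixed} velocity tuple the pre-postselection unitary is a single operator, so applying a fixed postselection projector and renormalizing yields again a single operator. Therefore the postselected set, for fixed velocities, has size at most $n!^{O(1)}$, and varying the velocities traces out a manifold of dimension at most $n$ for each combination of (permutation signature, postselection pattern). Labelling the manifolds by these discrete data gives a union of $n! \cdot n!^{O(1)} = n!^{O(1)}$ manifolds, each of dimension at most $n$, which is the claim. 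The observation at the end of Section~\ref{post} (and the remark following Theorem~\ref{YBnonu1}) that these bounds persist when arbitrary initial states are allowed carries over verbatim, since fixing the initial state does not increase the number of continuous parameters.

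The main obstacle — really the only subtlety — is bookkeeping the postselection count carefully: one must be sure that allowing postselection onto an arbitrary \emph{subspace} of the permutation basis (rather than a single basis vector, as in $\PP$-style postselection) still gives only $n!^{O(1)}$ genuinely distinct renormalized operators per velocity tuple, and that the renormalization is well-defined (the postselected amplitude is nonzero) on a dense open subset of velocity space, so that ``manifold of dimension at most $n$'' is the honest statement. This is routine: the set of velocities for which a given postselected amplitude vanishes is a proper analytic subvariety, hence of strictly smaller dimension, and can be absorbed into the stratification. With that remark in place the corollary follows immediately from Theorem~\ref{YBnonu1} and the identification of scattering processes with planar Yang–Baxter quantum circuits.
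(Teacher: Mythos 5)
Your proposal is correct and takes essentially the same route as the paper: fix the velocity tuple to get a single unitary, count the $n!^{O(1)}$ discrete postselection patterns, and let the $n$ velocities parameterize each resulting manifold. Your added remarks on the well-definedness of the renormalization on a dense open set of velocities and on arbitrary initial states are careful elaborations the paper leaves implicit, but the underlying argument is the same.
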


\begin{proof}
Following the proof of \ref{YBnonu1}, if the velocity parameters are fixed, then the model can generate exactly one unitary operator, and there are (discrete) $n!^{O(1)}$ ways to do postselection on the output of this unitary matrix. Therefore, given a set of particles initialized with fixed velocities, the model can generate at most $n!^{O(1)}$ number of unitary scattering matrices. 
\end{proof}

Given this observation, we find out that probably a proof for postselected $\BQP$ universality of particle scattering will probably fail, if we postselect in the particle label basis in the end of computation. For this reason we modify the model and allow intermediate measurement. The result, as it is going to be established in section \ref{phcol}, with some complexity theoretic assumptions this new model is hard to simulate on a classical computer.

There are two ways to do an intermediate measurement. The first of these is to measure, intermediately, in the particle label basis, in a way that the outcome of the measurement is the post-measurement quantum state. In this way, because of the planarity of circuits in one spatial dimension, we have to use the post-measurement state over and over. The second scheme of intermediate measurement is demolition measurement due to a particle detector. This measurement reveals the classical output of measurement but not the post-measurement quantum state. In this case the measured particle will not participate in further scatterings. The second model of measurement is more realistically connected to scattering models of section \ref{models}; however, we also establish a similar result for the non-demolition model, which is more related to the general $HQBALL$ complexity class. We consider non-adaptive measurements in both cases.

\subsection{Programming the Scattering Amplitudes with Intermediate Measurements}

The goal is to come up with a quantum algorithm based on particle scattering in $1+1$ dimension, which takes the description $\langle C \rangle$ of a general $X$ quantum ball permuting model as an input, and outputs the description of a sequence of particle scatterings and a sequence of intermediate non-adaptive demolition particle measurements, in a way that the overall process efficiently samples from the output of $C$. The construction of this section is very similar to the nondeterministic gates of \cite{knill2001scheme,knill2000efficient}. For a review of quantum computing with intermediate measurements see \cite{leung2004quantum, terhal2002adaptive,briegel2009measurement}.

Consider the $X$ ball-permuting gate of Figure \ref{fig1}, where we let the two input  wires interact with arbitrary amplitudes, and in the end we measure the label locations of $A$ and $B$. The objective is to have a particle scattering gadget that can simulate the output distribution of this circuit. Therefore, we can use the four particle gadget of Figure \ref{fig2}. The left and right rectangles demonstrate demolition measurements and the final superposition is created at the locations $A$ and $B$. The overall scattering process acts as a nondeterministic gate, in the sense that the gadget succeeds its simulation, only if the left detector measures label $a$ and the right detector measures label $b$, and an experimenter can verify this in the end. The velocities $v_1, v_2, v_a$ and $v_b$ can be tuned in such a way that the desired swap is obtained. The probability of success, thereby, depends on these velocity parameters. More precisely, conditioned on a successful simulation, the overall action of the scattering gadget is the gate $X(\tan^{-1 }z_{eff}, 1)$, where:

$$
\tan^{-1} z_{eff}= \tan^{-1} z_{1}+\tan^{-1} z_{2}
$$

\noindent with $z_1=v_1-v_2$ and $z_2= v_a-v_b$. As a result of this, the left and right black output particles will have velocities $v_b$ and $v_a$, respectively.   Notice that all of these results still hold if the black particles start out of arbitrary initial superpositions. However, one should make sure that the state of the ancilla particles are separable from the black ones.

Moreover, as described, in this model of scattering the particles move on straight line in time-space place, and they do not naturally change their directions. We thereby can use a two particle gadget of Figure \ref{fig3} to navigate the particles' trajectories. The two particles collide from left to right, and the left particle is measured in the end. Conditioned on the detector measuring the label $a$, the navigation is successful, and the outcome of this process is particle with its original label $1$ moves to the right direction with velocity $v_a$. One can match $v_a=v_1$, so that the overall action of the nondeterministic gadget is a change of direction. The success probability, then depends on $v_1$ and $v_a$.

As another example consider the $X$ quantum ball permuting circuit of Figure \ref{fig4}. This circuit consists of $X$ gates, $1, 2$ and $3$, and they permute labels of the four input wires. In the end we measure the output wires $A, B, C$ and $D$, in the particle label basis. We use the particle scattering sequences of Figure \ref{fig5} to simulate this circuit. Again the blue particles are ancilla, and the black particles correspond to the wires, and the labels $1, 2$ and $3$, correspond to the simulation of gates $1, 2$ and $3$ in Figure \ref{fig4}, respectively. Each of the detectors measure in the particle label basis, and in the end the experimenter measures the particle locations $A, B, C$ and $D$, corresponding to the output wires $A, B, C$ and $D$, in Figure \ref{fig4}, respectively. The overall scattering process succeeds in its simulation only if the detectors measure the ancilla particles with their initial labels. That is, conditioned on all blue particles successfully pass through their intermediate interactions and bouncing off the last interaction, the scattering process simulates the circuit successfully. This is true, because the particles move on straight lines, and the only event corresponding to detecting an ancilla particle with its original label is the one where it never bounces off in its intermediate interactions, and bounces off its final collision before moving to the detector. For an example of a larger simulation see the simulation of the $X$ quantum circuit of Figure \ref{fig6} with the scattering process of Figure \ref{fig7}. This example specially, demonstrates that during the scattering, blue (ancilla) particles can experience many intermediate interactions, and the number of these interactions can scale linearly in the number of particles being used. Therefore, the event corresponding to a successful simulation can have exponentially small probability.

It is important to mention that because of the Yang-Baxter equation, braiding of two particles is impossible. Braiding means that two particles can interact with each other over and over, however, because of the expression of unitarity, $H(u) H(-u) = I$, two successive collisions is equivalent to no collision. The role of the intermediate measurements is to allow two particles interact over and over without ending up with identity. 

\subsection{Stationary Programming of Particle Scatterings with Intermediate Demolition Measurements}

The simulations of last section are both intuitive and instructive. However, they have a drawback. The slope of the lines corresponding to particle trajectories, depend on the velocities of the particles. So for large simulations, we need to keep the track of the architecture of collisions, and the amplitudes of interactions at the same time, and this can be both messy and difficult. In this section, we try to present a better simulation scheme where one only needs to keep track of amplitudes, and the architecture of collisions can be tuned easily. The philosophy is to have steady particles, in the beginning, and whenever we want a ball permuting gate, a number of ancilla particles are fired to the target steady particles. Then the intermediate detections are used, and then postselections on their outcomes enables the model to simulate an arbitrary $X$ quantum ball permutation. By stationary particle we mean a particle that is not moving. In order to fulfill this purpose, we use the stationary gadget of Figure \ref{fig8}. The objective is to impose a desired permutation on the input black particles. And we want the black particles to stay stationary in the end of the simulation. In order to do this, two other stationary ancillas are put at the left and right of the black particles. Then, two other ancilla particles, the desired velocities, are fired from left and right, and postselection is made on them bouncing off from the black particles. Then the two black particles interact and exchange momenta, and then they collide with the two stationary ancillas. In the end, we measure and postselect on the ancilla particles bouncing off the black particles. Therefore, in the end of the process, the stationary black particles are left stationary, and the desired superposition is obtained. In order to see an example for the implementation of the stationary programming in larger circuits, see the simulation of $\XQBALL$ circuit of Figure \ref{fig9} with stationary particle programming of Figure \ref{fig10}. 

%%%%%%%%%%%%%%%%%%%%%%%%%%%%Figure1-2%%%%%%%%%%%%%%%%%%%%%%%%%%%%
\begin{figure}[tp]
\centering
\begin{subfigure}{.5\textwidth}
  \centering
{\includegraphics[height=2.0in]{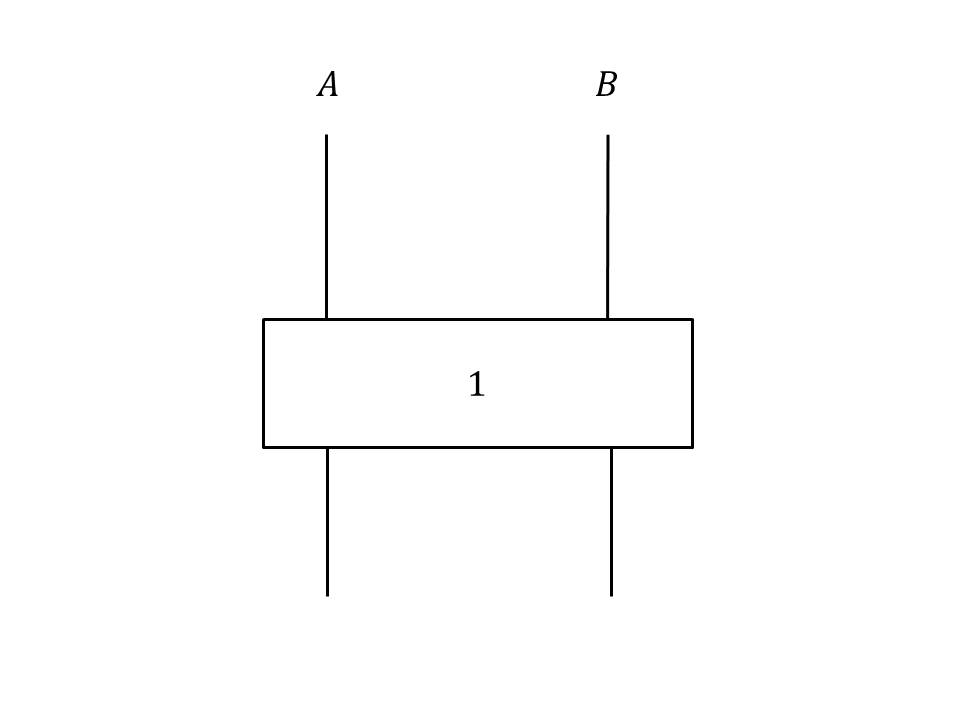}}
  \caption{}
  \label{fig1}
\end{subfigure}%
\hspace{-1cm}
\begin{subfigure}{.5\textwidth}
  \centering
{\includegraphics[height=2.5in]{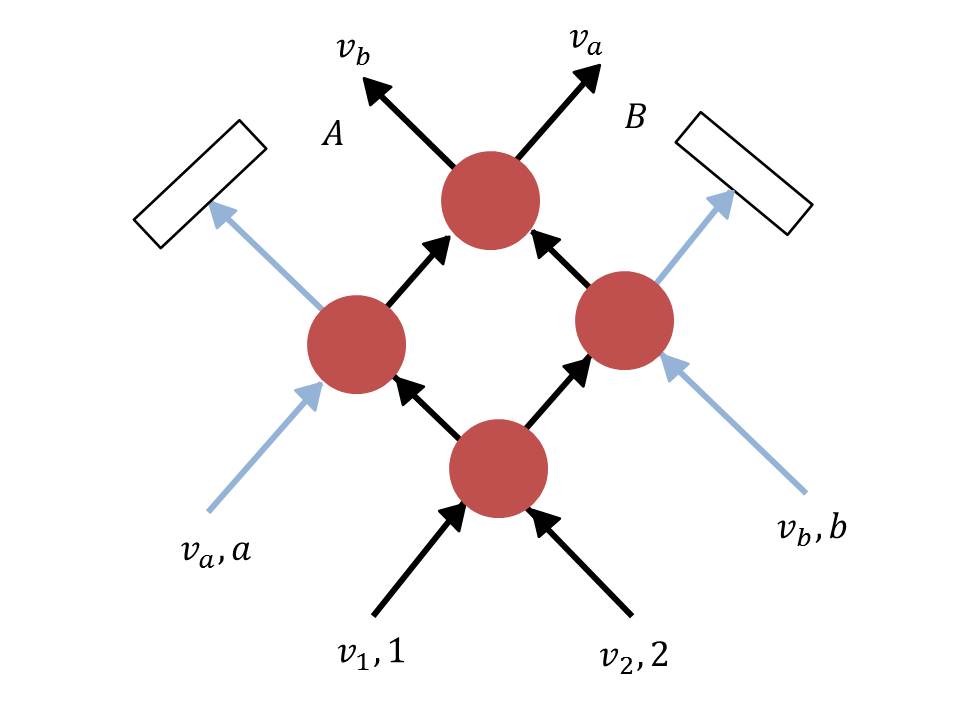}}
  \caption{}
  \label{fig2}
\end{subfigure}
\caption[Nondeterministic four-particle gadget which allows braiding of the balls]{(a) The representation of an $X$ operator. The gate permutes the input labels, and in the end we measure the labels of output wires $A$ and $B$. (b) Four-particle scattering gadget to simulate the $X$ rotation. Lines represent the trajectories of particles, red circles demonstrate interactions, and white rectangles are detectors. Blue particles are ancillas which mediate computation, and black particles are the particles that we wish to implement the actual quantum swap on. The gate is nondeterministic in the sense that it succeeds in producing the desired superposition on labels $|1\rangle$ and $|2\rangle$ only if the left and right detectors detect $|a\rangle$ and $|b\rangle$ labels in the particle label basis, respectively. Conditioned on both ancilla particles bounce off the black particles, the gate operates successfully. The probability of success, thereby, depends on the velocities.
}
\label{fig1-2}
\end{figure}

%%%%%%%%%%%%%%%%%%%%%%%%%%%%%%%Figure3%%%%%%%%%%%%%%%%%%%%%%%%%
\begin{figure}[tp]
\centering
\includegraphics[height=3.0in]{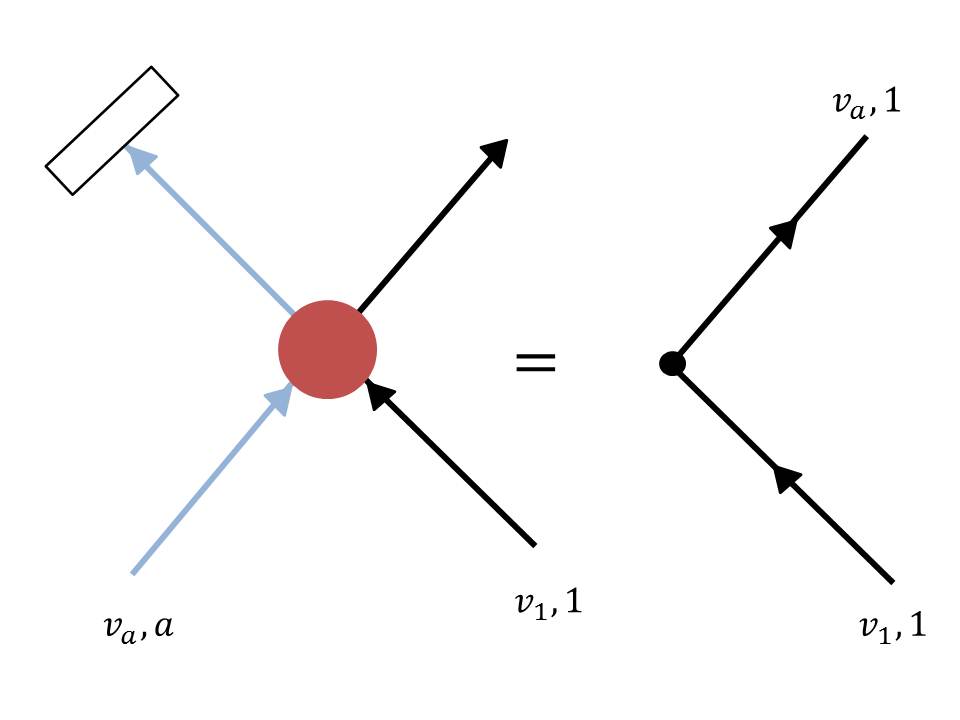}
\caption[Nondeterministic two-particle gadget to navigate the particles]{Two-particle gadget to navigate the trajectory of a single particle. Since in the model we consider the particles move on straight lines, we use this nondeterministic gadget to change the particle's trajectory. The particle that is moving left with velocity $v_1$ non-deterministically changes direction to the right with velocity $v_a$, and this event succeeds only if the detector on the left detects label $|a\rangle$. If the velocities match, $v_a=v_1$, the overall action is a change of direction.
}
\label{fig3}
\end{figure}
%%%%%%%%%%%%%%%%%%%%%%%%%%%%%%%%%Figure4-5%%%%%%%%%%%%%%%%%%%%%%%

\begin{figure}[tp]
\centering
\begin{subfigure}{.5\textwidth}
  \centering
{\includegraphics[height=2.0in]{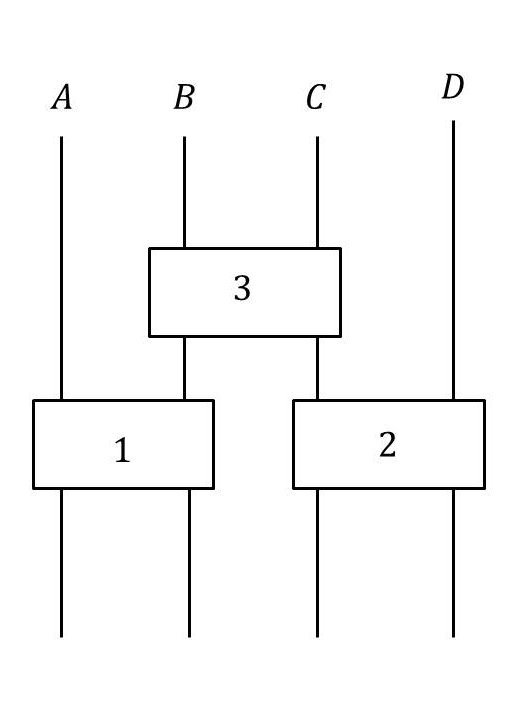}}
  \caption{}
  \label{fig4}
\end{subfigure}%
\hspace{-1cm}
\begin{subfigure}{.5\textwidth}
  \centering
{\includegraphics[height=2.5in]{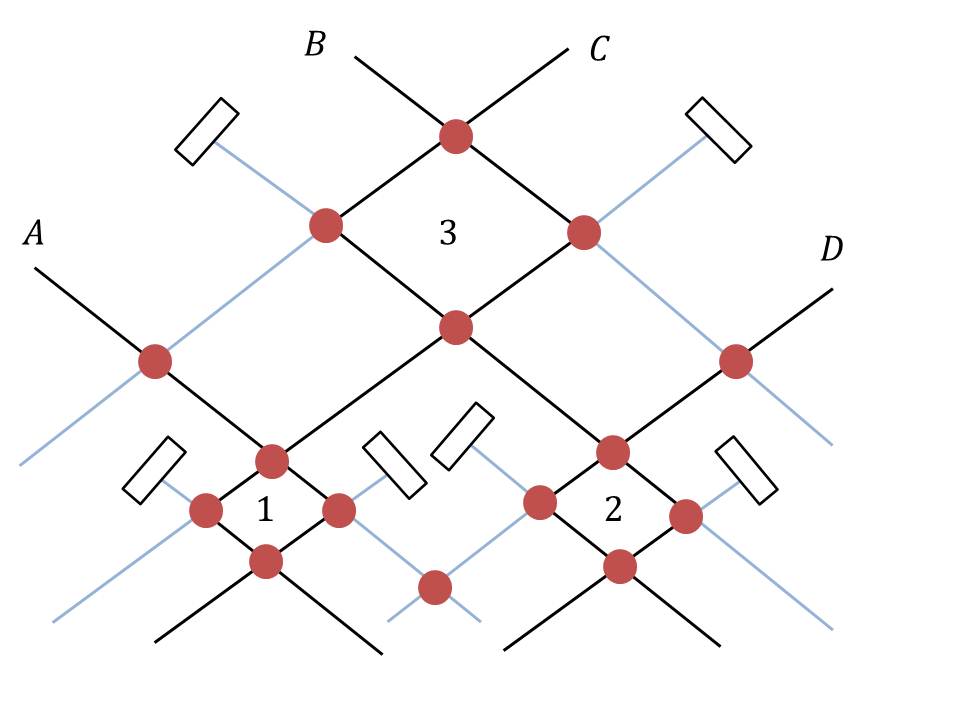}}
  \caption{}
  \label{fig5}
\end{subfigure}
\caption[An example for the simulation of a ball permuting circuit with nondeterministic ball scattering gadgets]{(a) Example of a combination of $X$ operators forming a circuit. The circuit consists of three gates, $1$, $2$, and $3$, and in the end the wires $A$, $B$, $C$, and $D$ are measured in the label basis. (b) An architecture of quantum ball permuting circuit based on particle scattering and intermediate particle measurements to simulate quantum ball permuting circuit of Figure (a). The circuit consists of six ancilla particles which mediate the computation and are detected intermediately with detectors. The labels $1$, $2$, and $3$ demonstrate the simulation of gates $1$, $2$, and $3$, of Figure (a), respectively. In the end we measure the particle locations $A$, $B$, $C$, and $D$. Conditioned on all ancilla particles succeed in passing through all of the intermediate interactions and bouncing off the last interaction, the overall scattering process succeeds in its simulation.}
\label{fig4-5}
\end{figure}

%%%%%%%%%%%%%%%%%%%%%%%%%%%%%%%%%Figure6-7%%%%%%%%%%%%%%%%%%%%%%%

\begin{figure}[tp]
\centering
\begin{subfigure}{.5\textwidth}
  \centering
{\includegraphics[height=2.0in]{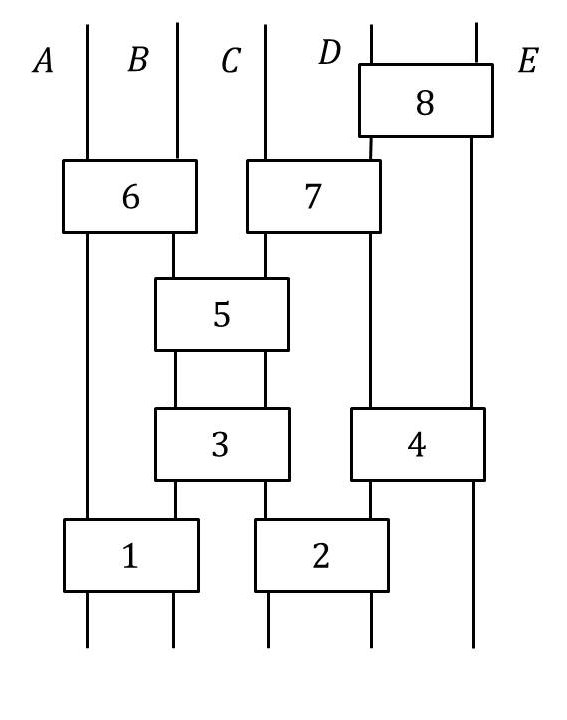}}
  \caption{}
  \label{fig6}
\end{subfigure}%
\hspace{-1cm}
\begin{subfigure}{.5\textwidth}
  \centering
{\includegraphics[height=3.0in]{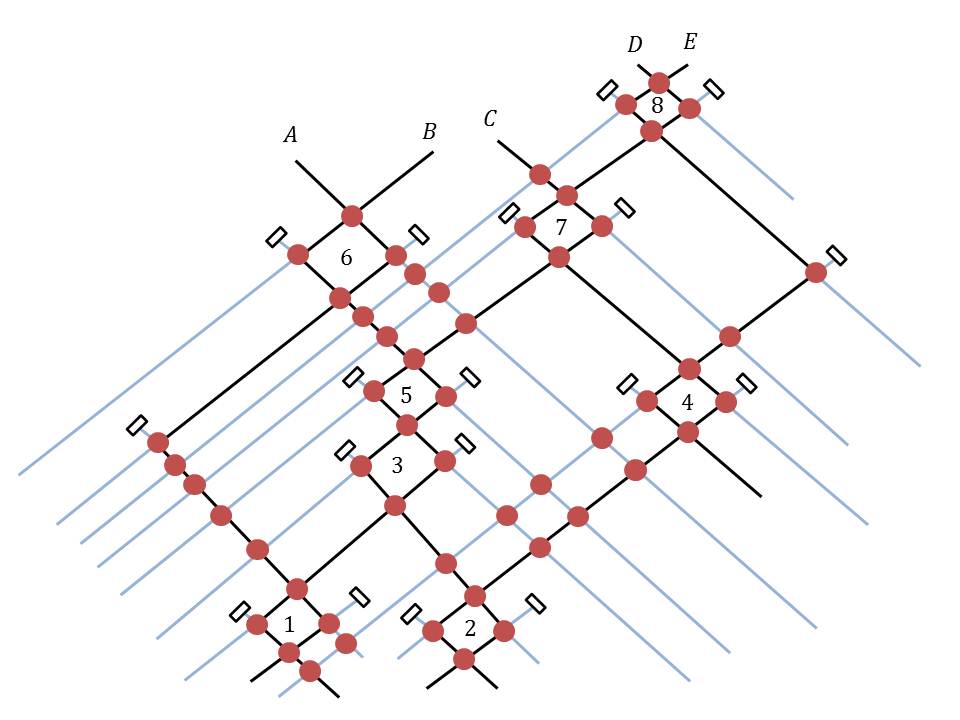}}
  \caption{}
  \label{fig7}
\end{subfigure}
\caption[Another example of ball permuting simulation in nondeterministic ball scattering]{(a) Another example of a quantum circuit with $X$ gates, $1$, $2$, \ldots, $8$, on five labels. In the end we measure the wires $A, B, C, D$ and $E$, in the particle label basis. (b) Programming of particle scattering with intermediate measurements to simulate the $X$ quantum ball permuting circuit of Figure (a) nondeterministically. The labels $1, 2, \ldots, 8$ correspond to the simulation of gates $1, 2, \ldots, 8$ in Figure (a), respectively. Notice that in this example the ancilla particles can experience many intermediate interactions. This example demonstrates that the overall process succeeds in successful simulation, only with small probability, and in general simulations, the probability of success can be exponentially small in the number of particles being used. Therefore, postselecting on the measurement outcomes, one can successfully simulate any $X$ ball permuting quantum circuit. In the end all the particle locations $A, B, C, D$ and $E$ are measured. A drawback in this model of simulation is that it is hard to set the velocities as we proceed to higher layers of the quantum circuit, and we might need to use particles with higher and higher velocity, as we proceed to the top of the circuit.}
\label{fig6-7}
\end{figure}

%%%%%%%%%%%%%%%%%%%%%%%%%%%%%%%Figure8%%%%%%%%%%%%%%%%%%%%%%%%%
\begin{figure}[tp]
\centering
\includegraphics[height=3.0in]{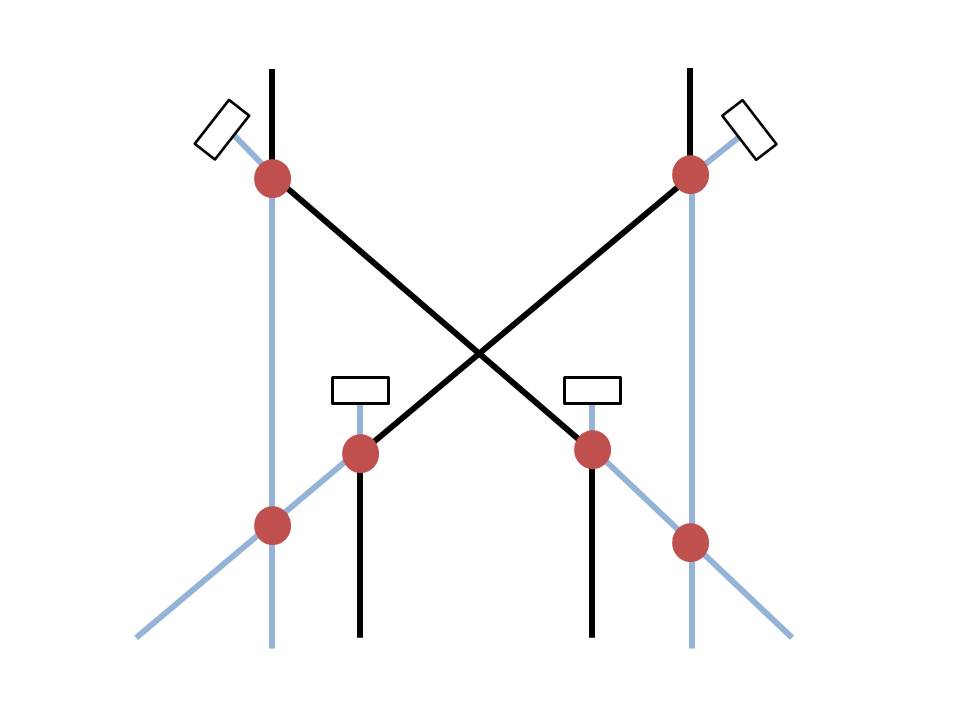}
\caption[Nondeterministic stationary ball scattering gadget to simulate an $X$ operator]{Nondeterministic four-particle gadgets for stationary programming of particle scattering with particle collisions and intermediate measurements. The overall gadget simulates the two label permutation of Figure \ref{fig1}. The objective is to produce superpositions on stationary black particles. Here a stationary particle means a particle that does not move. Initially, two black particles are stationary in the beginning, and we put two more stationary ancilla particles next to them. Then we shoot two ancilla particles from left and right and measure and postselect on them being bounced off from the black particles. Then the two black particles collide with the two stationary ancilla particles and we measure and postselect on the ancilla particles being bounced off in the end. In this scheme it is easier to set the particle scatterings.  
}
\label{fig8}
\end{figure}

%%%%%%%%%%%%%%%%%%%%%%%%%%%%%%%%%%%%%%%%%%%%%%%%%%%%%%%%

%%%%%%%%%%%%%%%%%%%%%%%%%%%%%%%%%Figure9-10%%%%%%%%%%%%%%%%%%%%%%%
\begin{figure}[tp]
\centering
\begin{subfigure}{.5\textwidth}
  \centering
{\includegraphics[height=2.0in]{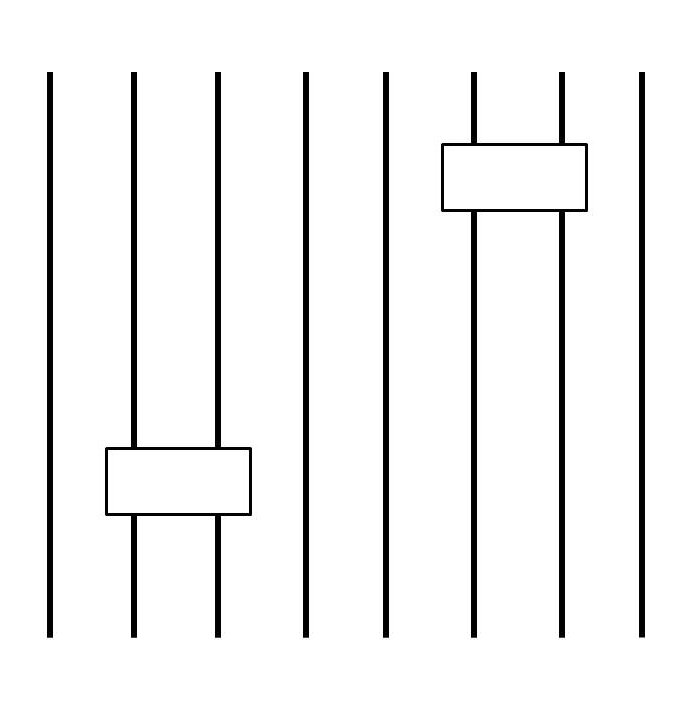}}
  \caption{}
  \label{fig9}
\end{subfigure}%
\hspace{-1cm}
\begin{subfigure}{.5\textwidth}
  \centering
{\includegraphics[height=3.0in]{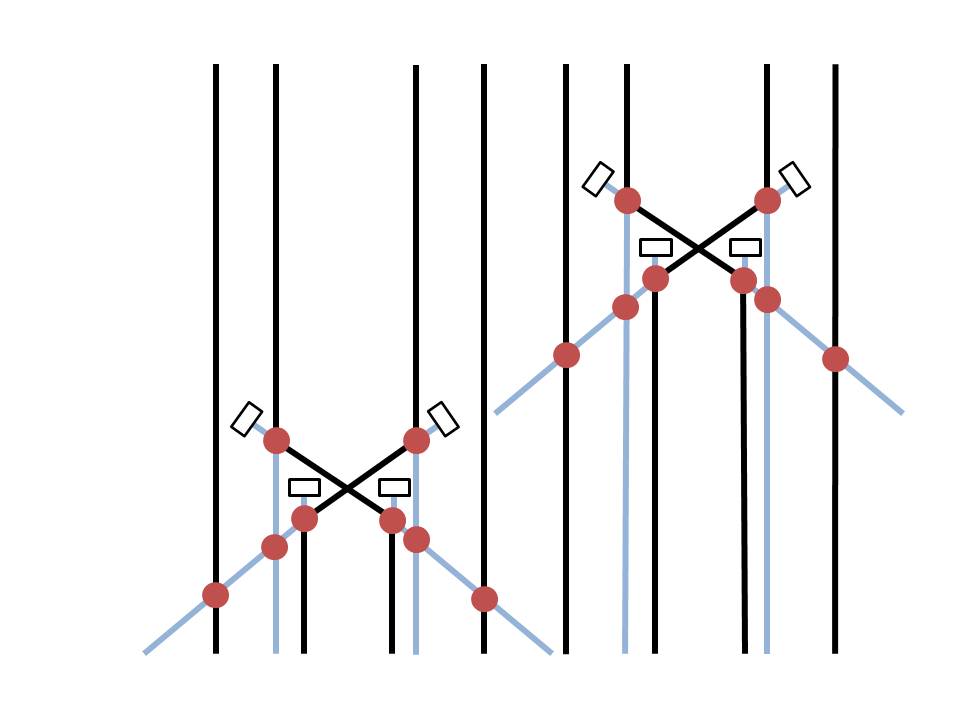}}
  \caption{}
  \label{fig10}
\end{subfigure}
\caption[An example for stationary simulation of an $X$ circuit with ball scattering]{Stationary programming of particle scattering. (a) An example of an $X$ ball permuting circuit on two gates and eight labels. (b) Stationary nondeterministic simulation of the circuit in Figure (a) with ball scattering and intermediate measurements. Each gate in Figure (a) is simulate by a gadget of Figure \ref{fig8}. Except for intermediate interactions, the black particles remain stationary at all the times.}
\label{fig9-10}
\end{figure}

%%%%%%%%%%%%%%%%%%%%%%  Fig.11  %%%%%%%%%%%%%%%%%%%%%%%
\begin{figure}[tp]
\centering
\includegraphics[height=3.5in]{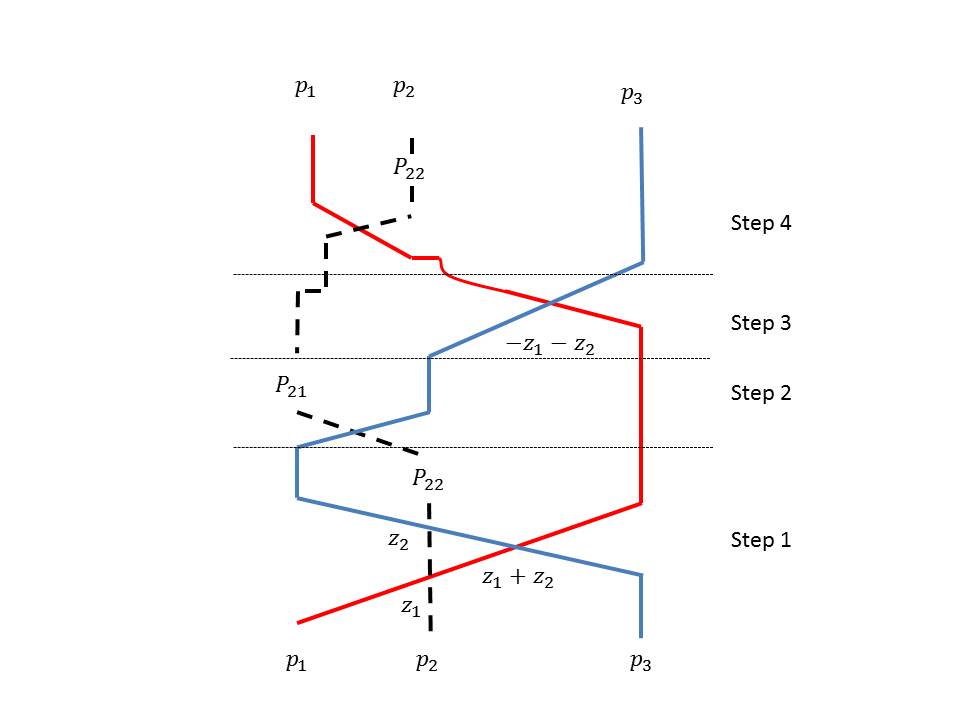}
\caption[Nondeterministic three particle gadget to simulate an $X$ operator with intermediate \textit{non-demolition} measurements]{Nondeterministic three-particle gadget to simulate an $X$ operator with \textit{non-demolition} measurements. This model motivates nondeterministic simulation of ball permuting gates with the model where the one where amplitudes are selected according to the Yang-Baxter equation, and we can do non-demolition intermediate measurements. In such a measurement the outcome of the measurement is the post-measurement quantum state, and the measured labels are being used over and over in this computing scheme.. In this model the braiding of the particles is not necessarily according to straight lines, but without intermediate measurements the obtainable unitary matrices correspond to discrete number of low-dimensional manifolds.

\indent The gadget operates on three labels, and simulates arbitrary rotations on the left (red) and right (blue) wires, non-deterministically. $P_{ij}$ means postselection of label $i$ measured in the location $j$. The $z$ parameters are the rapidities. The construction is done in three steps. First we let three wires to interact with three intersections. Then we measure the middle wire and postselect on measuring label $|2\rangle$ in there. Then in step $2$ we let the two left wires to interact and postselect on the left most wire being label $2$. Then in step $3)$, we let the right most wires interact, and finally in step $4)$ the two left wires have an interaction and we postselect on measuring the label $|2\rangle$ on the middle wire. The aim of steps 2-4 is to reconfigure the momenta back in their original configurations.
}
\label{fig11}
\end{figure}
%%%%%%%%%%%%%%%%%%%%%%%%%%%%%%%%%%%%%%%%%%%%%%%%%%%%%%%%

\subsection{Three-particle Gadgets with Non-demolition Measurements}

In this part we give a three particle gadget with intermediate non-demolition measurements to simulate an arbitrary $X$ rotations, nondeterministically. This model establishes the grounds to understand the complexity of the general Yang-Baxter circuits when we allow intermediate non-adaptive measurements in the particle label basis.  Like before, let the rapidities $z_1= v_1- v_2$ and $z_2 = v_2-v_3$, and consider a Yang-Baxter circuit with  permutation signature $(1 3)$, the permutation which maps $(1, 2,  3)\rightarrow (3,2,1)$:

$$
C(v_1,  v_2, v_3)=H(z_2, 1). H(z_1+z_2, 2).H(z_1, 1)=: C(z_1, z_2)
$$

The corresponding unitary operation amounts to:

\small

$$
\hspace{-5mm}
\dfrac{(1-z_1 z_2)+ i (z_1+z_2) (L_1+L_2) - (z_1+z_2)(z_1 L_2 L_1+ z_2 L_1 L_2)-i z_1 z_2 (z_1+z_2) L_1 L_2 L_1}{\sqrt{(1+z^2_1)(1+z^2_2)(1+(z_1+z_2)^2)}}
$$

\normalsize

There are $3\times 3=9$ choices for a postselection. Let $P_{ij}$ to be the corresponding postselection of label $i$ being located at the $j$'th location. Modulo a normalization factor, this is indeed the projection:

$$
P_{ij}=\sum_{\sigma \in S_3: \sigma(j)=i} |\sigma \rangle \langle \sigma|,
$$

\noindent followed by an appropriate normalization. The following is the list of the normalized output of these measurements if we postselect on the desired outcome. we will drop the overall phases throughout:

\begin{itemize}

\item $P_{11} C(z_1 , z_2)|123\rangle= \dfrac{(1-z_1 z_2) |123\rangle + i (z_1+z_2) |132\rangle}{\sqrt{(1-z_1 z_2)^2 + (z_1+z_2)^2}}$

\item $P_{12}C(z_1 , z_2)|123\rangle= \dfrac{|213\rangle + i z_2 |312\rangle}{\sqrt{1+z_2^2}}$

\item $P_{13} C (z_1 , z_2)|123\rangle= \dfrac{|231\rangle + i z_2 |321\rangle}{\sqrt{1+z_2^2}}$

\item $P_{21}C(z_1 , z_2)|123\rangle= \dfrac{|213\rangle + i z_1 |231\rangle}{\sqrt{1+z_1^2}}$

\item $P_{22}C(z_1 , z_2)|123\rangle= \dfrac{(1-z_1 z_2) |123\rangle - i z_1 z_2 (z_1+z_2) |321\rangle}{\sqrt{(1-z_1 z_2)^2 +( z_1 z_2 (z_1+z_2))^2 }}$

\item $P_{23}C(z_1 , z_2)|123\rangle=\dfrac{|132\rangle + i z_2 |312\rangle}{{\sqrt{1+z_2^2}}}$

\item $P_{31}C(z_1 , z_2)|123\rangle= \dfrac{|312\rangle + i z_1 |321\rangle}{\sqrt{1+z_1^2}}$

\item $P_{32}C(z_1 , z_2)|123\rangle= \dfrac{|132\rangle + i z_1 |231\rangle}{\sqrt{1+z_1^2}}$

\item $P_{33}C(z_1 , z_2)|123\rangle=\dfrac{ (1-z_1 z_2) |123\rangle + i (z_1+z_2) |213\rangle}{(1-z_1 z_2)^2+(z_1+z_2)^2}$

\end{itemize}

We use a three particle gadget to simulate general rotations on two labels. We claim that the three particle gadget of Figure \ref{fig11} does this task. The gadget consists of two circuits, the first of which is a single $P_{22} C(z_1, z_2)$ iteration and induces a rotation on the first and the third labels, and the second circuit makes sure that the velocities are arranged back to their primary locations. Let $x$ and $y$ be the labels of the first and the third locations, respectively. We add a third ancilla color, with a new label $2$, and through each step of the protocol we will make sure that the ancilla label does not superimpose with the other labels.

We go through the steps of the protocol one by one. For simplicity, we drop the normalization factor for the intermediate steps and the states are normalized in the end. Let $z_1=v_1-v_2$ and $z_2= v_2-v_3$ and $z_3= v_1- v_3$ also let the label of the middle particle be $2$:

\begin{itemize}
\item Step 1: $|x2y\rangle \rightarrow (1-z_1 . z_2) |x2y\rangle - i z_1 . z_2 z_3|y2x\rangle$. And the configuration of velocities in the end is: $v_3, v_2, v_1$.

\item Step 2: let the first and second label locations interact, and after that, postselect on the first particle to have the label $2$. Then:

$$
|x2y\rangle \rightarrow (1-z_1 . z_2) |2 x y\rangle - i z_1 . z_2 z_3|2 y x\rangle,
$$

\noindent and the configuration of velocities is $v_2, v_3, v_1$.

\item Step 3: let the second and the third label locations interact, and that is going to be with rapidity $-z_1-z_2$. Then:

$$
|x2y\rangle \rightarrow (1-z_1 . z_2 -z_1 z_2 z_3^2) |2 x y\rangle - i  z_3 |2 y x\rangle,
$$

\noindent and the configuration of the velocities is $v_2, v_1, v_3$.

\item Step 4: finally, let the first and the second label locations interact, and after that postselect on label $2$ to be at the second location, which maps:

$$
|123\rangle \rightarrow (1-z_1 . z_2 -z_1 z_2 (z_1+z_2)^2) |x2y\rangle - i  (z_1+z_2) |y 2 x\rangle,
$$

\noindent and the configuration of velocities is now back to $v_1, v_2, v_3$.

\end{itemize}

The overall action of the protocol is:

$$
|x2y\rangle \rightarrow \cos (\phi_{z_1,z_2}) |x2y\rangle + i \sin (\phi_{z_1, z_2})|y 2 x\rangle
$$

With:

$$
\phi_{z_1,z_2}=\tan^{-1}\left(\dfrac{-(z_1+z_2)}{1-z_1 . z_2 -z_1 . z_2 (z_1+z_2)^2}\right)
$$

\noindent Now it is easy to check that the output state after $t$ iterations of this gadgets is going to be:

$$
|x2y\rangle \rightarrow \cos (t . \phi_{z_1,z_2}) |x2y\rangle + i \sin (t . \phi_{z_1, z_2})|y 2 x\rangle.
$$

\noindent The rapidity $v_2$ is a free parameter and can be set in a way that the angle $\phi_{z_1,z_2}$ is an irrational multiple of $2 \pi$, so using $O(1/\epsilon)$ iterations, one can simulate any of the $X$ rotations with accuracy $\epsilon$.

Given these three particle gadgets, we wish to prove that any model with $X$ operators can be efficiently approximated by a postselected instance of planar Yang-Baxter $H$ operators. We therefore use the discussed three particle gadgets to create arbitrary rotations on two labels, whenever we need them.

% Any circuit $X$ operators can be represented by a normal form like Figure 5. 2. Where, each of the boxes is a unitary two particle gate, with some specific arbitrary amplitudes $c, is$. Consider any such circuit $C$ of $n$ particles and $m$ gates, then we use a modified circuit of Figure 5. 3. In Figure 3, each gate of $C$ is replaced by a three particle gadget. The ancilla labels are represented with dashed lines; each of them has a new label and an arbitrary velocity. The circles above the gates represent postselection. The velocities ($v_a$), for each ancilla label $a$, should be selected in such a way that for all velocities $v, v'$ of the original circuit, the rotation angles ($\phi{v-v_a, v_a - v'}$) are irrational multiples of $2 \pi$. In each gadget of the modified circuit, one should repeat the base gadget of Figure 5. 3. a sufficient number of times so that the parameters $c, is$ for each gate of $C$ is approximated with the desired accuracy. Given this background, we summarize this result in the following theorem:

\begin{theorem}
Any language in $\XQBALL$ can be efficiently decided by YB quantum circuits with intermediate non-demolition post-selections.
\end{theorem}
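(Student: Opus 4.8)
The plan is to build a reduction showing that an $\XQBALL$ circuit $C$ on $n$ labels can be simulated, up to inverse-polynomial accuracy, by a planar Yang--Baxter quantum circuit equipped with non-adaptive non-demolition measurements in the particle-label basis, together with post-selection on the measurement outcomes. Since $\XQBALL$ was defined with sampling semantics (the circuit outputs a subset $P\subseteq S_n$ and accepts according to whether the sampled permutation lies in $P$), it suffices to produce a YB-with-measurement circuit whose output distribution on the relevant labels matches the output distribution of $C$, and then read off membership in $P$ from that sample. First I would recall that $C$ is, by definition, a composition $X(\theta_m,k_m)\cdots X(\theta_1,k_1)$ of adjacent ball-permuting gates $X(\theta,k)=\cos\theta\, I + i\sin\theta\, L_{(k,k+1)}$ acting on $\C S_n$ (non-adjacent swaps are first rewritten as $O(n)$ adjacent ones, as in the proof of Theorem~\ref{mainDQC1}), so it is enough to implement one arbitrary adjacent $X$-rotation to any accuracy $\epsilon$ using the allowed resources.

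The core of the argument is the three-particle gadget of Figure~\ref{fig11}. I would add, for each adjacent pair of label-locations $(k,k+1)$ on which an $X(\theta,k)$ must act, a fresh ancilla particle carrying a distinguished label (call it ``$2$'' locally, relabeled so that it is distinct from the $n$ working labels) inserted between locations $k$ and $k+1$; the working labels sitting at $k$ and $k+1$ play the roles of $x$ and $y$. Running the four-step protocol of Figure~\ref{fig11} --- the circuit $C(z_1,z_2)=H(z_2,1)H(z_1+z_2,2)H(z_1,1)$ followed by the momentum-restoring interactions, each with a non-demolition post-selection $P_{22}$ or $P_{2,\cdot}$ on the ancilla being detected at the correct location --- implements, conditioned on all post-selections succeeding, the map $|x\,2\,y\rangle \mapsto \cos(\phi_{z_1,z_2})|x\,2\,y\rangle + i\sin(\phi_{z_1,z_2})|y\,2\,x\rangle$ with $\phi_{z_1,z_2}=\tan^{-1}\!\big(\frac{-(z_1+z_2)}{1-z_1z_2-z_1z_2(z_1+z_2)^2}\big)$, and $t$ iterations give $\cos(t\phi)$, $\sin(t\phi)$. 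Choosing the free velocity $v_2$ so that $\phi_{z_1,z_2}$ is an irrational multiple of $2\pi$, $O(1/\epsilon)$ iterations approximate the required rotation angle $\theta$ within $\epsilon$; taking $\epsilon$ inverse-polynomially small and summing the $O(m)$ gate errors keeps the total simulation error inverse-polynomial, which is enough to preserve the $1/2\pm 1/\poly$ acceptance gap of $\XQBALL$. The ancilla for each gate, being measured and post-selected back to its original label, never superposes with the working labels (each intermediate step of the protocol was checked to keep the ancilla separable), so the gadgets compose cleanly along the one-dimensional (planar) geometry: one simply stacks the gadgets in the time order dictated by $C$, routing particles as needed. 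Finally, since all post-selections are on a fixed desired outcome and none depends on an earlier measurement result, the measurements are non-adaptive, as required; the working-label particles are measured at the very end and the output permutation is converted to an accept/reject bit by checking membership in $P$.

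The main obstacle I anticipate is bookkeeping the \emph{planarity and velocity constraints} across many stacked gadgets: each three-particle gadget consumes particular incoming velocities and emits particular outgoing ones, and because trajectories are straight lines in the $x$--$t$ plane one must verify that the velocity assignments demanded by successive gadgets are mutually consistent and that no unintended crossing (hence no unintended interaction) occurs between a gadget's ancilla and a distant working particle. The momentum-restoring steps $2$--$4$ of the Figure~\ref{fig11} protocol are designed precisely so that each gadget returns the three participating particles to their original velocity configuration, which is what makes sequential composition possible; I would need to argue carefully that with this restoration property one can always choose a global velocity schedule (and, if necessary, the two-particle navigation gadget of Figure~\ref{fig3}, or the stationary scheme of Figure~\ref{fig8}, to redirect particles) realizing the abstract circuit $C$ as an actual planar collision diagram. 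A secondary, purely cosmetic point is label economy: the simulation uses one extra ancilla label per gadget, so the particle count and label count grow by a $\poly(n)$ factor, which is harmless for a $\BQP$-type statement. With these pieces in place the theorem follows: $\XQBALL$ is contained in the class of languages decidable by planar Yang--Baxter quantum circuits with intermediate non-demolition post-selections, which --- combined with the $\ZQBALL=\BQP$ lower bound and the arbitrary-initial-state construction of Section~\ref{bqpuniversality} --- feeds into the $\PH$-collapse consequence established in Section~\ref{phcol}.
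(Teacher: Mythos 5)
Your proposal takes essentially the same route as the paper's proof: reduce to adjacent $X$-gates, implement each one via the three-particle gadget of Figure~\ref{fig11} (with the iterated $\phi_{z_1,z_2}$-rotation scheme supplying arbitrary angles), compose the gadgets in the time order of $C$, and post-select on the ancilla measurement outcomes. The one thing you flag as ``the main obstacle I anticipate'' --- that in a planar collision diagram the ancilla for one gadget may cross unrelated working particles on its way in or out, producing unintended interactions --- is exactly the issue the paper addresses, but the paper's resolution is simpler than the velocity-scheduling and navigation-gadget machinery you suggest: one does not try to \emph{avoid} the stray intersections, one simply post-selects on the ancilla \emph{passing straight through} each unintended crossing with its label unchanged, which nullifies the extra collision. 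With that observation folded in, your argument matches the paper's.
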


\begin{proof}
We establish the proof for the case where the swaps are adjacent ones, the proof for nonadjacent swaps is immediate since adjacent swaps can simulate non-adjacent ones. Let $C$ be any $X$ ball permuting quantum circuit. For each gate in $C$ add a three particle gadget like in Figure \ref{fig11}. We just need to choose the velocity of the ancilla particle in a way that the post-selected gate acts like the desired $X$ operator. The only issue is that since a YB quantum circuit is planar, the ancilla particle might intersect with other particles before and after arriving to its desired gadget. To take care of this issue, we just need to post select on the ancilla particle passing through its intermediate intersections. 
\end{proof}

\section{Evidence for the Hardness of Classical Simulation of the Scattering Quantum Computer}
\label{phcol}

In this section, we combine some of the results from last sections with known facts in complexity theory to give substantial evidence that it is hard to sample from the output probability distribution of the ball scattering model when we allow intermediate particle detections and arbitrary initial states. The approach is to demonstrate that the existence of a feasible sampling scheme results in falsification of statements that are believed to be true. These are statements that have not been proved, but yet no counter examples are known to them. An example is the well-known $\P$ versus $\NP$ question. Most of the researchers in computer science believe that these two objects are not equal. However, a proof of equivalence or a separation does not exist yet, and it might be the case that this problem is an undecidable problem itself. Another example of this kind is the problem of deciding if the polynomial hierarchy is finite or infinite. Indeed, in this section we show that efficient sampling from the output distribution of the ball scattering problem directly implies finiteness of the polynomial hierarchy ($\PH$). As described, the polynomial hierarchy is an extension of nondeterministic polynomial time, $\NP$, to a tower of complexity classes with the form $\NP^{\NP^{\cdots^{\NP}}}$; more precisely $\PH$ is the union of $\Sigma^j_{\P}$ for $j\geq 1$, where $\Sigma^1_{\P}={\P}$, $\Sigma^2_{\P}={\NP}$, and $\Sigma^{j+1}_{\P}= {\Sigma^j_{\P}}^{\NP}$, for $j \geq 2$. Remember that $A^B$ is the machine of class $A$ with oracle access to $B$. It is widely believed that the polynomial hierarchy is infinite, and recently, it has been proved that relative to a random oracle $\PH$ is infinite.

The objective of this section is to demonstrate that it is hard to sample from the output distribution of the ball scattering model, unless the polynomial hierarchy collapses to its third level. Similar proof techniques already exist, for example see \cite{morimae2014hardness, aaronson2011computational}. In section \ref{post} we showed that one can use intermediate demolition measurements on the ball scattering problem of section \ref{models} to come up with a sequence of nondeterministic gates that are able simulate quantum circuits of $\XQBALL$. The gates are nondeterministic, in the sense that they will succeed in their simulation, only if certain measurement outcomes are obtained, and this can happen with exponentially small probability. Also, the proof still holds on arbitrary initial states. After that, in section 5.2, we proved that on arbitrary initial states, the model $\XQBALL$ is equal to the standard $\BQP$. Moreover, we discussed that in order to simulate a standard quantum circuit model $C$ of $\BQP$ in $\XQBALL$, the form of the desired initial state depends on the number $n$ of qubits in $C$ only. We denote this initial state by $|\psi^\star_n\rangle$, or just simply by $|\psi^\star\rangle$. Putting these results together, we observe that the model of ball scattering with intermediate demolition postselection is equal to $\BQP$, if we allow arbitrary initial states. For the sake of clarity, we define $\Post\XQBALL$ with the following definition to capture the discussed ingredients in the model of ball scattering:

\begin{definition}
Let $\Post \HQBALL$ be the class of decision problems that are efficiently solvable using the following resources: 

\begin{itemize}
\item Yang-Baxter ball collision circuits, similar to section 4.2,

\item initial states of the form $|\psi^\star\rangle \tensor |c_1, c_2, \ldots, c_m\rangle$, 

\item postselection on demolition intermediate measurement outcomes. 
\end{itemize}

\noindent Here, $|\psi^\star\rangle$ is the discussed special inital, and $c_1, c_2, \ldots, c_m$ are distinct colors that are also distinct from color species of $|\psi^\star\rangle$. More precisely, this is the class of languages $L \subset \{0,1\}^\star$, for which there is a polynomial time Turing machine $M$ that on any input $x\in \{0,1\}^\star$ outputs the description of a ball scattering setup like in section 4.2, and a polynomial size set of ball colors $\tilde{c}$, along with a special output ball register $c'_0$, with the following properties:

\begin{itemize} 
\item[] 1)  for all $x \in \{0,1\}^\star$, $\operatorname*{Pr}[c'_1 = c_1, c'_2 = c_2, \ldots, c'_m = c_m]>0$, where $c'_l$ is the outcome of the $l$'th intermediate measurement. Denote this event by $C$.

\item[] 2) if $x\in L$, $\operatorname*{Pr} [c'_0 \in \tilde {c} | C]\geq 2/3$. Here $c'_j \in \tilde {c}$ is the event where the color of the ball measured in location $j$, in the end of scattering, is among the colors of the set $\tilde{c}$.

\item[] 3) if $x\notin L$, $\operatorname*{Pr} [c'_0 \in \tilde {c} | C]\leq 1/3$.

\end{itemize}
\label{postH}
\end{definition}

Condition $1$ states that the probability of the event that the classical outcomes of the intermediate measurements match the guessed outcomes $c_1, c_2, c_3, \ldots, c_m$ is nonzero. Notice that as discussed in the section \ref{post}, this probability can be exponentially small, but since we are dealing with conditional probabilities, a nonzero probability is sufficient. Conditions $2$ and $3$ state that the probability of error is bounded. Here $\tilde{c}$ is a set of colors among the colors of $|\psi^\star\rangle$, which correspond to the accepting ball colors. The ball register $c'_0$, is a special location of a ball in the output of ball scattering; $c'_0$ can be viewed as an answer register.

A major observation is the following theorem, stating that although the ball scattering model might be strictly weaker than $\BQP$, the postselected version is equal to $\Post\BQP$, and to $\PP$ because of Aaronson's  $\Post\BQP=\PP$.

\begin{theorem}
$\Post \HQBALL=\Post \BQP= \PP$.
\label{thmxqbll}
\end{theorem}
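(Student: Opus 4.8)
The plan is to prove the chain $\Post\HQBALL \subseteq \Post\BQP = \PP \subseteq \Post\HQBALL$, with Aaronson's theorem $\Post\BQP=\PP$ (quoted in the excerpt) supplying the middle equality. The containment $\Post\HQBALL \subseteq \Post\BQP$ is the easy direction: a $\Post\HQBALL$ computation consists of Yang-Baxter ball-collision circuits, an initial state of the form $|\psi^\star\rangle \tensor |c_1,\ldots,c_m\rangle$, and postselection on demolition intermediate measurement outcomes. By the proof of Theorem~\ref{mainDQC1}'s style of encoding — more directly, by the $\ZQBALL \subseteq \BQP$ argument — every ball-permuting (hence every $H$) gate on $\C S_n$ can be simulated by a $\Poly(n)$-size qubit circuit using a compressed binary encoding of permutations, and demolition measurements in the particle-label basis become ordinary measurements of the corresponding qubit registers. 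The special initial state $|\psi^\star\rangle$ is produced by a $\Poly(n)$-time Turing preprocessor, and intermediate measurement postselection in the scattering model is exactly postselection in the $\BQP$ sense. Conditions $1)$–$3)$ of Definition~\ref{postH} map directly onto the defining conditions of $\Post\BQP$, so $\Post\HQBALL \subseteq \Post\BQP$.

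The substantive direction is $\PP = \Post\BQP \subseteq \Post\HQBALL$. Here I would assemble the pieces already built in the chapter. First, by the reduction from the exchange-interaction model (Theorem~\ref{hrd} and the surrounding discussion in Section~\ref{bqpuniversality}), together with the $\BQP$-universality of exchange interactions, the model $\XQBALL$ on arbitrary initial states simulates $\BQP$: any $n$-qubit $\BQP$ circuit $C$ is simulated by an $X$ ball-permuting circuit acting on a fixed initial state $|\psi^\star_n\rangle \in \C S_N$ whose form depends only on $n$, with the output sampled by the label-relabelling trick so that the acceptance probability is preserved exactly. Second, by the three-particle non-demolition gadget of Figure~\ref{fig11} (or the demolition gadgets of Figures~\ref{fig2},~\ref{fig8}), any $X$ gate can be simulated nondeterministically by a planar Yang-Baxter collision circuit conditioned on prescribed intermediate measurement outcomes; chaining these gadgets, any $\XQBALL$ circuit is simulated by a $\Post\HQBALL$-type computation, where the event $C$ of Definition~\ref{postH} is exactly ``all ancilla detectors report their prescribed labels.'' Composing, $\BQP$ (on the special initial state) is simulated inside the ball-scattering-with-intermediate-postselection model. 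To get the postselected class, I would add one more layer of postselection — postselecting on the answer register $c_0'$ — exactly as in the standard amplification argument that turns $\BQP$-with-postselection into $\Post\BQP$: since $\Post\HQBALL$ already allows postselection on intermediate outcomes, a single extra postselected register promotes the $\BQP$-simulation to a $\Post\BQP$-simulation. Hence $\Post\BQP \subseteq \Post\HQBALL$.

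The main obstacle, and the step deserving the most care, is checking that the two sources of postselection — the ``gadget-success'' postselections that implement the nondeterministic $X$ gates, and the ``answer'' postselection that realizes $\Post\BQP$ — can be combined into a single conditional event without the probabilities becoming identically zero or the conditional acceptance probability drifting outside the $[1/3,2/3]$ window. Concretely, one must verify that conditioned on the joint event $C$ (all gadgets succeed), the residual state is \emph{exactly} $C|\psi^\star\rangle$ up to normalization (this is where planarity and the fact that ancilla particles that never bounce off are detected with their original labels are used — the gadget analysis in Section~\ref{post} gives this), so that the subsequent answer-postselection behaves identically to postselection on a genuine $\Post\BQP$ computation. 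A secondary point is that $|\psi^\star\rangle$ must be separable from the ancilla colors and preparable by the $\Poly$-time preprocessor; this follows from the construction of $|\psi^\star\rangle$ as a symmetrized permutation state in Section~\ref{bqpuniversality}. Once exactness of the conditioned state is established, the equality $\Post\HQBALL = \Post\BQP$ follows, and $\Post\BQP = \PP$ is Aaronson's theorem, completing the proof. I would also remark that the analogous statement holds for the demolition-measurement version, since the gadgets of Figures~\ref{fig2} and~\ref{fig8} give the same exact conditioned action.
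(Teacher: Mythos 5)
Your proof takes the same route as the paper's: the easy containment $\Post\HQBALL \subseteq \Post\BQP$ by encoding permutations into qubits and postponing/postselecting the intermediate measurements, the harder containment $\Post\BQP \subseteq \Post\HQBALL$ by combining the exchange-interaction $\BQP$-universality of Section~\ref{bqpuniversality} with the nondeterministic scattering gadgets of Section~\ref{post}, and Aaronson's $\Post\BQP = \PP$ supplying the remaining equality. One small note: Theorem~\ref{thmxqbll} is the demolition-measurement case, so the primary gadgets are those of Figures~\ref{fig2} and~\ref{fig8}, while the three-particle non-demolition gadget (Figure~\ref{fig11}) belongs to the $\Post\HQBALL^\star$ variant proved immediately afterward --- you have the precedence reversed, but as you yourself observe both gadget families give the same conditioned action, so the substance and correctness of the argument are unaffected, and your explicit discussion of why the gadget-success and answer postselections compose cleanly is a worthwhile elaboration that the paper leaves implicit.
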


\begin{proof}
$\Post \BQP = \PP$ is given by the result of Aaronson. Also, $\Post \HQBALL\subseteq \Post \BQP$. In order to see this, observe that the $\BQP$ machine first prepares the initial state $|\psi^\star\rangle \tensor |c_1, c_2, \ldots, c_m\rangle$, encoded with binary strings. Then, whenever an intermediate measurement is done, it just leaves the state along and postpones the measurement to the end of computation. This might give rise to a non-planar quantum circuit, but it is fine, since we are working with $\BQP$. The $\BQP$ measurements are done in a proper basis that encodes the ball color basis. For example, if we encode ball colors with binary representations, then it is sufficient to measure in binary basis and confirm if the digital representation of the ball color (number) is correct. Notice that in the simulation, we are not going to use the balls that have already been measured intermediately again. All the swap gates are applied accordingly. Then in the end we postselect on the desired demolition measurements and in the end we measure the encoded location of the $j$'th ball and confirm if it is among $\tilde {c}$. We can also use CNOT gates to shrink the number of postselections down to one.

In order to see the more interesting direction $\Post \HQBALL\supseteq \Post \BQP$, just we follow the postselected universality of the combined result of sections \ref{post} and \ref{bqpuniversality}, to simulate any computation in $\BQP$. Then notice that any $\Post \BQP$ computation can be deformed in a way that one only needs to postselect on one qubit, and also measure one qubit in the end. $\Post \HQBALL$  uses this deformed $\Post \BQP$ protocol, instead, and uses one of its demolition measurements in the end of computation to simulate postselection of the actual $\Post \BQP$ circuit.
\end{proof}

In section \ref{qcomplexity}, we defined $\HQBALL$ as a variation of the ball permuting model with Yang-Baxter circuits. Later in section \ref{post}, we demonstrated how to use intermediate non-demolition measurements to come up with nondeterministic three particle gadgets that simulate the two particle gates of $\XQBALL$. Thereby, we define the formal model with postselection:

\begin{definition}
Let $\Post \HQBALL^\star$ to be the class of decision problems that are efficiently solvable using initial states of the form $|\psi^\star\rangle$, and non-adaptive planar Yang-Baxter circuits with postselection on non-demolition intermediate measurements. The details of the definition is similar to definition ~\ref{postH}.
\end{definition}

This model does not immediately have a corresponding physical example, but it is interesting that the same result of theorem ~\ref{thmxqbll} is applicable to it:

\begin{theorem}
$\Post\HQBALL^\star=\Post\BQP=\PP$.
\end{theorem}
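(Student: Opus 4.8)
The plan is to mirror the proof of Theorem~\ref{thmxqbll} almost verbatim, replacing the demolition‑measurement gadgets by the non‑demolition three‑particle gadgets of Figure~\ref{fig11}. Recall that $\Post\HQBALL^\star$ is built from the initial state $|\psi^\star\rangle$, non‑adaptive planar Yang‑Baxter circuits of $H$ operators, and postselection on the outcomes of \emph{non‑demolition} intermediate measurements in the particle‑label basis. As before, the equality $\Post\BQP=\PP$ is Aaronson's theorem, so the whole task is to prove the two inclusions $\Post\HQBALL^\star\subseteq\Post\BQP$ and $\Post\BQP\subseteq\Post\HQBALL^\star$.

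For the inclusion $\Post\HQBALL^\star\subseteq\Post\BQP$, I would have the $\BQP$ machine simulate the whole scattering process by the reduction already established in Theorem~\ref{mainDQC1}'s setup and in the proof of $\ZQBALL\subseteq\BQP$: encode permutations of the $n$ labels into $n\lceil\log n\rceil$ qubits, simulate each $H(z,k)$ gate (which is a special case of an $X$ operator, hence of a $Z$ operator) by a polynomial‑size qubit circuit via Solovay--Kitaev, and simulate each intermediate non‑demolition measurement simply by an intermediate measurement of the $\BQP$ circuit in the encoded particle‑label basis — since $\BQP$ is closed under intermediate measurements and since planarity is irrelevant once we are in the qubit model, this is unproblematic. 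The postselections on intermediate outcomes become $\Post\BQP$ postselections; using CNOTs we collapse all of them to a single postselected qubit and a single answer qubit, matching the definition of $\Post\BQP$. This gives $\Post\HQBALL^\star\subseteq\Post\BQP=\PP$.

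For the harder and more interesting direction $\Post\BQP\subseteq\Post\HQBALL^\star$, I would chain three facts already proved in the excerpt. First, by the theorem of Section~\ref{bqpuniversality} (reduction from exchange interactions), with the special initial state $|\psi^\star\rangle$ the model $\XQBALL$ is $\BQP$‑universal, and this is true at the level of exact sampling. Second, by the theorem at the end of Section~\ref{post}, any $\XQBALL$ circuit can be efficiently simulated by a planar Yang‑Baxter circuit equipped with intermediate \emph{non‑demolition} postselections: each $X$ gate is replaced by the three‑particle gadget of Figure~\ref{fig11}, choosing the ancilla velocity $v_2$ so that $\phi_{z_1,z_2}$ is an irrational multiple of $2\pi$ and iterating $O(1/\epsilon)$ times to approximate the desired rotation, while extra postselections route each ancilla through its forced intermediate intersections. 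Third, as in the proof of Theorem~\ref{thmxqbll}, any $\Post\BQP$ computation can be deformed so that it postselects on a single qubit and measures a single qubit; the $\Post\HQBALL^\star$ machine uses this deformed protocol and simulates the one $\Post\BQP$ postselection by one of its own non‑demolition intermediate postselections. Composing the three reductions, any language in $\Post\BQP$ lies in $\Post\HQBALL^\star$.

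The main obstacle is bookkeeping rather than a genuinely new idea: one must check that the \emph{composition} of the three‑particle non‑demolition gadgets really yields a legitimate planar Yang‑Baxter circuit with only non‑adaptive postselections, i.e.\ that the ancilla particle introduced by each gadget can be threaded through the circuit so that every spurious crossing it makes with other world‑lines is handled by a postselection fixed in advance, and that the accumulation of exponentially small success probabilities is harmless because $\Post\HQBALL^\star$ conditions on these events (only a nonzero probability is required, exactly as in condition~1 of Definition~\ref{postH}). One must also verify that the ancilla‑label register stays separable from the computational labels at every step — this is exactly the role of the momentum‑restoring Steps~2--4 of the gadget in Figure~\ref{fig11}. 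Once these planarity‑and‑separability checks are in place, the statement follows, and I would remark (as the excerpt does for Theorem~\ref{thmxqbll}) that the argument is insensitive to whether one starts from $|\psi^\star\rangle$ alone or from $|\psi^\star\rangle$ together with additional distinct ancilla colors.
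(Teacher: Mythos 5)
Your proof is correct and follows essentially the same route as the paper. The paper's own proof is a single sentence pointing back to Theorem~\ref{thmxqbll} and noting only that in the direction $\Post\HQBALL^\star\subseteq\Post\BQP$ the $\BQP$ simulation must use CNOT gates to defer the intermediate (non-demolition) measurements to the end, which is exactly what you do when you write ``using CNOTs we collapse all of them to a single postselected qubit''; your more detailed unpacking of the other direction via the three-particle gadget of Figure~\ref{fig11} matches the paper's intended reuse of Sections~\ref{post} and~\ref{bqpuniversality}.
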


\begin{proof}
The proof of both directions is similar to the proof theorem ~\ref{thmxqbll}, except that now in the direction $\Post\BQP \supseteq \Post\HQBALL^\star$, the $\BQP$ simulation uses $CNOT$ gates to postpone all intermediate measurements to the end.
\end{proof}

 The equivalence of these quantum models and $\PP$ is already and interesting connection, however, we are two steps away from the major results, that is the connection to the collapse of polynomial hierarchy. In order to achieve this goal, we summarize amazing facts from complexity theory in the following theorem:

\begin{theorem} The following relationships are true for the complexity classes $\PP, \Post\BPP, \NP, \PH,$ $ \Post\BQP$, and $\Sigma^3_\P$:

\begin{itemize}
\item $\Post\BPP \subseteq \BPP^\NP \subseteq \Sigma^3_\P\subseteq \PH$ \cite{arora2009computational}.
\item $\P^{\# \P}=\P^{\PP}$ \cite{arora2009computational}
\item (Toda\cite{toda1991pp}) $\PH \subseteq \P^{\#\P}$ 
\item $\Post \BPP \subseteq \Post \BQP$
\item (Aaronson) $\Post\BQP =\PP$\cite{aaronson2005quantum}
\end{itemize}
\label{amazingtools}
\end{theorem}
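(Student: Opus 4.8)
The plan is to assemble the five stated relationships essentially as citations plus short arguments, since none of them is original to this thesis — the work here is to collect them coherently and verify the easy inclusions that glue them together. First I would dispatch the items that are pure citations: $\P^{\#\P}=\P^{\PP}$ is the standard fact (already stated earlier in the excerpt as a theorem), Toda's theorem $\PH\subseteq\P^{\#\P}$ is cited, and Aaronson's $\Post\BQP=\PP$ is cited (and also stated earlier). So for those three there is nothing to prove beyond pointing to the references; I would simply restate them.

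Next I would handle $\Post\BPP\subseteq\BPP^{\NP}\subseteq\Sigma^3_\P\subseteq\PH$. The containment $\BPP^{\NP}\subseteq\Sigma^3_\P$ follows from Sipser–Gács–Lautemann: $\BPP\subseteq\Sigma^2_\P\cap\Pi^2_\P$ relativizes, so $\BPP^{\NP}\subseteq\Sigma_2^{\P,\NP}\subseteq\Sigma^3_\P$; and $\Sigma^3_\P\subseteq\PH$ is immediate from the definition of $\PH$ as the union of the levels. For $\Post\BPP\subseteq\BPP^{\NP}$ I would recall that $\Post\BPP=\BPP_{\mathrm{path}}$, and use the known result that $\BPP_{\mathrm{path}}$ lies in $\BPP^{\NP}$ (one estimates the acceptance probability conditioned on the postselection event by approximate counting, which an $\NP$ oracle inside a $\BPP$ machine can do); again this is a citation to \cite{arora2009computational} rather than a fresh argument.

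The last item, $\Post\BPP\subseteq\Post\BQP$, is the only genuinely trivial inclusion to argue from scratch: a $\Post\BPP$ algorithm is in particular a $\Post\BQP$ algorithm, since a classical randomized circuit with postselection is a special case of a quantum circuit with postselection — one simulates coin flips by Hadamards on fresh qubits, keeps all computation in the computational basis, and uses the same postselection bit $y_1$ and answer bit $y_2$. I would state this in one sentence.

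The main ``obstacle'' here is not mathematical difficulty but exposition: the theorem is a clearinghouse of imported facts, so the real task is to present the chain transparently enough that the intended punchline is visible, namely that if $\Post\BQP=\PP$ can be reached by a classically easy-to-simulate device then, via $\P^{\#\P}=\P^{\PP}\supseteq\PH$ and the approximate-counting collapse $\P^{\#\P}\subseteq\BPP^{\NP}\subseteq\Sigma^3_\P$, one gets $\PH=\Sigma^3_\P$. I would therefore close the proof by noting explicitly that chaining Toda's theorem with $\P^{\#\P}=\P^{\PP}=\P^{\Post\BQP}$ and the $\BPP^{\NP}$ bound yields $\PH\subseteq\Sigma^3_\P$ whenever $\Post\BQP$ (equivalently $\PP$) collapses into $\BPP^{\NP}$, which is the form in which the theorem will be used in the hardness-of-sampling argument of this section.
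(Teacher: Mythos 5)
Your proposal is correct and matches the paper's approach: the theorem is stated as a collection of imported facts with citations to \cite{arora2009computational}, \cite{toda1991pp}, and \cite{aaronson2005quantum}, and the paper itself supplies no proof beyond these citations. Your brief justifications for the gluing inclusions ($\BPP^{\NP}\subseteq\Sigma^3_\P$ via relativized Sipser--G\'acs--Lautemann, $\Post\BPP=\BPP_{\mathrm{path}}\subseteq\BPP^{\NP}$ via approximate counting, and the trivial $\Post\BPP\subseteq\Post\BQP$) are accurate and go slightly beyond what the paper records, but they do not constitute a different route.
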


\noindent A direct corollary to these containment relations is the following:
\begin{corollary}
$\Post \HQBALL\subseteq \Post \BPP$ implies the collapse of $\PH$ to the third level $\Sigma^3_{\P}$.
\end{corollary}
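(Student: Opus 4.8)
The plan is to chain together the equivalences and containments that have already been assembled in the excerpt. The starting point is Theorem~\ref{thmxqbll}, which establishes $\Post\HQBALL = \Post\BQP = \PP$. So the hypothesis $\Post\HQBALL \subseteq \Post\BPP$ is equivalent to $\PP \subseteq \Post\BPP$. From Theorem~\ref{amazingtools} we have $\Post\BPP \subseteq \BPP^{\NP} \subseteq \Sigma^3_\P$, so the hypothesis yields $\PP \subseteq \Sigma^3_\P$.

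The next step is to push this up to a statement about the whole polynomial hierarchy. Since $\Sigma^3_\P$ is closed under the relevant polynomial-time operations (in particular $\P^{\Sigma^3_\P} \subseteq \Sigma^4_\P$, but more to the point we want to relativize the counting containment), I would invoke the standard fact that $\PP$ being low-ish propagates: from $\PP \subseteq \Sigma^3_\P$ one gets $\P^{\PP} \subseteq \P^{\Sigma^3_\P}$. Combined with $\P^{\#\P} = \P^{\PP}$ and Toda's theorem $\PH \subseteq \P^{\#\P}$, this gives $\PH \subseteq \P^{\Sigma^3_\P}$. The cleaner route, and the one I would actually write, is to note that $\PP \subseteq \Sigma^3_\P$ together with Toda's theorem $\PH \subseteq \P^{\#\P} = \P^{\PP}$ forces every level of $\PH$ into a class defined by a constant number of quantifier alternations: since $\P^{\PP} \subseteq \P^{\Sigma^3_\P}$ and the latter collapses into $\PH$ itself in a bounded level, we conclude $\PH = \Sigma^3_\P$ (using $\Sigma^3_\P \subseteq \PH \subseteq \P^{\PP} \subseteq \P^{\Sigma^3_\P} \subseteq \Delta^4_\P \subseteq \PH$, and then the collapse criterion stated in the section on the polynomial hierarchy: if $\PH \subseteq \Sigma^k_\P$ then $\PH = \Sigma^k_\P$).

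Concretely, the chain I would present is:
\begin{eqnarray*}
\Post\HQBALL \subseteq \Post\BPP
&\Longrightarrow& \PP = \Post\HQBALL \subseteq \Post\BPP \subseteq \BPP^{\NP} \subseteq \Sigma^3_\P \\
&\Longrightarrow& \PH \subseteq \P^{\#\P} = \P^{\PP} \subseteq \P^{\Sigma^3_\P} \subseteq \PH \\
&\Longrightarrow& \PH = \Sigma^3_\P,
\end{eqnarray*}
where the first implication uses Theorem~\ref{thmxqbll} and Theorem~\ref{amazingtools}, the second uses Toda's theorem plus $\P^{\#\P}=\P^{\PP}$ and the fact that $\P$ with a $\Sigma^3_\P$ oracle still lies within $\PH$, and the last uses the collapse criterion for $\PH$. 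The only mild subtlety — and the step I would be most careful about — is justifying that $\P^{\Sigma^3_\P}$ does not escape $\PH$ and that a containment of $\PH$ inside a fixed level actually collapses it to that level; both are standard but worth stating explicitly, citing the hierarchy-collapse discussion earlier in the text. Everything else is a bookkeeping exercise of composing already-proven inclusions, so there is no genuine technical obstacle here; the corollary is essentially immediate from Theorem~\ref{thmxqbll} and Theorem~\ref{amazingtools}.
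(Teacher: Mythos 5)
Your proposal follows essentially the same route as the paper: identify $\Post\HQBALL = \Post\BQP = \PP$ via Theorem~\ref{thmxqbll}, apply Toda's theorem together with $\P^{\#\P} = \P^{\PP}$ to obtain $\PH \subseteq \P^{\Post\HQBALL}$, and then use the hypothesis plus $\Post\BPP \subseteq \BPP^{\NP} \subseteq \Sigma^3_\P$ to force $\PH$ into a bounded level. However, both your write-up and the paper's share the same small imprecision at the final step. The chain as displayed terminates with $\PH \subseteq \P^{\Post\BPP} \subseteq \P^{\Sigma^3_\P}$, and $\P^{\Sigma^3_\P}$ is a priori a fourth-level class; neither concluding $\PH \subseteq \Sigma^3_\P$ from it directly (as the paper does) nor invoking the criterion ``$\PH \subseteq \Sigma^k_\P \Rightarrow \PH = \Sigma^k_\P$'' with $k=3$ (as you propose) is licensed, since what you actually have is $\PH \subseteq \P^{\Sigma^3_\P}$, which on its face only collapses $\PH$ to the fourth level. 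The clean fix, which you gesture at by calling this a ``mild subtlety'' but do not actually supply, is to observe that $\BPP^{\NP}$ is closed under polynomial-time Turing reductions (since $\BPP^{\BPP^{\NP}} = \BPP^{\NP}$ and $\P \subseteq \BPP$), so $\P^{\Post\BPP} \subseteq \P^{\BPP^{\NP}} = \BPP^{\NP} \subseteq \Sigma^3_\P$; this yields $\PH \subseteq \Sigma^3_\P$ directly, after which the collapse criterion applies at $k=3$ as intended. Flagging the issue is the right instinct --- the paper passes over it silently --- but the repair should be stated rather than merely promised.
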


\begin{proof}
Putting the relations in theorem ~\ref{amazingtools}, along with the result of theorem ~\ref{thmxqbll} together we obtain:

$$
\Post\BPP \subseteq \BPP^\NP \subseteq \Sigma^3_\P\subseteq \PH\subseteq \P^{\# \P}=\P^{\PP}= \P^{\Post\BQP} = \P^{\Post \HQBALL}
$$

\noindent If $\Post\HQBALL \subseteq \Post\BPP$, then $\P^{\Post\HQBALL} \subseteq \P^{\Post\BPP}\subseteq \P^{\Sigma^3_{\P}}$, and thereby $\PH \subseteq \Sigma^3_{\P}$, which results in the collapse of $\PH$ to the third level.
\end{proof}

Given this corollary, the existence of a randomized classical procedure to \textit{exactly} sample from the output of the ball scattering problem, immediately results in the collapse of $\PH$. However, we can even proceed further to come up with a somehow stronger result. Here, we borrow definitions from the notions of randomized simulation from computational complexity theory:

\begin{definition}
Let $P$ be the output a computational (or physical) model, as a probability distribution on $n$ variables $x:=(x_1, x_2, \ldots, x_n)$, we say a randomized algorithm $R$ simulates $P$ within multiplicative constant error, if $R$ produces a probability distribution $\tilde{P}$ on the variables, with the property that there is a constant number $\alpha>1$, which for all $x$:

$$
\dfrac{1}{\alpha}P(x) < \tilde{P}(x)< \alpha P(x).
$$
\end{definition}

\noindent The following theorem states that the existence of a randomized simulation of ball scattering with multiplicative error immediately results in $\Post\HQBALL \subseteq \Post\BPP$:

\begin{theorem}
The existence of a $\BPP$ algorithm to create a probability distribution withing multiplicative error to the actual distribution on $c_1, c_2, \ldots, c_m$ and $c_0$ of definition ~\ref{postH} implies $\Post \HQBALL\subseteq \Post \BPP$.
\end{theorem}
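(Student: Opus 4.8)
The plan is to show that a multiplicative-error randomized simulator for the ball scattering distribution can be turned into a $\Post\BPP$ algorithm for any language in $\Post\HQBALL$, which by the preceding corollary forces $\PH$ to collapse to $\Sigma^3_\P$. The key observation is that multiplicative approximation behaves perfectly under conditioning on a single event of nonzero probability: if $\tilde P$ is within a factor $\alpha$ of $P$ pointwise, then the conditional distributions $\tilde P(\cdot \mid C)$ and $P(\cdot \mid C)$ are within a factor $\alpha^2$ of each other, since $\tilde P(x\mid C) = \tilde P(x)/\tilde P(C)$, the numerator changes by at most $\alpha$ and the denominator (a sum of nonnegative terms each within $\alpha$) also changes by at most $\alpha$. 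This is exactly why $\Post$ machines are the right setting: postselection converts an unnormalized multiplicative guarantee into a normalized one with only a mild constant blow-up.

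First I would set up the $\Post\BPP$ algorithm. Given an input $x$, run the polynomial-time Turing machine $M$ from Definition \ref{postH} to produce the description of the ball scattering setup, the guessed intermediate outcomes $c_1,\dots,c_m$, the accepting color set $\tilde c$, and the answer register location. Feed this description to the hypothesized $\BPP$ multiplicative simulator $R$, which outputs a sample from a distribution $\tilde P$ over the tuple $(c'_1,\dots,c'_m,c'_0)$ satisfying $\tfrac1\alpha P \le \tilde P \le \alpha P$ pointwise. The $\Post\BPP$ algorithm then sets its success bit $y_1=1$ iff $(c'_1,\dots,c'_m)=(c_1,\dots,c_m)$, i.e. the simulated intermediate outcomes match the guessed ones, and sets its answer bit $y_2=1$ iff $c'_0\in\tilde c$.

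Next I would verify the three $\Post\BPP$ conditions. Condition (1): $\Pr[y_1=1] = \tilde P(C) \ge \tfrac1\alpha P(C) > 0$ by Condition (1) of Definition \ref{postH} and positivity of $\alpha$. Conditions (2) and (3): conditioned on $y_1=1$, the answer bit is drawn from $\tilde P(\cdot\mid C)$, which by the conditioning-robustness observation above satisfies $\tfrac{1}{\alpha^2}P(c'_0\in\tilde c\mid C) \le \tilde P(c'_0\in\tilde c\mid C)\le \alpha^2 P(c'_0\in\tilde c\mid C)$. Since $\alpha$ is a fixed constant, if $x\in L$ then $P(c'_0\in\tilde c\mid C)\ge 2/3$ forces $\tilde P(\cdot)\ge \tfrac{2}{3\alpha^2}$, and if $x\notin L$ then $P(\cdot)\le 1/3$ forces $\tilde P(\cdot)\le \tfrac{\alpha^2}{3}$. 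These are separated constants only when $\alpha^2 < \sqrt{2}$; for general constant $\alpha$ one first amplifies the original $\Post\HQBALL$ computation (by standard majority-vote amplification inside the model, replacing $2/3,1/3$ by $1-2^{-\Poly(n)},2^{-\Poly(n)}$, which survives a constant-factor multiplicative distortion of the conditional probabilities), so that the simulated conditional gap remains bounded away from $1/2$. After this amplification the gap between the two cases is a genuine constant, so the $\Post\BPP$ definition is met, giving $L\in\Post\BPP$ and hence $\Post\HQBALL\subseteq\Post\BPP$.

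The main obstacle I anticipate is the amplification step: one must check that the internal $\Post\HQBALL$ amplification (running polynomially many independent copies of the scattering circuit and a majority vote, implemented with encoded CNOTs as in Section \ref{bqpuniversality}) keeps the conditioning event $C$ a single well-defined event with still-positive probability and that the multiplicative guarantee of $R$, applied to the whole amplified distribution, still yields conditional probabilities within a constant factor — the subtlety being that $C$ is now an exponentially small event built from many sub-events, yet multiplicative error is precisely the notion that is insensitive to the smallness of $\Pr[C]$. Handling this carefully, rather than the conditional-probability algebra itself, is where the real work lies.
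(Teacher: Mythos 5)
Your argument reproduces the paper's proof in its essentials: the key step in both is that a pointwise multiplicative $\alpha$-approximation of the joint output distribution on $(c'_1,\ldots,c'_m,c'_0)$ descends to a multiplicative $\alpha^2$-approximation of the conditional distribution given the postselection event $C$, and both then handle general constant $\alpha$ by amplification (the paper gestures at Bremner--Jozsa--Shepherd and a majority vote; you amplify inside the $\Post\HQBALL$ circuit before invoking the simulator). Two small remarks. First, the threshold at which the distorted accept/reject conditional probabilities straddle $\tfrac{1}{2}$ is $\alpha^2 < \tfrac{4}{3}$ (from $\tfrac{2}{3\alpha^2} > \tfrac{1}{2}$), not $\alpha^2 < \sqrt{2}$; the latter is merely when the two bounds $\tfrac{2}{3\alpha^2}$ and $\tfrac{\alpha^2}{3}$ separate from each other, which is weaker than what the $\Post\BPP$ definition requires. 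Second, once you have amplified the $\Post\HQBALL$ error to $2^{-\Poly(n)}$, pass through the complement rather than the direct bound: $\tilde P(c'_0\in\tilde c\mid C) \geq (1-2^{-\Poly(n)})/\alpha^2$ is useless when $\alpha^2 \geq 2$, whereas $\tilde P(c'_0\notin\tilde c\mid C) \leq \alpha^2\,2^{-\Poly(n)}$ gives $\tilde P(c'_0\in\tilde c\mid C) \geq 1-\alpha^2\,2^{-\Poly(n)} \geq \tfrac{2}{3}$ for any fixed $\alpha$ and large enough $n$, which is the bound your plan actually needs.
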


\begin{proof}
The proof is similar to the proof of theorem $2$ in \cite{bremner2010classical}. Suppose that there is a procedure which outputs the numbers $x_1, x_2, \ldots, x_m, y$ such that:

\begin{eqnarray*}
1/\alpha \operatorname*{Pr}[c'_0=c_0 ; c'_1=c_1, c'_2=c_2, \ldots, c'_m=c_m]&<&\\
\operatorname*{Pr}[y=c_0; x_1=c_1, x_2=c_2, \ldots, x_m=c_m]&<&\alpha \operatorname*{Pr}[c'_0=c_0 ; c'_1=c_1, c'_2=c_2, \ldots, c'_m=c_m],
\end{eqnarray*}

\noindent for every list of colors $c_1, c_2, \ldots, c_m$ and $c_0$. Notice that if this is true, it should also be true for all marginal probability distributions. Denote the vector $(c'_1, c'_2, \ldots, c'_m,  c'_0)$ by $(\tilde{c'}, c'_0)$, and $(x_1, x_2, \ldots, x_m,  y)$ by $(\tilde{x}, y)$. Then the conditional probabilities also satisfy:

$$
1/{\alpha^2}\operatorname*{Pr}[c'_0=c_0|\tilde{c'}=\tilde{c}] <\operatorname*{Pr}[y=c_0|\tilde{x}=\tilde{c}]<\alpha^2 \operatorname*{Pr}[c'_0=c_0|\tilde{c'}=\tilde{c}].
$$

\noindent Now let $L$ be any language in $\Post\HQBALL$. Then if $x\in L$, $\operatorname*{Pr}[d'=d| C]\geq 2/3$ and otherwise $\leq 1/3$. Suppose that $x\in L$, then in order for $\operatorname*{Pr}[y=d|\tilde{x}=\tilde{c}]>1/2$, it is required that:

$$
1/2 < 1/{\alpha^2} (2/3)
$$

\noindent which means if $1 <\alpha < \sqrt{4/3}-O(1)$, then the $\Post\BPP$ algorithm recognizes $x$ with bounded probability of error. The probability of success can be increased by using the majority of votes' technique.

\end{proof}

\noindent Putting it all together, we finally mention the result of this section in the following corollary:

\begin{corollary}
There exists no polynomial time randomized algorithm to simulate ball scattering models of section 4.4.2 and 4.4.3 within multiplicative constant error, unless $\PH$ collapses to its third level.
\end{corollary}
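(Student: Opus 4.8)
The strategy is a standard ``postselection collapse'' argument, exactly in the spirit of Bremner--Jozsa--Shepherd and Aaronson--Arkhipov, now applied to the ball scattering model. The chain of results I have already assembled does almost all the work: by Theorem~\ref{thmxqbll} (together with the variant for non-demolition measurements) we have $\Post\HQBALL = \Post\BQP = \PP$, and by the previous theorem a $\BPP$ algorithm that samples from the output distribution of the ball scattering setup of Definition~\ref{postH} within multiplicative constant error would imply $\Post\HQBALL \subseteq \Post\BPP$. The remaining step is to feed this into the collapse corollary: $\Post\HQBALL \subseteq \Post\BPP$ forces $\PH = \Sigma^3_\P$ via Toda's theorem and $\P^{\#\P} = \P^{\PP} = \P^{\Post\BQP} = \P^{\Post\HQBALL}$.

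\textbf{Steps, in order.} First I would fix notation: ``ball scattering model of sections 4.4.2 and 4.4.3'' refers to the stationary-programming scattering computer with intermediate demolition measurements, and to the three-particle non-demolition gadget model, respectively; in both cases the relevant output distribution is the joint distribution of the intermediate measurement colors $(c'_1,\dots,c'_m)$ together with the final answer register $c'_0$, as in Definition~\ref{postH}. Second, I would invoke the hypothesis for contradiction: suppose a polynomial time randomized algorithm $R$ produces a distribution $\tilde P$ with $\tfrac{1}{\alpha}P(x) < \tilde P(x) < \alpha P(x)$ for all outcome strings $x$ and some fixed constant $\alpha>1$. Third, apply the preceding theorem: this multiplicative simulation yields (after passing to marginals and then to conditionals, which costs a factor $\alpha^2$) a $\Post\BPP$ algorithm for any $L \in \Post\HQBALL$, provided $1 < \alpha < \sqrt{4/3}$; hence $\Post\HQBALL \subseteq \Post\BPP$ for sufficiently small $\alpha$, and by repeating and taking majority votes the bound can be pushed, so that actually $\Post\HQBALL \subseteq \Post\BPP$ holds as stated in the theorem. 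Fourth, chain the containments from Theorem~\ref{amazingtools}:
$$
\PH \subseteq \P^{\#\P} = \P^{\PP} = \P^{\Post\BQP} = \P^{\Post\HQBALL} \subseteq \P^{\Post\BPP} \subseteq \P^{\Sigma^3_\P} = \Sigma^3_\P,
$$
so $\PH$ collapses to its third level. Fifth, note that the identical argument runs with $\Post\HQBALL$ replaced by $\Post\HQBALL^\star$, covering the non-demolition model of section 4.4.3, since we proved $\Post\HQBALL^\star = \Post\BQP = \PP$ as well.

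\textbf{Where the content really lies.} The present corollary is essentially bookkeeping: the substantive work was done earlier, in (i) showing $\XQBALL = \BQP$ on arbitrary initial states via the reduction from exchange interactions, (ii) showing that the scattering computer with intermediate demolition postselection simulates $\XQBALL$ (the nondeterministic gadget constructions of section \ref{post}), and (iii) Aaronson's $\Post\BQP = \PP$. So in writing this up I would keep the proof short, mostly pointing back to Theorem~\ref{thmxqbll}, the multiplicative-simulation theorem, and the collapse corollary, and simply remark that both models of sections 4.4.2 and 4.4.3 fall under the hypothesis.

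\textbf{Main obstacle.} The only genuine subtlety — and the place I would be most careful — is the transition from a multiplicative approximation of the \emph{full joint} output distribution to one of the relevant \emph{conditional} distribution $\operatorname*{Pr}[c'_0 = c_0 \mid \tilde c' = \tilde c]$: a multiplicative bound on a joint distribution does pass to marginals but degrades to $\alpha^2$ on conditionals, and one must check that the constant stays below the threshold $\sqrt{4/3}$ needed to keep the $\Post\BPP$ decision bounded-error (and that postselection on the exponentially small event $C$ is legitimate precisely because $\Post\BPP$ allows exponentially unlikely success probabilities). Handling the exact-sampling case as the limiting $\alpha \to 1$ instance is then immediate.
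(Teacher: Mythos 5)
Your proposal is correct and matches the paper's implicit argument exactly: the corollary is obtained by chaining the preceding theorem (multiplicative simulation implies $\Post\HQBALL \subseteq \Post\BPP$) with the earlier collapse corollary ($\Post\HQBALL \subseteq \Post\BPP$ implies $\PH = \Sigma^3_\P$), and noting the identical chain runs through $\Post\HQBALL^\star$ for the non-demolition model. The paper leaves this corollary without an explicit proof precisely because, as you observe, it is bookkeeping once those results are in place.
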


\section{Open Problems and Further Directions }
\label{openproblems}

\begin{enumerate}
\item The actual model of particle scattering corresponds to a unitary scattering matrix with at most $n$ degrees of freedom. Therefore, as a manifold, the dimension of the unitary group that is generated by the Yang-Baxter scatterings have small dimension. Although we proved that with intermediate measurements the structure of the group will expand to exponential dimension, it is still unknown if classical simulation is allowed for the problem without intermediate measurements. There examples \cite{jordan2009permutational} of models that generate discrete number of unitary operators, and still they are believed to be able to solve problems that are hard for classical computation. Moreover, the structure of the described manifold, as a mathematical object is interesting.

\item No physical example for the model $\ZQBALL$ is mentioned. Since the model is proved to be equivalent to $\BQP$ it is interesting to see if there are actual physical models that capture the dynamics of $\ZQBALL$. This might be a model for molecular dynamics with exchange interactions for which the coupling terms depend on the type of molecules being exchanged.

\item While we hope to find a lower bound of $\prod_{\lambda \vdash n} SU(V_\lambda)$ for $G$, on $\bigoplus_{\lambda\vdash n} V_\lambda$, there is some evidence supporting correlated action of $G$ on pairs of subspaces $V_\lambda$ and $V_{\lambda^\star}$, whenever $\lambda$ is dual to $\lambda^\star$. As a first input, the subspaces of dual partitions have equal dimensions, therefore, even dense action on each, separately, does not immediately imply decoupling of the two subspaces.

If we enumerate the $YY$ basis of $\lambda$, by $|i_1\rangle, |i_2\rangle, \ldots, |i_d\rangle$, where $d$ is the dimension of $V_\lambda$. Each element of these bases, $|i_j\rangle$, have a corresponding tableau, and the transpose of that tableau is a tableau of $\lambda^\star$, and it is bijectively related to a $YY$ base element, $|i'_j\rangle$, of $V_{\lambda^\star}$. Let $X$ be an element of the Lie algebra of $G$, whose action on $V_\lambda$ has the form:

$$
X\overset{\lambda}{\twoheadrightarrow} \sum_{j\in d} \alpha_j |j\rangle\langle j| + \sum_{i<j} \beta_{i j}( |i\rangle \langle j| + |i\rangle \langle j| )+i \delta_{i j}( |i\rangle \langle j| - |i\rangle \langle j| ).
$$

\noindent By $\overset{\lambda}{\twoheadrightarrow}$, we mean the restriction of $X$ on $V_\lambda$, by projecting out everything else from other subspaces. It is believable that the similar action on the dual block is according to:

\begin{claim}
The projection of $X$ on $\lambda^\star$ is according to:

$$
X\overset{\lambda^\star}{\twoheadrightarrow} \sum_{j\in d} \pm\alpha_j |j'\rangle\langle j'| + \sum_{i'<j'} \pm\beta_{i j}( |i'\rangle \langle j'| + |i'\rangle \langle j'| )\pm i \delta_{i j}( |i'\rangle \langle j'| - |i'\rangle \langle j'| ).
$$
\end{claim}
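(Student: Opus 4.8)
\emph{Proof proposal.} The plan is to reduce the claim to the classical fact that the irreducible representation $V_{\lambda^\star}$ attached to the conjugate diagram is the twist of $V_\lambda$ by the sign character, $V_{\lambda^\star}\cong V_\lambda\tensor\mathrm{sgn}$, made completely explicit in the Young--Yamanouchi (YY) bases. First I would record the elementary observation underlying everything: transposing a standard tableau $t$ of shape $\lambda$ to the standard tableau $t^T$ of shape $\lambda^\star$ interchanges rows and columns, and hence negates every signed axial distance, $d^{\lambda^\star}_{k+1,k}(t^T)=-\,d^{\lambda}_{k+1,k}(t)$. Substituting this into the YY formula for $L_{(k,k+1)}$ shows that, after identifying the basis vector indexed by $t^T$ with the one indexed by $t$, the matrix of $L_{(k,k+1)}$ on $V_{\lambda^\star}$ equals $-D_k+O_k$, where $D_k$ (the diagonal part coming from the $1/d$ terms) and $O_k$ (the off-diagonal part coming from the $\sqrt{1-1/d^2}$ terms) are the two pieces of the matrix $D_k+O_k$ of $L_{(k,k+1)}$ on $V_\lambda$. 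Then I would build the intertwiner: let $S$ be the diagonal matrix with $S_{tt}=\mathrm{sgn}(w_t)$, where $w_t$ is the permutation carrying a fixed reference standard tableau to $t$. Since $(k,k+1)$ interchanges the pair of tableaux linked by a given off-diagonal entry of $O_k$ and therefore flips $\mathrm{sgn}(w_t)$, one gets $S O_k S=-O_k$ while $S D_k S=D_k$, so that
\begin{equation}
L_{(k,k+1)}\big|_{\lambda^\star}=-\,S\,\big(L_{(k,k+1)}\big|_{\lambda}\big)\,S
\end{equation}
in the transposed YY basis; in particular this confirms that the off-diagonal ``connection graph'' of the YY basis is bipartite, so that $S$ is well defined.

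The second step is a parity bookkeeping through the Lie-algebra construction of $X$. Say an operator $Y$ in the span of iterated brackets of the $L_{(k,k+1)}$ has \emph{type} $\eta(Y)\in\{+1,-1\}$ if $Y|_{\lambda^\star}=\eta(Y)\,S\,(Y|_{\lambda})\,S$ in the transposed basis. By the displayed identity $\eta(L_{(k,k+1)})=-1$, and a one-line computation with $S^2=I$ gives multiplicativity $\eta\big(i[Y,Z]\big)=\eta(Y)\eta(Z)$; hence every bracket monomial involving $d$ of the generators has type $(-1)^d$. Now the Lie algebra $g$ consists of Hermitian operators, which split canonically as (real symmetric) $\oplus$ (imaginary antisymmetric); an easy induction (a commutator of two symmetric matrices is antisymmetric, a commutator of a symmetric and an antisymmetric matrix is symmetric) shows that the monomials with $d$ odd are exactly the real-symmetric ones and those with $d$ even exactly the imaginary-antisymmetric ones. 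Writing the given $X\in g$ in its unique form $X=R+i\Delta$ with $R$ real symmetric (so $R|_\lambda$ carries the $\alpha_j$ and $\beta_{ij}$) and $\Delta$ real antisymmetric (so $i\Delta|_\lambda$ carries the $i\delta_{ij}$), $R$ is a sum of odd-$d$ monomials and $i\Delta$ a sum of even-$d$ monomials, whence $\eta(R)=-1$, $\eta(i\Delta)=+1$, and
\begin{equation}
X\big|_{\lambda^\star}=S\big(-\,R|_{\lambda}+i\Delta|_{\lambda}\big)S .
\end{equation}

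The third step is to read off entries. Conjugation by the diagonal $\pm1$ matrix $S$ fixes every diagonal entry, multiplies the off-diagonal entry in position $(i,j)$ by $S_{ii}S_{jj}=\pm1$, and preserves the symmetric/antisymmetric split; combined with the outer signs of the previous display this yields, in the transposed YY basis $\{|j'\rangle\}$, a diagonal part $\sum_j(-\alpha_j)|j'\rangle\langle j'|$, a symmetric off-diagonal part $\sum_{i<j}(-S_{ii}S_{jj}\beta_{ij})(|i'\rangle\langle j'|+|j'\rangle\langle i'|)$, and an antisymmetric part $\sum_{i<j}(S_{ii}S_{jj})\,i\delta_{ij}(|i'\rangle\langle j'|-|j'\rangle\langle i'|)$ --- which is exactly the asserted form (and in fact fixes all the $\alpha$-signs to be $-$). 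The obstacle I expect to be most delicate is the self-conjugate case $\lambda=\lambda^\star$: there $V_\lambda$ and $V_{\lambda^\star}$ are literally the same space, the transpose map $t\mapsto t^T$ becomes a nontrivial involution $P_T$, and the identity $X|_{\lambda^\star}=S(-R|_\lambda+i\Delta|_\lambda)S$ turns into a genuine constraint $P_T^{-1}(X|_\lambda)P_T=S(-R|_\lambda+i\Delta|_\lambda)S$ relating $X|_\lambda$ to itself; one must check this holds automatically --- it is the statement that $\Theta:=P_TS$ is the $\mathrm{sgn}$-twist self-intertwiner of $V_\lambda$, compatible with the parity grading of $g$ --- rather than being an extra hypothesis, and perhaps restrict the claim to $\lambda\neq\lambda^\star$. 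A lesser point to verify carefully is that the parity grading of $g$ is well defined, i.e. that the real-symmetric and imaginary-antisymmetric summands of any $X\in g$ lie in $g$ and are sums of monomials of the correct $d$-parity independently of how $X$ was presented.
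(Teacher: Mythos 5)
The paper does not actually prove this Claim. It appears in the open-problems section of Chapter~\ref{ch5}, introduced with the phrase ``It is believable that,'' and is supported only by a check of the single case $n=4$ with the dual pair $(3,1)$, $(2,1,1)$. So there is no proof in the paper to compare yours against; what you have written is a genuine proof where the paper has only a conjecture plus one example.

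Your argument is correct as far as I can see, and it even sharpens the statement. The chain of steps holds up: transposing a standard tableau negates the axial distance $d_{k+1,k}$ because $(c_j-c_i)-(r_j-r_i)\mapsto (r_j-r_i)-(c_j-c_i)$, which flips the diagonal part of the Young--Yamanouchi matrix of $L_{(k,k+1)}$ and preserves the off-diagonal part, giving $L_{(k,k+1)}|_{\lambda^\star}=-D_k+O_k$ in the transposed basis; the diagonal sign matrix $S$ with $S_{tt}=\mathrm{sgn}(w_t)$ satisfies $SD_kS=D_k$ and $SO_kS=-O_k$ because the off-diagonal connection graph links $t$ to $(k,k+1).t$ and $\mathrm{sgn}$ flips across that edge, so $L_{(k,k+1)}|_{\lambda^\star}=-S\bigl(L_{(k,k+1)}|_\lambda\bigr)S$; multiplicativity of your type $\eta$ under $i[\cdot,\cdot]$ together with the parity of bracket depth matching the real-symmetric / imaginary-antisymmetric split of Hermitian elements of $g$ then yields $X|_{\lambda^\star}=S(-R|_\lambda+i\Delta|_\lambda)S$, from which the $\pm$ pattern of the Claim is read off with all signs determined (in particular, the $\alpha_j$ signs are uniformly $-$, and the $\beta$ and $\delta$ signs are $\mp S_{ii}S_{jj}$). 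Both caveats you flag actually resolve in your favor and need not weaken the claim: for self-conjugate $\lambda$ the same axial-distance computation shows $P_T^{-1}(D_k+O_k)P_T=-D_k+O_k$ directly, so the constraint is automatic rather than an extra hypothesis; and the parity grading of $g$ is well defined because the real-symmetric and purely-imaginary-antisymmetric summands of a Hermitian matrix are unique (the only operator of both types is $0$), so odd- and even-depth monomials cannot interfere. One small hygiene point: after passing to the $\lambda^\star$ basis via $t\mapsto t^T$, the induced index bijection $j\mapsto j'$ need not preserve the ordering $i<j$, so the sign attached to the $\delta_{ij}$ terms in the transposed basis absorbs an extra factor of $-1$ whenever $i'<j'$ is reversed; the paper's $\pm$ notation already allows this, but if you want to state the signs cleanly it is tidiest to write the relation entrywise, $(X|_{\lambda^\star})_{i'j'}=-S_{ii}S_{jj}\,\overline{(X|_\lambda)_{ij}}$, which packages the diagonal sign, the $S$-conjugation, and the complex conjugation (i.e., the $R\mapsto -R$, $i\Delta\mapsto i\Delta$ pattern) all at once.
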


\noindent If this is the case, then it immediately implies the mentioned coupling. In order to see this, notice that $ X\overset{\lambda}{\twoheadrightarrow} 0$ implies $X\overset{\lambda^\star}{\twoheadrightarrow}0$, and vice versa. Checking the situation for the dual partitions $(2,1,1)$ and $(3,1)$ for four labels, one can confirm that this is true. However, even in this case, quantum efficient computation is not ruled out. For example, consider the situation where the initial state is according to $\dfrac{1}{\sqrt{2}}(|x\rangle +|x'\rangle)$. Then a coupled action of the form $U \oplus U^\star$ maps $\dfrac{1}{\sqrt{2}}(|x\rangle +|x'\rangle)$ to $\dfrac{1}{\sqrt{2}}(U|x\rangle +U^\star|x'\rangle)$. Then finalizing with the same state we get $\dfrac{1}{2} (\langle x | U | x \rangle + \langle x' | U^\star | x' \rangle)= \Re \langle x | U |x \rangle$. However, the problem of reading an entry of a quantum circuit is already known to be $\BQP$-complete.

\item It is not clear if there is a way to act as $SU(V_\lambda)$ on all of the partitions of $n$. The bridge lemma works if two orthogonal subspaces are being joined together. Therefore, it is interesting to extend the bridge lemma to more subspaces. Moreover, there are cases where two subspaces of equal dimensionality take part in a single branching rule. Therefore, the bridge lemma is not applicable. For example consider the partition $(3,2,1)$. The branching rule for this partition involves decomposition into the direct sum of three subspaces corresponding to $(2,2,1)$, $(3,1,1)$, and $(3,2)$. However, not only the subspaces corresponding partitions $(2,2,1)$ and $(3,2)$ have the same dimensions, but also they are dual partitions, and it might be the case that even the action of $G$ cannot be decoupled from the two. 

\item We proved that single amplitudes can be efficiently approximated within additive error by $\DQC 1$ computation. It is unkown if an efficient $\DQC 1$ (approximate) sampling scheme exists for the ball permuting model on separable initial states. Such a sampling exists within the $\BQP$ class. Finding efficient sampling within any other class that is believed to be below $\BQP$ is interesting. Also a lower-bound for the ball permuting model in this case is unknown.

\item It remains open to generalize the result of section \ref{trace} to arbitrary quantum models with Hilbert spaces based on group algebras. As discussed, such a generalization relies on the ability to do the following. Given the description of the generators of a group, is it possible to encode the elements of an arbitrary element with binary strings with nearly $\log |G|$ bits so that the action of group elements on each other is implementable by reversible circuits that affect around $\log \log |G|$ bits at a time? We could establish this for the symmetric group using the factorial number basis. Are there other groups with similar property?
\end{enumerate}

\section{Conclusion}

\noindent We applied some of the ideas from complexity theory and quantum complexity theory to a small regime of theoretical physics, the problem of particle scattering in integrable theories of $1+1$ dimensions. We found that the complexity of the model essentially depends on the initial superpositions that the particles start out from. Single amplitudes of the theory can be approximated with polynomial size window of additive approximation within the one-clean-qubit if no initial superposition is allowed. We defined the ball permuting model as a natural generalization of the particle scattering model, which has less restrictions on the amplitudes of the theory. We proved that if special initial superpositions are allowed, the ball permuting model can simulate and efficiently sample from the output distribution of arbitrary quantum computers. We used this result to prove that if the particle scattering model is equipped with demolition intermediate measurements, then it is hard to sample from the output distributions of this model on a classical computer, unless the polynomial hierarchy collapses to the third level. In all of these models we consider that the particles are distinguishable. 

\section{Acknowledgment}

I am especially indebted to my advisor Scott Aaronson for his unlimited wisdom and support throughout this research. I would like to thank Greg Kuperberg for insightful discussions and notes. I am thankful Adam Bouland, Robin Kothari, Hamed Pakatchi, Usman Naseer and Hossein Esfandiary for valuable discussions. This research was funded by NSF Waterman of Scott Aaronson.

\begin{appendices}

\chapter {A Word on Special Relativity}
\label{Lorentz}

Lorentz transformations are those which preserve length with respect to the metric $\eta= \text{diag} (-1,1,1,1)$. The basis is $(t,\textbf{x})=(x^0,x^1,x^2,x^3)$; time and three space coordinates. The Greek letters ($\mu, \nu, \ldots$) run from 0 to 3, the ordinary letters ($a, b, \ldots$) run from 1 to 3. An infinitesimal length should be conserved according to this metric, and therefore $\eta_{\mu\nu}dx^\mu dx^\nu=\eta_{\mu\nu}dx'^\mu dx'^\nu$. If we define the matrix elements of the Lorentz transform as $L^{\mu} \hspace{0.05cm}{_\nu} = \partial x'^\mu/ \partial x^\nu$. So the general transforms are of the form $x'^\mu=L^{\mu} \hspace{0.05cm}{_\nu} x^\nu + a^\mu$, and the transforms must satisfy:

$$
\eta_{\mu\nu}L^{\mu} \hspace{0.05cm}{_\alpha}L^{\nu} \hspace{0.05cm}{_\beta}=\eta_{\alpha\beta}
$$

\noindent or equivalently using the matrix form:

$$
L^T \eta L = \eta.
$$

\noindent This equation tells us that $Det(L)=\pm1$. In general, these construct the\textit{ Poincar\'e} group. Any representation $U$ of the general Poincar\'e group for quantum states should satisfy:

\begin{equation}
U(L_1,a_1)U(L_2,a_2)= U(L_1 L_2, L_1 a_2 + a_1)
\end{equation}

\chapter{The Scattering Matrix}
\label{scattering}

This work analyzes the computational complexity of computing scattering amplitudes in certain two dimensional quantum theories. In order to obtain a point of reference, in this part, we briefly describe how these quantities are obtained as a function of physical interactions.

Let $\mathcal{H}= \bigotimes^n_{j=1} \mathcal{H}_j$ be a Hilbert space composed of different subsystems $\mathcal{H}_j$. A free Hamiltoinian for this system consists of Hermitian operators each affecting one of the subsystems, and acting as identity on the other parts. The time evolution of a free Hamiltonian is factorized over the subsystems of the Hilbert space. Any Hamiltonian operator on this Hilbert space that is not a free Hamiltonian is an interaction Hamiltonian. An interaction couples different subsystems by terms like $A\tensor B$, where $A$ is some Hermitian operator on some subsystem and $B$ is another Hermitian operator on another subsystem. Let $H_0$ and $V$ be any such free and interaction terms. The Hamiltonian describing the overall time evolution is then $H=H_0+V$. All of the operators are time-independent. Let $\Psi_t$ be the time dependent quantum state in the Hilbert space; then the unitary time evolution is according to $U(t,\tau)=\exp (-i H(t-\tau))$, which maps $|\Psi_\tau\rangle \mapsto |\Psi_t\rangle$. Define $|\Phi_t\rangle: = \exp (i H_0 t) |\Psi_t\rangle$, as the interaction picture of $|\Psi_t\rangle$, which factors out the free term. The evolution of $|\Phi_\tau\rangle$ is then given by the unitary operator:

\begin {equation}
\tilde {U} (t,\tau)=\exp(iH_0 t)\exp (- i H (t-\tau))\exp(-iH_0 \tau),
\end {equation}

\noindent and $|\Phi_t\rangle= \tilde {U} (t,\tau)|\Phi_\tau\rangle$. Scattering matrix is a unitary operator which relates $\Phi_{-\infty}$ to $\Phi_{+\infty}$, and is given according to $S=\tilde{U} (+\infty, -\infty)$. The scattering matrix relates the states in far past to the states in far future. The interaction picture for the observables is defined as $V(t)=\exp(iH_0t)V\exp(-iH_0t)$. The scattering matrix is then:

\begin {equation} 
S= \mathrm{T} \exp \left(-i \int_{-\infty}^{+\infty} V(t)dt\right),
\label {S}
\end {equation}

\noindent where $\mathrm{T}$ is the time ordering operator sorting the operators in decreasing order from left to right, given by:

$$
S=1+\sum^\infty_{n=1} \dfrac{(-i)^n}{n!}\int_{-\infty}^\infty dt_1 dt_2 \ldots dt_n \theta (t_n, \ldots,t_2, t_1) V(t_1) V(t_2)\ldots V(t_n)
$$

\noindent where $\theta: \mathbb{R}^n \rightarrow \{0,1\}$, is the indicator of the event $t_1> t_2 > \ldots > t_n$. Or we can just write:

$$
S=1+\sum^\infty_{n=1} \dfrac{(-i)^n}{n!}\int_{-\infty}^\infty dt_1 dt_2 \ldots dt_n \mathrm{T}\Big ( V(t_1) V(t_2)\ldots V(t_n)\Big )
$$

This expansion is known as the Dyson series. There is also a similar expression for the scattering matrix of quantum field theory. Here, we sketch the logic given in \cite{weinberg1996quantum}. We need to first revisit the Hilbert space slightly. Consider the decomposition of a Hilbert space according to the number of particles $\bigoplus^\infty_{i=0} \mathcal{H}_i$. The zeroth term $\mathcal{H}_0$ consists of a single vector $|0\rangle$, which is called the vacuum state. This corresponds to the minimum energy state. The other terms $\mathcal{H}_i$ for $i\geq 1$, corresponds to $i$ particles, each with a four momentum $p_j= (p^0, \pmb{p})$ and $A_j$ a set of discrete internal degrees of freedom, for $j \in [i]$. The discrete degrees of freedom are internal degrees of freedom like spin. Denote the vectors in the combined Hilbert space with Greek letter $\alpha, \beta, \ldots$. The amplitude of scattering for state with label $\alpha$ to state with label $\beta$ is then $S_{\beta, \alpha}=\langle \Psi_\beta| S |\Psi_\alpha\rangle$. A creation operator $a^\dagger_\alpha$ is naturally defined as the operator which maps vacuum state to the state with $\alpha$ ingredients in its corresponding block of the Hilbert space, \i.e., $|\Psi_\alpha\rangle= a^\dagger _\alpha |0\rangle$.

We need a theory with locality, unitarity and Lorentz symmetry. Unitarity is simply captured by choosing Hermitian interactions. 
We want the scattering amplitudes to be Lorentz invariant. The Lorentz group should be a symmetry of this theory, and we can hardly imagine a way of defining operators that only depend on time and are also invariant under Lorentz's space-time transformations. For a brief introduction to the Lorentz group see appendix \ref{Lorentz}. Therefore, a necessary condition is that there exist a well-defined interaction density in spatial coordinates, in the sense that $V(t)=\int d^3 x \mathcal {H}(\textbf{x},t)$. Cluster decomposition in quantum field theory relates the structure of the interaction density to a seemingly obvious logic: if two physical processes occur independently the amplitude of the combined process products of separate amplitudes in each subsystem.  This is something like statistical independence in probability theory. Then cluster decomposition forces the interaction densities to be composed of integration over creation and annihilation (the Hermitian conjugate of the creation operator) operators, in such a way that all the creation operators appear on the left hand side of annihilation operators, and the arguments in these integrals contain at most one Dirac delta like singularity. Therefore, in short, the interactions should be specific combinations of the annihilation and creation operators in such a way that they are local and transform according to Lorentz symmetry. That is:

$$
U(L,a) \mathcal{H}(x) U^{-1} (L, a) = \mathcal{H}(L x + a)
$$

and locality,

$$
[\mathcal{H}(0, \pmb{x}), \mathcal{H}(0, 0)]=0 \hspace{1cm} |\pmb{x}|>0
$$

If we take a close look at the Greek letter $\alpha$ we see that these are momentum, and other discrete quantum numbers. So the interaction density depends on several (creation and annihilation) operators which they themselves depend on a finite number of momentum species. How can we impose Lorentz scalars when your interaction density depends only on finite numbers of momenta. The solution is the quantum field, an operator that depends on space time and carries all possible creation and annihilation operators in a suitable superposition:

$$
\Psi_i^\dagger (x) = \sum_\sigma \int d^3 p A_i (p, x, \sigma) a^\dagger (p, \sigma)
$$

\noindent where $A_l(\cdot, \cdot, \cdot) $ is a suitable function, that can be a vector or scalar and so on. A quantum field is then a proper combination of annihilation and creation operators in such a way that they respect the Lorentz transformation as operators:

$$
U(L,a) \Psi^\dagger_i (x) U^{-1} (L, a) = \sum_i M_{i j} \Psi_j^\dagger (L x + a)
$$

Therefore, the quantum fields are building blocks of the quantum field theory, operators that each are Lorentz invariant independently. The legitimate interaction densities then correspond to all the ways that we can combine quantum fields to a Hermitian operator. Represent points of space-time with $x= (t, \pmb{x})$. Expanding equation ~\ref{S} we get:

\begin {equation}
\hspace{-1cm}
S_{\beta,\alpha}=\sum_{N=0}^\infty \dfrac{(-i)^N}{N!}\int d^4x_1 d^4 x_2 \ldots d^4x_N \langle 0| a_\beta T\{\mathcal {H}(x_1)\ldots \mathcal{H}(x_N)\}a^\dagger_\alpha|0\rangle
\end {equation}

The zeroth term in the summation is $1$, by convention. Each argument in the integral is a combination of creation and annihilation operators and quantum fields which are integrations over creation and annihilation operators themselves, and these along with their (anti-) commutation relations pose a combinatorial structure and correspond to a set of Feynman diagrams, which we will not delve into. Thereby, given the interaction density of a theory, we construct sequences of Feynman diagrams each amounting to a complex number, and then the summation over them gives the overall amplitude of an interaction.

\chapter{Some Remarks on Lie Algebras}
\label{LieAlgebra}

Let $a_1, a_2, \ldots, a_n$ be Hermitian operators. We are interested in the Lie group generated by operators $A_k(t):=\exp(i a_k t)$, for $t \in \mathbb{R}$. $G$ is a Lie group, and is fairly complicated, therefore it is worth looking at the vector space tangent to the identity element. Such a vector space is an alternative definition of a Lie algebra. Denote this Lie algebra by $g$. Since the exponential map is an isomorphism, the dimension of a Lie algebra, as a vector space, matches the dimension of the manifold $G$. In addition, for some vector space $V$, the containment of $\su(V)$ in $g$, implies denseness of $G$ in $SU(V)$. Starting with $A_1 (t), A_2 (t), \ldots, A_n (t)$, it is well-known that the corresponding Lie algebra consists of all the elements that can be ever generated by taking the Lie commutators $i [\cdot , \cdot ]$ of the elements, and Linear combination of the operators over $\mathbb{R}$.

Here we prrovide some intuitiion about $g$. The following remarks are followed from \cite{deutsch1995universality,lloyd1995almost, divincenzo1995two}. Notice that this is not a proof. The Trotter formula states that for any two operators $H$ and $V$:

\begin{equation}
\exp \left(i (H+V) \right) = \lim_{m\rightarrow \infty} \left(\exp(i H/m) \exp (i V/m) \right)^m,
\end{equation}

\noindent and the rate of convergence is according to:

$$
\exp\left(i (H+V)\right)= \left(\exp(i H/m) \exp (i V/ m )\right)^m+ \mathrm{O}(1/m).
$$

\noindent Therefore, this implies that if two elements $H$ and $V$ in $g$ can be approximated within arbitrary accuracy, then $H+V$ can also be approximated efficiently. Next, observe the following identity:

\begin{equation}
\exp ([H,V]) = \lim_{m\rightarrow \infty} \left( \exp (i H/\sqrt {m}) \exp (i V/\sqrt {m})\exp (-i H/\sqrt {m})\exp (-i V/\sqrt {m})\right)^m,
\end{equation}

\noindent with the rate of convergence:

$$
\left( \exp (i H/\sqrt {m}) \exp (i V/\sqrt {m})\exp (-i H/\sqrt {m})\exp (-i V/\sqrt {m})\right)^m= exp ([U,V])+ \mathrm{O}(1/\sqrt{m})
$$

\noindent Therefore this explains the commutator. Lastly, we need to take care of the scalar multiplications. Notice that if an element $\exp (i H)$ is approximated in $G$, then for any $k\geq 1$, $\exp (i H k)$ can be approximated similarly by repeating the process for $k$ times. Now assume that the sequence $\prod_i A_i (t_i)$ approximates the element with the desired accuracy, then the process $(\prod_i A_i (t_i/k m))^m$ can approximate $\exp (i H/k)$ in any way that we want by tuning $m$.

\chapter{$\P^{\# \P}$ Algorithm for the Classical Ball Permuting Problems}
\label{sharpp}

Here we give a brief justification for the containment of the three problems mentioned section \ref{classicalball} in $\P^{\#\P}$. All of the mentioned problems (1, 2, and 3) can be solved within polynomial amount of space: first notice that if one can exactly compute the probability of the target permutation, then deciding if the probability is large or small (the first problem), or the problem of deciding if the probability is zero or not (the second problem) can be solved immediately. In order to solve the problem in polynomial space, just notice that probability of observing the target permutation, $p_\pi$, can be computed as:

$$
p_{\pi}= \sum_{\substack{S\subseteq [m]\\ \pi_S= \pi}} \prod_{j\in S} p_j \prod_{j\in [n]/S} 1-p_j.
$$

\noindent Where for any subset $S \subseteq [m]$, $\pi_S$ is the permutation that is obtained by multiplying the swaps $\prod_{l\in S}(i_l , j_l)$ in order. All such terms can be enumerated and added within polynomial amount of space. This can also be viewed with weighted counting: if all of the $p_j$ probabilities in the list are set to $\dfrac{1}{2}$, then $p_\pi = \dfrac{|\{S\subset [m]: \pi_S=\pi \}|}{2^m}$, and then the problem is reduced to the computation of $|\{S\subset [m]: \pi_S=\pi \}|$ which can be done within $\#\P$. For general probabilities, the counting is weighted by $2^m \prod_{j\in S} p_j$.

We can go one step further, and prove that:

\begin{theorem}
If the probabilities $p_1, p_2, \ldots, p_m$ are rational numbers, given oracle access to $\# \P$, the computation of $p_\pi$ can be done in polynomial time.
\end{theorem}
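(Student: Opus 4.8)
The plan is to reduce the computation of $p_\pi$ to a small number of calls to a counting oracle, carefully handling the rational weights so that everything remains a genuine counting problem rather than a weighted sum. First I would write each probability $p_j$ as a fraction with a common denominator: since the $p_j$ are rational, pick $D$ a common denominator so that $p_j = a_j/D$ with $0 \le a_j \le D$ integers, and $1-p_j = (D-a_j)/D$. Then
$$
p_\pi = \frac{1}{D^m} \sum_{\substack{S\subseteq [m]\\ \pi_S = \pi}} \prod_{j\in S} a_j \prod_{j\in [m]\setminus S} (D - a_j).
$$
The numerator is now a sum of products of nonnegative integers, and the denominator $D^m$ is a fixed integer computable in polynomial time (its bit-length is polynomial in the input size). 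So it suffices to show the numerator, call it $N_\pi$, is computable with a $\#\P$ oracle.

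The key step is to express $N_\pi$ as the number of accepting paths of a nondeterministic polynomial-time machine. I would design such a machine $M$ as follows: on input the list of swaps, the target permutation $\pi$, and the integers $a_1,\dots,a_m$ and $D$, the machine nondeterministically guesses a subset $S\subseteq[m]$ (one bit per index), then for each $j\in S$ it nondeterministically guesses an integer in $\{1,\dots,a_j\}$ (encoded in $\lceil\log D\rceil$ bits together with a check), and for each $j\notin S$ it nondeterministically guesses an integer in $\{1,\dots,D-a_j\}$. It then deterministically computes $\pi_S$ by applying the chosen swaps in order to the identity — this is a polynomial-time deterministic computation since each swap just transposes two entries of an $n$-element array — and accepts if and only if $\pi_S = \pi$ and all the range checks on the guessed integers passed. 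The number of accepting computations is exactly $\sum_{S:\,\pi_S=\pi}\prod_{j\in S}a_j\prod_{j\notin S}(D-a_j) = N_\pi$. Hence $N_\pi \in \#\P$, and a single oracle call computes it; dividing by $D^m$ in polynomial time gives $p_\pi$ exactly as a rational number. The same reasoning with the accepting condition "$\pi_S = \pi$" handles problems (1), (2), (3) of section \ref{classicalball}: (3) is exactly this computation, (2) is testing $N_\pi > 0$, and (1) is comparing $N_\pi/D^m$ against $1/2$ and $2/3$, all doable in $\P^{\#\P}$.

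I expect the main obstacle to be purely a matter of bookkeeping rather than a genuine mathematical difficulty: one must be careful that the number of guessed bits is polynomially bounded (it is, since $m\cdot\lceil\log D\rceil$ is polynomial in the input length when the $p_j$ are given as fractions) and that the "range check" for guessing a number in $\{1,\dots,a_j\}$ can be done with a nondeterministic branching of out-degree exactly $a_j$ — equivalently, guess $\lceil\log D\rceil$ bits and accept only if the resulting integer lies in $[1,a_j]$, which contributes a factor $a_j$ to the path count. A secondary point to get right is that $\#\P$ is closed under the polynomial-time postprocessing (division by $D^m$, comparisons), which is immediate since $\P^{\#\P}$ is defined with a polynomial-time machine making the oracle call; one could also invoke $\P^{\#\P}=\P^{\PP}$ from the excerpt if a decision-flavored oracle is preferred. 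The only place requiring a line of justification is that computing $\pi_S$ from $S$ and the swap list is deterministic polynomial time, which is clear.
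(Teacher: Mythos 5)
Your proof is correct and follows essentially the same approach as the paper: clear the denominators to reduce the weighted sum over subsets to an integer $N_\pi$, then realize $N_\pi$ as the number of accepting paths of an $\NP$ machine by padding each choice with extra nondeterministically guessed bits whose valid range contributes the right multiplicative factor, and finally call the $\#\P$ oracle once and divide. The only cosmetic difference is that the paper does the padding by a single threshold check ($|y|\leq N_S$ for a $q$-bit string $y$) after computing the product $N_S$, whereas you perform independent per-factor range checks; the two constructions count exactly the same thing, and your common-denominator bookkeeping is if anything cleaner than the paper's.
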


\begin{proof}
The objective is to compute:

$$
p_\pi=\sum_{\substack{S\subseteq [m]\\ \pi_S= \pi}} \prod_{j\in S} p_j \prod_{j\in [n]/S} 1-p_j.
$$

\noindent Since $p_j$'s are rational numbers, we can write $p_j:= \dfrac{m_j}{M_j}$, where $m_j\leq M_j \in \mathbb{N}$. Define $M:= \prod_j m_j$. Then:
$$
p_\pi=\dfrac{1}{M}\sum_{\substack{S\subseteq [m]\\ \pi_S= \pi}} \prod_{j\in S} m_j \prod_{j\in [n]/S} M-m_j.
$$

\noindent Define $N_\pi := M p_\pi$ and $N_S:=\prod_{j\in S} m_j \prod_{j\in [m]/S} M-p_j$ for any $S\subseteq [m]$. Also define $Q:=\max_{S\subseteq [m]} \{N_S\}$, and $q:=\log_2 Q$. Then the claim is that:

$$
N_\pi=\sum_{\substack{S\subseteq [m]\\ \pi_S= \pi}} N_S
$$

\noindent can be computed in $\P^{\#\P}$. To see this, define the function $f: 2^{[m]}\times \{0,1\}^q\rightarrow \{0,1\}$ like below, where $2^{[m]}$ is the set of subsets of $[m]$:

$$
f(S, y)= 
\begin{cases}
    1 & \pi_S=\pi \hspace{1mm} \& \hspace{1mm}|y|\leq N_S \\
    0              & \text{otherwise}
\end{cases}.
$$

\noindent Clearly:

$$
\sum_{S\subseteq [m], y\in \{0,1\}^q} f(S,y)= N_\pi.
$$

\noindent There is an $\NP$ machine $M$ to decide if $f(S,y)$ is satisfiable. Then query the description of $M$ to the $\#\P$ oracle and find the number of accepting paths, which amounts to $N_\pi$.
\end{proof}

\end{appendices}

\include{biblio}
\bibliography{permut}{}
\bibliographystyle{plain}

\end{document}